\renewcommand{\@seccntformat}[1]{\csname the#1\endcsname.\quad}
\theoremstyle{plain}
\newtheorem{theorem}{Theorem}
\newtheorem{corollary}[theorem]{Corollary}
\newtheorem{proposition}[theorem]{Proposition}
\newtheorem{lemma}[theorem]{Lemma}
\newtheorem{claim}[theorem]{Claim}
\newtheorem{definition}[theorem]{Definition}
\theoremstyle{remark}
\theoremstyle{plain}
\newclass{\DNF}{DNF}
\newclass{\DNFs}{DNFs}
\newclass{\ACzero}{AC^0}
\newclass{\TCzero}{TC^0}
\renewcommand{\Pr}{\mathop{\bf Pr\/}}
\renewcommand{\E}{\mathop{\mathbb{E}}}
\newcommand{\adv}{\mathsf{adv}}
\newcommand{\abs}[1]{\left|#1\right|}
\newcommand{\eps}{\varepsilon}
\renewcommand{\tilde}{\widetilde}
\newcommand{\calE}{\mathcal{E}}
\newcommand{\calG}{\mathcal{G}}
\newcommand{\calM}{\mathcal{M}}
\newcommand{\calZ}{\mathcal{Z}}
\newcommand{\suc}{\mathsf{suc}}
\renewcommand{\epsilon}{\varepsilon}
\newcommand{\marg}{\mathsf{M}}
\newcommand{\mext}{\mathsf{M}^{\mathsf{ext}}}
\newcommand{\mone}{m^{(1)}}
\newcommand{\mtwo}{m^{(2)}}
\title{XOR Lemmas for Communication via Marginal Information}
\author{Siddharth Iyer\thanks{Supported by NSF award 2131899.} \\ siyer@cs.washington.edu \and Anup Rao\footnotemark[1] \\ anuprao@cs.washington.edu}
\begin{document}
\maketitle
\abstract{
We define the  \emph{marginal information} of a communication protocol, and use it to prove XOR lemmas for  communication complexity.  We show that if every $C$-bit protocol has bounded advantage for computing a Boolean function $f$, then every $\tilde \Omega(C \sqrt{n})$-bit protocol has advantage $\exp(-\Omega(n))$ for computing the $n$-fold xor $f^{\oplus n}$. We prove exponentially small bounds in the average case setting,  and near optimal bounds for product distributions and for bounded-round protocols.
}
{
  \hypersetup{linkcolor=black}
  \tableofcontents
}
\section{Introduction}
If a function is hard to compute, is it even harder to compute it many times? This old question is often challenging, and  new answers are usually accompanied by foundational ideas. We give new answers in the framework of communication complexity,    accompanied by a new measure of  complexity  called  \emph{marginal information}. This definition provides a new tool for proving lower bounds in theoretical computer science.

A wide variety of important lower bounds in computer science ultimately rely on information theoretic lower bounds in communication complexity, including  lower bounds on the depth of monotone circuits  \cite{KRW},  lower bounds on data structures  \cite{P} and   lower bounds on the extension complexity of polytopes   \cite{BP16,Rothvoss, Sinha,JSY}, to name a few nice examples. We refer the reader to the textbook \cite{RaoYehudayoff-text} for an introduction to the basic definitions and concepts in communication complexity, the role played by the questions we address here, and the connections to other areas.

Given a Boolean function $f:\mathcal{X}\times \mathcal{Y} \rightarrow \{0,1\}$, define the functions $f^n: \mathcal{X}^n \times \mathcal{Y}^n \rightarrow \{0,1\}^n$ and  $f^{\oplus n}:\mathcal{X}^n \times \mathcal{Y}^n \rightarrow \{0,1\}$ as follows\footnote{Throughout, we drop the delimiters between variables. $f(xy)$ is to be read as $f(x,y)$.}:
\begin{align*}
    f^n(xy) &= (f(x_1y_1),f(x_2y_2),\ldots,f(x_ny_n)),\\
    f^{\oplus n}(xy) & = f(x_1y_1) \oplus f(x_2y_2) \oplus \dotsb \oplus f(x_ny_n).
\end{align*}
So, $f^n$ computes $f$ on $n$ different pairs of inputs, and $f^{\oplus n}$ computes the parity of the outputs of $f^n$. If $f$ is hard to compute, are $f^n$ and $f^{\oplus n}$ even harder to compute? For deterministic communication complexity, Feder, Kushilevitz, Naor and Nisan  \cite{FKNN}  proved that if $|\mathcal{X}|,|\mathcal{Y}| \leq 2^\ell$ and $f$ requires $C$ bits of communication, then $f^n$ requires at least $n (\sqrt{C} - \log_2 \ell -1)$ bits of communication. In this work, we study randomized communication complexity. Let $\|\pi \|$  denote the communication complexity of a randomized communication protocol $\pi$ and  define the advantage:
\begin{align*}
    \adv(C,f) &= \sup_{\|\pi \| \leq C} \inf_{xy} \E[(-1)^{\pi(xy)+f(xy)}].
\end{align*}
This quantity measures the best worst-case advantage achievable by a $C$-bit protocol  over random guessing. We can now state our main result:
\begin{theorem}\label{maintheorem}     There is a universal constant $\kappa >0$ such that 
    if $C >1/\kappa$ and  $\adv(C,f) < 1/2$, then  $$\adv\Big(\frac{\kappa C  \sqrt{n}}{ \log(Cn)},f^{\oplus n}\Big) < \exp(-\kappa n).$$
\end{theorem}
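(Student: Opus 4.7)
The plan is to argue by contradiction. Suppose there exists a protocol $\pi$ of complexity $\|\pi\| \le C' := \kappa C \sqrt{n}/\log(Cn)$ achieving worst-case advantage exceeding $\exp(-\kappa n)$ for $f^{\oplus n}$. From $\pi$ I want to construct a protocol $\tau$ with $\|\tau\| \le C$ and advantage at least $1/2$ for $f$, contradicting the hypothesis $\adv(C,f)<1/2$. By a standard minimax reduction, I may fix a hard product distribution $\mu^n$ on $\calX^n \times \calY^n$ and work with average-case advantage throughout.

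The central new tool is the paper's \emph{marginal information}. I expect its definition to be engineered so that a chain-rule/superadditivity statement along the lines of $\sum_{i=1}^n I^{\mathsf{marg}}_\mu(\pi : X_i Y_i) = \tildeO(\|\pi\|)$ holds; by pigeonhole this produces a coordinate $i^\star$ with $I^{\mathsf{marg}}_\mu(\pi : X_{i^\star} Y_{i^\star}) = \tildeO(C/\sqrt{n})$. Given an input $(x,y) \sim \mu$ for $f$, Alice and Bob would use public randomness to sample $(X_{-i^\star}, Y_{-i^\star}) \sim \mu^{n-1}$, plant $(x,y)$ in position $i^\star$, and simulate $\pi$ on the combined input using a correlated-sampling / message compression scheme whose cost is proportional to the marginal information at $i^\star$. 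Having produced a guess $b$ for $\pi$'s output, they return $b \oplus \bigoplus_{j \ne i^\star} f(X_j, Y_j)$, which inherits the advantage of $\pi$ at coordinate $i^\star$ as a guess for $f(x,y)$.

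A single application of this reduction only yields advantage on the order of $\exp(-\kappa n)$ for $f$. To amplify to constant advantage I would chain the reductions together via a hybrid/boosting argument that peels off coordinates one at a time, using the hard core of each successive $f^{\oplus(n-k)}$. Summing the communication costs across the $\Theta(n)$ peeling steps with a Cauchy--Schwarz balancing of per-step information budgets is what should produce the claimed $\tildeO(C \sqrt{n})$ total, rather than the naive $\tildeO(Cn)$ direct-sum bound --- the $\sqrt{n}$ factor in the theorem is the signature of this balancing rather than uniform allocation.

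The main obstacle will be the compression step. Classical information-cost compressions (Braverman--Rao, Barak--Braverman--Chen--Rao) rely on the full internal information of the protocol, which can exceed any single coordinate's marginal contribution. The whole point of introducing marginal information must be that, despite being a weaker scalar, it suffices to faithfully simulate $\pi$'s dependence on a single coordinate $i^\star$, and moreover that simulation errors compounded across the $n$ hybrid stages remain below the $\exp(-\kappa n)$ target. Designing the notion of marginal information to underwrite such a compression lemma, and proving it satisfies both the chain rule invoked in paragraph two and the compression guarantee invoked in paragraph three, is in my view the heart of the proof and what distinguishes this XOR lemma from the direct-sum-style arguments it generalizes.
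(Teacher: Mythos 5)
Your high-level scaffolding is right: a minimax reduction to the average case, a notion of marginal information, a direct-sum/subadditivity statement that isolates one coordinate, and a compression step turning small information into small communication. But the mechanism you propose for converting exponentially small advantage on $f^{\oplus n}$ into constant advantage on $f$ --- a hybrid/boosting chain that peels off coordinates one at a time --- is not what the paper does, and I do not see how it could be made to work. The paper avoids boosting entirely by building the advantage into the \emph{definition} of marginal information: the quantity $\marg_I(p,f)$ includes the factor $\bigl|\E_{q(xy\vert m)}[(-1)^f]\bigr|^{-12I/\delta}$, so a bound on $\marg_I(p,f^{\oplus n})$ already encodes the $\exp(-\kappa n)$ advantage (\Cref{thm:construct-rect-dist} produces a bound of the form $O(C + \kappa I n)$ precisely because $-I\log(\mathrm{adv})=O(\kappa I n)$). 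A \emph{single} application of the subadditivity theorem (\Cref{lem:single-shot-marg-bound}, which internally is a recursive halving, \Cref{thm:subadditivity}) then produces one protocol $p'$ with $\marg_I(p',f)=O(I)$, and the compression theorem (\Cref{simulationthoerem}) turns this directly into a protocol with advantage $\Omega(1)$ --- no amplification step is needed or performed.

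Consequently your attribution of the $\sqrt{n}$ factor is also off. It does not arise from ``Cauchy--Schwarz balancing across $\Theta(n)$ peeling steps.'' Rather, \Cref{simulationthoerem} compresses a protocol with communication $T$ and marginal information $O(I)$ into communication $\tilde O(I+\sqrt{TI})$; with $T=\kappa C\sqrt{n}/\log(Cn)$ and $I=T/n$ this is $\tilde O(T/\sqrt{n})\lesssim \kappa C$. The balancing is between the original protocol's communication and its per-coordinate information budget, in a single shot. Two further issues worth flagging: (i) your suggested chain rule $\sum_i I^{\mathsf{marg}}_\mu(\pi:X_iY_i)\lesssim\|\pi\|$ is not at all automatic for this notion --- $\marg_I$ is an infimum over rectangular proxy distributions and also carries the advantage term, so its subadditivity (\Cref{thm:subadditivity}) is the technical heart of the paper and costs an extra additive $O(I\log(\marg/I))$, which is why \Cref{lem:single-shot-marg-bound} is stated with that correction; and (ii) the coordinate-planting reduction you sketch (publicly sample $(X_{-i^\star},Y_{-i^\star})$, plant the input, XOR out the other outputs) is the classical embedding, but the paper's subadditivity construction instead builds a new protocol $p'$ for $f$ whose first message encodes an input of the original protocol, and tracks the rectangular proxy through the construction --- this is needed precisely because the advantage term must survive the reduction, which the naive embedding does not guarantee.
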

The constant $1/2$ is not important, it can be replaced by any constant less than $1$. Some assumption of the type $C> 1/\kappa$ is necessary, because if $x,y \in \{0,1\}$ and $f(xy) = x \oplus y$, then   $\adv(1,f) =0$, yet $\adv(2,f^{\oplus n})=1$. Prior to our work, the best known upper bound was proved by the second author with Barak, Braverman and Chen  \cite{BBCR}, who showed that the advantage is bounded by $1/2$ for a similar choice of the other parameters. Our work builds on the work of Yu \cite{huacheng}, who proved exponentially small bounds on the advantage in the setting of bounded-round communication protocols. 

Our ideas lead to many  results similar to \Cref{maintheorem}.  Next, we review the history that led us to the notion of marginal information, explain the intuitions behind the choices made in the definition, and then describe all of our results in \Cref{usingmarginal}.

\subsection{The evolution of information complexity}
Marginal information is the most recent advance in an evolution of definitions about information. We relate bounds on the communication and advantage for computing $f$ to the corresponding parameters for $f^{\oplus n}$ via a scheme that has been applied many times before. We prove:
\begin{description}
     \item [Step 1] Every protocol computing $f^{\oplus n}$ with significant advantage and small communication has small marginal information; see \Cref{thm:construct-rect-dist}.
    \item [Step 2] Marginal information is subadditive, so the marginal information for computing $f$ is smaller by a factor of $n$; see \Cref{lem:single-shot-marg-bound}. 
    \item [Step 3] Small marginal information can be compressed to give protocols with small communication; see \Cref{simulationthoerem,simulationtheorem2,thm:bounded-round-sim,thm:braverman-sim}.
\end{description}
Definitions of information are famously subtle. In order to make this strategy work, the marginal information needs to permit all 3 steps, and even minor changes to the definition can make one of the steps infeasible. 
 
 Our current definition builds on important insights and intuitions developed in theoretical computer science over a period of decades. An early precursor to the use of  information theory in computer science is the work of  Kalyanasundaram and Schnitger, who used Kolmogorov complexity to prove lower bounds on the randomized communication complexity of the disjointness function \cite{KalyanasundaramSchnitger87}. The proof was subsequently simplified by Razborov \cite{Razborov92}, who gave a beautiful short argument that used Shannon's notion of entropy \cite{Shannon} and implicitly followed  the outline of the steps 1,2,3 described above. This is related to the  questions we study here because  the disjointness function can be thought of as a way to compute the AND of $2$ bits $n$ times. Step 1 is relatively easy for this problem. Step 2 involved a clever way to split the dependence between random variables, and was accomplished using the subadditivity of entropy. Step 3 is also not too difficult.

The next chapter of the story was written during the study of  parallel repetition, a vital tool in the development of probabilistically checkable proofs. Raz \cite{Raz}  proved the first exponentially small bounds in this context using the Kullback-Liebler divergence as a measure of information. Given a distribution $p(xy)$, and a carefully chosen  event $W$, Raz  measured the divergence
\begin{align} &\E_{p(xy\vert W)}\Big [\mathsf{D}(p(x\vert yW)||p(x\vert y))+\mathsf{D}(p(y\vert xW)||p(y\vert x))\Big]\notag \\&=\E_{p(xy\vert W)}\Big [\log \Big(\frac{p(x|yW)}{p(x\vert y)}\cdot \frac{p(y|xW)}{p(y\vert x)}\Big)\Big].\label{divergence}
\end{align} 
In the proof, it is crucial that the event $W$ is \emph{rectangular}, meaning that if $x,y$ are independent, then they remain independent even after conditioning on $W$. Once again, Step 1 is not too difficult. Raz used the subadditivity of divergence and a similar set of clever random variables as in \cite{Razborov92} to split the dependence and accomplish Step 2. Later, Holenstein \cite{Holenstein} introduced a method called  \emph{correlated sampling} to simplify the analogue of Step 3 in Raz's proof, and obtained better bounds.  The second author used these tools to prove optimal bounds for parallel repetition in the setting relevant to probabilistically checkable proofs \cite{Rao}. 

Chakrabarti, Shi, Wirth and Yao \cite{CSWY} were the first to propose using general measures of  information complexity  to address the questions we consider in this paper. Let $xy$ denote the inputs, $m$ denote the public randomness and transcript of a communication protocol and $p(xym)$ denote the joint distribution induced by the  protocol\footnote{We often say $p(xym)$ is a protocol when we mean that it is a distribution induced by a protocol.}. \cite{CSWY} proposed to measure the mutual information
\begin{align*}
\mathsf{I}(M:XY)=
\E_{p(xym)}\Big[\log \frac{p(xy\vert m)}{p(xy)}\Big].
\end{align*}
Years later, this measure was renamed  \emph{external information} by \cite{BBCR}. The external information measures the information learned by an external observer  about the parties' inputs. Step 1 is easy for this measure of information. However, the subadditivity of Step 2 does not hold in general; the proof only goes through when the input distribution $p(xy)$ is a product distribution.  Jain, Radhakrishnan and Sen \cite{JRS}, and Harsha, Jain, McAllester and Radhakrishnan \cite{HJMR} gave ways to implement Step 3 that led to bounds on the success probability for computing $f^n$ in the setting where the inputs are assumed to come from a product distribution and the communication protocols are restricted to having a bounded number of rounds. Meanwhile, Bar-yossef, Jayram, Kumar and Sivakumar \cite{BJKS} showed how to reframe Razborov's proof using mutual information instead of entropy, and proved other results using this formulation which contained hints of the definition of information that came next.

The first upper bounds on the success probability in the  general setting  came when the second author together with Barak, Braverman and Chen \cite{BBCR} adapted the methods developed  in the study of parallel repetition to these problems. In contrast with the external information, they defined the \emph{internal information}, which is the sum of two mutual information terms
\begin{align}
\mathsf{I}(M:X\vert Y) + \mathsf{I}(M:Y\vert X)=
\E_{p(xym)}\Big[\log \Big(\frac{p(x\vert ym)}{p(x\vert y)}\cdot \frac{p(y\vert xm)}{p(y\vert x)}\Big) \Big].\label{internaldef}
\end{align}
The internal information measures what is learned by each party about the other's input. \Cref{divergence} was the inspiration for \Cref{internaldef}; indeed, each setting of $m$ corresponds to a rectangular event. When the inputs come from a product distribution, the internal and external information are the same, and \cite{BBCR}  proved that  subadditivity holds for internal information using an argument similar to the one used in the context of parallel repetition. Moreover, they showed how to leverage the technique of correlated sampling developed by  Holenstein to simulate protocols with information $I$ and communication $C$ using $\approx \sqrt{IC}/\log C$ communication. They gave near optimal simulations of $\approx I \log^2 C$  for protocols with small external information using rejection sampling and a variant of Azuma's concentration inequality. These results proved that there is a constant $\kappa$ such that if $\adv(C,f) < 1/2$, then 
\begin{align*}
\adv\Big(\frac{\kappa C  \sqrt{n}}{ \log(Cn)},f^{\oplus n}\Big) &< 1/2,
\end{align*}
which was the first result along the lines of  \Cref{maintheorem}.  Later, the second author and Braverman \cite{BR} argued that this is the \emph{right} definition of information, because the internal information cost of a function is equal to the amortized communication complexity of that function. This suggested that the internal information might well be the last word in this evolution of definitions, because it could be defined purely using the concept of communication complexity. It seemed like the only path to better results was through better methods to compress  internal information. This is a belief we no longer hold.

Nevertheless, a flurry of ideas about compressing protocols with internal information $I$ and communication $C$ followed. Braverman \cite{Braverman} showed how to obtain protocols with communication $\approx 2^{O(I)}$. The second author and Ramamoorthy  \cite{RR} showed that if $I_A,I_B$ denote the internal information learned by each party, then you can achieve communication $\approx I_A \cdot 2^{O(I_B)}$ and can also achieve communication $\approx I_A + \sqrt[4]{I_B \cdot C^3}$. Two excellent papers, the first by Kol \cite{Kol} and the second by  Sherstov \cite{Sherstov}, showed that  $\approx I \log^2 I$ communication can be achieved when the inputs come from a product distribution. Ganor, Kol and Raz \cite{GKR} (see also \cite{RS}) gave a nice counterexample: a function that can be computed with communication $\approx 2^{2^{O(I)}}$, and internal information $\approx I$, but cannot be computed with communication  $\approx 2^{I}$.  

The next definition to evolve was proposed by  the second author together with Braverman, Weinstein and Yehudayoff \cite{BRWY,BRWY-bounded-round}, inspired by the work of Jain, Pereszl\'enyi and Yao \cite{JPY}. Rather than bounding the information under the distribution $p(xym)$ induced by the protocol, they bounded the infimum of information achieved in the ball of distributions that are close to the protocol. They defined the information to be the infimum 
\begin{align}\inf_q \mathsf{I}_q(M:X\vert Y) + \mathsf{I}_q(M:Y\vert X)
=\inf_q \E_{q(xym)}\Big[\log \Big(\frac{q(x\vert ym)}{q(x\vert y)}\cdot \frac{q(y\vert xm)}{q(y\vert x)}\Big) \Big], \label{brwydef}\end{align}
where here the infimum is taken over all distributions $q(xym)$ that are close to $p(xym)$ in statistical distance. This quantity was ultimately bounded by setting $q(xym) = p(xym\vert W)$, where here $W$ is a reasonably large event (not necessarily rectangular) that implies that the protocol correctly computes the function. The bound on \Cref{brwydef} does not lead to a bound on the information according to $p(xym)$, because it is quite possible that the points outside $W$ reveal a huge amount of information. Still, \cite{BRWY} were able to follow all 3 steps of the high-level approach to prove their results. Step 1 remained easy, but Steps 2 and 3 became more difficult using \Cref{brwydef}. \cite{BRWY} obtained  exponentially small upper bounds for the success probability of computing $f^n$, but did not manage to prove new bounds on the advantage for $f^{\oplus n}$ using this approach. \Cref{brwydef} may not seem very different from \Cref{internaldef}, but it does involve a  proxy $q$, and we pursue the use of such proxies further in the definition of marginal information that we discuss next.

In a paper full of new ideas, Yu \cite{huacheng} recently proved exponentially small bounds on the advantage of bounded-round protocols computing $f^{\oplus n}$. Although Yu's paper involves a potential function that superficially looks like a definition of information,  his proof does not involve a method to compress protocols whose potential is small, and we are unable to  extract a definition of information from his work. Still, his ideas inspired many of the choices made in  our definition. To define the marginal information, we need the concept of a rectangular distribution, which was defined in \cite{huacheng}: 
\begin{definition}\label{rectangular}
    Given a set $Q$ consisting of triples $(xym)$, we say that $Q$ is \emph{rectangular} if its indicator function can be expressed as 
	$$\mathds{1}_Q(xym) = \mathds{1}_A(xm) \cdot \mathds{1}_B(ym),$$ for some Boolean functions $\mathds{1}_A,\mathds{1}_B$.
    Given a distribution $q(xym)$ and a distribution $\mu(xy)$, we say that $q$ is \emph{rectangular with respect to $\mu$} if it can be expressed as   
        \[q(xym) = \mu(xy)\cdot A(xm)\cdot B(ym),\]
    for some functions $A,B$.
\end{definition}
For intuition, it is helpful to think of a rectangular distribution as the result of conditioning a protocol distribution $p(xym)$ on a rectangular event. That would produce a rectangular distribution, but the space of rectangular distributions actually contains other distributions that cannot be obtained in this way.

From our perspective, the most useful  insight of Yu's work is that if $q$ is restricted to being rectangular, then one can allow $q$ to be quite far from $p$ in \Cref{brwydef} and still carry out a meaningful compression of a protocol $p$ to implement Step 3. That is because the rectangular nature of $q$ allows the parties to use hashing and rejection sampling to convert a protocol that samples from $p$ into a protocol that samples from $q$. If $q(xym) = p(xym \vert R)$ for a rectangular event $R$, this is easy to understand: the parties can communicate  $2$ bits to compute if  $xym \in R$ and output the most likely value of $f$ under $q$ with $xym \in R$. If $xym \notin R$ they can output a random guess for the value of $f$. So, it is enough to bound the information terms for $xym \in R$, and enough to guarantee that the compression is efficient for such points. This observation is very powerful, because it allows us to throw away problematic points in the support of the distributions we are working with and pass to appropriate sub-rectangles throughout our proofs.

For all of this to work, it is crucial that the protocol retains some advantage within the support of $q$. For this reason, we need to keep track of the information in the support of $q$ as well as the advantage within the support of $q$, and so, for the first time, the measure of information is going to depend on the function $f$ that the protocol computes. We are ready to state the definition:
\begin{definition}\label{marginaldef}
For  $I\geq 1$ and\footnote{Even though $\delta$ is a fixed constant, we choose to write it in the definition because it eases the notation throughout the paper.}  $\delta=1/15$, the \emph{marginal information} of a protocol $p$ for computing $f$ is defined as
\begin{align*}
    \marg_I(p,f) & = \inf_q \sup_{xym} \log \Big (\frac{q(x\vert ym)}{p(x \vert y)} \cdot \frac{q(y\vert xm)}{p(y \vert x)} \cdot \Big( \frac{q(xym)}{p(xym)}\Big)^I \cdot \Big |\E_{q(xy\vert m)}[(-1)^{f(xy)}]\Big |^{-12I/\delta}\Big ),
\end{align*}
where the infimum is taken over all distributions $q$ that are rectangular with respect to the input distribution $p(xy)$, and the supremum  is taken over all $xym$ in the support of $q$.  
\end{definition}

We use the letter $I$ above because it turns out that protocols computing $f$ can be efficiently compressed when $\marg_I = O(I)$, and any compression must have communication $\Omega(I)$.  Compare \Cref{marginaldef} with \Cref{internaldef,brwydef}. The fact that $q$ must be tethered to $p$ is ensured by including the term $q(xym)/p(xym)$. If $q(xym)=p(xym\vert R)$ for a rectangular event $R$, $q(xym)/p(xym)$ will be equal to $1/p(R)$. The last term in the product computes the advantage of $q$ for computing $f$, because under $q$ and given $m$, the best guess for the value of $f$ is determined by the sign of $\E_{q(xy\vert m)}[(-1)^{f(xy)}]$, and its advantage is the absolute value of this quantity. In words, the marginal information measures the supremum over all $xym$ of the information per unit of advantage, of the best rectangular approximation $q$. 

In analogy with the external information, we define the external marginal information:

\begin{definition}
For  $I\geq 1$ and $\delta=1/15$, the \emph{external marginal information} of a protocol $p$ for computing $f$ is defined as:
\begin{align*}
    \marg^{\mathsf{ext}}_I(p,f) & = \inf_q \sup_{xym} \log \Big (\frac{q(xy\vert m)}{p(x y)} \cdot \Big( \frac{q(xym)}{p(xym)}\Big)^I \cdot \Big |\E_{q(xy\vert m)}[(-1)^{f(xy)}]\Big |^{-12I/\delta}\Big ),
\end{align*}
where the infimum is taken over all distributions $q$ that are rectangular with respect to the input distribution $p(xy)$, and the supremum  is taken over all $xym$ in the support of $q$.  
\end{definition}
We prove that the external marginal information is equal to the marginal information when the distribution on inputs is a product distribution in \Cref{extproduct}.

To state our results about marginal information, we first define the average-case measure of advantage. 
Given a distribution $\mu(xy)$ on inputs, define
\begin{align*}
    \adv_\mu(C,f) &= \sup_{\|\pi \| \leq C} \E[(-1)^{\pi(xy)+f(xy)}],
\end{align*}
where here the expectation is over the choice of inputs $xy$ as well as the random coins of the communication protocol. 
 To study the more restricted setting where the protocols we are working with have a bounded number of rounds, define the worst-case and average case quantities:
\begin{align*}
    \adv^r(C,f) &= \sup_{\|\pi \| \leq C} \inf_{xy} \E[(-1)^{\pi(xy) +f(xy)}], \\
    \adv^r_\mu(C,f) &= \sup_{\|\pi \| \leq C} \E[(-1)^{\pi(xy) +f(xy)}], 
\end{align*}
where throughout, the supremums are taken over $r$-round protocols. 

Returning to our high-level approach, we prove the following results about marginal information, which allow us to carry out Steps 1,2,3:
\begin{enumerate}
\item In \Cref{step1}, we show that a protocol with small communication and large advantage has small marginal information, to handle Step 1:

\begin{theorem}\label{thm:construct-rect-dist}
    For every Boolean function $f(xy)$ and every protocol $p$ of communication complexity $C$,
    \[\marg_I (p,f) \leq 2 C - \left(1 + 12/\delta \right) \cdot I\cdot \log \Big( \E_{p(m)}\Big|\E_{p(xy\vert m)}\Big[(-1)^f\Big]\Big|\Big) + O(I).\]
\end{theorem}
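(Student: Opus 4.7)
My plan is to realize the infimum in $\marg_I(p,f)$ by an explicit rectangular distribution $q$, taken as $p$ conditioned on a carefully chosen rectangular event. Writing $\zeta = \E_{p(m)}|b(m)|$ with $b(m) = \E_{p(xy|m)}[(-1)^f]$, the target reads $\marg_I(p,f) \leq 2C + (1+12/\delta)\, I \log(1/\zeta) + O(I)$. I define a good event $F = F_1 \cap F_2 \cap F_3$ where $F_1 = \{m : |b(m)| \geq \zeta/2\}$, the condition $F_2$ is a typicality condition on $(x,m)$ (lower bounding a quantity like $p(m|x)/p(m)$, equivalently $\mu(Y_m|x)$, by a threshold $\eta$), and $F_3$ is the analogous condition on $(y,m)$. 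Since $\mathds{1}_{F_1}$ depends on $m$ alone, $\mathds{1}_{F_2}$ on $(x,m)$, and $\mathds{1}_{F_3}$ on $(y,m)$, the restricted distribution $q(xym) := p(xym)\,\mathds{1}_F(xym)/p(F)$ is rectangular with respect to $p(xy)$ in the sense of \Cref{rectangular}.

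The first step is to show $p(F) = \Omega(\zeta)$. A reverse Markov inequality applied to $|b(m)| \in [0,1]$ with mean $\zeta$ gives $p(F_1) \geq \zeta/2$; a standard Markov estimate on each of the typicality events shows that $p(\overline{F_i}) \leq \eta \cdot 2^C$ for $i=2,3$, so choosing $\eta = \Theta(\zeta \cdot 2^{-C})$ makes each at most $\zeta/10$, and a union bound yields $p(F) \geq 3\zeta/10$. I then bound the three factors appearing inside the logarithm in the definition of $\marg_I$. The tethering factor $q(xym)/p(xym) = 1/p(F)$ contributes $I \log(1/p(F)) \leq I \log(1/\zeta) + O(I)$. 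The conditional ratios $q(x|ym)/p(x|y)$ and $q(y|xm)/p(y|x)$ are each $O(1/\eta)$ on the support of $q$ by the typicality conditions, which combined contribute $2C$ plus terms absorbed into the $O(I)$ slack once one observes that any residual $\log(1/\zeta)$ factor of bounded multiplicity can be folded into the existing $(1+12/\delta) I \log(1/\zeta)$ term since $I \geq 1$. The advantage factor $|\E_{q(xy|m)}[(-1)^f]|$ is lower bounded by $\Omega(\zeta)$, using that $|b(m)| \geq \zeta/2$ on $F_1$, which gives $(12I/\delta) \log(1/\text{adv}) \leq (12I/\delta)\log(1/\zeta) + O(I)$. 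Summing yields the target bound.

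The main obstacle is the advantage preservation step: after restricting $q$ to the sub-event $F_2 \cap F_3$, the conditional expectation of $(-1)^f$ given $m$ can in principle differ substantially from $b(m)$. To control this, I will argue that for each $m$ in the support of $q$, the $p$-conditional probability of $\overline{F_2 \cup F_3}$ given $m$ is small enough (at most a small multiple of $|b(m)|$) that the conditional expectation shifts by $o(|b(m)|)$, so $|\E_{q(xy|m)}[(-1)^f]| \geq |b(m)|/2 \geq \zeta/4$. This requires carefully balancing the single threshold $\eta$, which simultaneously governs (i) the probability $p(F)$, (ii) the pointwise bounds on the conditional ratios, and (iii) the conditional mass of $F_2 \cap F_3$ given each fixed $m$. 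This tight coupling of the three roles of $\eta$ against the single parameter $\zeta$ is the most delicate part of the argument and is what forces the precise coefficient $(1+12/\delta)$ in the bound.
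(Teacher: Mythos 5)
Your high-level plan — condition $p$ on a rectangular event, then bound the three factors in the definition of $\marg_I$ — matches the paper's structure, and you correctly identify the advantage-preservation step as the crux. But the mechanism you sketch for that step does not match the theorem's precise coefficient, and I do not see how to repair it without the paper's key idea. With a single threshold $\eta$, Markov gives a bound only on the \emph{marginal} mass $p(\overline{F_2\cup F_3})$, not on the conditional mass given each $m$. To then guarantee $p(\overline{F_2\cup F_3}\,|\,m)\le |b(m)|/8$ for the $m$ you keep, you must take $\eta=\Theta(\zeta^2 2^{-C})$ (one factor of $\zeta$ from $p(F_1)\ge\zeta/2$, one from the target $\le\zeta/8$). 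That makes each conditional ratio $O(2^C/\zeta^2)$, so the two ratio terms contribute $2C+4\log(1/\zeta)$. This extra $4\log(1/\zeta)$ cannot be ``folded into'' the existing $(1+12/\delta)I\log(1/\zeta)$ term: that coefficient is fixed in the theorem statement, and $4\log(1/\zeta)$ is not $O(I)$. What you would obtain is $2C+(4+I+12I/\delta)\log(1/\zeta)+O(I)$, which is genuinely weaker than the stated bound (the extra term is unbounded and, when $I=1$, visibly exceeds the claimed coefficient).

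The paper avoids this by choosing $R$ not as a typicality event but as the rectangular set that \emph{maximizes} $p(R)^\delta\cdot\E_{p(m|R)}\bigl|\E_{p(xy|mR)}[(-1)^f]\bigr|$, and then passing to $T\subseteq R$ via the trimming lemma (\Cref{lem:trimming-1}) and to $Q\subseteq T$ by keeping only high-advantage messages. The maximality of $R$ is precisely what powers \Cref{lem:adv-preserving}: removing a $\Omega(1)$ fraction of $R$ can only \emph{decrease} the average advantage by a $p(R)^{-\delta}$-corrected constant, with no per-$m$ Markov bookkeeping. The $\delta$-exponent is engineered so that, when all factors are collected, the exponent of $p(R)^{-1}$ is $2-(11+\delta)I\le 0$, and the $p(R)$ powers silently drop out — that is the cancellation your construction lacks. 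Separately, note that even your ratio bound $q(x|ym)/p(x|y)=p(m|xy)/\bigl(p(m|y)\,p(F_2|ym)\bigr)$ requires a \emph{lower} bound on $p(F_2|ym)$ for every $(y,m)$ in the support, another per-slice condition the single Markov step does not deliver; the trimming lemma is what supplies it in the paper.
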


For any fixed $m$, the quantity $|\E_{p(xy\vert m)}[(-1)^f]|$ measures the advantage of the protocol for computing $f$ conditioned on that value of $m$. So,  
if $\adv_\mu(C,f^{\oplus n}) \geq \exp(-m)$ via a protocol corresponding to the distribution $p$, then the above theorem implies that 
$\marg_I(p,f^{\oplus n}) \leq O(C+Im)$. Unlike all previous definitions, for marginal information Step 1 involves significant work. Our proof crucially uses the fact that the protocol  has bounded communication complexity: for example it would not be enough to start with a bound on the internal information.

\item In \Cref{step2}, we prove that marginal information  is sub-additive with respect to the $n$-fold xor of $f$. If the transcript $m=(m_0,m_1,\dotsc,m_C)$, where $m_j$ denotes the $j$'th message of the protocol, we show

\begin{theorem}\label{lem:single-shot-marg-bound} There is a universal constant $\Delta$ such that if  $I \geq 1$ and $p$ is a protocol distribution for computing $f^{\oplus n}$ with $p(xy) = \prod_{i=1}^n p(x_iy_i)$, then there is a protocol $p_i$ for computing $f$ such that $p_i(x_iy_i) = p(x_iy_i)$, $p_i$ has the same number of messages as $p$, for $j>1$  the support of  $m_j$ is identical in $p_i$ and $p$, and moreover 
   \[\marg_I(p_i,f) \leq \frac{ \marg_I(p,f^{\oplus n})}{n} +\Delta I \cdot  \Big(1+\log \frac{\marg_I(p,f^{\oplus n})}{n\cdot  I} \Big).\]
\end{theorem}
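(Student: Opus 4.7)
The plan is to construct $p_i$ as a single-coordinate simulation of $p$ via public randomness, to exhibit a rectangular proxy $q_i$ derived from a near-infimizer $q$ of $\marg_I(p, f^{\oplus n})$, to bound each summand of the resulting log-ratio in expectation via chain-rule style decompositions, and finally to lift the expectation bound to the supremum bound required by the definition of marginal information via Markov's inequality applied to a rectangular subevent.

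For the setup, let $q$ rectangular with respect to $\prod_j p(x_jy_j)$ nearly achieve $\marg_I(p, f^{\oplus n})$, and write $q(xym) = \prod_j p(x_jy_j)\cdot A(xm)\, B(ym)$; let $M := \marg_I(p, f^{\oplus n})$ and let the pointwise bound $L(xym) \le M$ hold on $\supp(q)$. For each $i \in [n]$, define $p_i$ to be the protocol in which both parties read $(u, v)\sim \prod_{j\ne i} p(x_jy_j)$ from public randomness and then run $p$ on Alice's input $(x_i,u)$ and Bob's input $(y_i, v)$; the augmented transcript $M_i = (u, v, m_1, m_2,\ldots)$ together with the relabeling $u = x_{-i}$, $v = y_{-i}$ gives $p_i(x_iy_iM_i) = p(xym)$, so that $p_i(x_iy_i) = p(x_iy_i)$ and the message-support condition is immediate. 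Define $q_i(x_iy_iM_i) := q(xym)$ under the same relabeling; this is rectangular with respect to $p(x_iy_i)$ in the extended message space because $\prod_{j\ne i}p(x_jy_j)\cdot A(xm)$ depends only on $(x_i, M_i)$ while $B(ym)$ depends only on $(y_i, M_i)$.

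The core of the argument is a term-by-term comparison between the summands of $L_i$ (the log-ratio defining $\marg_I(p_i, f)$) and those of $L$, taken in expectation under $q$. For the two conditional-information summands, the chain rule $\log(q(x|ym)/p(x|y)) = \sum_j \log(q(x_j|x_{<j}ym)/p(x_j|y_j))$ (valid since $p$ is product across coordinates), averaged over a uniformly random coordinate ordering, makes the coordinate-$i$ contribution equal in expectation to $1/n$ of the total. For the advantage term we exploit the rectangular identity $\adv^{(n)}(m) = \E_{q(x_{-i}y_{-i}|m)}\bigl[\prod_{j\ne i}(-1)^{f_j}\cdot \adv_i(x_{-i}y_{-i}m)\bigr]$; a telescoping hybrid over the $n$ coordinates of the XOR decomposes $-\log|\adv^{(n)}(m)|$ into $n$ single-coordinate pieces, each of which corresponds (after averaging over $i$) to $-\log|\adv_i|$ for a suitable extended transcript. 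For the $(q_i/p_i)^I$ term, which does not shrink if $q_i$ is taken to be $q$ directly, we switch to a peel-off variant $q_i(x_iy_iM_i) = p(x_{-i}y_{-i})\cdot q(x_iy_im\mid x_{-i}y_{-i})$ — still rectangular, since conditioning on the rectangular variable $(x_{-i}, y_{-i})$ preserves rectangularity — and apply the chain rule of relative entropy to average the divergence across coordinates. Combining the three bounds and selecting the best $i$ gives $\E_{q_i}[L_i] \le M/n + O(I)$.

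To convert the expectation bound into a pointwise bound, we use the rectangular form of $q_i$: each summand of $L_i$ splits as an Alice-side function (of $x_i, M_i$) plus a Bob-side function (of $y_i, M_i$); for example, $\log[q_i(x_i|y_iM_i)/p_i(x_i|y_i)] = \log A'(x_i, M_i) - \log \E_{x'\sim p(\cdot|y_i)}[A'(x', M_i)]$. Applying Markov's inequality separately to each side — with threshold proportional to $\Delta I \log(M/(nI))$, where the factor $I$ originates from the $I$th-power on the $(q/p)^I$ factor — yields a rectangular sub-event $R = R_A(x_i,M_i)\cap R_B(y_i,M_i)$ with $q_i(R) \ge 1/2$. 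Restricting $q_i$ to $R$ preserves rectangularity and incurs only an additive $O(1)$ via renormalization, while forcing the desired pointwise bound $L_i \le M/n + \Delta I \log(M/(nI))$ on $\supp(q_i|R)$. The principal obstacle lies in subadditivising the $(q/p)^I$ factor and the advantage factor simultaneously: naive constructions of $q_i$ shrink one at the expense of the other, so the peel-off proxy must be chosen carefully to balance both while remaining rectangular and admitting the Alice/Bob decomposition needed for the final Markov step.
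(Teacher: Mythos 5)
Your overall architecture (reveal other coordinates through public randomness, define a proxy $q_i$ from the maximizing $q$, restrict to a large rectangular sub-event to control the pointwise cost) resembles the paper's in spirit, but there are two gaps, and the first is fatal.

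\textbf{The advantage term does not telescope.} Your handling of the $|\E_q[(-1)^f|\cdot]|^{-12I/\delta}$ factor rests on the identity
\(\adv^{(n)}(m) = \E_{q(x_{-i}y_{-i}|m)}\bigl[\prod_{j\ne i}(-1)^{f_j}\cdot \adv_i(x_{-i}y_{-i}m)\bigr]\),
which is just the tower rule and only gives $|\adv^{(n)}(m)| \le \E[|\adv_i|]$ --- the \emph{wrong} direction. What you need in the definition of $\marg_I(p_i,f)$ is a pointwise \emph{lower} bound on $|\adv_i|$ across the support of $q_i$, and there is none: for a fixed $m$ with large $|\adv^{(n)}(m)|$, the conditional advantage $\adv_i(x_{-i}y_{-i}m)$ can be essentially zero on a positive-$q$-mass set of $(x_{-i},y_{-i})$, which sends $-\log|\adv_i|$ to infinity. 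This is precisely the obstacle flagged in the paper's overview (``it may well be that $|\E_{q_1(x_1y_1|\mone)}[(-1)^f]| \ll |\E_{q_1(x_1y_1|w)}[(-1)^f]|$''), and it is where the real work happens: the paper restricts to a maximizer set $G$ to ensure the advantage conditioned on the cross variable $w$ dominates the advantage conditioned on $m$ (\Cref{lem:advantage-preserving-1}), prunes $U',L'$ so that the conditional advantage is lower-bounded on the surviving support (\Cref{claim:largeadv}), and then \emph{buckets} the $w$'s by advantage magnitude so that $|\adv(\mone)|$ and $|\adv(w)|$ differ by at most a factor $2$; the density loss of that bucketing is exactly the $\log(\marg/I)$ factor in the theorem. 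Your proposal has no analogue of any of these steps, and the expectation framework cannot produce one, because the relevant quantity is unbounded in expectation.

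\textbf{The Markov lift does not give the stated leading term.} Even granting an expectation bound $\E_{q_i}[L_i] \le M/n + O(I)$, Markov's inequality on a mass-$1/2$ sub-event yields a pointwise bound of roughly $2(M/n + O(I))$, not $M/n + \Delta I\log(M/(nI))$. The theorem requires coefficient $1$ on $M/n$, with only an additive $O(I\log(M/(nI)))$ slack, so the Markov step cannot absorb a constant-factor blowup of the main term. This is one structural reason the paper's proof is \emph{entirely pointwise}: it never passes through an expectation bound, starting instead from the pointwise $\sup_{xym}$ bound defining $\marg_I(q,p,f\oplus g)$ and preserving it multiplicatively through the identities \eqref{eqn:rectsub}--\eqref{eqn:advsub}.

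Two secondary differences worth noting. First, you reveal \emph{all} of $(x_{-i},y_{-i})$ publicly, which makes $\prod_{j\ne i}(-1)^{f_j}$ deterministic given $M_i$ and hence forces $|\adv_i(M_i)| = |\adv^{(n)}(M_i)|$; the paper instead reveals only the asymmetric cross block ($y_2$ to form $\mone$, $x_1$ to form $\mtwo$) precisely so that \Cref{cor:multiplicativity} gives the product factorization $|\adv^{(2)}(w)| = |\adv_1(w)|\cdot|\adv_2(w)|$, which is the foothold for the sub-additivity. Second, the paper reduces $n\to n/2\to\dotsb\to 1$ by bisection (\Cref{thm:subadditivity} at each level) rather than directly $n\to 1$; this controls the accumulation of the $3I\log(\cdot)$ additive losses across $\log n$ levels into a single $O(I\log(\marg/(nI)))$ term. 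Your single-shot chain-rule averaging over coordinate orderings would give, for a random $i$ and random conditioning subset, an average contribution of $1/n$ of the total --- but your $q_i$ conditions on the entire complement $x_{-i}y_{-i}$, which is a fixed extreme conditioning, not the random-subset average, so even the divergence terms are not obviously controlled by $M/n$ for every $i$ (let alone for the $i$ you also need for the advantage term to be under control).
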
 
If $\marg_I(p,f^{\oplus n}) \leq O(In)$, this theorem proves that $\marg_I(p_i,f)\leq O(I)$. 
This might well be the most technically novel  part of our proof; it is certainly where we spent the most time. The main challenge is proving the result for $n=2$, which is very delicate. This case is captured by   \Cref{thm:subadditivity}, and   \Cref{lem:single-shot-marg-bound} is a straightforward consequence. If $n=2$ and $\marg_I(p,f^{\oplus 2})$ is small, then there is a rectangular distribution $q$ such that the pair $$q(x_1x_2y_1y_2m),p(x_1x_2y_1y_2m)$$ leads to a small value of $\marg_I(p,f^{\oplus 2})$. We show how to use $q,p$ to generate a new pair $$q_1(x_1y_1\mone),p_1(x_1y_1\mone)$$ or a new pair  $$q_2(x_2y_2\mtwo),p_2(x_2y_2\mtwo)$$ proving that either $\marg_I(p_1,f)$ or $\marg_I(p_2,f)$ is more or less bounded by $\marg_I(p,f^{\oplus 2})/2$.  A significant first step is the construction of two pairs of rectangular/protocol distributions with the properties described in \Cref{eqn:rectsub,eqn:1infosub,eqn:2infosub,eqn:advsub}. Given this step, we need to  eliminate various problematic points from the support of the distributions while preserving the rectangular nature of the distribution  to ultimately construct the promised pair of distributions.

We are unable to bound the length of the first message of $p_i$ in terms of the length of the corresponding message of $p$ in \Cref{lem:single-shot-marg-bound}, because in our proof of \Cref{thm:subadditivity} the first message  $\mone_1$ or $\mtwo_1$ needs to encode one of the inputs of the original protocol. Fortunately, this is not a significant obstacle for the high-level strategy. 

\item In \Cref{step3a,step3b,step3c,sec:braverman-sim}, we show how to compress marginal information to handle Step 3. We have been able to match many of the prior results  \cite{BBCR,BR,Braverman} about compressing information and external  information with corresponding results about compressing marginal information and external marginal information, though our proofs are much more technical. Our most general simulation is captured by the following theorem:
\begin{theorem} \label{simulationthoerem}For every $\alpha>0$ there is a $\Delta>0$ such that if $\marg_I(p,f) \leq \alpha I$, $\mu(xy)=p(xy)$ and moreover the messages $m=(m_0,\dotsc, m_C)$ are such that  $m_2,\dotsc,m_C \in \{0,1\}$, then $\adv_\mu(\Delta(I + \sqrt{CI} \log (CI)),f) \geq 1/\Delta$.
\end{theorem}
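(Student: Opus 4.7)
I would fix a near-optimal rectangular $q(xym) = p(xy)A(xm)B(ym)$ witnessing $\marg_I(p,f)\leq \alpha I$, simulate drawing a transcript $m$ approximately distributed as $q(m\mid xy)$ on inputs $xy\sim \mu=p(xy)$, and have the parties output $\sgn\E_{q(x'y'\mid m)}[(-1)^{f(x'y')}]$. The unusually large exponent $12I/\delta$ on $|\E_{q(xy\mid m)}[(-1)^f]|$ in the definition of $\marg_I$ is engineered precisely so that, even after converting a bound on $q$'s advantage into a bound on the advantage of the simulation (which has a different $xy$-marginal than $q$), an $\Omega_\alpha(1)$ expected advantage on $\mu$ survives.

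I would split the transcript as $m=(m_0,m_1,m_{\geq 2})$, with $m_{\geq 2}=(m_2,\ldots,m_C)$ the single-bit tail, and simulate the two blocks in sequence so as to isolate the two summands of the communication bound. For the potentially long prefix $(m_0,m_1)$, I use Holenstein-style correlated sampling driven by public randomness: the parties share i.i.d.\ candidates drawn from a common base distribution; Alice accepts each candidate with probability proportional to the marginalized rectangular factor derived from $A$, and Bob accepts with probability proportional to the marginalized factor derived from $B$. They agree on the first jointly accepted candidate by exchanging $O(1)$ bits per trial. The pointwise $L^\infty$ bound $I\log(q(xym)/p(xym))\leq O(\alpha I)$ on $\supp(q)$ implies $q(xy)/p(xy)\leq 2^{O(\alpha)}$ there, so the expected number of trials is $2^{O(I)}$ and the accepted index is encoded in $O(I)$ bits, giving the $I$ summand.

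For the single-bit tail $m_{\geq 2}$, conditioned on $(m_0,m_1)$, I run a BBCR-style slow correlated sampling round by round. At each round, the speaking party under $p$ locally computes its conditional distribution over the next bit; public randomness together with one-sided rectangular weights $A(x,m_{\leq i})$ and $B(y,m_{\leq i})$ is then used to bias the sample toward $q(m_i\mid m_{<i},xy)$, with $O(1)$ bits per round used to resolve any disagreement. The per-round divergence from $p$ is governed by the $\log(q(x\mid ym)/p(x\mid y))+\log(q(y\mid xm)/p(y\mid x))$ piece of $\marg_I$, which is pointwise bounded on $\supp(q)$. An Azuma-martingale argument exactly as in BBCR then bounds the cost of this phase by $O(\sqrt{CI}\log(CI))$, giving the second summand.

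The main obstacle I expect is the tail phase: the rectangular proxy $q$ is not itself induced by a protocol, so $q(m_i\mid m_{<i},xy)$ does not decompose round by round into Alice-only and Bob-only conditionals the way a genuine protocol does. The fix is to use the two-sided factorization $A(xm)B(ym)$ as separate one-sided acceptance weights and to design a round-by-round joint-acceptance procedure whose overall sampled distribution coincides with $q(m_{\geq 2}\mid m_0m_1,xy)$. Both correctness and the communication budget then rely on the $L^\infty$/pointwise nature of $\marg_I$ together with the product-input assumption $\mu=p(xy)$, which makes $q(x\mid ym)$ and $q(y\mid xm)$ depend only on $A$ and $B$ respectively, so each party can evaluate its own weight locally at each round.
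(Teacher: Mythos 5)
Your plan has a genuine gap at exactly the point you flag as "the main obstacle I expect," and the fix you sketch does not resolve it. You propose to sample a transcript approximately distributed as $q(m\mid xy)$, round by round, biasing each single-bit message toward $q(m_i\mid m_{<i},xy)$ using one-sided weights from $A$ and $B$. But for a rectangular $q(xym)=\mu(xy)A(xm)B(ym)$ that is not itself a protocol distribution, the conditional
\[
q(m\mid xy)=\frac{A(xm)B(ym)}{\sum_{m'}A(xm')B(ym')}
\]
has a normalizer depending jointly on $x$ and $y$, so $q(m_i\mid m_{<i},xy)$ simply does not decompose into an Alice-computable factor times a Bob-computable factor. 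There is no round-by-round "joint-acceptance" rule built from $A(x,m_{\leq i})$ and $B(y,m_{\leq i})$ whose sampled distribution matches $q(m_{\geq 2}\mid m_0m_1,xy)$: the very thing that makes protocol distributions compressible round by round is absent from $q$. Similarly, for the prefix you suggest correlated-sampling with acceptance weights "derived from $A$ and $B$"; the paper instead uses the one-round simulation for the pair $(p(m_1\mid m_0 x), p(m_1\mid m_0 y))$ of $p$'s own conditionals, not anything built from $q$.

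The paper's route is structurally different and avoids this problem entirely: the parties never attempt to sample from $q$. They sample $m$ privately from $p$'s conditionals (Alice from $p(m_i\mid xm^A_{<i})$, Bob from $p(m_i\mid ym^B_{<i})$, both driven by shared thresholds $\rho_i$) so the stitched transcript is exactly $p$-distributed; they correct disagreements via find-first-difference; and only at the very end do they use the factorization $q(xym)/p(xym)=g_1(xm)g_2(ym)$ in a hashed rejection-sampling step that reweights the accepted points so that, conditioned on acceptance, $xym$ is approximately $q$-distributed (on a high-probability set $S$). The rectangular $q$ thus enters only the accept/reject step and the analysis, where the Pinsker-type bound $\E_q[\sum_i\|p(m_i\mid xm_{<i})-p(m_i\mid ym_{<i})\|_1]\leq O(\sqrt{C\cdot\marg_I(p,f)})$ (not an Azuma martingale inequality) controls the expected number of corrections, and Markov plus the rectangular trimming bounds control the conditional advantage. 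Your high-level intuition for why the $12I/\delta$ exponent in $\marg_I$ survives the reweighting is consistent with the paper, but to make any of this precise you must abandon the attempt to simulate $q$ directly and instead simulate $p$ and reweight, which is the essential new move enabled by Yu's rectangular-distribution idea.
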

\Cref{simulationthoerem} shows that if the marginal information is $O(I)$, then one can obtain a protocol with communication $\tilde O(\sqrt{CI})$ that has $\Omega(1)$ advantage for computing $f$. For the external marginal information, we prove:
\begin{theorem}\label{simulationtheorem2}
    For every $\alpha>0$ there is a $\Delta>0$ such that if  $\mext_I(p,f) \leq \alpha I$, $\mu(xy) = p(xy)$, and moreover the messages $m=(m_0,\dotsc, m_C)$ are such that  $m_2,\dotsc,m_C \in \{0,1\}$, then $\adv_{\mu}(\Delta I \log ^2 C, f) \geq 1/\Delta$.
\end{theorem}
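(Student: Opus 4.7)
The plan is to adapt the BBCR external-information compression to the rectangular proxy $q$ furnished by the hypothesis $\mext_I(p,f)\le \alpha I$. Fixing such a $q(xym)=p(xy)\,A(xm)\,B(ym)$, the defining pointwise inequality on the support of $q$ is
\[
\log\frac{q(xy\mid m)}{p(xy)} + I\log\frac{q(xym)}{p(xym)} + \frac{12I}{\delta}\log\frac{1}{|\E_{q(xy\mid m)}[(-1)^{f(xy)}]|} \;\le\; \alpha I.
\]
Rearranging yields a pointwise lower bound $|\E_{q(xy\mid m)}[(-1)^f]| \ge 2^{-\alpha\delta/12}\cdot (q(xy\mid m)/p(xy))^{\delta/(12I)}\cdot (q(xym)/p(xym))^{\delta/12}$, which is $\Omega(1)$ on the typical portion of the support where the likelihood ratios are of constant order. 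The same inequality forces $q$ and $p$ to be comparable up to a constant on that typical region, which will be crucial for transferring the advantage from $q$ back to $\mu=p(xy)$.

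The simulation I would design samples $m$ from $q(m\mid xy)$ using correlated rejection sampling and then outputs $g(m)=\sgn\,\E_{q(xy\mid m)}[(-1)^{f(xy)}]$. Rectangularity of $q$ factorises $q(m\mid xy)\propto A(xm)\,B(ym)$: Alice's acceptance weight depends only on $(x,m)$ and Bob's only on $(y,m)$. Using the public randomness $m_0$, both parties enumerate candidate transcripts drawn from a common prior $\beta(m)$ (for instance $p(m)$), each privately computes its rectangular weight, and—exploiting that $m_2,\dots,m_C$ are single bits so that the transcript has at most $C$ branching points—they communicate $O(\log C)$ bits per round via hashing/binary search to agree on a jointly accepted candidate. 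The per-round expected cost is controlled by the pointwise bound on $\log(q(xym)/p(xym))$; summing over the $C$ rounds and applying an Azuma-style concentration inequality gives total communication $O(I\log^2 C)$ with constant probability.

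The advantage under $q$ of the output $g$ is $\E_{q(m)}[|\E_{q(xy\mid m)}[(-1)^f]|]$, which the pointwise lower bound above—combined with the fact that the small fractional powers $\delta/(12I),\delta/12$ of the likelihood ratios are typically $\Omega(1)$—drives to $\Omega(1)$ after restricting to the typical region. The pointwise bound $q(xym)/p(xym)=O(1)$ on that region then transfers the constant advantage to the target measure $\mu=p(xy)$, yielding $\adv_\mu(\Delta I\log^2 C,f)\ge 1/\Delta$.

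The principal obstacle is the concentration step. In the classical BBCR compression the per-round divergence $D(p(m_i\mid xy,m_{<i})\,\|\,p(m_i\mid m_{<i}))$ sums to $I$ and Azuma's inequality applies directly; here the analog involves logs of the rectangular factors $A$ and $B$, constrained only by the pointwise bound from $\mext_I$, so the martingale argument must exploit this uniform control without losing extra $\sqrt{C}$ or $\log C$ factors. A secondary subtlety is handling the possibly long first message $m_1$, which is not a single bit; rectangularity of $q$ in the $m_1$ coordinate together with the public randomness $m_0$ lets the parties enumerate $m_1$-candidates under the same rejection mechanism. Finally, one must carefully truncate the support to the region where the pointwise inequality is tight enough that the constant advantage survives—this is where the slack between $\mext_I\le \alpha I$ and the desired constant advantage is paid.
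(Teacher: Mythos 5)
Your high-level plan is the right one, and your reading of what the pointwise $\mext_I$ bound gives is essentially correct, but there is a genuine gap at exactly the place you flag: you have no mechanism to make the Azuma-style concentration go through, and without it the total communication is not $O(I\log^2 C)$.

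The paper's actual proof opens with a preprocessing step you do not have: it invokes a smoothing theorem (\Cref{thm:smoothing}) to replace $p$ by a $\beta$-smooth protocol $p'$ with $C'=O(C\log(IC)/\beta^2)$ bits in which \emph{every} message bit satisfies $|p'(m_i\mid xym_{<i})-1/2|\le\beta$, while increasing $\mext_I$ by at most $1$. Smoothness is what makes the martingale argument work: it guarantees that the per-step divergence increments $d_{j+1}-d_j$ are all $O(\beta)$ (\Cref{eqn:smoothness}), so the bounded-differences sum $\sum_i(\sup t_i-\inf t_i)^2$ in the Azuma bound (\Cref{divconcentrate}) is controlled by the \emph{total} divergence, not by the worst single bit. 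Without smoothing, a single message bit can carry divergence as large as the whole information budget, and your martingale has no uniform control over its increments — the pointwise $\mext_I$ bound constrains only the cumulative $\log(q(xym)/p(xym))$ along the whole path, not the per-bit jumps. This is precisely the "principal obstacle" you name; the smoothing step is how the paper resolves it, and it is not derivable from the rectangularity of $q$ alone. Your plan to partition the transcript into segments of bounded divergence and use hashing/binary search to resolve disagreements otherwise tracks the paper's scheme (sample a completion from the public prior, exchange frontier indices to find a common segment endpoint, reject based on the parties' private likelihood ratios, and hash at the end to transfer from $p$-support to $q$-support), so the proposal would be repairable by adding the smoothing lemma; as written, though, the concentration step is unjustified.

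Two smaller points. First, your pointwise lower bound on $|\E_{q(xy\mid m)}[(-1)^f]|$ has a useless factor $(q(xy\mid m)/p(xy))^{\delta/(12I)}$, which collapses to $0$ whenever the ratio is small; the paper instead derives the advantage lower bound $\E_{q(m)}[|\E_{q(xy\mid m)}[(-1)^f]|]\ge 2^{-\delta\mext/(12I)}$ from the non-negativity of each expected divergence term (\Cref{eq:mext-adbound}), which does not degrade pointwise. Second, the rejection weights the paper uses in the frontier step are ratios of the \emph{protocol} distributions $p(\tilde m_i\mid x\tilde m_{<i})/p(\tilde m_i\mid\tilde m_{<i})$, not the rectangular factors $A,B$ of $q$; $q$ only enters at the very end via the events $Q,\mathcal{Z}$ and hashing. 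Your variant of rejecting directly on $A,B$ might also be made to work, but then you lose the clean identity $\Gamma(xym)=p(xym)$ that the paper's analysis relies on, and you would need a separate argument to pass from $q$-advantage back to $\mu$-advantage.
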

This theorem gives improved results when the inputs come from a product distribution. 
It is quite possible that even better simulations can be obtained using the ideas of \cite{Kol,Sherstov,BK}, but we have not managed to obtain such results. We also obtain results that are independent of the communication complexity:
\begin{theorem}\label{thm:braverman-sim}
    For every $\alpha>0$ there is a $\Delta>0$ such that if  $\marg_I(p,f) \leq \alpha I$ and $\mu(xy)=p(xy)$, then $\mathsf{adv}_\mu(\Delta I, f) \geq \exp(- \Delta I)$.
\end{theorem}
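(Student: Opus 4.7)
The plan is to prove \Cref{thm:braverman-sim} by designing a rectangular rejection-sampling simulation, in the spirit of \cite{Braverman}, whose per-trial acceptance probability can be tuned to $\exp(-\Delta I)$ inside the $\Delta I$-bit budget. Let $q$ be the rectangular witness certifying $\marg_I(p,f)\le\alpha I$; by \Cref{rectangular} we may write $q(xym)=p(xy)\,A(xm)\,B(ym)$. The first step is to average the pointwise inequality from \Cref{marginaldef} against $q(xym)$: the three ratio factors $q(x\mid ym)/p(x\mid y)$, $q(y\mid xm)/p(y\mid x)$ and $q(xym)/p(xym)$ integrate to three non-negative KL divergences between $q$ and $p$, so all that survives is
\[ (12I/\delta)\,\E_{q(m)}[\log 1/\beta(m)] \le \alpha I, \]
where $\beta(m)=|\E_{q(xy\mid m)}[(-1)^{f(xy)}]|$. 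In particular the $q(m)$-weighted geometric mean of $\beta$ is at least $2^{-\alpha\delta/12}=\Omega(1)$: a typical transcript sampled according to $q(m)$ already carries constant advantage for $f$.

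For the simulation itself the parties use public randomness to generate a candidate transcript $m$ together with independent uniform test coins; Alice locally accepts with probability $A(xm)$ (suitably rescaled into $[0,1]$) and Bob with probability $B(ym)$, and they exchange two bits to detect joint acceptance. Upon joint acceptance they output the public bit $b^{\star}(m)=\mathrm{sign}(\E_{q(xy\mid m)}[(-1)^{f}])$, which depends only on $m$ and on the publicly known $q$ and $f$; otherwise they output a uniform bit. Since $\mu(xy)=p(xy)$ and $q(xym)=p(xy)A(xm)B(ym)$, a direct calculation reveals that conditional on joint acceptance the transcript is distributed as $q(m\mid xy)$ and the output bit achieves advantage $\beta(m)$, so this single-trial protocol contributes exactly $\sum_m\tau(m)\,q(m)\,\beta(m)$ to $\adv_\mu(\cdot,f)$, where $\tau$ is the proposal distribution for $m$. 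Running $\Theta(I)$ independent trials of this sampler in parallel keeps the total communication at $O(I)$ and, combined with the averaged $\beta$-bound, collects advantage at least $\Theta(I)\cdot 2^{-\alpha\delta/12}\sum_m\tau(m)q(m)$.

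The main obstacle, and the reason $\Delta$ must depend on $\alpha$, is choosing $\tau$ so that the single-trial acceptance mass $\sum_m\tau(m)q(m)$ is at least $\exp(-O(I))$: the naive choice $\tau=q(m)$ yields $\|q(m)\|_2^2$, which can collapse to the reciprocal of $|\supp(q(m))|$ and hence be far too small whenever $p$'s transcripts are long. To compensate I would have the parties spend a constant fraction of the $\Delta I$-bit budget hashing transcripts into $2^{\Theta(I)}$ buckets with a shared hash function, communicating a bucket index, and restricting $\tau$ to a renormalisation of $q$ on the agreed bucket. The pointwise bound $(q(xym)/p(xym))^I\le 2^{O(I)}\beta(m)^{12I/\delta}$ extracted from \Cref{marginaldef}, together with the averaged $\beta$-bound, should ensure that with constant probability over the random hash a single bucket captures enough $q$-mass supported on constant-advantage transcripts. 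Balancing the number of buckets, the number of parallel trials and the per-trial communication so that everything fits within the $\Delta I$-bit budget is where I expect the most delicate calculation to lie.
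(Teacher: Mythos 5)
Your first paragraph is right and matches \Cref{eqn:expectadbound}; averaging the pointwise marginal-information bound over $q$ and using non-negativity of the divergence terms does isolate the advantage term. The outline of a rejection-correction from $p$ to $q$ via the rectangular factors $g_1,g_2$ (Alice checks $\eta^A$ against $g_1(xm)$, Bob checks $\eta^B$ against $g_2(ym)$, with hashing to coordinate on the scale $\lceil\log g_1\rceil$) is also the idea the paper uses, and it does cost only $2^{-O(\marg/I)}$ in acceptance probability because the sets $S_K$ control $q(xym)/p(xym)$. So you have the back half of the argument.

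The genuine gap is in the front half: how the two parties end up holding the \emph{same} transcript $m$ sampled roughly according to $p(m\mid xy)$ while spending only $O(I)$ bits, when $|\mathcal{M}|$ can be of order $2^{C}$ with $C$ unbounded in $I$. Your scheme---jointly sample $m$ from a public proposal $\tau$, Alice accepts with probability proportional to $A(xm)$, Bob with probability proportional to $B(ym)$, exchange two bits---has a per-trial acceptance probability of order $\frac{1}{\|A\|_\infty\|B\|_\infty}\sum_m\tau(m)A(xm)B(ym)$. With $\tau$ uniform this is $\frac{1}{|\mathcal{M}|\|A\|_\infty\|B\|_\infty}\cdot\frac{q(xy)}{p(xy)}$, i.e.\ at best $1/|\mathcal{M}|$, so $\Theta(I)$ parallel repetitions cannot close the gap. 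Your fix---hash transcripts into $2^{\Theta(I)}$ buckets and renormalise $q$ on an agreed bucket---does not repair this, because neither party can compute $q(m)$, $p(m)$, or even identify which bucket carries the relevant $q$-mass: Alice only sees $A(\cdot\,m)$ and $p(m\mid x)$, Bob only $B(\cdot\,m)$ and $p(m\mid y)$. There is no low-communication way to agree on a bucket that is guaranteed to capture a non-negligible fraction of the $q(m\mid xy)$ mass, and inside a bucket the support of $q$ can still be exponentially large.

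What the paper does instead is run Braverman's correlated sampling on the \emph{protocol} distribution $p(m\mid xy)$, not on $q$ directly. Each party privately rank-samples candidate transcripts $(m^i,\rho_A^i,\rho_B^i)$: Alice accepts $m^i$ if $\rho_A^i\le\prod_{j\text{ odd}}p(m^i_j\mid xm^i_{<j})$ and $\rho_B^i\le 2^{6\marg}\prod_{j\text{ even}}p(m^i_j\mid xm^i_{<j})$, Bob symmetrically with the slack $2^{6\marg}$ on the odd side. Thanks to the protocol structure, the product over odd $j$ is exactly Alice's part of $p(m\mid xy)$ and the product over even $j$ of $p(m_j\mid xm_{<j})$ is her estimate of Bob's part; the slack $2^{6\marg}$ is what \Cref{claim:R-bound-for-Braverman} licenses on a dense rectangular set $T$, and it is chosen so that on $T$ Alice's and Bob's acceptance regions for the canonical sample $i_*$ coincide. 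The probability that the three first-accepted indices $i_A,i_B,i_*$ all equal $1$ is $\Omega(2^{-6\marg})$ (the slack to the fourth power of the uniform proposal, roughly), which is $\exp(-O(I))$ and not $1/|\mathcal{M}|$. A hash $t(i)$ of the accepted index (plus the hash $h$ of $\lceil\log g_1\rceil$) is what gets communicated. Without this correlated-sampling step---sample from $p(m\mid xy)$ first using the bounded likelihood ratios on $T$, then reject-correct to $q$---the acceptance probability cannot be made communication-independent, and the theorem does not follow.
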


When the number of rounds of the protocol is bounded, we prove:
\begin{theorem} \label{thm:bounded-round-sim} 
    For every $\alpha>0$ there is a $\Delta>0$ such that if $\marg_I(p,f) \leq \alpha I$, $\mu(xy) = p(xy)$, $p$ has $r$-rounds\textbf{} and $m_r \in \{0,1\}$, then $\adv^r_\mu(\Delta r(I+ \log r), f) \geq 1/\Delta$.
\end{theorem}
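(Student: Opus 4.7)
The plan is to construct an $r$-round simulation that samples from the rectangular witness $q$ to small marginal information, using round-by-round rejection sampling in the style of Jain--Radhakrishnan--Sen and Harsha--Jain--McAllester--Radhakrishnan adapted to the rectangular setting of Yu. I fix a rectangular $q(xym) = p(xy)\cdot A(xm)\cdot B(ym)$ that nearly attains $\marg_I(p,f) \leq \alpha I$. Unpacking the definition then yields, for every $xym \in \supp(q)$, the pointwise bounds $q(x|ym)/p(x|y),\ q(y|xm)/p(y|x) \leq \exp(O(\alpha I))$, $q(xym)/p(xym) \leq \exp(O(\alpha))$, and $|\E_{q(xy|m)}(-1)^{f(xy)}| = \Omega(1)$, with constants depending only on $\alpha$. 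In particular, the sign of this conditional expectation is a constant-advantage predictor of $f$ under $q(\cdot|m)$.

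The simulation proceeds round-by-round for $m = (m_0, m_1, \ldots, m_r)$. At round $i$, given the shared prefix $m_{<i}$, the sending party (say Alice) proposes candidate values of $m_i$ from her one-sided conditional using public randomness, while Bob accepts or rejects using the portion of the ratio $q(m_i|xym_{<i})/p(m_i|xm_{<i})$ that he can compute. The key point is that the rectangular factorization of $q$ survives conditioning on $m_{<i}$, so the one-sided rejection-sampling primitive of Yu applies at each round. The marginal information bound controls the worst-case log-ratio between Alice's proposal and the target $q(m_i|xym_{<i})$ by $O(I)$ at every round, since in the worst case a single round can absorb essentially all of the total log-ratio. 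Setting per-round failure tolerance $1/(Cr)$ and killing the tail of the number of rejection attempts geometrically costs an extra $O(\log r)$ bits per round, so each round uses $O(I+\log r)$ bits and the total is $O(r(I+\log r))$. A union bound yields that the simulated transcript is within total variation $1/C$ of $q(m|xy)$ under $\mu = p(xy)$, and since $m_r \in \{0,1\}$ is the final message, Alice or Bob outputs it directly as their guess for $f$, inheriting the $\Omega(1)$ advantage of $q(\cdot|m)$.

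The main obstacle is that marginal information is a supremum over $xym$ rather than an expectation, so the per-round overhead does not amortize across rounds the way internal information does in the JRS analysis: in the worst case a single round carries the entire log-ratio, forcing a budget of $\Theta(I)$ bits in \emph{every} round rather than $\Theta(I/r)$. The delicate pieces are therefore to (i) verify that the rectangular structure is preserved under conditioning on $m_{<i}$ so Yu's one-sided primitive applies at each round; (ii) translate the pointwise marginal information inequality into the worst-case acceptance-ratio bound used by that primitive; and (iii) account for the per-round failure budget $1/(Cr)$ using only $O(\log r)$ additional bits via the geometric tail of the number of rejection attempts. Once these are in place, chaining the per-round approximations yields $\adv^r_\mu(\Delta r(I + \log r), f) \geq 1/\Delta$ with $\Delta$ depending only on $\alpha$.
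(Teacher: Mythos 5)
Your high-level plan — round-by-round rejection sampling in the spirit of \cite{huacheng}, with per-round budget $O(I + \log r)$ forced by the fact that marginal information does not amortize across rounds — is the right one, and you correctly identify the $\Theta(I)$-per-round lower bound as the key obstruction to a JRS-style amortized analysis. But there are concrete gaps in how you set this up.

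The central error is the claim that "unpacking the definition yields, for every $xym \in \supp(q)$, the pointwise bounds $q(x|ym)/p(x|y),\ q(y|xm)/p(y|x) \leq \exp(O(\alpha I))$, $q(xym)/p(xym) \leq \exp(O(\alpha))$, and $|\E_{q(xy|m)}(-1)^{f}| = \Omega(1)$." This does not follow. The marginal information bounds the \emph{product} of the four factors at each $xym$; any individual factor can be arbitrarily large if another is correspondingly small (and the ratios $q(x|ym)/p(x|y)$, $q(xym)/p(xym)$ can be arbitrarily small). The paper does not and cannot obtain worst-case pointwise bounds from the supremum. Instead it shows (Claim~\ref{claim:q(S)-lb}, Claim~\ref{claim:R-dense-in-S}) that sets such as $\{q(x|ym)\geq 2^{-(\marg+KI)/I}p(x|y)\}$, $\{q(xym)\geq 2^{-3(\marg+KI)/I}p(xym)\}$, etc., each have $q$-measure $\geq 1 - 2^{-(\marg+KI)/I}$ by a Markov argument, and the protocol's correctness analysis conditions on their intersection. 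Turning the $\sup$ over a product into per-factor control is precisely the probabilistic trimming step you've skipped, and the advantage analysis has to pay for the resulting conditioning. Relatedly, your per-round budget needs to control $p(m_i|xym_{<i})/p(m_i|ym_{<i})$ (the paper's $T_K$), not the ratio $q(m_i|xym_{<i})/p(m_i|xm_{<i})$ you propose; the former is what the \cite{BR} one-round primitive compresses, and controlling it from the marginal information bound requires chaining through several of the intermediate sets $G_j$ as in Claim~\ref{claim:R-dense-in-S}.

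A second, smaller gap: you propose that the parties "output $m_r$ directly as their guess for $f$." Nothing in the hypothesis says $p$ has advantage for $f$; only the rectangular $q$ is guaranteed to have advantage, and only through the term $|\E_{q(xy|m)}[(-1)^f]|$ in the definition. The output must therefore be $\mathsf{sign}(\E_{q(xy|m)}[(-1)^f])$, which is a function of the final transcript $m$ (including $m_r$) but is not $m_r$ itself. This in turn means you need the hashing/rejection step to switch the transcript distribution from $p(m|xy)$ to something close to $q$-reweighted — the step with $\eta^A, \eta^B$ and the hash $h$ in the paper's Steps 3--4, which your sketch doesn't explicitly include. The paper's structure is: run \cite{BR}-style compression of $p$ round-by-round with fixed budget $L = 14\marg + 5\log(r+1)$, then do a single rectangular correction (via $g_1, g_2$ and hashing) and output the $q$-sign at the end. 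Your version, which tries to sample $q$'s conditionals directly at each round, would need a fresh argument that the per-round $q$-to-$p$ ratios are controlled, and would still need to account for how $q(\cdot|m_{<i})$-conditioning fails to be a protocol distribution.
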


\end{enumerate}
These results about the marginal information cost allow us to prove \Cref{maintheorem}, as well as several other results of that flavor.

\subsection{Using marginal information to prove XOR lemmas}\label{usingmarginal}
To state all of our results, let us define the average-case and worst-case measures of success:
\begin{align*}
    \suc(C,f) & = \sup_{\|\pi \| \leq C} \inf_{xy} \Pr[\pi(xy)=f(xy)]\\
    \suc^r(C,f) & = \sup_{\|\pi \| \leq C} \inf_{xy}\Pr[\pi(xy)=f(xy)]\\
    \suc_\mu(C,f) & = \sup_{\|\pi \| \leq C} \Pr[\pi(xy)=f(xy)]\\
    \suc^r_\mu(C,f) & = \sup_{\|\pi \| \leq C} \Pr[\pi(xy)=f(xy)],
\end{align*}
where in $\suc^r,\suc_\mu^r$ the supremum is taken over $r$-round protocols, and in $\suc_\mu, \suc^r_\mu$ the probability is over inputs sampled from $\mu(xy)$. Yao's min-max theorem yields
\begin{align}
\adv(C,f) & = \inf_{\mu} \adv_{\mu}(C,f),\notag \\
    \suc(C,f) & = \inf_{\mu} \suc_\mu(C,f),\notag\\
    \adv^r(C,f) & = \inf_{\mu} \adv^r_{\mu}(C,f),\notag \\
    \suc^r(C,f) & = \inf_{\mu} \suc^r_\mu(C,f).
    \label{yao}
\end{align}

 Given any distribution $\mu$ on $\mathcal{X} \times \mathcal{Y}$, define the $n$-fold product distribution $\mu^n$ on $\mathcal{X}^n \times \mathcal{Y}^n$ by $\mu^n(xy)  = \prod_{j=1}^n \mu(x_jy_j)$.
 \Cref{maintheorem} is proved by proving this stronger bound:
\begin{theorem}\label{maintheorem2} There is a universal constant $\kappa>0$ such that
    if $C >1/\kappa $ and $\adv_\mu(C,f) \leq \kappa$, then  $\adv_{\mu^n}(\kappa C  \sqrt{n}/ \log(Cn),f^{\oplus n}) \leq \exp(-\kappa n).$
\end{theorem}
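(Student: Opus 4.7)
The plan is to argue the contrapositive via the three-step pipeline laid out in the introduction. Suppose for contradiction that $\adv_{\mu^n}(C', f^{\oplus n}) > \exp(-\kappa n)$, where $C' := \kappa C\sqrt{n}/\log(Cn)$, realized by a protocol $p$ of communication at most $C'$ with input distribution $p(xy)=\mu^n(xy)$, normalized so that every message after the first is a single bit. Set $I := \max(1, \lceil C'/n \rceil)$, and take the universal constant $\kappa$ small enough that all the parameter inequalities below are satisfied.

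First, apply \Cref{thm:construct-rect-dist} to $p$. Since
\[
    \E_{p(m)}\bigl|\E_{p(xy\mid m)}[(-1)^{f^{\oplus n}}]\bigr| \;\geq\; \adv_{\mu^n}(p,f^{\oplus n}) \;>\; \exp(-\kappa n),
\]
the theorem yields $\marg_I(p,f^{\oplus n}) \leq 2C' + (1+12/\delta)I\kappa n + O(I) = O(C' + I\kappa n)$. Next, feed this into \Cref{lem:single-shot-marg-bound} to obtain a coordinate $i$ and a protocol $p_i$ with $p_i(x_iy_i)=\mu(x_iy_i)$, with the same number of messages as $p$ and single-bit messages after the first, such that
\[
    \marg_I(p_i,f) \;\leq\; \frac{\marg_I(p,f^{\oplus n})}{n} + \Delta I\Bigl(1 + \log\tfrac{\marg_I(p,f^{\oplus n})}{nI}\Bigr) \;=\; O(I),
\]
where the last bound uses $\marg_I(p,f^{\oplus n})/(nI) = O(1)$, which is exactly what the choice of $I$ enforces. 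Thus $\marg_I(p_i,f) \leq \alpha I$ for some constant $\alpha$, with $\alpha$ controllable by shrinking $\kappa$, and we are positioned to invoke the simulations.

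Finally, compress, splitting into two regimes. If $C' \leq n$ so that $I=1$, apply \Cref{thm:braverman-sim} to produce a protocol for $f$ of communication $\Delta$ and advantage at least $\exp(-\Delta)$; since $C \geq 1/\kappa \geq \Delta$ and $\kappa < \exp(-\Delta)$, this contradicts $\adv_\mu(C,f) \leq \kappa$. If $C' > n$ so that $I = \lceil C'/n\rceil$, apply \Cref{simulationthoerem}, whose communication
\[
    \Delta\bigl(I + \sqrt{C'I}\log(C'I)\bigr) \;\leq\; \Delta\bigl(C'/n + (C'/\sqrt{n})\log(C'^2/n)\bigr) \;=\; O(\kappa C),
\]
using $C' = \kappa C\sqrt{n}/\log(Cn)$, is at most $C$ for small $\kappa$; the resulting advantage $1/\Delta$ exceeds $\kappa$, again contradicting the hypothesis.

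The hard part is the parameter balancing. The delicate choice $I = \max(1,\lceil C'/n\rceil)$ is what makes the first-step bound $O(C'+I\kappa n)$ divide cleanly by $n$ under the subadditivity of Step 2, guaranteeing the crucial $\marg_I(p_i,f) = O(I)$. The two-regime split in Step 3 is unavoidable: in the small-$C$ regime, $\sqrt{C'}\log C'$ from \Cref{simulationthoerem} can exceed $C$, so we must fall back on the communication-independent compression of \Cref{thm:braverman-sim}, which yields a constant-communication protocol that still fits inside $C \geq 1/\kappa$ with constant advantage. Everything else is bookkeeping on the constants, which is arranged by taking $\kappa$ universally small enough.
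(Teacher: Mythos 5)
Your argument follows the paper's own proof essentially verbatim: the same choice of $I$ (splitting into the $C'/n \geq 1$ and $C'/n < 1$ regimes), the same application of \Cref{thm:construct-rect-dist}, \Cref{lem:single-shot-marg-bound}, and then \Cref{simulationthoerem} or \Cref{thm:braverman-sim} respectively, with the same parameter bookkeeping. The only cosmetic difference is phrasing $I$ as $\max(1,\lceil C'/n\rceil)$ rather than the paper's casework on $T/n$; the substance is identical.
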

To prove \Cref{maintheorem2}, suppose that there is a protocol $p$ computing $f^{\oplus n}$ with advantage $\exp(-\kappa n)$ and communication $T = \kappa C\cdot\sqrt{n}/\log(Cn)$. If $T/n \geq 1$, we set $I = T/n$ and apply \Cref{thm:construct-rect-dist} to show that $\marg_{I}(p,f^{\oplus n}) \leq O(T+ \kappa I n) \leq O(In)$. Next, apply \Cref{lem:single-shot-marg-bound} to find a protocol $p'$ with $\marg_{I}(p',f) \leq O(I)$. Finally, apply \Cref{simulationthoerem} to obtain a protocol computing $f$ with advantage $\Omega(1)$ and communication proportional to
\begin{align*}
\frac{T}{n} + 2\sqrt{I T}\log(T) &\leq \frac{T}{n} + 2\frac{T\log T}{\sqrt{n}} \\
&\lesssim \frac{\kappa C}{\log nC}\cdot \log T \lesssim  \kappa C.
\end{align*}
If $T/n <1$, set $I = 1$ and apply \Cref{thm:construct-rect-dist} to show that $\marg_I(p,f^{\oplus n}) \leq O(I n)$. Next, apply \Cref{lem:single-shot-marg-bound} to find a protocol $p'$ with $\marg_{I}(p',f) \leq O(I) = O(1)$. Finally, we apply \Cref{thm:braverman-sim} to obtain a protocol computing $f$ with advantage $\Omega(1)$ and communication $O(1)$. Setting $\kappa$ sufficiently small, we obtain a contradiction in either case, which proves that there is no protocol $p$ as above. \Cref{maintheorem} can be obtained from \Cref{maintheorem2}  using \Cref{yao} and the fact that the worst-case success probability of a communication protocol can be increased by taking the majority outcome of several runs of the protocol. We leave these details to the reader. 

 \Cref{maintheorem,maintheorem2} yield bounds on the success probability for computing $f^n$ as well:
\begin{corollary}
    There is a universal constant $\kappa>0$ such that
    if $C >1/\kappa $ and $\adv(C,f) < \kappa$, then  $\suc(\kappa C  \sqrt{n}/ \log(Cn)),f^{ n}) < \exp(-\kappa n)$.
\end{corollary}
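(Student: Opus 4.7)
The plan is to deduce the corollary from \Cref{maintheorem2} via a Fourier-analytic direct-product argument. By Yao's minimax (\Cref{yao}), it suffices to produce a distribution $\mu_0$ on $\mathcal{X}\times\mathcal{Y}$ with $\suc_{\mu_0^n}(C', f^n) < \exp(-\kappa n)$, where $C' = \kappa C\sqrt{n}/\log(Cn)$. From $\adv(C,f)<\kappa$ and Yao, fix $\mu_0$ with $\adv_{\mu_0}(C,f)<\kappa$, which will feed into \Cref{maintheorem2}.

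For any communication-$C'$ protocol $\pi$ with $n$-bit output, let $D_{xy}$ denote the distribution of the error vector $\pi(xy)\oplus f^n(xy)$ over $\pi$'s coins. Fourier inversion on $\{0,1\}^n$ gives
\[
\suc_{\mu_0^n}(\pi, f^n) \;=\; \E_{xy\sim \mu_0^n}\bigl[D_{xy}(0^n)\bigr] \;=\; 2^{-n}\sum_{S\subseteq[n]}\E_{xy}\bigl[\hat D_{xy}(S)\bigr],
\]
where $\hat D_{xy}(S)=\E_\pi\bigl[(-1)^{\bigoplus_{j\in S}(\pi_j(xy)\oplus f(x_jy_j))}\bigr]$ is the advantage on $xy$ of the XOR protocol $\pi_S:=\bigoplus_{j\in S}\pi_j$ viewed as a protocol for $f^{\oplus S}$. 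Since $f^{\oplus S}$ depends only on $(x_S,y_S)$, absorbing the remaining coordinates into public randomness drawn from $\mu_0^{n-|S|}$ turns $\pi_S$ into a communication-$C'$ protocol on $\mu_0^{|S|}$-distributed inputs computing $f^{\oplus |S|}$, and $\E_{xy}[\hat D_{xy}(S)]$ is then exactly its $\mu_0^{|S|}$-average-case advantage.

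To bound each Fourier coefficient, apply \Cref{maintheorem2} to $f^{\oplus|S|}$. Write $\kappa_0$ for the universal constant of \Cref{maintheorem2} and fix a small threshold $\alpha\in(0,1)$. For $|S|\ge \alpha n$ the communication requirement $C'\le \kappa_0 C\sqrt{|S|}/\log(C|S|)$ is met as long as $\kappa\le \kappa_0\sqrt\alpha$, so \Cref{maintheorem2} yields $|\E_{xy}[\hat D_{xy}(S)]|\le\exp(-\kappa_0 |S|)$. For $|S|<\alpha n$ use the trivial bound $|\hat D_{xy}(S)|\le 1$. Summing with $\sum_{k<\alpha n}\binom{n}{k}\le 2^{H(\alpha)n}$ and $\sum_k \binom{n}{k}e^{-\kappa_0 k}=(1+e^{-\kappa_0})^n$ gives
\[
\suc_{\mu_0^n}(\pi,f^n)\;\le\; 2^{-(1-H(\alpha))n}+\Bigl(\frac{1+e^{-\kappa_0}}{2}\Bigr)^n \;\le\; \exp(-\kappa n),
\]
provided $\alpha$ is picked small enough that $H(\alpha)<1$ and $\kappa$ is chosen at most $\min\bigl(\kappa_0\sqrt\alpha,\,(1-H(\alpha))\ln 2,\,\log\tfrac{2}{1+e^{-\kappa_0}}\bigr)$.

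The main obstacle is the mismatch of communication budgets: our $C'$ is set by \Cref{maintheorem2} at scale $n$, so at subset scale $|S|<n$ the communication \Cref{maintheorem2} allows is $\kappa_0 C\sqrt{|S|}/\log(C|S|)$, strictly smaller than $C'$ once $|S|$ falls much below $n$. This forces the trivial bound on small-$|S|$ Fourier coefficients, and the argument closes only because the combinatorial count $\binom{n}{<\alpha n}\le 2^{H(\alpha)n}$ saves a factor $2^{(1-H(\alpha))n}$ against the $2^n$ denominator; the threshold $\alpha$ is then fixed as a small constant and $\kappa$ tuned small enough to make the arithmetic fit.
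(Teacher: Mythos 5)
Your proof is correct and fills in exactly the argument the paper sketches: the identity $\suc_{\mu_0^n}(\pi,f^n)=2^{-n}\sum_S\E_{xy}[\hat D_{xy}(S)]$ is the paper's uniform-random-subset observation written in Fourier form, and the counting of small $|S|$ versus the exponential decay from \Cref{maintheorem2} for $|S|\ge\alpha n$ is precisely the omitted bookkeeping. The only (cosmetic) difference is that you bound the sum directly while the paper's sketch extracts a single large subset with nontrivial advantage and derives a contradiction, but these are the same estimate read in opposite directions.
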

\begin{corollary}
    There is a universal constant $\kappa>0$ such that
    if $C >1/\kappa $ and $\adv_\mu(C,f) < \kappa$, then  $\suc_{\mu^n}(\kappa C  \sqrt{n}/ \log(Cn)),f^{ n}) < \exp(-\kappa n)$.
\end{corollary}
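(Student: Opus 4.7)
The plan is to reduce this direct-product bound to the xor advantage bound of \Cref{maintheorem2} via a Yao-style random-subset trick. Suppose for contradiction there is a protocol $\pi$ of communication at most $T := \kappa C \sqrt{n}/\log(Cn)$ with $q := \Pr_{xy \sim \mu^n}[\pi(xy) = f^n(xy)] \geq \exp(-\kappa n)$. Sample a uniformly random subset $\bfS \subseteq [n]$ via public coins, and form the $T$-bit protocol $\pi_{\bfS}(xy) := \bigoplus_{i \in \bfS} \pi_i(xy)$, intended to compute $g_{\bfS}(xy) := \bigoplus_{i \in \bfS} f(x_iy_i)$.

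For fixed inputs and protocol coins, either $\pi(xy) = f^n(xy)$ (in which case $\pi_{\bfS} = g_{\bfS}$ always) or the set of errors $E$ is nonempty, in which case $|\bfS \cap E| \bmod 2$ is uniform on $\{0,1\}$, so $\pi_{\bfS} = g_{\bfS}$ holds with probability exactly $1/2$. Averaging gives $\Pr[\pi_{\bfS} = g_{\bfS}] = 1/2 + q/2$, so the $\bfS$-averaged advantage of $\pi_{\bfS}$ for $g_{\bfS}$ under $\mu^n$ equals $q$. A Chernoff bound gives $\Pr[|\bfS| < n/3] \leq \exp(-cn)$ for an absolute constant $c > 0$, so for $\kappa < c/2$ and $n$ large the expectation restricted to $|\bfS| \geq n/3$ is still at least $q/2$, and averaging fixes a specific $S^* \subseteq [n]$ with $k := |S^*| \geq n/3$ and advantage at least $q/2$ for $g_{S^*}$. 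Completing the input via public coins drawing $(x_{\bar{S^*}}, y_{\bar{S^*}}) \sim \mu^{n-k}$ (Alice takes the $x$-part, Bob the $y$-part) then converts $\pi_{S^*}$ into a genuine $T$-bit protocol on inputs in $\mu^k$ that computes $f^{\oplus k}$ with advantage at least $\exp(-\kappa n)/2$.

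Now I would apply \Cref{maintheorem2} with its universal constant $\kappa_0$ to $f^{\oplus k}$ under $\mu^k$. Using $k \geq n/3$ and $\log(Ck) \leq \log(Cn)$, one has $T \leq \sqrt{3}\,\kappa\cdot C\sqrt{k}/\log(Ck)$, so for $\kappa \leq \kappa_0/\sqrt{3}$ the hypothesis on communication is met and the theorem bounds the advantage by $\exp(-\kappa_0 k) \leq \exp(-\kappa_0 n/3)$. This contradicts the lower bound $\exp(-\kappa n)/2$ once $\kappa < \kappa_0/3$ and $n$ is large enough. Choosing the universal $\kappa$ smaller than $\min\{\kappa_0/3,\, c/2,\, \log 2\}$ makes all thresholds consistent, and the degenerate regime $T < 1$ is trivial because any $0$-bit protocol has success at most $2^{-n} < \exp(-\kappa n)$ when $\kappa < \log 2$. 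The main subtlety, rather than any deep new idea, is threading the three constants $(\kappa, \kappa_0, c)$ so that every inequality survives simultaneously; the conceptual content is simply that hard-xor bounds transfer to direct-product bounds through a subset-xor reduction, with \Cref{maintheorem2} carrying all the weight.
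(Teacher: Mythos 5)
Your reduction is exactly the one the paper sketches (and leaves to the reader): average the XOR advantage over a uniformly random subset $\bfS$ to recover the direct-product success, locate a dense $S^*$, complete the unused coordinates by public coins, and invoke \Cref{maintheorem2}. That conceptual content is all correct and matches the paper. The one place you slip is the degenerate regime. A $0$-bit protocol for $f^n$ under $\mu^n$ has success $\max_z \Pr_{\mu^n}[f^n(xy)=z]$, which is \emph{not} $2^{-n}$ in general; what you actually get from the hypothesis $\adv_\mu(C,f)<\kappa$ (with $C\geq 1$) is $\max_b\Pr_\mu[f=b]<(1+\kappa)/2$, hence success at most $((1+\kappa)/2)^n$, which is still $<\exp(-\kappa n)$ for $\kappa$ small but not for the reason you gave. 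Moreover, the split you need is not ``$T<1$ versus $T\geq 1$'' but ``$n\leq n_0$ versus $n>n_0$'' for a constant $n_0$ determined by the factors of $2$ in your chain (the $q/2$ slack and the Chernoff term both need $n$ large enough relative to $\kappa_0/3-\kappa$ and $c-\kappa$). For $n\leq n_0$ note that once $\kappa$ is small enough one has $T\leq C$, so $\suc_{\mu^n}(T,f^n)\leq\suc_\mu(T,f)\leq(1+\kappa)/2<\exp(-\kappa n_0)\leq\exp(-\kappa n)$. Both are routine patches; the main argument is sound.
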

This matches the result proved by \cite{BRWY} mentioned earlier. These corollaries are obtained by observing that if $S \subseteq \{1,2,\dotsc,n\}$ is chosen uniformly at random, and $xy$ are sampled according to $\mu^n$, then 
\begin{align*}
    \E\Big[(-1)^{\sum_{j \in S} \pi(xy)_j + f(x_jy_j)}\Big] & = \Pr[\pi(xy)=f^n(xy)],
\end{align*}
so a protocol computing $f^n$ with success probability $\exp(-n/2)$ yields a set of $n' = \Omega(n)$ coordinates where the protocol computes $f^{\oplus n'}$ with advantage $\exp(-\Omega(n))$. Again, we leave the details to the reader. When the distribution $\mu(xy) = \mu(x) \cdot \mu(y)$ is a product distribution, we obtain stronger bounds:
\begin{theorem}\label{maintheorem3} 
   There is a universal constant $\kappa>0$ such that for every product distribution $\mu$, if $C >1/\kappa $ and 
    $\adv_\mu(C,f) < \kappa$, then  $\adv_{\mu^n}(\kappa C  n/ \log^2(Cn),f^{\oplus n}) < \exp(-\kappa n)$.
\end{theorem}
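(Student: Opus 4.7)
The plan is to parallel the proof of Theorem \ref{maintheorem2}, but in Step 3 exploit that $\mu^n$ is a product distribution to substitute the stronger compression of Theorem \ref{simulationtheorem2} for Theorem \ref{simulationthoerem}. Because $\mu$ (and hence every single-coordinate marginal of $\mu^n$) is a product distribution, Proposition \ref{extproduct} equates $\marg_I$ and $\mext_I$ for any relevant protocol, and Theorem \ref{simulationtheorem2} compresses external marginal information $O(I)$ to communication $O(I \log^2 C)$ rather than $\tildeO(\sqrt{CI})$; this is precisely the gain that turns the $\sqrt{n}/\log(Cn)$ factor of Theorem \ref{maintheorem2} into the $n/\log^2(Cn)$ factor we seek.

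Suppose for contradiction that some protocol $p$ of communication $T := \kappa C n / \log^2(Cn)$ achieves $\E_{\mu^n}[(-1)^{\pi(xy) + f^{\oplus n}(xy)}] \geq \exp(-\kappa n)$. In the main case $T \geq n$, set $I := T/n$. Theorem \ref{thm:construct-rect-dist}, together with the advantage bound, yields
\[\marg_I(p,\, f^{\oplus n}) \;\leq\; 2T + O(I \kappa n) + O(I) \;=\; O(In)\]
for $\kappa$ sufficiently small. Theorem \ref{lem:single-shot-marg-bound} then extracts a single-coordinate protocol $p'$ with $p'(xy) = \mu(xy)$ and $\marg_I(p', f) \leq O(I)$, whose messages from the second onward are preserved from $p$ and in particular remain single bits. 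Proposition \ref{extproduct} upgrades this to $\mext_I(p', f) \leq O(I)$, whereupon Theorem \ref{simulationtheorem2} delivers a protocol computing $f$ on $\mu$ of advantage $\Omega(1)$ and communication
\[O(I \log^2 T) \;\leq\; O\!\left(\tfrac{T}{n}\,\log^2(Cn)\right) \;=\; O(\kappa C),\]
which for small $\kappa$ contradicts $\adv_\mu(C,f) < \kappa$.

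The degenerate case $T < n$ is handled exactly as in Theorem \ref{maintheorem2}: take $I = 1$, obtain $\marg_1(p, f^{\oplus n}) = O(n)$ from Theorem \ref{thm:construct-rect-dist}, apply Theorem \ref{lem:single-shot-marg-bound} to pass to a single coordinate with $\marg_1(p', f) = O(1)$, and invoke Theorem \ref{thm:braverman-sim} to obtain an $O(1)$-bit protocol of advantage $\exp(-O(1))$, again contradicting $\adv_\mu(C, f) < \kappa$ since $C > 1/\kappa$.

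The only substantive novelty relative to Theorem \ref{maintheorem2} lies in this swap of compression theorems; Steps 1 and 2 of the pipeline are identical and use Theorems \ref{thm:construct-rect-dist} and \ref{lem:single-shot-marg-bound} as a black box. The only verification needed is that Step 2 preserves the hypotheses of Theorem \ref{simulationtheorem2}, namely that the single-coordinate input distribution remains product (immediate from $p'(xy) = \mu(xy)$ and $\mu$ being product) and that messages $m_2, m_3, \ldots$ remain single bits (guaranteed by the preservation clause in Theorem \ref{lem:single-shot-marg-bound}). I do not expect a new obstacle; all technical difficulty has already been absorbed into the three cited theorems.
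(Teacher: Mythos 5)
Your proof follows the paper's argument essentially verbatim: the same case split on $T \gtrless n$, the same application of Theorems \ref{thm:construct-rect-dist} and \ref{lem:single-shot-marg-bound}, the same use of \Cref{extproduct} to pass to $\mext_I$, and the same substitution of \Cref{simulationtheorem2} (resp.\ \Cref{thm:braverman-sim}) in place of \Cref{simulationthoerem}. The side verifications you flag — that $p'(xy)=\mu(xy)$ remains product and that the preservation clause of \Cref{lem:single-shot-marg-bound} keeps $m_2,\dotsc,m_C$ single bits — are exactly the points the paper relies on implicitly, so the argument is complete.
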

To prove \Cref{maintheorem3}, suppose we are given a protocol $p$ computing $f^{\oplus n}$ with advantage $\exp(-\kappa n)$ and communication $T=\kappa C n / \log^2(Cn)$. If $T/n \geq 1$, we set $I = T/n$ and apply \Cref{thm:construct-rect-dist} to show that $\marg_I(p,f^{\oplus n}) \leq O(n I)$. Next, apply \Cref{lem:single-shot-marg-bound} to find a protocol $p'$ with $\marg_{I}(p',f) \leq O(I)$. Finally, using the fact that for product distributions, $\mext_I(p,f) = \marg_I(p,f)$, we can apply \Cref{simulationtheorem2} to obtain a protocol computing $f$ with advantage $\Omega(1)$ and communication $O(I\log^2(Cn))\leq O(\kappa C)$. Otherwise, if $T/n < 1$, set $I = 1$ and apply \Cref{thm:construct-rect-dist} to show that $\marg_I(p,f^{\oplus n}) \leq O(n)$. Then, apply \Cref{lem:single-shot-marg-bound} to find a protocol $p'$ with $\marg_{I}(p',f) \leq O(I) = O(1)$. Lastly, we apply \Cref{thm:braverman-sim} to obtain a protocol computing $f$ with advantage $\Omega(1)$ and communication $O(1)$. Setting $\kappa$ to be small enough gives a contradiction in either case. 

As before, this yields a corollary for computing $f^n$:
\begin{corollary}
    There is a universal constant $\kappa>0$ such that for every product distribution $\mu$, 
    if $C >1/\kappa $ and  $\adv_\mu(C,f) < \kappa$, then  $\suc_{\mu^n}(\kappa C  n/ \log^2(Cn),f^n) < \exp(-\kappa n).$
\end{corollary}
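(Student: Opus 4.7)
The plan is to reduce to \Cref{maintheorem3} via the same averaging trick the authors sketched for the non-product analogue, exploiting the product structure of $\mu$ to pass back to a protocol on $n^*$ coordinates without any communication overhead. Suppose toward a contradiction that some protocol $\pi$ with communication $T = \kappa C n / \log^2(Cn)$ satisfies $\suc_{\mu^n}(T, f^n) \geq \exp(-\kappa n)$. The starting point is the identity
\[\E_S\Big[(-1)^{\sum_{j \in S} \pi(xy)_j + f(x_jy_j)}\Big] = \indicator[\pi(xy) = f^n(xy)],\]
valid for any fixed $xy$ when $S \subseteq [n]$ is uniform: the right-hand side is $1$ when $\pi(xy)$ and $f^n(xy)$ agree in every coordinate, and $0$ otherwise because flipping the membership of any disagreeing coordinate in $S$ flips the sign of the summand. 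Averaging over $(x,y) \sim \mu^n$ turns this into $\E_{xy,S}[(-1)^{\sum_{j \in S} \pi(xy)_j + f(x_jy_j)}] \geq \exp(-\kappa n)$.

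Next I would isolate a single size. Conditioning on $|S| = n'$ and letting $A_{n'}$ denote the conditional expectation, we have $\sum_{n'} \binom{n}{n'} 2^{-n} A_{n'} \geq \exp(-\kappa n)$ with $|A_{n'}| \leq 1$. A Chernoff bound gives $\sum_{n' < n/10} \binom{n}{n'} 2^{-n} \leq \exp(-c n)$ for some absolute $c>0$, so choosing $\kappa < c$ forces $\sum_{n' \geq n/10} \binom{n}{n'} 2^{-n} A_{n'} \geq \tfrac{1}{2}\exp(-\kappa n)$. Hence some $n^* \geq n/10$ and some specific subset $S^*$ of size $n^*$ satisfy $\E_{xy \sim \mu^n}[(-1)^{\sum_{j \in S^*}\pi(xy)_j + f(x_j y_j)}] \geq \tfrac{1}{2}\exp(-\kappa n)$.

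Because $\mu = \mu_X \times \mu_Y$ is a product distribution, I can now manufacture a genuine $n^*$-coordinate protocol $\pi'$ for $f^{\oplus n^*}$ under $\mu^{n^*}$: on input $(u,v) \sim \mu^{n^*}$, Alice privately samples $x' \sim \mu_X^{n-n^*}$, Bob privately samples $y' \sim \mu_Y^{n-n^*}$, they insert $(u,v)$ at positions in $S^*$ and $(x',y')$ at the remaining positions, simulate $\pi$ on the completed pair, and output the XOR of $\pi$'s outputs over $S^*$. No extra communication is needed, so $\pi'$ uses at most $T$ bits and has advantage at least $\tfrac{1}{2}\exp(-\kappa n)$ for $f^{\oplus n^*}$ under $\mu^{n^*}$.

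Finally, the contradiction: since $n^* \geq n/10$, we have $T = \kappa C n / \log^2(Cn) \leq 10\kappa C n^* / \log^2(C n^*)$, and $\tfrac{1}{2}\exp(-\kappa n) \geq \exp(-2\kappa n^*)$ for large $n$. Taking $\kappa$ smaller than one tenth of the universal constant supplied by \Cref{maintheorem3} (and smaller than $c$), \Cref{maintheorem3} applied to $n^*$ forces the advantage of $\pi'$ to drop below $\exp(-\kappa n^*)$, contradicting the bound just established. The main ``obstacle'' is not conceptual --- all the heavy lifting is done by \Cref{maintheorem3} --- but rather constant bookkeeping: ensuring that the Chernoff loss on $|S|$, the drop from $n$ to $n^* \geq n/10$, and the $\log^2$ factor can all be absorbed by a single sufficiently small universal $\kappa$.
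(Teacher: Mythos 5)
Your argument follows the same outline the paper sketches for all of its $f^n$ corollaries (averaging over a random $S$, restricting to $|S| \geq n/10$, then passing to one coordinate subset), with the additional observation that the product structure lets Alice and Bob privately fill in the missing coordinates; since $\mu$ is a product distribution this is a clean way to instantiate the reduction, and it matches the paper's intent.

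There is, however, one arithmetic slip in the final comparison. You write that $\tfrac{1}{2}\exp(-\kappa n) \geq \exp(-2\kappa n^*)$ for large $n$, but this inequality goes the wrong way: with $n^* \geq n/10$ we only get $2\kappa n^* \geq \kappa n/5$, and $\exp(-\kappa n/5) > \tfrac{1}{2}\exp(-\kappa n)$ for large $n$. The correct comparison is with the bound that \Cref{maintheorem3} actually supplies, namely $\exp(-\kappa_0 n^*)$ where $\kappa_0$ is the universal constant from that theorem. Taking $\kappa$ small relative to $\kappa_0$ (e.g.\ $\kappa \leq \kappa_0/20$) gives
\begin{align*}
\kappa_0 n^* - \kappa n \;\geq\; \frac{\kappa_0 n}{10} - \frac{\kappa_0 n}{20} \;=\; \frac{\kappa_0 n}{20},
\end{align*}
which exceeds $\ln 2$ for all $n$ past a constant threshold, so $\tfrac{1}{2}\exp(-\kappa n) \geq \exp(-\kappa_0 n^*)$ and the desired contradiction does follow. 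The communication comparison $T \leq 10\kappa C n^*/\log^2(C n^*) \leq \kappa_0 C n^*/\log^2(C n^*)$ is handled the same way. The remaining constantly many small values of $n$ are absorbed by shrinking $\kappa$ further, exactly the bookkeeping you flag; so the route is sound once the target exponent is corrected to $\kappa_0 n^*$ rather than $2\kappa n^*$.
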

Again, this is identical to a bound proved by \cite{BRWY} using a different approach. For the bounded-round setting, we prove:
\begin{theorem}\label{maintheoremround} 
    There is a universal constant $\kappa>0$ such that if $C > (r(\log r) +1)/\kappa$, and
    $\adv^r_\mu(C,f) < \kappa$, then  $\adv^r_{\mu^n}((\kappa C/r   - \log r)n,f^{\oplus n}) < \exp(-\kappa n).$
\end{theorem}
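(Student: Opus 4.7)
The plan is to follow the template of \Cref{maintheorem2,maintheorem3}, substituting the bounded-round simulation \Cref{thm:bounded-round-sim} in place of \Cref{simulationthoerem,simulationtheorem2} at the final step. Suppose for contradiction that there is an $r$-round protocol $p$ computing $f^{\oplus n}$ with respect to $\mu^n$, with communication at most $T = (\kappa C/r - \log r)n$ and advantage at least $\exp(-\kappa n)$. Since the final message of $p$ is the output bit, we may assume $m_r \in \{0,1\}$. Set $I = \max(1, T/n)$, so that $I \geq 1$ and $T \leq In$. By \Cref{thm:construct-rect-dist},
\[
    \marg_I(p, f^{\oplus n}) \leq 2T - (1 + 12/\delta)\, I \log\Bigl( \E_{p(m)}\bigl|\E_{p(xy\mid m)}[(-1)^{f^{\oplus n}}]\bigr| \Bigr) + O(I),
\]
and since the argument of the logarithm upper-bounds the advantage $\exp(-\kappa n)$, we get $\marg_I(p, f^{\oplus n}) \leq O(T + \kappa I n + I) \leq O(I n)$ once $\kappa$ is small enough.

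Next, \Cref{lem:single-shot-marg-bound} applied to $p$ produces an $r$-round protocol $p'$ with $\marg_I(p', f) \leq O(I)$; the lemma preserves the number of messages and the support of every message past the first, so $m_r \in \{0,1\}$ still holds in $p'$ and the hypotheses of \Cref{thm:bounded-round-sim} are satisfied. Invoking that simulation yields an $r$-round protocol for $f$ under $\mu$ with advantage at least $1/\Delta$ and communication $\Delta r(I + \log r)$. In the case $I = T/n = \kappa C/r - \log r$ this communication equals $\Delta r I + \Delta r \log r = \Delta \kappa C$, which is at most $C$ provided $\kappa \leq 1/\Delta$. In the case $I = 1$ (so $T < n$) the communication is $O(r\log r)$, which is at most $C$ by the hypothesis $C > (r\log r + 1)/\kappa$ with $\kappa$ small. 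Either way, $\adv^r_\mu(C, f) \geq 1/\Delta$, contradicting $\adv^r_\mu(C, f) < \kappa$ as soon as $\kappa < 1/\Delta$.

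The main obstacle here is not conceptual but bookkeeping: one must verify that all the constants close in both cases of $I$. The subtracted $\log r$ term in $(\kappa C/r - \log r)n$ and the hypothesis $C > (r\log r + 1)/\kappa$ correspond exactly to the additive $\Delta r \log r$ in the communication produced by \Cref{thm:bounded-round-sim}, and together they guarantee that both cases fit within the budget $C$. Unlike in \Cref{maintheorem2,maintheorem3}, where the simulation cost scales sublinearly in the total communication, here the per-coordinate budget must scale as $\Theta(C/r)$ because the bounded-round simulation is linear in $r$; this is the source of the $\kappa C/r$ coefficient in the statement. The worst-case version of the theorem can then be obtained from this average-case statement via \Cref{yao} in the standard way.
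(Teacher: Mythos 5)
Your proof is correct and follows the paper's own three-step template (construct a rectangular distribution via \Cref{thm:construct-rect-dist}, apply subadditivity via \Cref{lem:single-shot-marg-bound}, compress via \Cref{thm:bounded-round-sim}); the paper's own sketch for this theorem does exactly the same thing. In fact you are slightly more careful than the paper, which simply writes ``Set $I = T/n \geq 1$'' even though $\kappa C/r - \log r$ may well be less than $1$ under the stated hypotheses, while you explicitly handle both cases of $I = \max(1, T/n)$. Note that in the $T/n < 1$ case the paper's treatments of \Cref{maintheorem2} and \Cref{maintheorem3} fall back to \Cref{thm:braverman-sim}, which is unavailable here because it does not preserve the round bound, so your choice to use \Cref{thm:bounded-round-sim} in both branches is the right one and the one the paper's argument implicitly requires.
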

Yu \cite{huacheng} proves the same bound on the advantage with a communication budget that grows like $\Omega((C/r^r-O(1))n)$. Our bound eliminates the exponential dependence on $r$. To prove \Cref{maintheorem3}, set $T = (\kappa C/r-\log r)n$, and suppose there is a protocol computing $f$ with $r$ rounds, communication $T$ and advantage $\exp(-\kappa n)$. Set $I = T/n \geq 1$. Then, $\marg_I$ can be bounded by $O(T+\kappa I n)$ by \Cref{thm:construct-rect-dist}. Applying \Cref{lem:single-shot-marg-bound} gives an $r$-round protocol with $\marg_I$ bounded by $O(I)$, and applying \Cref{thm:bounded-round-sim} gives an $r$-round protocol with communication complexity $O(r(I+\log r))=O(\kappa C)$ computing $f$ with advantage $\Omega(1)$. Setting $\kappa$ to be small enough proves the result. As usual, we obtain the following corollaries:

\begin{corollary} There is a universal constant $\kappa>0$ such that if $C > 7(r\log r)/\kappa$ and $\adv^r_\mu(C,f) < \kappa$, then  $\suc^r_{\mu^n}((\kappa C/r   - \log r)n,f^{ n}) < \exp(-\kappa n).$
\end{corollary}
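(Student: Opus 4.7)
The plan is to reduce computing $f^n$ to computing $f^{\oplus n'}$ on a random subset of $n' = \Omega(n)$ coordinates, and then invoke \Cref{maintheoremround} to derive a contradiction. So suppose for contradiction that there is an $r$-round protocol $\pi$ of communication $T = (\kappa C/r - \log r) n$ whose success probability for $f^n$ on inputs drawn from $\mu^n$ is at least $\exp(-\kappa n)$.

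I would use the identity (the same one invoked in the proof sketch of the first two corollaries) that if $S \subseteq [n]$ is chosen uniformly at random, then
\[ \E_S\bigbracket{(-1)^{\sum_{j \in S}(\pi(xy)_j + f(x_j y_j))}} = \mathds{1}[\pi(xy) = f^n(xy)], \]
because each factor $\tfrac{1 + (-1)^{\pi_j + f_j}}{2}$ equals $1$ if $\pi_j = f_j$ and $0$ otherwise. Taking the expectation over $xy \sim \mu^n$ gives $\E_S \E_{xy}[(-1)^{\sum_{j \in S}(\pi_j + f_j)}] \geq \exp(-\kappa n)$. By a Chernoff bound, the contribution to this expectation from sets with $|S| < n/4$ has absolute value at most $\Pr[|S| < n/4] \leq \exp(-n/8)$, which, for $\kappa$ small enough, is dominated by $\exp(-\kappa n)$. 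Averaging over the remaining sets, there is a specific $S^{\star}$ with $n' := |S^{\star}| \geq n/4$ such that $\E_{xy}[(-1)^{\sum_{j \in S^\star}(\pi_j + f_j)}] \geq \exp(-2\kappa n)$.

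Next, I would construct an $r$-round protocol $\pi^{\star}$ for $f^{\oplus n'}$ on inputs $(x',y') \sim \mu^{n'}$ as follows. Using public randomness (no communication), the parties jointly sample $(x_{\bar{S}^\star}, y_{\bar{S}^\star}) \sim \mu^{n-n'}$: Alice keeps her coordinates, Bob keeps his. They embed $(x', y')$ on the coordinates in $S^{\star}$ and then run $\pi$ on the combined input, outputting $\bigoplus_{j \in S^\star} \pi_j$. The protocol $\pi^\star$ has the same communication $T$ and the same number $r$ of rounds as $\pi$, and it achieves advantage at least $\exp(-2\kappa n)$ on $\mu^{n'}$ for $f^{\oplus n'}$.

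Finally, I would apply \Cref{maintheoremround}, whose constant I call $\kappa'$. Since $n' \geq n/4$, the advantage satisfies $\exp(-2\kappa n) \geq \exp(-8\kappa n') \geq \exp(-\kappa' n')$ provided $\kappa \leq \kappa'/8$, and the communication satisfies $T = (\kappa C/r - \log r) n \leq (\kappa' C/r - \log r) n'$ provided $\kappa \leq \kappa'/4$, using $n' \geq n/4$ to absorb the $-\log r$ term. Finally, the hypothesis $C > 7(r \log r)/\kappa$ implies $C > (r \log r + 1)/\kappa'$ for $\kappa$ small enough. Thus all hypotheses of \Cref{maintheoremround} are satisfied for $\pi^{\star}$, yielding $\adv^r_{\mu^{n'}}(T, f^{\oplus n'}) < \exp(-\kappa' n')$ and hence a contradiction. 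Choosing $\kappa$ sufficiently small (say $\kappa = \kappa'/8$) completes the proof. The only real obstacle is bookkeeping the constants; there is no new analytic content beyond what \Cref{maintheoremround} already supplies and the standard random-subset reduction.
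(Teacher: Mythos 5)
Your proof is correct and follows exactly the approach the paper sketches (the random-subset reduction from $f^n$ to $f^{\oplus n'}$ with $n'=\Omega(n)$ via the identity $\E_S[(-1)^{\sum_{j\in S}(\pi_j+f_j)}]=\mathds{1}[\pi=f^n]$, followed by an application of \Cref{maintheoremround}); the paper explicitly describes this reduction for the earlier corollaries and says ``As usual'' for this one, leaving the same bookkeeping of constants to the reader.
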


\begin{corollary} There is a universal constant $\kappa>0$ such that if $C > 7(r\log r)/\kappa$, and 
     $\adv^r(C,f) < \kappa$, then  $\suc^r((\kappa C/r   - \log r)n,f^{ n}) < \exp(-\kappa n).$
\end{corollary}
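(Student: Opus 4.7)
The plan is to lift the preceding distributional corollary to the worst-case setting by invoking Yao's min-max principle. Specifically, from the hypothesis $\adv^r(C,f) < \kappa$ together with the identity $\adv^r(C,f) = \inf_\mu \adv^r_\mu(C,f)$ stated in \Cref{yao}, there must exist a distribution $\mu^*$ on $\mathcal{X} \times \mathcal{Y}$ such that $\adv^r_{\mu^*}(C,f) < \kappa$. Applying the preceding corollary (the bounded-round distributional success bound for $f^n$) to this hard distribution $\mu^*$ immediately yields
\[
\suc^r_{(\mu^*)^n}\big((\kappa C/r - \log r)n,\, f^n\big) < \exp(-\kappa n).
\]

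The remaining step is to relate distributional success to worst-case success, which is the easy direction of the min-max inequality. For any $r$-round protocol $\pi$ with $\|\pi\| \leq T := (\kappa C/r-\log r)n$, the worst-case success probability is bounded by the average-case success probability under any input distribution:
\[
\inf_{xy} \Pr[\pi(xy)=f^n(xy)] \leq \E_{xy \sim (\mu^*)^n}\big[\Pr[\pi(xy)=f^n(xy)]\big].
\]
Taking the supremum over all $r$-round protocols of communication at most $T$ on both sides gives $\suc^r(T,f^n) \leq \suc^r_{(\mu^*)^n}(T,f^n) < \exp(-\kappa n)$, which is the claimed bound.

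There is essentially no technical obstacle here; this corollary is a purely definitional consequence of the preceding distributional corollary via min-max duality and monotonicity of the supremum. All of the substantive machinery --- the compression of marginal information for bounded-round protocols in \Cref{thm:bounded-round-sim}, the subadditivity of marginal information in \Cref{lem:single-shot-marg-bound}, the construction of small marginal information from bounded communication in \Cref{thm:construct-rect-dist}, and the random subset reduction from $f^n$ to $f^{\oplus n'}$ --- has already been expended in the proofs of \Cref{maintheoremround} and the preceding corollary. The only care required is to ensure that the protocol classes on the two sides of the reduction match (both are $r$-round), which they do by construction of $\suc^r$ and $\suc^r_\mu$.
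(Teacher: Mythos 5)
Your proposal is correct and matches the route the paper intends (the paper only gestures at it with ``as usual''): from $\adv^r(C,f)<\kappa$ and the minimax identity $\adv^r(C,f)=\inf_\mu\adv^r_\mu(C,f)$ in \Cref{yao}, extract a hard $\mu^*$ with $\adv^r_{\mu^*}(C,f)<\kappa$, invoke the preceding distributional corollary on $\mu^*$, and then use the trivial inequality $\inf_{xy}\Pr[\pi(xy)=f^n(xy)]\leq\E_{(\mu^*)^n}[\Pr[\pi(xy)=f^n(xy)]]$ to transfer the bound to $\suc^r$. Note that unlike the passage from \Cref{maintheorem2} to \Cref{maintheorem}, no majority amplification is needed here because the advantage threshold $\kappa$ is identical in the hypotheses of the distributional and worst-case corollaries, so the argument is exactly as short as you claim.
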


In the rest of the paper, we prove \Cref{thm:construct-rect-dist,lem:single-shot-marg-bound,simulationthoerem,simulationtheorem2,thm:bounded-round-sim,thm:braverman-sim}. We prove \Cref{thm:construct-rect-dist} in \Cref{step1}, and \Cref{lem:single-shot-marg-bound} in \Cref{step2}. In \Cref{trimming} we gather several results related to the \emph{trimming} technique borrowed from \cite{huacheng}  that are used in these first two steps. In \Cref{consequences} we gather several consequences of small marginal information that are used to analyze our compression schemes. We prove the general simulation for marginal information, \Cref{simulationthoerem}, in \Cref{step3a}. In \Cref{smoothing} we prove that if the external marginal information is small, then there is a \emph{smooth} protocol with small external marginal information, mirroring a similar result in \cite{BBCR}. We then show how to compress smooth protocols to prove  \Cref{simulationtheorem2} in \Cref{step3b}. We prove \Cref{thm:bounded-round-sim} in \Cref{step3c} and finally, in \Cref{sec:braverman-sim} we prove \Cref{thm:braverman-sim}. 

\subsection*{Acknowledgements}
Thanks to Paul Beame, Makrand Sinha, Oscar Sprumont, Michael Whitmeyer and  Amir Yehudayoff for helpful conversations.

\section{Preliminaries}

Throughout, we assume that $x \in \mathcal{X}$, $y \in \mathcal{Y}$ and $m\in \mathcal{M}$ for some finite sets $\mathcal{X},\mathcal{Y}, \mathcal{M}$.
Let $\mu(xy)$ be a distribution on pairs of inputs. To ease the notation, we often write $ab$ instead of the tuple $(a,b)$.
Everywhere in the paper, we assume that $\delta>0$ is a sufficiently small constant; $\delta = 1/15$ will suffice.

\begin{definition}    
   We say that $p(xym)$ is a  protocol distribution if it can be expressed as \[p(xym) =p(xy) \cdot p(m_0)\cdot \prod_{i=1,3,5,\dotsc}p(m_i\vert xm_{< i})\cdot p(m_{i+1}\vert ym_{\leq i}).\] 
\end{definition}

Every randomized worst-case protocol corresponds to some protocol distribution $p(xym)$, where $p(xy)$ can be taken to be the uniform distribution on all possible inputs. Given a distribution $\mu(xy)$ on inputs, and any protocol generating the messages $m$, the joint distribution of $xym$ corresponds again to a protocol distribution $p(xym)$, with $p(xy) = \mu(xy)$. 

Recall \Cref{rectangular}. 
Note that if $q$ is rectangular 
 with respect to $\mu(xy)$ and $p$ is a protocol with $p(xy) = \mu(xy)$, it is not necessary that $q(xy)=\mu(xy)$. For the purpose of intuition, it may be helpful to think of a rectangular distribution as the result of conditioning $\mu(xy)$ on the event that it lies in a disjoint union of rectangles indexed by $m$, though this statement is not without loss of generality, and we do use the full generality of \Cref{rectangular}.

Let $x=x_1x_2$ and $y = y_1y_2$. Let $\mu(xy) = \mu(x_1y_1) \cdot \mu(x_2y_2)$ be a product distribution. It will be helpful to define $w=(x_1y_2m)$. 
Given $m = (m_0,\ldots,m_r)$ and $y_2$, we denote  
\begin{align}
\mone&= (m_0,y_2m_1,m_2,\ldots,m_r),\notag\\
 \mtwo &= (m_0x_1,m_1,m_2,\ldots,m_r). \label{monemtwo}
 \end{align}

Let us gather some basic facts about rectangular distributions in this setting:

\begin{proposition}\label{cor:multiplicativity}
	If $v$ is rectangular, then 
	\begin{enumerate}
        \item $v(xy|w) = v(y_1|w) \cdot v(x_2|w)$,
		\item $v(xw)\cdot v(yw) = v(xym)\cdot v(w)$,
		\item $v(x_1\vert y_1 \mone )\cdot v(x_2 \vert y_2 \mtwo ) = v(x\vert ym)$, and
		\item $v(y_1\vert x_1 \mone)\cdot v(y_2\vert x_2 \mtwo) = v(y\vert xm)$.
	\end{enumerate}
\end{proposition}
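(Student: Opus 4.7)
The plan is to derive all four claims from the factorization of rectangular distributions. By \Cref{rectangular}, since $v$ is rectangular with respect to some $\mu$, we may write
\[v(x_1x_2y_1y_2m) = \mu(x_1x_2y_1y_2) \cdot A(x_1x_2m)\cdot B(y_1y_2m)\]
for functions $A, B$. In our setting $\mu(xy) = \mu(x_1y_1)\mu(x_2y_2)$ factorizes, so the joint density becomes
\[v(x_1x_2y_1y_2m) = \big[\mu(x_1y_1)\cdot B(y_1y_2m)\big] \cdot \big[\mu(x_2y_2)\cdot A(x_1x_2m)\big].\]
Fixing $w = (x_1,y_2,m)$, the first bracket depends on the remaining variables only through $y_1$, and the second only through $x_2$. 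This immediately yields claim (1): conditioning on $w$ makes $y_1$ and $x_2$ independent, and $v(xy\mid w) = v(y_1\mid w)\cdot v(x_2\mid w)$ since $x_1, y_2$ are already determined by $w$.

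Claim (2) is a rewriting of claim (1). Because $w$ is a deterministic function of $xym$, the tuple $xw$ contains exactly the variables $(x_1,x_2,y_2,m)$ and $yw$ contains $(x_1,y_1,y_2,m)$. Dividing both sides of the desired identity by $v(w)^2$ gives
\[v(x\mid w)\cdot v(y\mid w) = v(xy\mid w),\]
which, after noting that $x = (x_2, x_1)$ and $y = (y_1, y_2)$ are each determined by $w$ together with $x_2$ or $y_1$ respectively, is exactly the independence proved in claim (1).

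For claim (3), apply the chain rule:
\[v(x\mid ym) = v(x_1\mid y_1y_2m)\cdot v(x_2\mid x_1y_1y_2m).\]
Since $y_1\mone$ encodes precisely $(y_1,y_2,m)$, the first factor equals $v(x_1\mid y_1\mone)$. Since $y_2\mtwo$ encodes precisely $(x_1,y_2,m)$, identifying the second factor with $v(x_2\mid y_2\mtwo)$ requires the independence $x_2\perp y_1\mid (x_1,y_2,m)$, which is exactly claim (1). Claim (4) is entirely symmetric: use $v(y\mid xm) = v(y_2\mid x_1x_2m)\cdot v(y_1\mid x_1x_2y_2m)$, note that $x_2\mtwo$ encodes $(x_1,x_2,m)$ so the first factor equals $v(y_2\mid x_2\mtwo)$, and note that $x_1\mone$ encodes $(x_1,y_2,m)$ so identifying the second factor with $v(y_1\mid x_1\mone)$ reduces to $y_1\perp x_2\mid (x_1,y_2,m)$, again claim (1).

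There is no substantive obstacle here: once the factorization of $v$ is unpacked, every claim is a short consequence of the conditional independence of $x_2$ and $y_1$ given $(x_1,y_2,m)$. The only thing to be careful about is bookkeeping — verifying that the information carried by each of the conditioning tuples $y_1\mone$, $x_1\mone$, $x_2\mtwo$, $y_2\mtwo$ is exactly what is claimed, using the explicit definitions of $\mone$ and $\mtwo$ in \Cref{monemtwo}.
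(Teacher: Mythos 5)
Your proof is correct and takes essentially the same approach as the paper: the key ingredient in both is the conditional independence of $y_1$ and $x_2$ given $w$, extracted from the factorization $v(xym) = \mu(x_1y_1)\mu(x_2y_2)A(xm)B(ym)$ by grouping into a $y_1$-part and an $x_2$-part. The only cosmetic difference is that the paper establishes identity (1) by an explicit computation of $v(xy\mid w)$ and then derives (2) via a short chain of equalities, whereas you observe directly that (2) is a rearrangement of (1) after dividing by $v(w)^2$; the chain-rule arguments for (3) and (4) match the paper step for step.
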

\begin{proof}

For the first identity, let $A,B$ be such that $v(xym) = \mu(xy) \cdot A(xm) \cdot B(ym)$. Then 
\begin{align*}
v(xy|w) = \frac{v(xyw)}{v(w)}
&=\frac{\mu(x_1y_1) \cdot \mu(x_2y_2) \cdot A(xm) \cdot B(ym)}{\sum_{x'_2y'_1} \mu(x_1  y'_1 ) \cdot \mu(x'_2 y_2) \cdot A(x_1x'_2m) \cdot B(y'_1 y_2 m)}\\
&=\frac{\mu(x_1y_1)\cdot B(y_1y_2 m)}{\sum_{y'_1} \mu(x_1  y'_1 ) \cdot B(y'_1 y_2 m) } \cdot \frac{\mu(x_2y_2)  \cdot A(x_1x_2 m)}{\sum_{x'_2} \mu(x'_2 y_2) \cdot  A(x_1x'_2m)}   \\
&=v(y_1|w) \cdot v(x_2|w).  
\end{align*}

	For the second identity, 
	\begin{align*}
		v(xw)\cdot v(yw) &= v(w)\cdot v(yw) \cdot v(x|w) \\
		&= v(w) \cdot v(yw) \cdot v(x_2|x_1 y m)  \tag{by the first identity}\\
		&= v(w) \cdot v(xym).
	\end{align*}
	
	For the third identity, 
	\begin{align*}
		v(x_1\vert y_1\mone)\cdot v(x_2\vert y_2\mtwo) &= v(x_1\vert ym)\cdot v(x_2\vert x_1y_2m) \\
        &=  v(x_1\vert ym)\cdot v(x_2\vert x_1ym) \tag{by the first identity}\\
        &=  v(x\vert ym).
	\end{align*}
	A similar calculation yields the fourth identity.
\end{proof}

It is easy to check that the external marginal inforamtion and marginal information are the same when the distribution on inputs is product:
\begin{lemma}\label{extproduct}
If $p(xy)=\mu(xy)$ is a product distribution then we have  $\mext_I(p,f) = \marg_I(p,f).$
\end{lemma}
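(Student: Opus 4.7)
The plan is to show that when $p(xy)=\mu(xy)$ is product, the expressions inside the logarithm in the two definitions coincide pointwise on the support of every rectangular $q$. Since the infimum in both definitions is taken over the same class of distributions (rectangular with respect to $\mu$), pointwise equality of the integrands implies equality of the two quantities.

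First, I would record two elementary facts. Since $\mu$ is product, we have $p(x\vert y) = \mu(x)$ and $p(y\vert x)=\mu(y)$, so
\[p(x\vert y)\cdot p(y\vert x) \;=\; \mu(x)\mu(y) \;=\; p(xy).\]
Next, and this is the crux, I claim that under any rectangular $q$, the inputs $x$ and $y$ are conditionally independent given $m$. Indeed, writing $q(xym)=\mu(x)\mu(y)A(xm)B(ym)$, a direct computation (identical to the first identity in \Cref{cor:multiplicativity}) gives
\[q(xy\vert m) \;=\; \frac{\mu(x)A(xm)}{\sum_{x'}\mu(x')A(x'm)} \cdot \frac{\mu(y)B(ym)}{\sum_{y'}\mu(y')B(y'm)} \;=\; q(x\vert m)\cdot q(y\vert m).\]
In particular, $q(x\vert ym)=q(x\vert m)$ and $q(y\vert xm)=q(y\vert m)$.

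Combining these two observations, for every $(xym)$ in the support of $q$,
\[\frac{q(x\vert ym)}{p(x\vert y)}\cdot \frac{q(y\vert xm)}{p(y\vert x)} \;=\; \frac{q(x\vert m)\,q(y\vert m)}{\mu(x)\mu(y)} \;=\; \frac{q(xy\vert m)}{p(xy)}.\]
The remaining two factors, $\big(q(xym)/p(xym)\big)^I$ and $\big|\E_{q(xy\vert m)}[(-1)^{f(xy)}]\big|^{-12I/\delta}$, are syntactically identical in the definitions of $\marg_I$ and $\mext_I$. Therefore the arguments of $\log$ agree for every $xym$, so their suprema over the support of $q$ agree, and taking the infimum over rectangular $q$ yields $\marg_I(p,f)=\mext_I(p,f)$.

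There is really no obstacle beyond observing that rectangularity under a product prior forces conditional independence of $x,y$ given $m$; every other step is algebraic rearrangement using the definitions.
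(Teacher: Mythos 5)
Your proof is correct and takes essentially the same route as the paper: show that rectangularity together with a product prior $\mu(xy)=\mu(x)\mu(y)$ forces $q(xy\vert m)$ to factor as $q(x\vert m)q(y\vert m)$, and then observe that the information terms in $\marg_I$ and $\mext_I$ coincide pointwise. The paper performs the same computation and draws the same conclusion.
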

\begin{proof}
For all rectangular $q$, we have
    \begin{align*}
        q(xy\vert m) & = \frac{q(xym)}{q(m)}\\
        & = \frac{\mu(xy) \cdot A(xm) \cdot B(ym)}{\sum_{x'y'} \mu(x'y') \cdot A(x'm) \cdot B(y'm)}\\
        & = \frac{\mu(x) \mu(y) \cdot A(xm) \cdot B(ym)}{\sum_{x'y'} \mu(x') \mu(y') \cdot A(x'm) \cdot B(y'm)}\\
        & = \frac{\mu(x) \cdot A(xm)}{\sum_{x'} \mu(x')  \cdot A(x'm) } \cdot  \frac{\mu(y)  \cdot B(ym)}{\sum_{y'}  \mu(y') \cdot B(y'm)},
\end{align*}
    proving that $q(xy\vert m)$ is a product distribution.

Thus:
$$ \frac{q(x\vert ym)}{p(x\vert y)} \cdot \frac{q(y\vert xm)}{p(y \vert x)} = \frac{q(xy\vert m)}{p(xy)},$$
and so $\mext(p,f) = \marg_I(p,f)$.
\end{proof}

We recall the definition of divergence and some relevant facts from information theory.
\begin{definition}
    Given two distributions $a(u)$ and $b(u)$, the divergence of $a$ from $b$ is defined as
    \[\E_{a(u)}\bigg[\log \frac{a(u)}{b(u)}\bigg].\]
\end{definition}

\begin{proposition}
    For two distributions $a(u)$ and $b(u)$, we have that 
    \begin{align}
    \E_{a(u)}\bigg[\log \frac{a(u)}{b(u)}\bigg] &\geq 0\label{eq:div-non-neg},\\
    \sqrt{\E_{a(u)}\bigg[\log \frac{a(u)}{b(u)}\bigg]} &\geq \frac{\|a(u) - b(u)\|_1}{2}.\label{eq:pinsker-inequality}
    \end{align}
\end{proposition}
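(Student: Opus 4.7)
The plan is to prove both inequalities by classical techniques from information theory; since these are standard facts I would be brief and rely on Jensen's inequality plus a reduction to the binary case.

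For the first inequality (Gibbs' inequality), I would apply Jensen's inequality to the concave function $\log$. Writing $-\E_{a(u)}[\log(a(u)/b(u))] = \E_{a(u)}[\log(b(u)/a(u))]$ and pulling the logarithm outside, we get
\[
\E_{a(u)}\bigg[\log\frac{b(u)}{a(u)}\bigg] \leq \log \E_{a(u)}\bigg[\frac{b(u)}{a(u)}\bigg] = \log \sum_{u:\,a(u)>0} b(u) \leq \log 1 = 0,
\]
which after negation gives the desired non-negativity. One has to be a bit careful about the support: the expectation is taken only over $u$ with $a(u) > 0$, and since $b$ is a probability distribution the inner sum is at most $1$.

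For the second inequality (Pinsker's inequality), my plan is to reduce to the binary case. Let $W = \{u : a(u) \geq b(u)\}$. A short calculation gives $\|a - b\|_1 = 2(a(W) - b(W))$. Consider the two Bernoulli distributions $\tilde a = (a(W), 1 - a(W))$ and $\tilde b = (b(W), 1 - b(W))$. By the log-sum inequality applied within $W$ and within $W^c$ separately, the divergence $\E_{a(u)}[\log(a(u)/b(u))]$ dominates the binary divergence between $\tilde a$ and $\tilde b$. So it suffices to prove the binary case: for $s, t \in [0,1]$,
\[
s \log \frac{s}{t} + (1-s)\log\frac{1-s}{1-t} \;\geq\; 2(s-t)^2.
\]
This I would verify by fixing $s$ and considering the function $g(t)$ equal to the left-hand side minus the right-hand side; one checks $g(s) = 0$ and that $g'(t)$ has the correct sign on either side of $t = s$ (using $g''(t) = \frac{1}{t(1-t)} - 4 \geq 0$).

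The main obstacle, such as it is, is the Pinsker step; the non-negativity is immediate from Jensen. Given that these are textbook results used only as black boxes in the rest of the paper, I would keep the exposition brief or simply cite a standard reference (e.g.\ Cover and Thomas) rather than reprove them in full detail.
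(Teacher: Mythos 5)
The paper does not prove this proposition; it simply cites Cover and Thomas, so your proposal is supplementary rather than a competing proof. Your overall plan is the standard one and is sound: Jensen's inequality for Gibbs', and a data-processing/binary reduction followed by a one-variable calculus verification for Pinsker's. Two small remarks. First, the constant you target, $2(s-t)^2$, is the sharp binary Pinsker in natural log; with $\|a-b\|_1 = 2|s-t|$, the paper's \eqref{eq:pinsker-inequality} in the binary case is only $D \geq (s-t)^2$, so your bound comfortably dominates it (and this remains true if the paper's $\log$ is base $2$, since dividing by $\ln 2 < 2$ still leaves a constant $\geq 1$). Second, the quantity $\frac{1}{t(1-t)} - 4$ is not $g''(t)$. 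With $g(t) = s\ln\frac{s}{t} + (1-s)\ln\frac{1-s}{1-t} - 2(s-t)^2$ one has
\[
g'(t) = -\frac{s}{t} + \frac{1-s}{1-t} + 4(s-t) = (t-s)\Big(\frac{1}{t(1-t)} - 4\Big),\qquad g''(t) = \frac{s}{t^2} + \frac{1-s}{(1-t)^2} - 4.
\]
The factor $\frac{1}{t(1-t)} - 4 \geq 0$ appears in the factorization of $g'$, which directly gives the sign pattern you want (negative for $t<s$, positive for $t>s$), so the argument goes through; it is just the attribution to $g''$ that should be corrected.
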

The proof of this proposition can be found in \cite{CoverT91}. 

The following version of a protocol due to \cite{BR} appears as Lemma 43 in \cite{huacheng}, the parameters we cite here are clear from the proof.
\begin{lemma}\label{lem:round-sim}
Let $u,v$ denote two distributions on some finite set $\mathcal{M}$. For every $\epsilon>0$, there is a $1$-round protocol distribution $\psi(uvs)$ (here $uv$ correspond to the inputs of the protocol, and $s$ corresponds to the transcript), and functions $a(us) \in \mathcal{M}, b(vs) \in \mathcal{M} \cup \{\bot\}$ (here $\bot \notin \mathcal{M}$), such that $\psi(uv)$ is supported on all pairs $uv$ and for every $uv$ and $z \in \mathcal{M}$,
\begin{enumerate}
\item $\psi(a(us)=z\vert uv) = u(z)$,
\item $\psi(a(us)\neq b(vs)\vert uv,a(us)) \leq \epsilon + \max\{0, 1- 2^L \cdot \frac{v(a(us))}{u(a(us))}\}$.
\item $\psi \left (b(vs) \notin \{a(us),\bot\}\vert uv\right ) \leq \epsilon$.
\end{enumerate}
Moreover, the communication complexity of $\psi$ is $L + \log\log 1/\eps + \log1/\eps + O(1)$.
\end{lemma}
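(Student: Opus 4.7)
The plan is to implement a one-round correlated-sampling protocol in the spirit of Braverman--Rao \cite{BR}, augmented with a $\bot$ option for Bob so that the rejection-sampling shortfall, namely $\max\{0,\ 1 - 2^L v(a)/u(a)\}$, can be charged separately from the hashing error $\epsilon$. Using public randomness, Alice and Bob share an infinite iid sequence $(Z_i,\alpha_i)_{i\geq 1}$ with $Z_i \sim \uniform(\mathcal{M})$ and $\alpha_i \sim \uniform([0,1])$, together with a random hash $h : \mathcal{M} \to \{0,1\}^{\ell}$ where $\ell = L + \lceil \log(1/\epsilon)\rceil + O(1)$. Alice sets $i_A = \min\{i : \alpha_i \leq u(Z_i)\}$ and outputs $a(us) = Z_{i_A}$; the standard rejection-sampling identity shows $\Pr[a(us)=z\mid u] = u(z)$, which is property 1. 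Bob independently forms his candidate set $C_B = \{i : \alpha_i \leq 2^L v(Z_i)\}$ using only his input $v$.

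Alice's single message $s$ consists of (i) a prefix-free encoding of $\lceil \log_2 i_A\rceil$, replaced by a special $\bot$ symbol whenever $i_A$ exceeds a threshold $T$ chosen so that this tail event has probability at most $\epsilon$, and (ii) the hash $h(a)$. An Elias-style prefix-free encoding of $\lceil \log_2 i_A\rceil$ takes $\log\log(1/\epsilon) + O(1)$ bits (with the $\log\log|\mathcal{M}|$ contribution absorbed into the $O(1)$ slack), so the total communication is $L + \log(1/\epsilon) + \log\log(1/\epsilon) + O(1)$ bits. Upon receiving $s$, Bob enumerates $i \in C_B \cap [1,\ 2^{\lceil \log_2 i_A\rceil}]$ in order and sets $b(vs) = Z_i$ for the first $i$ with $h(Z_i) = h(a)$; if no such $i$ exists or if Alice aborted, Bob sets $b(vs) = \bot$.

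The analysis of property 2 splits into two error sources. Conditioning on $a(us)=z$ forces $Z_{i_A}=z$ and $\alpha_{i_A}$ to be uniform on $[0,u(z)]$, so
\[\Pr[i_A \notin C_B \mid a=z] \;=\; \max\{0,\ 1 - 2^L v(z)/u(z)\},\]
which produces the second term. When $i_A \in C_B$, Bob fails only if some earlier $i \in C_B$ with $Z_i \neq a$ satisfies $h(Z_i) = h(a)$. Since $i_A$ is geometric with mean $|\mathcal{M}|$ and each index is Bob-accepted with probability $O(2^L/|\mathcal{M}|)$, the expected size of $C_B \cap [1,\ 2 i_A]$ is $O(2^L)$; each collision has probability $2^{-\ell} = O(\epsilon/2^L)$, so a union bound bounds the total collision contribution by $\epsilon$. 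This same union bound yields property 3, since $b(vs)\notin\{a,\bot\}$ requires exactly such a spurious collision.

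The main technical obstacles lie in verifying these union bounds uniformly in the inputs and in the value of $a(us)$, and in accounting for the exact additive constants: the threshold $T$ must be set via the geometric-tail estimate $(1-1/|\mathcal{M}|)^T \leq \epsilon$ so that the abort event has probability at most $\epsilon$, the $\log\log(1/\epsilon)$ term arises from Elias-style encoding of $\lceil\log_2 i_A\rceil$, and the required cap of $2^L v(Z_i)$ by $1$ (to keep it a valid probability) must be handled in the case analysis of the failure bound. All of these are standard manipulations from the Braverman--Rao and \cite{huacheng} analyses.
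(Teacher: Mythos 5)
Your skeleton — shared random darts $(Z_i,\alpha_i)$, Alice accepting the first $i$ with $\alpha_i\le u(Z_i)$, Bob accepting the $i$'s with $\alpha_i\le\min\{1,2^L v(Z_i)\}$, and a public hash of $a$ to resolve which Bob-candidate is correct — is the right Braverman--Rao/Yu protocol, and your derivation of the $\max\{0,1-2^L v(a)/u(a)\}$ shortfall from the conditional law of $\alpha_{i_A}$ given $a$ is exactly right, including the $\min\{1,\cdot\}$ cap. The genuine gap is in the bit-accounting for Alice's message. You have Alice transmit a prefix-free encoding of $\lceil\log_2 i_A\rceil$ and assert this costs $\log\log(1/\epsilon)+O(1)$ bits, with ``the $\log\log|\mathcal{M}|$ contribution absorbed into the $O(1)$ slack.'' That cannot be done. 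After fixing the cutoff $T=\Theta(|\mathcal{M}|\log(1/\epsilon))$ so that $\Pr[i_A>T]\le\epsilon$, the integer $\lceil\log_2 i_A\rceil$ ranges over $\{0,\dots,\lceil\log_2 T\rceil\}$, a set of size roughly $\log_2|\mathcal{M}|+\log_2\log(1/\epsilon)$, so any encoding of it costs $\Omega(\log\log|\mathcal{M}|)$ bits. The lemma's bound $L+\log(1/\epsilon)+\log\log(1/\epsilon)+O(1)$ has no dependence on $|\mathcal{M}|$, and $|\mathcal{M}|$ is arbitrary, so your protocol overshoots the claimed communication by an unbounded term.

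The repair is to send no information about $i_A$ whatsoever: Alice's entire message is the hash value $h(a)$, and Bob scans $C_B\cap[1,T]$ in order, outputting the first $Z_i$ with $h(Z_i)=h(a)$ and $\bot$ if none exists. Bob then examines $\Theta(2^L\log(1/\epsilon))$ candidates in expectation rather than the $O(2^L)$ you obtain by stopping at $2i_A$, and it is this enlargement — not an encoding of $i_A$ — that forces the hash length up to $\ell=L+\log(1/\epsilon)+\log\log(1/\epsilon)+O(1)$. Your Wald-type observation that the candidates below $i_A$ number $O(2^L)$ in expectation conditionally on $a$ is still precisely what makes the conditional bound in property 2 work without an extra $\log(1/\epsilon)$ factor; property 3, which is unconditional and must charge the full window $[1,T]$, is where the $\log(1/\epsilon)$ factor (and hence the $\log\log(1/\epsilon)$ hash bits) is paid. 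With that rearrangement the rest of your analysis — the geometric tail for $\Pr[i_A>T]$, the linearity-of-expectation union bound over hash collisions, and the capping — goes through.
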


Additionally, we need the following Lemma, which appears as Lemma 4.14 in \cite{BBCR}.

\begin{lemma}\label{lem:first-diff}
There is a randomized protocol $\tau$ with communication complexity at most $O(\log (C/\eps))$ such that on input two $C$-bit strings $m^A, m^B$, it outputs the first index $i\in [C]$ such that $m^A_i\neq m^B_i$ with probability at least $1-\eps$, if such an $i$ exists.
\end{lemma}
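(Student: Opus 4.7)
The plan is to reduce the problem to binary search using randomized equality testing. Define the monotone predicate $E(j) = \mathds{1}\bigl[m^A[1..j] = m^B[1..j]\bigr]$ for $j \in \{0,1,\ldots,C\}$, which equals $1$ at $j=0$ and flips to $0$ exactly at the first differing index. Locating the first difference is therefore equivalent to finding the largest $j$ with $E(j)=1$, a binary search on a sorted sequence of length $C+1$.

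Next I would supply a cheap noisy oracle for $E(j)$. Using public randomness, Alice and Bob share a uniform vector $r \in \F_2^{C}$; for a given query $j$, Alice sends Bob the single bit $\langle m^A[1..j], r[1..j] \rangle \bmod 2$, and Bob compares it with his own inner product. This is a one-bit equality test with zero error when the strings agree and error at most $1/2$ when they differ. Repeating $O(1)$ times with fresh shared randomness drives the error on either side below any desired constant (say $1/4$), using only $O(1)$ bits per query.

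With a constant-error equality oracle in hand, I would apply a noisy-binary-search procedure along the lines of Feige--Raghavan--Peleg--Upfal, which locates the target in a sorted sequence of length $n$ using $O(\log n + \log(1/\eps))$ independent constant-error queries and succeeds with probability at least $1-\eps$. Instantiated with $n = C+1$ and fresh public randomness per query (to satisfy the independence hypothesis of the noisy-search analysis), this yields total communication $O(\log C + \log(1/\eps)) = O(\log(C/\eps))$. After the search returns a candidate index $i$, Alice can send Bob the single bit $m^A_i$ for explicit verification, adding only $O(1)$ bits.

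The main obstacle is not any one step but keeping the pieces compatible: the noisy equality test must have bounded two-sided error (both on agreeing and on disagreeing inputs), and the queries at different levels of the search must be probabilistically independent so that the noisy-binary-search guarantee kicks in. Both are handled by drawing independent public randomness for each query. A more elementary alternative, in case one wishes to avoid invoking noisy binary search as a black box, is plain binary search with $\log C$ stages, setting the per-stage equality-test error to $\eps/\log C$ via $O(\log(\log C/\eps))$ bits per stage; this gives $O(\log C \cdot \log(\log C/\eps))$ bits in total, matching $O(\log(C/\eps))$ up to the $\log\log$ factor that is typically absorbed into the $O(\cdot)$.
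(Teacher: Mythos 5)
The paper gives no proof of its own here; it cites the statement as Lemma 4.14 of \cite{BBCR}. Your proposal is a correct and essentially standard route to that result: reduce to binary search on the monotone predicate $E(j)$, implement the oracle by a shared random inner-product hash (one-sided error $1/2$ on disagreeing prefixes, $0$ on agreeing ones, boosted to $\leq 1/4$ by two repetitions), and then invoke noisy binary search in the style of Feige--Raghavan--Peleg--Upfal to get $O(\log C + \log(1/\eps))$ constant-cost queries, i.e.\ $O(\log(C/\eps))$ total communication. This is in the same spirit as the argument in \cite{BBCR}, which also runs a binary search tolerant to constant per-step error. One small caveat worth fixing in your write-up: the ``more elementary alternative'' you mention, plain binary search with per-stage error $\eps/\log C$, costs $O(\log C \cdot \log(\log C/\eps))$, and for constant $\eps$ this is $\Theta(\log C \log\log C)$ rather than $\Theta(\log C)$; the $\log\log C$ factor is a genuine multiplicative loss, not something absorbed into the $O(\cdot)$, so that route does \emph{not} prove the lemma as stated. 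The noisy-binary-search route is the one that actually delivers $O(\log(C/\eps))$.
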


\section{Marginal information of efficient protocols} \label{step1}

In this section, we prove \Cref{thm:construct-rect-dist}.
    Let $R$ be a rectangular set that maximizes the quantity 
    \[p(R)^\delta\cdot \E_{p(m\vert R)}\Big|\E_{p(xy\vert m R)}\Big[(-1)^f\Big]\Big|.\]

    We shall use trimming to prove the following claim:
    \begin{claim}\label{claim:info-rect-bounds}
        There exists a rectangular set $T\subseteq R$ with $p(T\vert R) \geq 1/4$ such that for any $xym$ in the support of $T$, we have
        \begin{align*}
        \frac{p(xym\vert T)}{p(xym)} &\leq  \frac{4}{ p(R)},\\
         \log \frac{p(x\vert ym T)}{p(x\vert y)}, \log \frac{p(y\vert xm T)}{p(y\vert x)} &\leq \frac{96 \cdot 2^{C}}{p(R)^2},\\
         \E_{p(m\vert T)}\Big|\E_{p(xy\vert mT)}\Big[(-1)^f\Big]\Big| 
        &\geq \frac{\Omega(1)}{p(R)^\delta} \cdot   \E_{p(m)}\Big|\E_{p(xy\vert m)}\Big[(-1)^f\Big]\Big|.\\
        \end{align*}
    \end{claim}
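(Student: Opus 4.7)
The plan is a rectangularity-preserving trimming of $R$ in the spirit of \cite{huacheng}. Write $\mathds{1}_R(xym) = \mathds{1}_{A}(xm) \mathds{1}_{B}(ym)$, and for each transcript $m$ define $a(m) = \Pr_{x \sim p(x|m)}[(x,m) \in A]$ and $b(m) = \Pr_{y \sim p(y|m)}[(y,m) \in B]$. The standard protocol identity $p(xy|m) = p(x|m)\, p(y|m)$ gives $p(R|m) = a(m)\cdot b(m)$, so $p(m|R) = p(m)\, a(m)\, b(m)/p(R)$. Take $M^* = \{m : a(m), b(m) \geq p(R)/4\}$ and let $T$ be the rectangular set with $\mathds{1}_T(xym) = \mathds{1}_A(xm) \cdot \mathds{1}_{M^*}(m) \cdot \mathds{1}_B(ym)$, where the factor $\mathds{1}_{M^*}(m)$ can be absorbed into either component.

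The mass bound comes from a Markov argument: $\Pr_{p(m|R)}[a(m) < p(R)/4] = \sum_m \frac{p(m)a(m)b(m)}{p(R)}\mathds{1}[a(m)<p(R)/4] \leq \tfrac14 \E_{p(m)}[b(m)] \leq \tfrac14$, and similarly for $b(m)$. A union bound gives $p(T|R) \geq 1/2$, which implies $p(xym|T)/p(xym) = 1/p(T) \leq 4/p(R)$ on $\supp(T)$. For the conditional ratios, $xym \in T$ together with the rectangular-plus-protocol structure give $p(x|ymT) = p(x|m)\mathds{1}_A(xm)/a(m)$, so
\[\frac{p(x|ymT)}{p(x|y)} \leq \frac{4}{p(R)} \cdot \frac{p(x|m)}{p(x|y)}.\]
To reach the stated $2^C/p(R)$-type upper bound on the logarithm, one additionally trims transcripts $m$ for which $p(m|y)$ (respectively $p(m|x)$) is so small that the Bayes identity $p(x|m)/p(x|y) = p(m|xy)/p(m|y)$ would blow up; the fact that $|\mathcal{M}| \leq 2^C$ ensures that the total mass thrown away this way is negligible in $p(\cdot|R)$.

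For the advantage bound, note that $\mathds{1}_T$ and $\mathds{1}_R$ coincide on every triple with $m \in M^*$, so $p(xy|mT) = p(xy|mR)$ for such $m$. Letting $R'' = R \setminus T$, both $T$ and $R''$ are rectangular, and
\[p(R)\, \E_{p(m|R)}\bigl|\E_{p(xy|mR)}[(-1)^f]\bigr| = p(T)\, \E_{p(m|T)}\bigl|\E_{p(xy|mT)}[(-1)^f]\bigr| + p(R'')\, \E_{p(m|R'')}\bigl|\E_{p(xy|mR'')}[(-1)^f]\bigr|.\]
Combining this with the maximality of $R$ applied to $R''$ (giving $p(R'')\E_{p(m|R'')}|\cdot| \leq p(R'')^{1-\delta} p(R)^\delta \E_{p(m|R)}|\cdot|$) and with $p(R'')/p(R) \leq 1/2$, a direct calculation shows $\E_{p(m|T)}|\E_{p(xy|mT)}[(-1)^f]| \geq (1-O(\delta)) \E_{p(m|R)}|\E_{p(xy|mR)}[(-1)^f]|$. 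Finally, maximality of $R$ applied to the trivial rectangular set gives $p(R)^\delta \E_{p(m|R)}|\cdot| \geq \E_{p(m)}|\cdot|$, yielding the claimed lower bound $\Omega(1)/p(R)^\delta \cdot \E_{p(m)}|\E_{p(xy|m)}[(-1)^f]|$.

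The main obstacle is the pointwise conditional-ratio step: the $M^*$ trimming alone leaves a residual $1/p(x|y)$ factor that is not controlled \emph{a priori}, and extracting the $2^C/p(R)$ dependence requires coupling a secondary trimming on $(x,m)$ and $(y,m)$ pairs with the primary $M^*$ trimming while preserving rectangularity and keeping at least a constant fraction of the mass in $p(\cdot|R)$. The remaining pieces — the mass bound, the coincidence of $p(xy|mT)$ and $p(xy|mR)$ on $M^*$, and the advantage calculation — are routine once the trimming is fixed.
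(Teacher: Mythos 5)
Your high-level plan is the right one — both you and the paper construct $T$ by trimming $R$ while preserving rectangularity, in the spirit of \cite{huacheng}, and both ultimately exploit the maximality of $R$ for the advantage bound — but two steps are more than ``routine once the trimming is fixed,'' and your sketch does not close them.

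First, the coupling you flag as ``the main obstacle'' is genuinely the crux, and the paper has a specific tool for it that your proposal does not reconstruct. The paper first forms $G \subseteq R$ by cutting on $(y,m)$ with $1/p(m|y) > 4\cdot 2^C/p(R)$ and on $(x,m)$ with $1/p(m|x) > 4\cdot 2^C/p(R)$ (using $\E_{p(ym|R)}[1/p(m|y)] \le 2^C/p(R)$, a Markov argument that only costs mass $1/2$), and then invokes \Cref{lem:trimming-1} — an iterative trimming that repeatedly deletes $(x,m)$, $(y,m)$, or $m$ witnesses violating a density lower bound until none remain, with a termination-and-mass accounting that guarantees $p(T|G)\ge 1/2$. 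That lemma is precisely what manages the interaction between the three trimming conditions while staying rectangular, and your proposal has nothing playing its role. Your replacement — the $M^*$ cut on $a(m), b(m)$ — does give you the clean identity $p(x|ymT)=p(x|m)/a(m)$ and hence the $4/p(R)$ factor, but this identity is specific to a transcript-only cut and does not survive the additional $(x,m)$/$(y,m)$-level cuts you then need; once you delete $(y,m)$ pairs, $a(m)$ must be recomputed and the bookkeeping you wave at is exactly the iterative process.

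Second, your advantage argument has a gap you do not flag. You establish
\begin{align*}
p(R)\, \E_{p(m|R)}|\cdot| = p(T)\, \E_{p(m|T)}|\cdot| + p(R'')\, \E_{p(m|R'')}|\cdot|
\end{align*}
for the transcript-only $T = R \cap \{m\in M^*\}$, which is correct because $R'' = R\setminus T$ is then also rectangular (the complement cut is still a cut on $m$ alone). But once you perform the secondary trim on $(x,m)$ and $(y,m)$ pairs to reach the $T$ the claim actually requires, $R\setminus T$ is no longer a single rectangle, that two-part identity fails, and the advantage bound has to be re-derived. This is exactly why the paper proves \Cref{lem:adv-preserving} in the generality of an arbitrary rectangular $Z\subseteq R$, via a three-way partition $R = Z \cup Z_1 \cup Z_2$ and H\"older, rather than the two-way split you use. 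Your two-set decomposition is strictly weaker and does not cover the set you end with.

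So: the primary $M^*$ cut, the Markov mass argument, and the ``maximality of $R$ against the trivial rectangle'' step are all fine and do match what the paper needs; what is missing is the iterative trimming machinery (\Cref{lem:trimming-1}) and the three-way advantage-preservation argument (\Cref{lem:adv-preserving}), both of which are nontrivial and both of which your sketch requires but defers.
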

 
    We defer the proof of the above  claim to the end of this section. Let $Q\subseteq T$ be the sub-rectangle obtained by keeping only the messages $m'$ for which the advantage is at least half of  the average advantage: 
       \[
        Q = \Big\{x'y'm' \in T: \Big|\E_{p(xy\vert m'T)}\Big[(-1)^f\Big]\Big| \geq  \frac{1}{2}\cdot \E_{p(m\vert T)}\Big|\E_{p(xy\vert mT)}\Big[(-1)^f\Big]\Big|\Big \}.
    \]
    Observe that 
    \begin{align*}
        \E_{p(m\vert T)}\Big|\E_{p(xy\vert mT)}\Big[(-1)^f\Big]\Big| &<  p(Q\vert T)+  \sum_{m':p(m'|Q)=0}p(m'\vert T)\cdot \frac{1}{2}\cdot \E_{p(m\vert T)}\Big|\E_{p(xy\vert mT)}\Big[(-1)^f\Big]\Big| \\
        &\leq p(Q\vert T) + \frac{1}{2}\cdot \E_{p(m\vert T)}\Big|\E_{p(xy\vert mT)}\Big[(-1)^f\Big]\Big|,
    \end{align*}
    and so by the choice of $R$,
    \begin{align}p(Q\vert T) &\geq \frac{1}{2} \cdot \E_{p(m\vert T)}\Big|\E_{p(xy\vert mT)}\Big[(-1)^f\Big]\Big| 
     \geq  \frac{\Omega(1)}{p(R)^\delta} \cdot   \E_{p(m)}\Big|\E_{p(xy\vert m)}\Big[(-1)^f\Big]\Big|.
    \label{eqn:qbound} \end{align}
    Define the rectangular distribution $q(xym) = p(xym\vert Q)$. By the definition of $Q$ and \Cref{claim:info-rect-bounds}, we have that for all $m$ in the support of $q$:
    \begin{align}
    \Big|\E_{p(xy\vert mQ)}\Big[(-1)^f\Big]\Big|
      \geq
    \frac{1}{2}\cdot \E_{p(m|T)}\Big|\E_{p(xy\vert mT)}\Big[(-1)^f\Big]\Big| 
    \geq \frac{\Omega(1)}{p(R)^\delta} \cdot   \E_{p(m)}\Big|\E_{p(xy\vert m)}\Big[(-1)^f\Big]\Big|. \label{eqn:qbound2}
    \end{align}
    
    Using \Cref{claim:info-rect-bounds,eqn:qbound,eqn:qbound2} and the definition of $Q$, we can bound the marginal information cost by
    \begin{align*}
        2^{\marg(p,f)} &\leq \sup_{xym} \left(\frac{p(x\vert ym Q)}{p(x\vert y)}\cdot \frac{p(y\vert xm Q)}{p(y\vert x)}\right)  \cdot  \left(\frac{p(xym\vert Q)}{p(xym)}\right)^I \cdot \left(\Big|\E_{p(xy\vert m Q)}\Big[(-1)^f\Big]\Big|\right)^{-12I/\delta}\\
        &\leq \sup_{xym}  \left(\frac{p(x\vert ym T)}{p(x\vert y)}\cdot \frac{p(y\vert xm T)}{p(y\vert x)}\right) \cdot \left(\frac{p(xym\vert T)}{p(xym) \cdot p(Q\vert T)} \right)^I \cdot   \left ( \Big|\E_{p(xy\vert mQ)}\Big[(-1)^f\Big]\Big|\right)^{ -12I/\delta}\\
        &\leq  O(1) \cdot 2^{2 C}\cdot p(R)^{-4} \cdot 2^{O(I)}\cdot p(R)^{-I(1-\delta)+12I} \cdot    \left(  \E_{p(m)}\Big|\E_{p(xy\vert m)}\Big[(-1)^f\Big]\Big|\right)^{-I(1+ 12/\delta)} \\
        &\leq   O(1) \cdot  2^{2 C} \cdot 2^{O(I)} \cdot \left(  \E_{p(m)}\Big|\E_{p(xy\vert m)}\Big[(-1)^f\Big]\Big|\right)^{-I(1+ 12/\delta)},
    \end{align*}
    where in the last inequality we used the fact that $I(12+\delta-1)-4>0$ since $I\geq 1$.
    This completes the proof of the theorem.

It only remains to prove \Cref{claim:info-rect-bounds}. 
    We have
    \[\E_{p(ym\vert R)}\left[\frac{1}{p(m\vert y)}\right] = \sum_{ym}\frac{p(ym\vert R)}{p(m\vert y)}  \leq \frac{1}{p(R)}\sum_{ym} \frac{p(ym)}{p(m\vert y)} 
        = \frac{\sum_{ym} p(y)}{p(R)} \leq \frac{2^{C}}{p(R)},
    \]
    since the communication complexity of $p$ is bounded by $C$. A similar argument proves 
        \[\E_{p(xm\vert R)}\left[\frac{1}{p(m\vert x)}\right]  \leq \frac{2^{C}}{p(R)}.
    \]
    
    Define the rectangular set
    \[G  =\left\{ xym \in R: \frac{1}{p(m\vert y)} , \frac{1}{p(m\vert x)}\leq 4 \cdot \frac{ 2^{C}}{p(R)} \right\}.\]  Markov's inequality implies that 
    $p(G\vert R) \geq 1/2$.
    We apply \Cref{lem:trimming-1} with $a(xym) =p(xym|G)$, $b(xym)=p(xym)$, and  $\kappa=1/6$ to obtain a rectangular set $T \subseteq G$ with $p(T|G)\geq 1/2$ and 
    \begin{align}
    \frac{p(xm\vert T)}{p(xm)} , \frac{p(ym\vert T)}{p(ym)} \geq \frac{1}{6}, \label{eqn:trimfirst}
    \end{align}
    for all points in the support of $T$. We compute
    \[p(T\vert R) =   p(G \vert R)\cdot p(T \vert G) \geq \frac{1}{4}.\]
    
    Let us verify that $T$ satisfies the remaining conditions promised by \Cref{claim:info-rect-bounds}. We have 
    \[ \frac{p(xym \vert T)}{p(xym)} = \frac{1}{p(T)} = \frac{1}{p(T\vert R)\cdot p(R)} \leq \frac{4}{p(R)}.\]
    To prove the second identity, use the first identity, the definition of $G$ and \Cref{eqn:trimfirst}:
    \begin{align*}
        \frac{p(x\vert ym T)}{p(x\vert y)} &= \frac{1}{p(ym\vert T)}\cdot \frac{p(xym\vert T)}{p(x\vert y)} \\
        &\leq \frac{6}{p(ym)}\cdot \frac{4\cdot p(xym)}{p(x\vert y)\cdot p(R)} \\
        &=\frac{24\cdot p(m\vert xy)}{p(m\vert y)\cdot p(R)} \\
        &\leq \frac{96 \cdot 2^C}{p(R)^2}.
    \end{align*}
 A similar calculation yields 
\begin{align*}
        \frac{p(y\vert xm T)}{p(y\vert x)} \leq \frac{96 \cdot 2^C}{p(R)^2}.
    \end{align*}
    Finally, applying \Cref{lem:adv-preserving} with $v(xym)=p(xym)$, $Z = T$, and noting that $p(Z\vert R) \geq 1/4$, we get
    \begin{align*}
        \E_{p(m\vert T)}\Big|\E_{p(xy\vert mT)}\Big[(-1)^f\Big]\Big| 
        &\geq \frac{1-\delta^2-4\delta}{p(R)^\delta}\cdot \E_{p(m)}\Big|\E_{p(xy\vert m)}\Big[(-1)^f\Big]\Big| \\
        &\geq \frac{\Omega(1)}{p(R)^\delta}\cdot \E_{p(m)}\Big|\E_{p(xy\vert m)}\Big[(-1)^f\Big]\Big| .
    \end{align*}
    This completes the proof of \Cref{claim:info-rect-bounds}.

\section {Marginal information is subadditive} \label{step2}
In this section we prove \Cref{lem:single-shot-marg-bound}. Recall the definitions of $\mone, \mtwo$, which are given in \Cref{monemtwo}. The core of the proof is the following statement. 

\begin{theorem}\label{thm:subadditivity}
  Let $f(x_1y_1)$ and $g(x_2y_2)$ be two Boolean functions and let $p(xym)$ be a protocol distribution such that $p(x_1y_1x_2y_2)=p(x_1y_1) \cdot p(x_2y_2)$. Then, for every $1/3 \leq \gamma \leq 2/3$, there are protocol distributions $p_1(x_1y_1\mone)$, $p_2(x_2y_2\mtwo)$ such that 
  $p_1(x_1y_1) = p(x_1y_1)$,
  $p_2(x_2y_2) = p(x_2y_2)$, 
  and 
  \begin{align*} &\min\big\{\marg_I(p_1,f) - \gamma \cdot \marg_I(p,f\oplus g),\marg_I(p_2,g)-(1-\gamma) \cdot \marg_I(p,f\oplus g)\big\} \\& \qquad \leq  3I \cdot \log \frac{\marg_I(p,f\oplus g)}{I} + O(I).\end{align*}
\end{theorem}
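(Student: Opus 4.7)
The plan is to build candidate rectangular/protocol pairs $(q_i, p_i)$ for $i \in \{1,2\}$ by marginalizing the rectangular $q$ that witnesses $\alpha := \marg_I(p, f \oplus g)$, establish a pointwise subadditive bound on the $\marg_I$-integrands, and then conclude via pigeonhole and one final rectangular trimming. Concretely, set $q_1(x_1 y_1 \mone) := q(x_1 y_1 y_2 m)$ and $q_2(x_2 y_2 \mtwo) := q(x_1 x_2 y_2 m)$; each $q_i$ is rectangular with respect to $p(x_i y_i)$ by the rectangularity of $q$. Let $p_1$ be the protocol in which Bob first samples $y_2 \sim p(y_2)$ (legitimate since $p(xy)$ factors across the two blocks), prepends it to his first message, and then both parties (Alice having privately sampled $x_2 \sim p(x_2 \mid y_2)$) execute $p$; this gives $p_1(x_1 y_1 \mone) = p(x_1 y_1 y_2 m)$. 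Define $(q_2, p_2)$ symmetrically. Both pairs are valid with $p_i(x_i y_i) = p(x_i y_i)$.

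Using \Cref{cor:multiplicativity} and the block-product structure of $p(xy)$, the three pieces of the $\marg_I$-integrand decompose across $(p_1, q_1)$ and $(p_2, q_2)$, up to a correction $I \log (q(w)/p(w))$ for $w := (x_1, y_2, m)$. The conditional-ratio terms split multiplicatively by identities (3) and (4); identity (2) gives $q_1 \cdot q_2 = q(xym) \cdot q(w)$, and the analogous identity $p_1 \cdot p_2 = p(xym) \cdot p(w)$ holds because in any protocol whose input distribution factors across blocks, $x_2$ and $y_1$ are conditionally independent given $w$. The extra $I \log(q(w)/p(w))$ is absorbed by rectangularly trimming $q$ to eliminate points of atypical $q(w)/p(w)$ using \Cref{trimming}, at a cost of $O(I \log(\alpha/I))$ in the final budget.

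The main obstacle is the advantage split. Writing $F(w) := \E_{q(y_1 \mid w)}[(-1)^f]$ and $G(w) := \E_{q(x_2 \mid w)}[(-1)^g]$, the rectangular factorization of $q(xy \mid w)$ yields $\E_{q(xy \mid m)}[(-1)^{f \oplus g}] = \E_{q(w \mid m)}[F(w) G(w)]$, but the individual advantages appearing in $\text{info}_1$ and $\text{info}_2$ are only the local averages $|\E_{q(x_1 \mid y_2 m)}[F]|$ and $|\E_{q(y_2 \mid x_1 m)}[G]|$ --- neither equals $|F(w) G(w)|$ pointwise. To lower-bound the product of these local advantages by $|\E_{q(xy \mid m)}[(-1)^{f \oplus g}]| / \text{poly}(\alpha / I)$, I would invoke the trimming lemmas of \Cref{trimming} to excise $(xym)$-points where either $|F(w)|$ or $|G(w)|$ falls much below typical. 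Because $F$ and $G$ depend on $(x_1, y_2)$ in an entangled way (rather than as a product of an $x_1$-only and a $y_2$-only factor), preserving rectangularity under these trimmings is delicate and accounts for most of the technical work; the total loss in the advantage factor contributes the remaining $O(I \log(\alpha / I))$.

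Combining the previous two paragraphs yields a pointwise bound $\text{info}_1(x_1 y_1 y_2 m) + \text{info}_2(x_1 x_2 y_2 m) \leq \alpha + K$ on the trimmed rectangular support $Q$, with $K = 3 I \log(\alpha/I) + O(I)$. Since $\text{info}_1$ is independent of $x_2$, $\text{info}_2$ is independent of $y_1$, and $Q$'s slice in $(y_1, x_2)$ given $v := (x_1, y_2, m)$ is a product (by rectangularity of $Q$), the partial suprema $S_i(v) := \sup \text{info}_i$ over the slice satisfy $S_1(v) + S_2(v) \leq \alpha + K$ for every $v$ in the support. For any fixed $\gamma \in [1/3, 2/3]$, pigeonhole implies that either $\{v : S_1(v) \leq \gamma \alpha + K/2\}$ or $\{v : S_2(v) \leq (1 - \gamma) \alpha + K/2\}$ carries at least half the $q$-mass, since both cannot fail at the same $v$ without violating $S_1 + S_2 \leq \alpha + K$; one final rectangular trimming of $q$ (and hence of $q_i$) to a rectangular subset of whichever of these two events is heavier forces the corresponding $\sup$ in $\marg_I(p_i, \cdot)$ to meet the theorem's threshold. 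The \textbf{hard parts} throughout are (i) engineering the rectangular advantage split, and (ii) ensuring every successive trimming simultaneously controls all three terms of the $\marg_I$-integrand --- both carried out via the machinery of \Cref{trimming}.
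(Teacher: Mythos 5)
Your overall architecture---marginalize $q$ to rectangular $q_1 = q(yw)$, $q_2 = q(xw)$, decompose the $\marg_I$ integrand across the two pairs, repair the advantage factor by rectangular trimming and bucketing, then pigeonhole over $\gamma$---follows the paper's proof closely, and your handling of the conditional-ratio terms and the advantage factorization is on the right track. But there is a concrete gap in your choice of $p_1, p_2$, and the resulting error is quantitatively fatal.

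You take $p_1, p_2$ to be the natural protocol marginals, $p_1(x_1y_1\mone) = p(yw)$ and $p_2(x_2y_2\mtwo) = p(xw)$. Then, using \Cref{cor:multiplicativity} (item 2) for both $q$ and $p$ (the latter is rectangular since its inputs are a product distribution), $q(yw)q(xw) = q(xym)q(w)$ and $p(yw)p(xw) = p(xym)p(w)$, so
\[
\frac{q_1}{p_1}\cdot\frac{q_2}{p_2} = \frac{q(xym)}{p(xym)}\cdot\frac{q(w)}{p(w)}.
\]
The extra factor $q(w)/p(w)$ enters the marginal cost raised to the power $I$, contributing $I\log(q(w)/p(w))$. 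You claim this is absorbable at cost $O(I\log(\alpha/I))$ by trimming away $w$'s where $q(w)/p(w)$ is atypical. It is not: $\E_{q}[\log(q(w)/p(w))]$ is a nonnegative KL divergence, and by the data-processing inequality it is bounded (often tightly) by $\E_q[\log(q(xym)/p(xym))] \leq \alpha/I$ (each of the four nonnegative terms in the expected integrand is at most $\alpha$). So the \emph{typical} value of the correction is $\Theta(\alpha)$, the same order as the whole budget, and conditioning on a rectangular subevent can only increase a divergence, not shrink it to $O(\log(\alpha/I))$. Your pointwise subadditivity then reads $\mathrm{info}_1 + \mathrm{info}_2 \leq \alpha + \Theta(\alpha) + K$, and splitting $(1+\Theta(1))\alpha$ into a $\gamma\alpha$ piece and a $(1-\gamma)\alpha$ piece with only $O(I\log(\alpha/I))$ slack is impossible.

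The paper avoids this by \emph{not} using the natural marginal protocols. Its $p_1, p_2$ interleave $q$-conditionals with $p$-conditionals: in $p_1$, Bob prepends $y_2 \sim q(y_2\mid m_0)$, Alice's messages use $q(m_i\mid x_1y_2m_{<i})$, and Bob's use $p(m_{i+1}\mid ym_{\leq i})$, with $p_2$ defined complementarily. These are engineered so that $p_1(x_1y_1\mone)\cdot p_2(x_2y_2\mtwo) = p(xym)\cdot q(w)$, which cancels the $q(w)$ in $q_1q_2 = q(xym)q(w)$ exactly, giving $\frac{q_1q_2}{p_1p_2} = \frac{q(xym)}{p(xym)}$ with no correction whatsoever (\Cref{eqn:p_1.p_2-feature,eqn:rectsub}). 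This is the missing engineering step in your proposal. One checks directly that these mixed $p_i$ are still valid protocol distributions and that the conditional-ratio and advantage identities you use continue to hold. With that fix, the rest of your outline (advantage bucketing, pigeonhole over $\gamma$, final trim) matches the paper's argument, modulo the careful nested trimmings you correctly flag as the technical bulk.
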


We shall prove \Cref{lem:single-shot-marg-bound} assuming \Cref{thm:subadditivity}, whose proof we supply right after.

\subsection*{Proof of \Cref{lem:single-shot-marg-bound}}
Let $k_0>1$ be a large constant, to be determined. Define $f_i(x_iy_i) = f(x_iy_i)$. For $\ell=1,2,\dotsc,n$, define $$k(\ell)= \max\Big \{\inf_{\substack{S \subset [n], |S| = \ell\\ p'}} \marg(p',\oplus_{i\in S}f_i),k_0 I\Big \},$$
where the infimum is taken over all protocols $p'$ with $C$ messages such that the support of $m_2,\dotsc,m_C$ is the same as in $p$. Define $$T = \max\{k(n), k_0 n I\}.$$ 
Note that $$k(n) \leq T \leq \frac{nT}{n} + 12I \cdot\log \frac{nT}{I \cdot n}.$$
For any $\ell>1$, suppose we have 
\begin{align}
k(\ell)\leq \frac{\ell T}{n} + 12I \cdot \log \frac{\ell T}{In},\label{eqn:lcondition}
\end{align}
then set $\gamma = \lceil \ell/2 \rceil/\ell$. Since $1/3 \leq \gamma \leq 2/3$, for $k_0$ chosen large enough, \Cref{thm:subadditivity} shows that for some $\ell' \in \{\lceil \ell/2 \rceil,\lfloor \ell/2 \rfloor\}$, we have
\begin{align*}
    k(\ell') &\leq \max\{ \frac{\ell'}{\ell}\cdot  k(\ell) + 3I \log \frac{k(\ell)}{I}, k_0 I \}\\
    & \leq \frac{\ell'}{\ell} \cdot \frac{\ell T}{n} + \frac{\ell'}{\ell} \cdot 12I \cdot \log \frac{\ell T}{I n}+ 3I \log \Big( \frac{\ell T}{In} + 12 \log \frac{\ell T}{In}\Big) \tag{by the choice of $T$ and \Cref{eqn:lcondition}}\\
    & \leq  \frac{\ell' T}{n} +  8I \cdot \log \frac{\ell T}{I n}+ 3I \log \Big( 13\cdot \frac{\ell T}{In} \Big) \\
    & =  \frac{\ell' T}{n} +  11I \cdot \log \frac{\ell' T}{I n}  + 11 I \cdot \log \frac{\ell}{\ell'} + 3I \log 13\\
    & \leq  \frac{\ell' T}{n} +  11I \cdot \log \frac{\ell' T}{I n}  + 11 I \cdot \log \frac{3}{2} + 3I \log 13\\
    & \leq  \frac{\ell' T}{n} +  12I \cdot \log \frac{\ell' T}{I n} ,
\end{align*}
for $k_0$ chosen large enough.

So, starting with $\ell=n$, we obtain a smaller and smaller $\ell$ satisfying \Cref{eqn:lcondition}, until $\ell=1$, which completes the proof.

\subsection*{Proof of \Cref{thm:subadditivity}}
Given a Boolean function $h(xy)$, a protocol distribution $p(xym)$ and $q(xym)$  that is rectangular with  respect to  $p(xy)$, it will be convenient to define 
\[ \marg_I(q,p,h)= \sup_{xym \in \mathsf{supp}(q)} \log \Bigg( \frac{q(x\vert ym)}{p(x\vert y)}\cdot\frac{q(y\vert xm)}{p(y
        \vert x)} \cdot \bigg(\frac{q(xym)}{p(xym)}\bigg)^I  \cdot   \bigg|\E_{q(xy\vert m)}\left[(-1)^{h}\right]\bigg|^{-12I/\delta} \Bigg),
\]
so $\marg_I(p,h) = \inf_q \marg_I(q,p,h)$. We shall prove that there are protocol distributions $p_1(x_1y_1\mone)$, $p_2(x_2y_2 \mtwo)$ with $p_1(x_1y_1)=p(x_1y_1)$ and  $p_2(x_2y_2)=p(x_2y_2)$ and rectangular distributions $r_1,r_2$ such that 
  \begin{align*} &\min\big\{\marg_I(r_1,p_1,f) - \gamma \cdot \marg_I(q,p,f\oplus g),\marg_I(r_2,p_2,g)-(1-\gamma) \cdot \marg_I(q,p,f\oplus g)\big\} \\& \qquad \leq  3I \cdot \log \frac{\marg_I(q,p,f\oplus g)}{I} + O(I),\end{align*}
from which the theorem follows.
 
Before we give the actual proof, let us give a high level overview of all the steps. Recall the definitions of $\mone, \mtwo$ and $w$, given in the preliminaries.  We start by defining rectangular distributions $q_1(x_1y_1\mone),q_2(x_2y_2\mtwo)$ and protocol distributions $p_1(x_1y_1 \mone),p_2(x_2y_2\mtwo)$ that satisfy the identities described in \Cref{eqn:rectsub,eqn:1infosub,eqn:2infosub,eqn:advsub}. The distributions $q_1,q_2$ are not be the same as our final rectangular distributions $r_1,r_2$, but they are closely related. We would like to prove that $$\marg(q_1,p_1,f)+ \marg(q_2,p_2,g) \leq \marg(q,p,f\oplus g),$$ but the advantage terms do not add nicely in the marginal information cost: in \Cref{eqn:advsub}, the advantage is computed with respect to $w$, and not $\mone, \mtwo$ or $m$. For example, there may be some $mw$ in the support for which 
$$\abs{\E_{q(xy\vert w)}\left[(-1)^{f\oplus g}\right]} \ll \abs{\E_{q(xy\vert m)}\left[(-1)^{f\oplus g}\right]}.$$
To resolve this issue, we define a subset $G$ whose indicator function $1_G(xym)$ depends only on $w$, and yet for all $mw$ in the support of $G$, the advantage is preserved in the sense of \Cref{eqn:advstep1}. This allows us to convert the advantage term in $\marg(q,p,f \oplus g)$ into the kind of term where \Cref{eqn:advsub} can be applied, and we use it to get subadditivity as described in \Cref{eqn:subadd1}. This equation shows that the costs add up pointwise, and so we can pass to a large subset $U\cap L$ where the costs in, say, the first coordinate are a $\gamma$-fraction of the total, see  \Cref{eqn:subadd2}. We are left with our final obstacle: once again the advantage term that we have control over is not exactly the one we want, it may well be that \begin{align*}\Big|\E_{q_1(x_1y_1\vert \mone)}\Big[(-1)^f\Big]\Big| \ll \Big|\E_{q_1(x_1y_1\vert w)}\Big[(-1)^f\Big]\Big|.
\end{align*}
To address this, we show that after passing to a suitable set $U' \cap L'$ (whose indicator function depends only on $w$), the advantage for each fixed $w$ is at least $2^{-\Omega(\marg(q,p,f \oplus g))}$ (\Cref{claim:largeadv}). We then cluster the $w$ and pass to a  subset $B$ of density $\Omega(\marg(q,p,f \oplus g)^{-1})$ where the advantage terms for each $w$ are within a factor of $2$ of each other. The low density of this set is what leads to the $\log \marg$ factor in the statement of the theorem. This allows us to show that the advantage with respect to $\mone$ is comparable to the advantage with respect to $w$ (\Cref{eq:B-mone-prop}). All of these steps leave us with a subset of the inputs $xym$ where the proof gives good control on the quantity $\marg(q_1,p_1,f)$, but we now need to define a distribution $r_1$ supported on these points where $\marg(r_1,p_1,f)$ can be bounded. To do this we need to carefully control the sizes of all the sets we encounter during the proof, and define the distribution of $r_1$ carefully.

Now we begin the actual proof. Define  
    \begin{align*}
    q_1(x_1y_1\mone) &= q(x_1y_1\mone) = q(yw)\\
    q_2(x_2y_2\mtwo) &= q(x_2y_2\mtwo) =q(xw)\\
    	p_1(x_1y_1\mone) &= p(x_1y_1)\cdot p(m_0)\cdot q(y_2\vert x_1m_0)\cdot \prod_{i =1,3,5,\dotsc}q(m_i\vert x_1y_2m_{< i})\cdot p(m_{i+1}\vert ym_{\leq i})\\
        p_2(x_2y_2\mtwo) &= p(x_2y_2)\cdot  q(m_0x_1)\cdot \prod_{i=1,3,5,\dotsc}p(m_i\vert xm_{< i})\cdot q(m_{i+1}\vert x_1y_2m_{\leq i})
    \end{align*} 
    
   These distributions have been carefully chosen to have many nice properties. First, observe that $p_1(x_1y_1\mone),p_2(x_2y_2 \mtwo)$ are protocol distributions, with the same number of rounds of communication as $p$, though the length of the $\mone_1$ is longer than $m_1$, and the length of $\mtwo_0$ is longer than $m_0$. Since $q$ is rectangular with respect to $p(xy)$, we have $q(xym) = p(xy) \cdot A(xm) \cdot B(ym)$ for some functions $A,B$. So, we get 
   \begin{align*}
       &q_1(x_1y_1\mone) \\
       &= \sum_{x_2}q(xym) \\
       &= \sum_{x_2}p(x_1y_1)\cdot p(x_2y_2)\cdot A(xm)\cdot B(ym)\\
       &= p(x_1y_1)\cdot \bigg(\sum_{x_2} p(x_2y_2)\cdot A(xm)\bigg)\cdot B(ym) = p(x_1y_1)\cdot A'(x_1\mone)\cdot B'(y_1\mone),
   \end{align*}
   proving that $q_1$ is rectangular with respect to $p(x_1y_1)$.
    A similar calculation calculation shows that $q_2(x_2y_2\mtwo)$ is rectangular with respect to $p(x_2y_2)$.
    Using the fact that $p$ is a protocol and $x_1y_1$ and $x_2y_2$ are independent under $p$, we can compute:
        \begin{align}
    	   p_1(x_1y_1\mone)\cdot p_2(x_2y_2\mtwo) 
        &= p(xy)\cdot p(m_0) \cdot q(x_1y_2m_0)\cdot \prod_{i>0}p(m_{ i}\vert xym_{< i})\cdot q(m_{i}\vert x_1y_2m_{< i}) \notag \\
	       &= p(xym)\cdot q(x_1y_2m) \notag \\
        &= p(xym) \cdot q(w). \label{eqn:p_1.p_2-feature}
        \end{align}

   The pairs $p_1,p_2,q_1,q_2$ have been engineered so that the various terms in the marginal cost add up nicely across the pairs. We have:

     \begin{align}
         \frac{q_1(x_1y_1\mone)}{p_1(x_1y_1\mone)}\cdot \frac{q_2(x_2y_2\mtwo)}{p_2(x_2y_2\mtwo)} &= \frac{q(xym)}{p(xym)}, \label{eqn:rectsub} \\
	       \frac{q_1(x_1\vert y_1\mone)}{p_1(x_1\vert y_1)}\cdot \frac{q_2(x_2\vert y_2\mtwo)}{p_2(x_2\vert y_2)} &= \frac{q(x\vert ym)}{p(x\vert y)}, \label{eqn:1infosub}\\
	        \frac{q_1(y_1\vert x_1\mone)}{p_1(y_1\vert x_1)}\cdot \frac{q_2(y_2\vert x_2\mtwo)}{p_2(y_2\vert x_2)} &= \frac{q(y\vert xm)}{p(y\vert x)}, \label{eqn:2infosub}\\
              \abs{\E_{q_1(x_1y_1\vert w)}\left[(-1)^f\right]}\cdot \abs{\E_{q_2(x_2y_2\vert w)}\left[(-1)^g\right]} &= \abs{\E_{q(xy\vert w)}\left[(-1)^{f\oplus g}\right]}.\label{eqn:advsub}
              \end{align}
  
     To prove \Cref{eqn:rectsub},  use  \Cref{eqn:p_1.p_2-feature,cor:multiplicativity} to obtain 
        \begin{align*}
            \frac{q_1(x_1y_1\mone)}{p_1(x_1y_1\mone)}\cdot \frac{q_2(x_2y_2\mtwo)}{p_2(x_2y_2\mtwo)} &= \frac{q(xym)}{p(xym)}\cdot \frac{q(w)}{q(w)} 
            = \frac{q(xym)}{p(xym)}.
        \end{align*}
         \Cref{eqn:1infosub,eqn:2infosub} follow directly from \Cref{cor:multiplicativity}. We use the fact that $q(xy\vert w)=q(x_2\vert w) \cdot  q(y_1\vert w)$ from \Cref{cor:multiplicativity} to prove \Cref{eqn:advsub}:
        \begin{align*}
            \abs{\E_{q_1(x_1y_1\vert w)}\left[(-1)^f\right]}\cdot \abs{\E_{q_2(x_2y_2\vert w)}\left[(-1)^g\right]} = \abs{\E_{q(y_1\vert w)}\left[(-1)^{f}\right]\cdot \E_{q(x_2\vert w)}\left[(-1)^g\right]} = \abs{\E_{q(xy\vert w)}\left[(-1)^{f\oplus g}\right]}.  
        \end{align*}

These identities suggest that  the costs in the first and second coordinates should sum to $\marg(q,p,f\oplus g)$. 
    The main challenge in applying this intuition is that the advantage terms in \Cref{eqn:advsub} are not the ones needed for $\marg(q_1,p_1,f)$ and $ \marg(q_2,p_2,g)$. To resolve this, we need to remove some problematic points in the support of $q$. We need to do this while retaining the rectangular structure of $q_1,q_2$ and preserving the sub-additivity of the other terms in the marginal cost.

    Let $G$ be a subset of triples $xym$ such that the indicator function $1_{G}(xyw)$ depends only on $w$, and for each fixed $m$, the  set $G$ maximizes 
    \begin{align}
    q(G\vert m)^\delta\cdot \Big|\E_{q(xy\vert mG)}\Big[(-1)^{f\oplus g}\Big]\Big|,\label{eqn:gmax}
    \end{align}
  among all such sets. In \Cref{lem:advantage-preserving-1}, we prove that for all $w$ in the support of $G$:
        \begin{align}\abs{\E_{q(xy\vert w)}\left[(-1)^{f\oplus g}\right]}
        \geq (1-\delta)\cdot q(G\vert m)^{-\delta}\cdot \abs{\E_{q(xy\vert m)}\left[(-1)^{f\oplus g }\right]}. \label{eqn:advstep1}\end{align}

    This gives us an effective way to split the costs for $q,p$. Using \Cref{eqn:rectsub,eqn:1infosub,eqn:2infosub,eqn:advsub}, we obtain that for all $xym$ in the support of $G$,

    \begin{align}
    &\frac{q_1(x_1|y_1\mone)}{p_1(x_1|y_1)} \frac{q_1(y_1|x_1\mone)}{p_1(y_1|x_1)}  &\cdot& \bigg(\frac{q_1(x_1y_1\mone)}{p_1(x_1y_1\mone)}\bigg)^I &\cdot&   \abs{\E_{q_1(x_1y_1\vert w)}\left[(-1)^{f}\right]}^{-12I/\delta} \notag \\
    \times &\frac{q_2(x_2|y_2\mtwo)}{p_2(x_2|y_2)} \frac{q_2(y_2|x_2\mtwo)}{p_2(y_2|x_2)}  &\cdot& \bigg(\frac{q_2(x_2y_2\mtwo)}{p_2(x_2y_2\mtwo)}\bigg)^I &\cdot&  \abs{\E_{q_2(x_2y_2\vert w)}\left[(-1)^{ g}\right]}^{-12I/\delta}\notag \\
    =&\frac{q(x|ym)}{p(x|y)} \frac{q(y|xm)}{p(y|x)} &\cdot&  \bigg(\frac{q(xym)}{p(xym)}\bigg)^I &\cdot&   \abs{\E_{q(xy\vert w)}\left[(-1)^{f\oplus  g}\right]}^{-12I/\delta}\notag \\
    \leq &\frac{q(x|ym)}{p(x|y)} \frac{q(y|xm)}{p(y|x)} &\cdot&  \bigg(\frac{q(xym)}{p(xym)}\bigg)^I &\cdot& \abs{\E_{q(xy\vert m)}\left[(-1)^{f\oplus  g}\right]}^{-12I/\delta} \cdot O(q(G|m))^{12I} \notag \\
    \leq &2^{\marg(q,p,f\oplus g)} &\cdot& O(q(G|m))^{12I} && \notag
    \end{align}

    In this product, the quantity in the first line does not depend on the choice of $x_2$, and the quantity in the second line does not depend on $y_1$. Thus, for every fixed value of $w$, we obtain:

    \begin{align}
    &\abs{\E_{q_1(x_1y_1\vert w)}\left[(-1)^{f}\right]}^{-12I/\delta} \cdot \sup_{y_1} \bigg(\frac{q_1(x_1y_1\mone)}{p_1(x_1y_1\mone)}\bigg)^I \cdot  \frac{q_1(x_1|y_1\mone)}{p_1(x_1|y_1)} \frac{q_1(y_1|x_1\mone)}{p_1(y_1|x_1)}   \notag \\
     \times & \abs{\E_{q_2(x_2y_2\vert w)}\left[(-1)^{ g}\right]}^{-12I/\delta}\cdot \sup_{x_2} \bigg(\frac{q_2(x_2y_2\mtwo)}{p_2(x_2y_2\mtwo)}\bigg)^I \cdot \frac{q_2(x_2|y_2\mtwo)}{p_2(x_2|y_2)} \frac{q_2(y_2|x_2\mtwo)}{p_2(y_2|x_2)}  \notag \\
    \leq & 2^{\marg(q,p,f\oplus g)} \cdot O(q(G|m))^{12I}. \label{eqn:subadd1}
    \end{align}
    
    Let $L \subseteq G$ be a subset whose indicator function $1_L(xym)$ depends only on $w$, such that $1_{L}(w)=1$ if and only if 

\begin{align}
    & \left(\abs{\E_{q_1(x_1y_1\vert w)}\left[(-1)^{f}\right]}^{-12I/\delta} \cdot \sup_{y_1} \bigg(\frac{q_1(x_1y_1\mone)}{p_1(x_1y_1\mone)}\bigg)^I \cdot  \frac{q_1(x_1|y_1\mone)}{p_1(x_1|y_1)} \frac{q_1(y_1|x_1\mone)}{p_1(y_1|x_1)} \right)^{1/\gamma}  \notag \\
    & \leq \left(\abs{\E_{q_2(x_2y_2\vert w)}\left[(-1)^{ g}\right]}^{-12I/\delta}\cdot \sup_{x_2} \bigg(\frac{q_2(x_2y_2\mtwo)}{p_2(x_2y_2\mtwo)}\bigg)^I \cdot \frac{q_2(x_2|y_2\mtwo)}{p_2(x_2|y_2)} \frac{q_2(y_2|x_2\mtwo)}{p_2(y_2|x_2)}\right)^{1/(1-\gamma)} \notag .
    \end{align}

    Let $U$ denote the set whose indicator function depends only on $m$, such that $1_U(m)=1$ if and only if $q(L |mG) \geq 1/2$. If $q(U) \geq 1/2$, we carry out the reduction in the first coordinate. Otherwise, we carry out the reduction in the second coordinate, using the complements of $U,L$ instead. Without loss of generality, we assume that $q(U) \geq 1/2$. By the definition of $U,L$, and by \Cref{eqn:subadd1}, for all $w$ in the support of $U \cap L$ we have 
\begin{align}
    & \abs{\E_{q_1(x_1y_1\vert w)}\left[(-1)^{f}\right]}^{-12I/\delta}  \cdot \bigg(\sup_{y_1} \frac{q_1(x_1y_1\mone)}{p_1(x_1y_1\mone)}\bigg)^I \cdot \frac{q_1(x_1|y_1\mone)}{p_1(x_1|y_1)} \frac{q_1(y_1|x_1\mone)}{p_1(y_1|x_1)}  \notag \\
    &\leq  2^{\gamma\cdot \marg(q,p,f\oplus g)} \cdot O(q(G|m))^{12\gamma \cdot I}. \label{eqn:subadd2}
    \end{align}

Our next barrier is that in $\marg(q_1,p_1,f)$ the advantage term is not exactly the same as what we have bounded in the above expressions; it might well be that for most $w$ consistent with $\mone$
    \begin{align}\Big|\E_{q_1(x_1y_1\vert \mone)}\Big[(-1)^f\Big]\Big| \ll \Big|\E_{q_1(x_1y_1\vert w)}\Big[(-1)^f\Big]\Big|.\label{eqn:wlarge}
    \end{align}

    To resolve this issue, we condition on a dense subset $B$ of the $w$'s such that given any two $w,w' \in B$ that are consistent with the same $m$, 
    $$\Big|\E_{q_1(x_1y_1\vert w)}\Big[(-1)^f\Big]\Big| \geq \frac{1}{2}\cdot  \Big|\E_{q_1(x_1y_1\vert w')}\Big[(-1)^f\Big]\Big|.$$
    This will ensure that \Cref{eqn:wlarge} does not happen. 
    To find this subset $B$, we first prune away some problematic points to ensure all advantage terms in \Cref{eqn:subadd2} are reasonably large. This is accomplished by \Cref{claim:largeadv} below. 
    
\begin{claim} \label{claim:largeadv} There are subsets $U' \subseteq U, L' \subseteq L$ such that $1_{U'}(xym)$ only depends on $m$,  $1_{L'}(xym)$ only depends on $w$, $q(U') \geq 1/4$, and for all $mw$ in the support of  $U' \cap L'$, we have $q(L'|mG)\geq 1/4$ and $$
     \abs{\E_{q_1(x_1y_1\vert w)}\left[(-1)^{f}\right]}^{-1}
    \leq  \alpha,$$ for some $\alpha \leq 2^{O(\marg(q,p,f\oplus g)/I)}\cdot O(1).$
\end{claim}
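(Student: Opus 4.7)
The plan is to take
\[L' = L \cap \left\{w : \abs{\E_{q_1(x_1y_1\vert w)}\left[(-1)^f\right]}^{-1} \leq \alpha\right\}, \qquad U' = U \cap \{m : q(L'\vert mG) \geq 1/4\},\]
for a threshold $\alpha = 2^{O(\marg(q,p,f\oplus g)/I)}$ to be chosen. Both indicator functions have the required dependence by construction: $\abs{\E_{q_1(x_1y_1\vert w)}[(-1)^f]}$ is a function of $w$ alone and $L$ already depends only on $w$, so $1_{L'}$ depends only on $w$; and the condition $q(L'\vert mG) \geq 1/4$ is determined by $m$, so $1_{U'}$ depends only on $m$.

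The core of the argument is controlling $\abs{\E_{q_1(x_1y_1\vert w)}[(-1)^f]}^{-1}$ on a large sub-event of $U \cap L$. Writing $S(w)$ for the $\sup_{y_1}$ factor appearing in \Cref{eqn:subadd2}, I rearrange that inequality (using $q(G\vert m) \leq 1$) to get, for every $w$ in the support of $U \cap L$,
\[\abs{\E_{q_1(x_1y_1\vert w)}[(-1)^f]}^{-12I/\delta} \cdot S(w) \leq 2^{O(\marg(q,p,f\oplus g))}.\]
If I can produce a rectangular sub-event of $U \cap L$ (depending only on $w$) of constant conditional $q$-mass on which $S(w) \geq \Omega(1)$, then on that sub-event $\abs{\E_{q_1(x_1y_1\vert w)}[(-1)^f]}^{-1} \leq 2^{O(\marg/I)}$, which is the bound I want for $\alpha$. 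The $w$-side mass then transfers to the $m$-side by a weighted Markov argument: using $q(L\setminus L'\vert mG) \geq q(L\vert mG) - q(L'\vert mG) \geq 1/2 - 1/4 = 1/4$ for $m \in U\setminus U'$ together with the bound $q((L\setminus L') \cap U)/q(U\cap G) \leq \text{small}$ (which comes from the $w$-side Markov combined with $q(U\cap L) \leq q(U\cap G)$), I conclude by Markov applied to $\pi(m) \propto q(mG)\mathbf{1}[m\in U]$ that $\pi(U') \geq 1/2$, hence $q(U') \geq q(U\cap G)/2$, which is at least $1/4$ after bounding $q(U\cap G)$ from below.

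The main obstacle is constructing the $w$-rectangular sub-event on which $S(w) \geq \Omega(1)$. The factor $S(w)$ is a supremum over $y_1$ of a product of three ratios. Each of these ratios individually has expectation at least $1$ under the appropriate $q_1$-conditional (by Cauchy--Schwarz, e.g.\ $\sum_{y_1}q_1(y_1\vert x_1\mone)^2/p_1(y_1\vert x_1) \geq 1$), but the product can have large variance in $y_1$, so for pathological $w$ the supremum can be arbitrarily small. I plan to handle this by a trimming step along the lines of \Cref{lem:trimming-1} and the techniques of \Cref{trimming}: trim the $w$'s on which any of the marginal ratios $q_1(x_1\mone)/p_1(x_1\mone)$ or $q_1(y_1\mone)/p_1(y_1\mone)$ is too extreme, yielding a rectangular sub-event of $L$ with constant $q$-conditional mass; then, on this sub-event, argue that sampling $y_1 \sim q_1(y_1\vert x_1\mone)$ makes all three ratios simultaneously at least a constant with constant probability, forcing $S(w) \geq \Omega(1)$. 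Keeping the rectangular structure throughout, while balancing constants so that $\alpha \leq 2^{O(\marg/I)} \cdot O(1)$, is the most delicate part of the argument.
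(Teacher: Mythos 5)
The high-level skeleton of your proposal---prune a $w$-measurable subset of $L$ and an $m$-measurable subset of $U$ so that the supremum factor in \Cref{eqn:subadd2} is bounded below, then read off the advantage bound---matches the paper's, but there is a genuine gap in the core step. You assert that on a suitable $w$-rectangular sub-event one can force $S(w)\ge\Omega(1)$, justified by the claim that each ratio in $S(w)$ has expectation at least~$1$ under $y_1\sim q_1(y_1\vert x_1\mone)$ by Cauchy--Schwarz. This works for the two divergence-type ratios $q_1(y_1\vert w)/p_1(y_1\vert w)$ and $q_1(y_1\vert x_1\mone)/p_1(y_1\vert x_1)$ (both are $q_1(y_1\vert w)$-divergences since $q_1(y_1\vert x_1\mone)=q_1(y_1\vert w)$), but it fails for the cross-term $q_1(x_1\vert y_1\mone)/p_1(x_1\vert y_1)$, which is not of that form and whose expectation can be far below~$1$. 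After also accounting for the $y_1$-independent piece $(q_1(m)/p_1(m))\cdot(q_1(w\vert m)/p_1(w\vert m))$, the best that can be proved---using $\sup_{y_1}\log(\cdot)\ge\E_{q_1(y_1\vert w)}[\log(\cdot)]$, non-negativity of divergence, and Markov/Jensen pruning of $U$ and $L$ with threshold $q_1(G\vert m)/8$---is $S(w)\ge\Omega(q_1(G\vert m))^{1+I}$, not $\Omega(1)$.

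This matters because you then discard the factor $O(q(G\vert m))^{12\gamma I}$ on the right of \Cref{eqn:subadd2} on the grounds that $q(G\vert m)\le1$. That factor is exactly what makes the argument close: keeping it and dividing by $S(w)\ge\Omega(q_1(G\vert m))^{1+I}$ gives
\[
\abs{\E_{q_1(x_1y_1\vert w)}[(-1)^f]}^{-12I/\delta}\le 2^{\gamma\,\marg(q,p,f\oplus g)}\cdot O\bigl(q(G\vert m)\bigr)^{12\gamma I-1-I},
\]
and since $12\gamma I-1-I>0$ when $\gamma\ge1/3$ and $I\ge1$, the residual $q(G\vert m)$ power is at most~$1$ and can be dropped, yielding $\alpha\le 2^{O(\marg/I)}$. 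Discarding the right-hand $q(G\vert m)$ first leaves a dangling $q(G\vert m)^{-1-I}$ in your conclusion, which is uncontrolled. Two smaller points: the paper defines $U'$ and $L'$ by explicit ratio tests ($q_1(m)/p_1(m)\ge1/4$; $q_1(w\vert m)/p_1(w\vert m)\ge q_1(G\vert m)/8$; $\E_{q_1(y_1\vert w)}[\log\frac{q_1(x_1\vert y_1\mone)}{p_1(x_1\vert y_1)}]\ge\log(q_1(G\vert m)/8)$), each pruned by a direct Markov bound, rather than implicitly by the conclusion; and your suggested trimming on the ratio $q_1(y_1\mone)/p_1(y_1\mone)$ involves $y_1$ and therefore would not produce an event whose indicator depends only on $w$.
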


We defer the proof of \Cref{claim:largeadv} to the end of this section. Assuming the claim, we can now bucket the $w$ according to their advantage. For each fixing of $\mone$, partition the space of $w$ consistent with  $\mone$ into disjoint buckets  according to the sign of $\E_{q_1(x_1y_1\vert w)}\left[(-1)^{f}\right]$, and the value of $$\left \lceil \log \abs{\E_{q_1(x_1y_1\vert w)}\left [(-1)^{f}\right]} \right \rceil.$$
   There can be at most $O( \log \alpha )$ such buckets, and so by picking the heaviest bucket for each $\mone$, we obtain a set $B\subseteq L'$ whose indicator function $1_B(x_1y_1\mone)$ is determined by $w$, such that for every $\mone$,
   \begin{align} \label{eqn:bucketbound} q_1(B\vert \mone L') \geq \frac{1}{ O(\log \alpha )},\end{align}
and moreover, for every $w\mone$ in the support of $B$, 

    \begin{align}\label{eq:B-mone-prop}
        \abs{\E_{q_1(x_1y_1\vert \mone B)}\Big[(-1)^f\Big]}  \geq \frac{1}{2} \cdot \abs{\E_{q_1(x_1y_1\vert w)}\Big[(-1)^f\Big]}.
    \end{align}

    Let $R\subseteq B\subseteq L'$ be the rectangular set in $x_1y_1 \mone$ such that for every $\mone$, $R$ maximizes 
    \begin{align}\label{eq:def-R}
        q_1(R\vert \mone B)^\delta\cdot \Big|\E_{q_1(x_1y_1\vert  \mone R) }\Big[(-1)^{f}\Big]\Big|. 
    \end{align}

Define the rectangular distribution 
\begin{align}
    	r(x_1y_1\mone) &= q_1(x_1y_1\mone)\cdot \frac{1_{U'}(m) \cdot  1_{R}(x_1y_1\mone)}{q_1(U')\cdot q_1(L'\vert m)\cdot q_1(R\vert \mone L')}\notag \\
     &=  \frac{q_1(m)1_{U'}(m)}{q_1(U')} \cdot \frac{q_1(y_2|m) q_1(L'|\mone) }{ q_1(L'\vert m) } \cdot \frac{ q_1(x_1y_1\vert \mone) \cdot 1_{R}(x_1y_1\mone)}{q_1(L'\vert\mone) 
 \cdot q_1(R\vert \mone L')}\notag \\
 &=  q_1(m|U') \cdot q_1(y_2|mL') \cdot \frac{ q_1(x_1y_1\vert \mone) \cdot 1_{R}(x_1y_1\mone)}{
  q_1(R\vert \mone )}\notag \\
  &=  q_1(m|U') \cdot q_1(y_2|mL') \cdot q_1(x_1y_1|\mone R). \label{eqn:rdef}
    \end{align}

Because $r$ is defined as the  product of a rectangular distribution with a function that is also rectangular, $r$ 
 is rectangular. From the last line in \Cref{eqn:rdef}, it is clear that $r$ is a distribution. We have the following bound:
\begin{align}\label{eqn:rqratio}
\frac{r(x_1y_1\mone)}{q_1(x_1y_1\mone)} &=  \frac{1_{U'}(m) \cdot  1_{R}(x_1y_1\mone)}{q_1(U')\cdot q_1(L'\vert m)\cdot q_1(R\vert \mone L')} \notag \\
&\leq  \frac{O(1)}{q_1(L'\vert mG)\cdot q_1(G\vert m) \cdot q_1(R\vert \mone B) \cdot q_1(B\vert \mone L')}  \notag\\
&\leq \frac{O(\log \alpha )}{ q_1(G\vert m) \cdot q_1(R\vert \mone B)}, 
\end{align}
where here we used \Cref{claim:largeadv} and  \Cref{eqn:bucketbound}.
Apply \Cref{lem:trimming-1} with $a = r$, $b=q_1$ and $\kappa = 1/6$ to obtain a rectangular set $T$ with $r(T) \geq 1/2$ such that 
    \begin{align}\label{eq:bound-from-trimming}
    \frac{r(x_1\mone\vert T)}{q_1(x_1\mone)}, \frac{r(y_1\mone\vert T)}{q_1(y_1\mone)}, \frac{r(\mone \vert T)}{r(\mone)} \geq \frac{1}{6}.
    \end{align}
    
Finally, we define $r_1(x_1y_1\mone) = r(x_1y_1\mone\vert T)$. Because $r$ is a rectangular distribution and $T$ is a rectangular set, $r_1$ is a rectangular distribution. It only remains to bound $\marg(r_1,p_1,f)$. For all $x_1y_1 \mone$ in the support of $r_1$, we have
\begin{align}\label{eqn:rectboundf}
\frac{r_1(x_1y_1 \mone)}{p_1(x_1y_1\mone)} &= \frac{q_1(x_1y_1\mone)}{p_1(x_1y_1\mone) } \cdot \frac{r(x_1y_1\mone)}{q_1(x_1y_1\mone)} \cdot \frac{1}{r(T)}\notag \\
&\leq \frac{q_1(x_1y_1\mone)}{p(x_1y_1\mone)} \cdot \frac{O(\log \alpha )}{ q_1(G\vert m) \cdot q_1(R\vert \mone B)}, 
\end{align}
using \Cref{eqn:rqratio} and the fact that $r(T) \geq 1/2$. For the next term,
\begin{align} \label{eqn:infoboundf1}
\frac{r_1(x_1|y_1 \mone)}{p_1(x_1|y_1 \mone)} & = \frac{q_1(x_1|y_1 \mone)}{p_1(x_1|y_1 \mone)} \cdot \frac{r_1(x_1|y_1 \mone)}{q_1(x_1|y_1 \mone)} \notag \\
& = \frac{q_1(x_1|y_1 \mone)}{p_1(x_1|y_1 \mone)} \cdot \frac{r(x_1 y_1 \mone)}{q_1(x_1y_1 \mone)} \cdot \frac{1}{r(T)} \cdot \frac{q_1(y_1 \mone)}{r(y_1\mone|T)}\notag \\
& \leq \frac{q_1(x_1|y_1 \mone)}{p_1(x_1|y_1 \mone)}  \cdot \frac{O(\log \alpha )}{ q_1(G\vert m) \cdot q_1(R\vert \mone B)},
\end{align}
using \Cref{eqn:rqratio,eq:bound-from-trimming}, and the fact that $r(T) \geq 1/2$. The symmetric argument gives:
\begin{align} \label{eqn:infoboundf2}
\frac{r_1(y_1|x_1 \mone)}{p_1(y_1|x_1 \mone)} 
& \leq \frac{q_1(y_1|x_1 \mone)}{p_1(y_1|x_1 \mone)}  \cdot \frac{O(\log \alpha )}{ q_1(G\vert m) \cdot q_1(R\vert \mone B)}.
\end{align}

To bound the advantage, first note that 
\begin{align} \label{eqn:qtbound}
q_1(T\vert \mone R) & = \frac{q_1(T\vert R) \cdot q_1(\mone\vert T)}{q_1(\mone \vert R)} = \frac{r(T) \cdot r(\mone\vert T)}{r(\mone)} \geq  \frac{1}{12},
\end{align}
by \Cref{eq:bound-from-trimming}. For each $\mone$, we apply \Cref{lem:adv-preserving} with $v(x_1y_1\mone) = q_1(x_1y_1\mone\vert \mone B)$, $R$ and $Z = T$. Note here that $v(\mone)=1$. We obtain the bound:
        \begin{align}\label{eqn:finaladv}
    \Big|\E_{r_1(x_1y_1\vert \mone)}\big[(-1)^f\big]\Big| &= \Big|\E_{q_1(x_1y_1\vert \mone T)}\big[(-1)^f\big]\Big| 
            \notag \\
            &\geq  \frac{1-\delta^2 - \delta/ q_1(T\vert  \mone R)}{q_1(R\vert  \mone B)^\delta} \cdot \Big|\E_{q_1(x_1y_1\vert \mone B)}\Big[(-1)^{f}\Big]\Big|\notag \\
            &\geq  \frac{\Omega(1)}{q_1(R\vert  \mone B)^\delta} \cdot \Big|\E_{q_1(x_1y_1\vert \mone B)}\Big[(-1)^{f}\Big]\Big|,
        \end{align}
        by the choice of $\delta$ and \Cref{eqn:qtbound}.

Now, we are ready to put all these bounds together to complete the proof of the theorem. By \Cref{eq:B-mone-prop,eqn:rectboundf,eqn:infoboundf1,eqn:infoboundf2,eqn:finaladv}, we get that for every $x_1y_1\mone$ in the support of $r_1$,

\begin{align*}
&\frac{r_1(x_1\vert y_1 \mone)}{p_1(x_1\vert y_1)}\cdot \frac{r_1(y_1\vert x_1 \mone)}{p(y_1\vert x_1)}\cdot \bigg(\frac{r_1(x_1y_1\mone)}{p_1(x_1y_1\mone)}\bigg)^{I} \cdot  \Big|\E_{r_1(x_1y_1\vert \mone)}\big[(-1)^f\big]\Big|^{-12I/\delta} \\
&\leq \frac{q_1(x_1\vert y_1 \mone)}{p_1(x_1\vert y_1)}\cdot \frac{q_1(y_1\vert x_1 \mone)}{p(y_1\vert x_1)}\cdot \bigg(\frac{q_1(x_1y_1\mone)}{p_1(x_1y_1\mone)}\bigg)^I \cdot  \Big|\E_{q_1(x_1y_1\vert w)}\big[(-1)^f\big]\Big|^{-12I/\delta} \\ & \qquad \times  \frac{O(\log(\alpha))^{I+2}}{q_1(G\vert m)^{I + 2} \cdot  q_1(R\vert \mone B)^{I+2-12I}}\\
&\leq 2^{\gamma\cdot \marg(q,p,f\oplus g)}\cdot O(\log(\alpha))^{3I}\cdot q_1(G\vert m)^{12\gamma\cdot I -I - 2}\cdot 2^{O(I)}\cdot q_1(R\vert \mone B)^{11I-2}\\
&\leq 2^{\gamma\cdot \marg(q,p,f\oplus g)} \cdot O(\log(\alpha))^{3I}\cdot q_1(G\vert m)^{3I - 2}\cdot  2^{O(I)}\\
&\leq 2^{\gamma\cdot \marg(q,p,f\oplus g)} \cdot O\left(\frac{\marg(q,p,f\oplus g)}{I}\right)^{3I},
\end{align*}
where in the last three inequalities we used \Cref{eqn:subadd2}, the fact that $I \geq 1$ and \Cref{claim:largeadv}. This implies that 
$$ \marg(r_1,p_1,f) \leq \gamma\cdot \marg(q,p,f \oplus g) + 3I \log \frac{\marg(q,p,f \oplus g)}{I} + O(I),$$
completing the proof of the theorem. 

\textit{Proof of \Cref{claim:largeadv}:} 
We have $$\E_{q_1(m)}\left[\frac{p_1(m)}{q_1(m)}\right] \leq 1,$$ and so by Markov's inequality, the total mass of  $m\in \mathsf{supp}(q)$ for which \begin{align} \frac{q_1(m)}{p_1(m)} \leq 1/4 \label{eqn:Urule} \end{align} is at most $1/4$. We delete all such $m$ from the support of $U$. We are left with a set $U'$ with \begin{align}q(U') \geq 1/2-1/4 =1/4.\label{eqn:Umass}\end{align}
Next, we delete $w$ from $L$ if either 
\begin{align}\frac{q_1(w\vert m)}{p_1(w\vert m)}<  \frac{q_1(G\vert m)}{8},\label{eqn:Lrule1}\end{align}
or \begin{align}\E_{q_1(y_1\vert w)}\left[\log \frac{q_1(x_1\vert y_1\mone)}{p_1(x_1\vert y_1)}\right]< \log  \frac{q_1(G\vert m)}{8}.\label{eqn:Lrule2}\end{align}

We claim that for all $m$ in the support of $U'$, $q(L'|mG) \geq 1/4$. To see this, observe that
\begin{align*}
q_1(G\vert m)\cdot  \E_{q_1(w\vert m G)}\left[\frac{p_1(w\vert m)}{q_1(w\vert m)}\right] \leq \E_{q_1(w\vert m )}\left[\frac{p_1(w\vert m)}{q_1(w\vert m)}\right]\leq 1,
\end{align*}
so Markov's inequality implies that for each $m$ the total mass of $w$ for which \Cref{eqn:Lrule1} is violated is at most $1/8$. By the concavity of the $\log$ function, the $w$ deleted because of \Cref{eqn:Lrule2} satisfy
\begin{align*}\log \E_{q_1(y_1\vert w)}\left[ \frac{p_1(x_1\vert y_1\mone)}{q_1(x_1\vert y_1)}\right] \geq \E_{q_1(y_1\vert w)}\left[\log \frac{p_1(x_1\vert y_1\mone)}{q_1(x_1\vert y_1)}\right]> \log  \frac{8}{q_1(G\vert m)}.\end{align*}

On the other hand, because $w$ determines $1_G$,
\begin{align*}
q_1(G\vert m)\cdot \E_{q_1(w\vert m G)} \left [\E_{q_1(y_1\vert w )}\left[\frac{p_1(x_1\vert y_1)}{q_1(x_1\vert y_1\mone)}\right] \right]
            &\leq  \E_{q_1(y_1 w\vert m) }\left[\frac{p_1(x_1\vert y_1)}{q_1(x_1\vert y_1\mone)}\right] \\
            &=  \E_{q_1(y\vert m) }\left[\E_{q_1(x_1 \vert ym) }\left[\frac{p_1(x_1\vert y_1)}{q_1(x_1\vert y_1\mone)}\right]\right] \leq 1,
\end{align*}
so once again, Markov's inequality implies that the total mass of $w$ deleted using this rule is at most $1/8$. This gives \begin{align}q(L'\vert mG) \geq 1/2 - 1/8-1/8 = 1/4. \label{eqn:Lmass}\end{align}

The result of these pruning steps is that we are left with large sets $U',L' \subseteq G$ such that for all $m,w$ that are consistent with $U',L'$, we have
\begin{align*}
            &\sup_{y_1}   \bigg(\frac{q_1(x_1y_1\mone)}{p_1(x_1y_1\mone)}\bigg)^I\cdot \frac{q_1(x_1\vert y_1\mone)}{p_1(x_1\vert y_1)}\cdot\frac{q_1(y_1\vert x_1\mone)}{p_1(y_1\vert x_1)}   \\
            &=  \bigg(\frac{q_1(m) \cdot q_1(w\vert m)}{p_1(m) \cdot p_1(w \vert m)}\bigg)^I  \cdot \sup_{y_1}  \bigg(\frac{q_1(y_1\vert w)}{p_1(y_1\vert w)}\bigg)^I \cdot  \frac{q_1(x_1\vert y_1\mone)}{p_1(x_1\vert y_1)}\cdot\frac{q_1(y_1\vert x_1\mone)}{p_1(y_1\vert x_1)} \\
            &=  \bigg(\frac{q_1(m) \cdot q_1(w\vert m)}{p_1(m) \cdot p_1(w \vert m)}\bigg)^I  \cdot \exp \Bigg( \sup_{y_1} \log \bigg(\bigg( \frac{q_1(y_1\vert w)}{p_1(y_1\vert w)} \bigg)^I\cdot  \frac{q_1(x_1\vert y_1\mone)}{p_1(x_1\vert y_1)}\cdot\frac{q_1(y_1\vert x_1\mone)}{p_1(y_1\vert x_1)} \bigg)\Bigg)\\
            &\geq  \bigg(\frac{q_1(m) \cdot q_1(w\vert m)}{p_1(m) \cdot p_1(w \vert m)}\bigg)^I  \cdot \exp \Bigg( \E_{q(y_1\vert w)} \log \bigg(\bigg( \frac{q_1(y_1\vert w)}{p_1(y_1\vert w)} \bigg)^I\cdot  \frac{q_1(x_1\vert y_1\mone)}{p_1(x_1\vert y_1)}\cdot\frac{q_1(y_1\vert x_1\mone)}{p_1(y_1\vert x_1)} \bigg)\Bigg),
            \end{align*}
            where here $\exp(z)$  denotes  $2^z$. Now we use the fact that all the $m,w$ violating \Cref{eqn:Urule,eqn:Lrule1,eqn:Lrule2} have been deleted and use \Cref{eq:div-non-neg} to bound
    \begin{align*}
            &\geq   \bigg(\frac{1}{4} \cdot \frac{q_1(G|m)}{8}\bigg)^I \\
            &\qquad \cdot \exp \Bigg(\E_{q_1(y_1\vert w)} \left [I\log \frac{q_1(y_1\vert w)}{p_1(y_1\vert w)} + \log \left(\frac{q_1(x_1\vert y_1\mone)}{p_1(x_1\vert y_1)}\right)+  \log \left( \frac{q_1(y_1\vert x_1\mone)}{p_1(y_1\vert x_1)}\right) \right]\Bigg)\\
            &\geq   \bigg(\frac{1}{4}\cdot \frac{q_1(G|m)}{8} \bigg)^{I} \cdot \exp \Bigg( \E_{q_1(y_1\vert w)} \left [ \log \left(\frac{q_1(x_1\vert y_1\mone)}{p_1(x_1\vert y_1)}\right) \right]\Bigg)\\
            &\geq    \Omega(q_1(G|m))^{1+I}. 
    \end{align*} 

    Combining this bound with  \Cref{eqn:subadd2}, we get that for all $w$ consistent with $U',L'$,
    \begin{align}
    \abs{\E_{q_1(x_1y_1\vert w)}\left[(-1)^{f}\right]}^{-12I/\delta}
    \leq  2^{\gamma\cdot \marg(q,p,f\oplus g)} \cdot O(q(G|m))^{12\cdot\gamma I - I - 1}, \notag 
    \end{align}
    so since $I \geq 1$ and $\gamma\geq 1/3$, this implies 
    \begin{align}
    \abs{\E_{q_1(x_1y_1\vert w)}\left[(-1)^{f}\right]}^{-1}
    \leq   O(2^{\marg(q,p,f\oplus g)\cdot (\delta\gamma/12I)}) = \alpha,  
    \label{eqn:abound}
    \end{align}
as required.

 \section{Trimming and advantage preserving sets} \label{trimming}
    In this section, we gather a few lemmas about trimming rectangular sets to pass to subrectangles with nice features. The idea of trimming comes from the work of Yu \cite{huacheng}.
    
    \begin{lemma}\label{lem:trimming-1}
    For every $1>\kappa>0$, if $a(xym),b(xym)$ are two distributions, there exists a rectangular set $T$ such that  $a(T) \geq 1-3\kappa$ and for all $xym \in T$, we have
		\[ \frac{a(xm\vert T)}{b(xm)}, \frac{a(ym\vert T)}{b(ym)},\frac{a(m\vert T)}{a(m)} \geq \kappa.\]
    \end{lemma}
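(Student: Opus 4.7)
The plan is to prove the lemma by an iterative greedy trimming argument. Start with $T_0 = \mathcal{X}\times\mathcal{Y}\times\mathcal{M}$, which is rectangular via $\mathds{1}_A \equiv \mathds{1}_B \equiv 1$. Relative to the current rectangular set $T$, call an $xm$ pair \emph{bad} if $a(xm, T) < \kappa\, b(xm)\, a(T)$, where $a(xm, T) := \sum_y a(xym)\mathds{1}_T(xym)$, and analogously define bad $ym$ and bad $m$ using $a(ym, T)$ and $a(m, T)$. While some bad element exists, remove it from $T$: zero out $\mathds{1}_A(xm)$ for a bad $xm$, zero out $\mathds{1}_B(ym)$ for a bad $ym$, and zero out $\mathds{1}_A(x', m)$ for every $x'$ when $m$ is bad. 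Each such operation keeps $T$ rectangular, and since the universe is finite the procedure halts at some set $T^*$.

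To bound the total mass removed, I would charge each removal event separately. When a bad $xm$ is removed from the current set $T$, the mass lost at that step is exactly $a(xm, T) < \kappa\, b(xm)\, a(T) \leq \kappa\, b(xm)$. Since each $xm$ pair can be removed at most once, the total mass charged to $xm$-removals is strictly less than $\kappa \sum_{xm} b(xm) = \kappa$. The same estimate bounds the $ym$-charged loss (using $\sum_{ym} b(ym) = 1$) and the $m$-charged loss (using $\sum_{m} a(m) = 1$). Summing the three contributions gives $a(T_0) - a(T^*) < 3\kappa$, hence $a(T^*) \geq 1 - 3\kappa$.

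By the stopping condition, no element is bad with respect to $T^*$, so for every $xym$ with $\mathds{1}_{T^*}(xym) = 1$ all three ratios $a(xm\vert T^*)/b(xm)$, $a(ym\vert T^*)/b(ym)$, $a(m\vert T^*)/a(m)$ are at least $\kappa$, as required. The only thing to double-check carefully is that each removal operation preserves the factored indicator structure $\mathds{1}_T = \mathds{1}_A\cdot\mathds{1}_B$; this is immediate since every operation only zeroes out entries of $\mathds{1}_A$ or of $\mathds{1}_B$. There is no significant obstacle in the argument: the quantitative content reduces to the observation that the badness threshold is tight enough that per-event losses telescope into $\sum_{xm} b(xm)$, $\sum_{ym} b(ym)$, and $\sum_m a(m)$, each of which equals $1$, giving the $3\kappa$ slack.
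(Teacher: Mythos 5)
Your proof is correct and takes essentially the same iterative trimming approach as the paper: delete each "bad" $xm$, $ym$, or $m$ when its conditional-ratio falls below $\kappa$, observe each such element is deleted at most once so the per-deletion losses telescope against $\sum b(xm)$, $\sum b(ym)$, $\sum a(m)$ respectively, and note that deletions preserve the factored indicator structure. The only difference is that you spell out explicitly how each deletion is implemented on $\mathds{1}_A$ and $\mathds{1}_B$, which the paper leaves implicit.
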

    \begin{proof}
    The set $T$ is constructed by an iterative process. Initially, $T$ is the set of all triples $xym$. In each iteration, if there is $xm$ such that
	\begin{align} \frac{a(xm\vert T)}{b(xm)} < \kappa,\label{eqn:trimrule}\end{align} then delete $xm$ from the support of $T$, if there is $ym$ such that $$\frac{a(ym\vert T)}{b(ym)} < \kappa,$$ then delete $ym$ from the support of $T$, and if there is $m$ such that $$\frac{a(m\vert T)}{a(m)} < \kappa,$$ then delete $m$ from the support of $T$. The process halts when there are no more elements to delete. Because the distributions we are working with have finite support, this process must eventually terminate. Initially, $T$ is rectangular, and each deletion step leaves us with another rectangular set $T$, so the final $T$ is also rectangular.

    Let us bound $a(T)$. For each pair $xm$ that was deleted from the support of $T$ because of \Cref{eqn:trimrule}, let  $T_{xm}$ denote the set $T$ right before $xm$ was deleted. If $xm$ was not deleted, let $T_{xm}$ denote the empty set. 
    
    The total mass deleted using \Cref{eqn:trimrule} is exactly 
     \begin{align*}
     \sum_{xm}a(xm T_{xm})=  \sum_{xm}a(T_{xm}) \cdot a(xm|T_{xm}) < \sum_{xm} \kappa \cdot b(xm)=\kappa.
    \end{align*}

    Similarly, the total mass deleted using each of the other rules is also at most $\kappa$. By the union bound, this proves that $a(T) \geq 1-3 \kappa$ when the process terminates.
    \end{proof}

Next, we gather a couple of nice lemmas about finding subrectangles with nice properties. 

\begin{lemma}\label{lem:adv-preserving}
    For any distribution $v(xym)$ and a Boolean function $h(xy)$, suppose $R$ is a rectangular set maximizing
    \begin{align}
    v(R)^\delta\cdot \E_{v(m\vert R)}\Big|\E_{v(xy \vert mR)}\big[(-1)^h\big]\Big|.\label{eqn:rmax}
    \end{align} Then, for any rectangular $Z\subseteq R$, we have
    \[\E_{v(m\vert Z)}\Big|\E_{v(xy\vert mZ)}\big[(-1)^h\big]\Big| \geq \frac{1-\delta^2  - \delta/v(Z|R)}{v(R)^{\delta}} \cdot  \E_{v(m)}\Big|\E_{v(xy\vert m)}\big[(-1)^h\big]\Big|.\]
\end{lemma}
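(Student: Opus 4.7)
The plan is to bound $\E_Z := \E_{v(m|Z)}\big|\E_{v(xy|mZ)}[(-1)^h]\big|$ from below by decomposing $R \setminus Z$ into rectangular pieces, applying the triangle inequality per message, and then using the maximality of $R$ against each piece. Set $A := \E_{v(m)}\big|\E_{v(xy|m)}[(-1)^h]\big|$ and, for any set $T$, write $S_T(m) := \sum_{xy} v(xym)\mathds{1}_T(xym)(-1)^{h(xy)}$, so that $v(T) \cdot \E_{v(m|T)}\big|\E_{v(xy|mT)}[(-1)^h]\big| = \sum_m |S_T(m)|$; abbreviate the latter as $v(T)\E_T$. Applying the maximality of $R$ against the trivially rectangular whole space gives the baseline $\E_R \geq A/v(R)^\delta$, which will be the only point at which $A$ enters.

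The key structural step is that $R \setminus Z$ is a disjoint union of two rectangular sets. Writing $\mathds{1}_R(xym) = \mathds{1}_{\calA}(xm)\mathds{1}_{\calB}(ym)$ and $\mathds{1}_Z(xym) = \mathds{1}_{\calA'}(xm)\mathds{1}_{\calB'}(ym)$, the hypothesis $Z \subseteq R$ lets me replace $\calA'$ by $\calA' \cap \calA$ and $\calB'$ by $\calB' \cap \calB$ without changing $Z$, so I may assume $\calA' \subseteq \calA$ and $\calB' \subseteq \calB$. Then $R \setminus Z = U_1 \sqcup U_2$ with $\mathds{1}_{U_1} = \mathds{1}_{\calA \setminus \calA'}(xm)\mathds{1}_{\calB}(ym)$ and $\mathds{1}_{U_2} = \mathds{1}_{\calA'}(xm)\mathds{1}_{\calB \setminus \calB'}(ym)$, both rectangular, with $v(U_1) + v(U_2) = (1-\alpha) v(R)$ where $\alpha := v(Z|R)$.

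Combining the triangle inequality $|S_Z(m)| \geq |S_R(m)| - |S_{U_1}(m)| - |S_{U_2}(m)|$ (summed over $m$) with the maximality bound $v(U_i) \E_{U_i} \leq v(U_i)^{1-\delta} v(R)^\delta \E_R$ for each $i$, together with the concavity of $x \mapsto x^{1-\delta}$ (which yields $v(U_1)^{1-\delta} + v(U_2)^{1-\delta} \leq 2^\delta ((1-\alpha) v(R))^{1-\delta}$), I obtain $v(Z)\E_Z \geq v(R)\E_R(1 - 2^\delta(1-\alpha)^{1-\delta})$ and hence $\E_Z \geq (\E_R/\alpha)(1 - 2^\delta(1-\alpha)^{1-\delta})$. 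Substituting the baseline $\E_R \geq A/v(R)^\delta$ reduces the lemma to the elementary inequality
\[\frac{1 - 2^\delta(1-\alpha)^{1-\delta}}{\alpha} \geq 1 - \delta^2 - \frac{\delta}{\alpha}.\]

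The main obstacle is this final numerical inequality, since it pins down the exact constants in the statement. Rearranged, it amounts to $g(\alpha) := 1 + \delta - \alpha(1-\delta^2) - 2^\delta(1-\alpha)^{1-\delta} \geq 0$ on $(0,1]$. Since $g''(\alpha) > 0$, the function $g$ is convex and attains its minimum at the unique interior critical point $\alpha^*$ characterized by $(1-\alpha^*)^{-\delta} = (1+\delta)/2^\delta$; substituting back, I expect a clean calculation to give $g(\alpha^*) = \alpha^*\delta(1+\delta) \geq 0$. I do not expect any other step to be delicate — the rectangular decomposition of $R \setminus Z$ into $U_1 \sqcup U_2$ is precisely what makes the maximality of $R$ applicable, and the rest is triangle inequality, concavity, and the numerical verification above.
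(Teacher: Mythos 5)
Your proposal is correct and follows essentially the same route as the paper: decompose $R \setminus Z$ into two rectangular pieces, apply the triangle inequality over messages, invoke the maximality of $R$ against each piece and against the whole space, and use concavity of $x^{1-\delta}$ (H\"older) to combine. The only cosmetic difference is at the very last step: where the paper closes the elementary numerical inequality $2^\delta(1-\alpha)^{1-\delta}\leq (1+\delta)(1-(1-\delta)\alpha)$ by the two one-line bounds $(1-t)^{1-\delta}\leq 1-(1-\delta)t$ and $2^\delta\leq 1+\delta$, you propose a convexity argument locating the unique critical point $\alpha^*$ of $g(\alpha)=1+\delta-\alpha(1-\delta^2)-2^\delta(1-\alpha)^{1-\delta}$ and evaluating $g(\alpha^*)=\alpha^*\delta(1+\delta)\geq 0$ --- this checks out (one should also note $\alpha^*\in(0,1)$ since $2^\delta\leq 1+\delta$ for $\delta\in[0,1]$), so both verifications are valid, with the paper's being slightly shorter.
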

\begin{proof}
    Since $R$ and $Z$ are rectangular, we have  
    \[1_{R}(xym) = 1_A(xm) \cdot 1_B(ym),\]and \[ 1_Z(xym) = 1_{A'}(xm)\cdot 1_{B'}(ym),\] for appropriate sets $A,A'$ and $B,B'$. 
     $R$ can be partitioned into three rectangular sets, $Z_0 = Z, Z_1$ and $Z_2$, where
    \begin{align*}
        1_{Z_1}(xym) = 1_{A\setminus A'}(xm)\cdot 1_B(ym)
        \end{align*}
        and \begin{align*}
            1_{Z_2}(xym) = 1_{A'}(xm)\cdot 1_{B\setminus B'}(ym).
    \end{align*}

    By the triangle inequality, we get  
    \begin{align}
            &\E_{v(m\vert R)}\Big|\E_{v(xy\vert Rm)}\Big[(-1)^{h}\Big]\Big| \leq 
             \sum_{i=0}^2v(Z_i \vert R)\cdot \E_{v(m\vert Z_i)} \Big|\E_{v(xy\vert m Z_i)}\Big[(-1)^{h}\Big]\Big|\label{eqn:trianglL}
    \end{align}
   Let us bound the contribution of $Z_1,Z_2$:
    \begin{align*}
    &\sum_{i=1}^2  v(Z_i|R)\cdot \E_{v(m\vert Z_i)}\Big|\E_{v(xy\vert m Z_i)}\Big[(-1)^{h}\Big]\Big| \\&= \sum_{i=1}^2  v(Z_i|R)^{1-\delta}\cdot \Big(\frac{v(Z_i)}{v(R)}\Big)^\delta \cdot \E_{v(m\vert Z_i)}\Big|\E_{v(xy\vert m Z_i)}\Big[(-1)^{h}\Big]\Big| \\
    &\leq \sum_{i=1}^2  v(Z_i|R)^{1-\delta}\cdot  \E_{v(m\vert R)}\Big|\E_{v(xy\vert m R)}\Big[(-1)^{h}\Big]\Big|\tag{because $R$ is the maximizer of \Cref{eqn:rmax}}\\
    &\leq 2^\delta \cdot \Big(\sum_{i=1}^2  v(Z_i|R)\Big)^{1-\delta}\cdot  \E_{v(m\vert R)}\Big|\E_{v(xy\vert m R)}\Big[(-1)^{h}\Big]\Big| \tag{by H\"older's inequality}\\
    &= 2^\delta \cdot \Big(1-v(Z|R)\Big)^{1-\delta}\cdot  \E_{v(m\vert R)}\Big|\E_{v(xy\vert m R)}\Big[(-1)^{h}\Big]\Big|.
    \end{align*}
    Using the inequalities $(1-t)^{1-\delta} \leq 1-t(1-\delta))$ and $2^\delta\leq 1+\delta$ which hold for $t,\delta \in [0,1]$:
    \begin{align*}
    &\leq  (1+\delta) \cdot (1-(1-\delta) v(Z|R))\cdot  \E_{v(m\vert R)}\Big|\E_{v(xy\vert m R)}\Big[(-1)^{h}\Big]\Big|. 
    \end{align*}

    Putting this back into \Cref{eqn:trianglL} and reaarranging, we get:
    
    \begin{align*}
            \E_{v(m\vert Z)}\Big|\E_{v(xy\vert mZ)}\Big[(-1)^{h}\Big]\Big| &\geq \frac{-\delta +(1-\delta^2)v(Z|R)}{v(Z|R)} \E_{v(m\vert R)} \Big|\E_{v(xy\vert m R)}\Big[(-1)^{h}\Big]\Big|\\
            &\geq  (1-\delta^2 - \delta/v(Z|R)) \cdot v(R)^{-\delta}\cdot \E_{v(m)}  \Big|\E_{v(xy\vert m )}\Big[(-1)^{h}\Big]\Big|,
    \end{align*}
    where we again used the fact that $R$ maximizes \Cref{eqn:rmax}.
\end{proof}

     \begin{lemma}\label{lem:advantage-preserving-1}Let $q(xym)$ be a rectangular distribution, with $x=x_1x_2$ and $y=y_1y_2$. Let $f(x_1y_1),g(x_2y_2)$ be Boolean functions.
    Let $G$ be a subset of triples $xym$ such that the indicator function $1_{G}(xym)$ depends only on $w=x_1y_2 m$, and for each $m$, $G$ maximizes 
    \begin{align}
    q(G\vert m)^\delta\cdot \Big|\E_{q(xy\vert mG)}\Big[(-1)^{f\oplus g}\Big]\Big|,\label{eqn:gmax2}
    \end{align} among all such sets.
        Then for any $w$ in the support of $G$, we have
        \[\abs{\E_{q(xy\vert w)}\left[(-1)^{f\oplus g}\right]}
        \geq (1-\delta)\cdot q(G\vert m)^{-\delta}\cdot \abs{\E_{q(xy\vert m)}\left[(-1)^{f\oplus g }\right]}.\]
    \end{lemma}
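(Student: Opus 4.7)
The plan is to exploit the maximality of $G$ against the carefully chosen competitor $G' := \{xym \in G : w(xym) \neq w_0\}$, where $w_0$ is any fixed value in the support of $G$ consistent with $m$. Since $1_{G'}$ still depends only on $w$, this is an admissible competitor. Write $\gamma := q(G\vert m)$, $p := q(\{w = w_0\}\vert m)$, $A := \E_{q(xy\vert w_0)}[(-1)^{f\oplus g}]$, $C := \E_{q(xy\vert mG)}[(-1)^{f\oplus g}]$ and $C_0 := \E_{q(xy\vert m)}[(-1)^{f\oplus g}]$; WLOG $C \geq 0$ (by flipping $f$). Maximality against the trivial competitor (the set of all $w$'s consistent with $m$) already yields $\gamma^\delta C \geq |C_0|$, so it suffices to prove $A \geq (1-\delta) C$.

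Assume $p < \gamma$ (the boundary case $p = \gamma$ forces $A = C$, and we are done). By the law of total expectation,
\[\E_{q(xy\vert mG')}[(-1)^{f\oplus g}] \;=\; \frac{\gamma C - pA}{\gamma - p},\]
so the maximality condition $\gamma^\delta C \geq (\gamma-p)^\delta \bigl|\E_{q(xy\vert mG')}[(-1)^{f\oplus g}]\bigr|$ rearranges to
\[|\gamma C - pA| \;\leq\; \gamma^\delta (\gamma - p)^{1-\delta} C.\]
If $A < 0$, then $|\gamma C - pA| = \gamma C + p|A| > \gamma C \geq \gamma^\delta(\gamma-p)^{1-\delta} C$ (the last step uses $\gamma \geq \gamma^\delta(\gamma-p)^{1-\delta}$, which is equivalent to $p \geq 0$), a contradiction; so $A \geq 0$. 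If $A \geq C$, the desired bound holds trivially. Otherwise $0 \leq A \leq C$, whence $\gamma C - pA \geq (\gamma - p)C \geq 0$, and the inequality becomes
\[pA \;\geq\; \gamma C \Bigl[1 - (1 - p/\gamma)^{1-\delta}\Bigr].\]

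Applying the elementary inequality $1 - (1-x)^{1-\delta} \geq (1-\delta)x$ for $x \in [0,1]$ (immediate from the convexity of $1-(1-x)^{1-\delta}$, which matches $(1-\delta)x$ in value and slope at $x=0$) with $x = p/\gamma$ yields $A \geq (1-\delta) C$. Combining with $C \geq \gamma^{-\delta} |C_0|$ gives $A \geq (1-\delta)\gamma^{-\delta}|C_0|$, which is the claim. The main obstacle is simply recognizing that removing a single $w$-value preserves the structural constraint on $G$, so that $G'$ is an admissible competitor; once that observation is made, the rest reduces to the one-variable optimization above.
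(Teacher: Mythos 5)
Your proof is correct and takes essentially the same route as the paper: both hinge on the competitor $G'$ obtained by deleting $\{w=w_0\}$ from $G$, the maximality bound, and the elementary inequality $(1-x)^{1-\delta}\le 1-(1-\delta)x$. The only difference is cosmetic — the paper applies the triangle inequality to $\E_{q(xy\vert mG)}$ and so avoids your sign analysis establishing $A\ge 0$, whereas you keep the exact law-of-total-expectation decomposition and then case-split on the sign of $A$.
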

    \begin{proof}
    Fix $w=x_1y_2m$ and define $G' \subseteq G$ to be the subset of $G$ obtained by deleting all triples $xym$ consistent with $w$. Using the triangle inequality, we can write    
    \begin{align*}
        \Big|\E_{q(xy\vert mG)}\Big[(-1)^{f\oplus g}\Big]\Big| 
        &\leq q(w\vert  mG)\cdot \Big|\E_{q(xy\vert w)}\Big[(-1)^{f\oplus g}\Big]\Big| + 
         q(G'\vert mG)\cdot \Big|\E_{q(xy\vert mG')}\Big[(-1)^{f\oplus g}\Big]\Big| \\
         &= q(w\vert  mG)\cdot \Big|\E_{q(xy\vert w)}\Big[(-1)^{f\oplus g}\Big]\Big| + 
         q(G'\vert mG)^{1-\delta} \cdot \frac{q(G'\vert m)^{\delta}}{q(G\vert m)^\delta}\cdot \Big|\E_{q(xy\vert mG')}\Big[(-1)^{f\oplus g}\Big]\Big| \\
         &\leq q(w\vert m G)\cdot \Big|\E_{q(xy\vert w)}\Big[(-1)^{f\oplus g}\Big]\Big| + 
         q(G'\vert mG)^{1-\delta} \cdot  \Big|\E_{q(xy\vert mG)}\Big[(-1)^{f\oplus g}\Big]\Big|,
    \end{align*}
    where in the last line we used the fact that $G$ is the maximizer of \Cref{eqn:gmax2}. 

Because $q(G'|mG) = 1-q(w|mG)$, and using the inequality $(1-t)^{\gamma} \leq 1-t\gamma$, which holds for $t,\gamma \in [0,1]$, we obtain 

\begin{align*}
        \Big|\E_{q(xy\vert mG)}\Big[(-1)^{f\oplus g}\Big]\Big| 
         &\leq q(w\vert  mG)\cdot \Big|\E_{q(xy\vert w)}\Big[(-1)^{f\oplus g}\Big]\Big| + (1- 
         (1-\delta) q(w\vert mG)) \cdot  \Big|\E_{q(xy\vert mG)}\Big[(-1)^{f\oplus g}\Big]\Big|.
    \end{align*}

    Rearranging gives:
    \begin{align*}
    \Big|\E_{q(xy\vert w)}\Big[(-1)^{f\oplus g}\Big]\Big| &\geq (1-\delta) \cdot \Big|\E_{q(xy\vert mG)}\Big[(-1)^{f\oplus g}\Big]\Big|\\
    & \geq (1-\delta) \cdot q(G|m)^{-\delta}\cdot  \Big|\E_{q(xy\vert m)}\Big[(-1)^{f\oplus g}\Big]\Big|,
    \end{align*}

    where in the second inequality  we once again used the fact that $G$ is the maximizer of \Cref{eqn:gmax2}.
    \end{proof}

\section{Consequences of small marginal information}\label{consequences}
Let $q$ be a rectangular distribution achieving $\marg_I(p,f)$. Since $q$ is rectangular, we can write 
\begin{align} \label{eq:g1g2}
\frac{q(xym)}{p(xym)} = \frac{\mu(xy)\cdot A(xm)\cdot B(ym)}{\mu(xy)\cdot p(m_0)\cdot \prod_{i=1,3,5,\dotsc}p(m_i\vert xm_{< i})\cdot p(m_{i+1}\vert ym_{\leq i})} = g_1(xm)\cdot g_2(ym),
\end{align}
for appropriate functions $g_1$ and $g_2$. 

For every $K\geq 1$, define the sets
\begin{align}
S_K &= \{xym: \abs{\lceil \log g_1(xm)\rceil + \log g_2(ym)} \leq 3(\marg_I(p,f) + KI)/I\} \label{eq:def-S},\\
R_K &= \{xym: p(m_1\vert xm_0) \leq 2^{6(\marg_I(p,f) + KI)}\cdot p(m_1\vert ym_0)\}. \label{eq:R}.
\end{align}

\begin{proposition}
    For $xym\in S_K$, 
    \begin{align}\label{eq:S-ub-lb}
     -\frac{3(\marg_I(p,f)+KI)}{I} - 1 \leq \log \frac{q(xym)}{p(xym)} \leq \frac{3(\marg_I(p,f) + KI)}{I}.
    \end{align}
\end{proposition}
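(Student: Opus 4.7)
The proof is essentially a direct unpacking of the definitions, so I would spend one short paragraph on it.

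The plan is to start from the rectangular factorization in \Cref{eq:g1g2}, which gives $\log(q(xym)/p(xym)) = \log g_1(xm) + \log g_2(ym)$. The only nontrivial step is to convert between $\log g_1(xm)$ and $\lceil \log g_1(xm) \rceil$, the latter being the quantity controlled by the definition of $S_K$ in \Cref{eq:def-S}. Since for any real $t$ we have $t \leq \lceil t \rceil \leq t+1$, it follows that
\[
\lceil \log g_1(xm) \rceil + \log g_2(ym) - 1 \;\leq\; \log g_1(xm) + \log g_2(ym) \;\leq\; \lceil \log g_1(xm) \rceil + \log g_2(ym).
\]
Applying the $S_K$-bound $|\lceil \log g_1(xm) \rceil + \log g_2(ym)| \leq 3(\marg_I(p,f) + KI)/I$ to both ends then yields
\[
-\tfrac{3(\marg_I(p,f)+KI)}{I} - 1 \;\leq\; \log \tfrac{q(xym)}{p(xym)} \;\leq\; \tfrac{3(\marg_I(p,f)+KI)}{I},
\]
which is the claim. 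No obstacle arises; the $-1$ in the lower bound is precisely the rounding slack from the ceiling.
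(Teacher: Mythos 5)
Your proof is correct and is essentially identical to the paper's: both expand $\log(q/p)$ via the factorization $\log g_1(xm) + \log g_2(ym)$, then absorb the ceiling's rounding error of at most $1$ into the lower bound before invoking the defining inequality of $S_K$. No differences of substance.
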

\begin{proof}
    Because $\log (q(xym)/p(xym)) = \log g_1(x m) + \log g_2(y m)$, 
    \begin{align*}
    \log \frac{q(xym)}{p(xym)} \geq \lceil \log g_1(x m)\rceil + \log g_2(y m) - 1\geq - \frac{3(\marg_I(p,f) + KI)}{I} - 1,
    \end{align*}and 
    \begin{align*}
    \log \frac{q(xym)}{p(xym)} \leq \lceil \log g_1(x m)\rceil + \log g_2(y m) \leq \frac{3(\marg_I(p,f) + KI)}{I}.
    \end{align*}
\end{proof}

\begin{claim}\label{claim:q(S)-lb}
If $K\geq 3$, $q(S_K^c), q(R_K^c\vert S_K) \leq 5\cdot2^{-(\marg_I(p,f) + KI)/I}$. 
\end{claim}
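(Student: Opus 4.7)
The plan is to combine a pointwise consequence of the bound $\marg_I(p,f) \geq F(xym)$ from the definition with Markov-type tail estimates under $q$, handling each inequality in turn.

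For $q(S_K^c)$, decompose $S_K^c$ into the upper tail $\log g_1(xm)+\log g_2(ym) > 3(\marg_I(p,f)+KI)/I - 1$ (equivalently $q/p > 2^{3(\marg_I(p,f)+KI)/I-1}$) and the lower tail $\log g_1(xm)+\log g_2(ym) < -3(\marg_I(p,f)+KI)/I$. The lower tail is handled by Markov on $p/q$: since $\E_q[p(xym)/q(xym)] = \sum_{xym \in \supp(q)} p(xym) \leq 1$, one gets $q(p/q > 2^t) \leq 2^{-t}$, which is exponentially smaller than the target. For the upper tail, the definition of $\marg_I(p,f)$ combined with $|\E_{q(xy|m)}[(-1)^f]|\leq 1$ yields pointwise
\begin{align*}
I \log \frac{q(xym)}{p(xym)} \;\leq\; \marg_I(p,f) \;-\; \log \frac{q(x|ym)}{p(x|y)} \;-\; \log \frac{q(y|xm)}{p(y|x)},
\end{align*}
so $q/p$ being very large forces the sum of the two conditional log-ratios to be very negative. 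The plan is then to verify $\E_q[p(x|y)/q(x|ym)] \leq 1$ and $\E_q[p(y|x)/q(y|xm)] \leq 1$ (each by the telescoping identity $q(xym)/q(x|ym) = q(ym)$, summing, and using $\sum_x p(x|y) = 1$), apply Markov to each, and union-bound. Substituting $\alpha = It - \marg_I(p,f)$ for $t$ just below the threshold and checking the arithmetic with $K \geq 3$, $I \geq 1$ produces the required $\tfrac{5}{2}\cdot 2^{-(\marg_I(p,f)+KI)/I}$ on the upper tail. Together with the lower tail, the stated bound on $q(S_K^c)$ follows.

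For $q(R_K^c \mid S_K)$, Part 1 already gives $q(S_K) \geq 1/2$, so it suffices to bound $q(R_K^c \cap S_K)$ by $\tfrac{5}{2}\cdot 2^{-(\marg_I(p,f)+KI)/I}$. The starting point is the identity
\begin{align*}
\frac{p(m_1|xm_0)}{p(m_1|ym_0)} \;=\; \frac{p(x|ym_0m_1)}{p(x|y)},
\end{align*}
which follows from $p(m_1|xym_0) = p(m_1|xm_0)$ (Alice's first message depends only on her input and the public randomness) and $p(x|ym_0) = p(x|y)$ (public randomness is independent of the inputs). The plan is to convert the $q$-mass of $R_K^c \cap S_K$ into a $p$-mass using the uniform bound $q/p \leq 2^{3(\marg_I(p,f)+KI)/I}$ available on $S_K$, then apply Markov on $p(x|y)/q(x|ym)$ (again with $q$-mean $\leq 1$) to control $p(x|ym_0m_1)/p(x|y)$. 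The generous threshold $2^{6(\marg_I(p,f)+KI)}$ in the definition of $R_K$ (with no $1/I$ factor) is calibrated to absorb both the $S_K$-slack and the Markov step.

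The main obstacle is Part 2: the marginal $q(xym_0m_1)$ obtained by summing out $m_{>1}$ is not rectangular in $x, y$, so the information-theoretic bounds that rely on $q$ being rectangular on the full transcript $xym$ do not descend to $q(x|ym_0m_1)$. I expect the cleanest route is to avoid marginalizing altogether: keep the full-transcript distribution throughout, treat the event $R_K^c$ as a function of $xym$ (through $xym_0m_1$), use $1_{S_K}(xym)$ pointwise to convert $q$- to $p$-expectations, and apply a single Markov step at the end. Verifying that the exponent $6(\marg_I(p,f)+KI)$ genuinely suffices to cover the $S_K$-slack plus the Markov loss will be the most delicate part of the argument.
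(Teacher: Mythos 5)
Your Part 1 is essentially the paper's proof of the bound on $q(S_K^c)$: partition $S_K^c$ into the lower tail of $q(xym)/p(xym)$ and the upper tail, handle the lower tail by Markov on $\E_q[p(xym)/q(xym)]\leq 1$, and handle the upper tail by Markov on $\E_q[p(x\vert y)/q(x\vert ym)]\leq 1$ and $\E_q[p(y\vert x)/q(y\vert xm)]\leq 1$ combined with the pointwise $\marg_I$ inequality and $|\E_{q(xy\vert m)}[(-1)^f]|\leq 1$. The union-bound arithmetic closes as you describe.

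Part 2 has a real gap. Your instinct to avoid marginalizing $q$ and to treat $R_K^c$ as an event on the full transcript is right, and your worry about $q(xym_0m_1)$ failing to be rectangular is a red herring, because the relevant conditional $p(x\vert ym_0m_1)$ is a conditional of $p$, not of $q$. But the Markov you then invoke --- on $p(x\vert y)/q(x\vert ym)$, with $q$-mean at most $1$ --- controls the \emph{same} ratio as the first auxiliary set in Part 1, and does not touch what you actually need. Decomposing via your own identity,
\[
\frac{p(m_1\vert xm_0)}{p(m_1\vert ym_0)} \;=\; \frac{p(x\vert ym_0m_1)}{p(x\vert y)} \;=\; \frac{p(x\vert ym_0m_1)}{q(x\vert ym)}\cdot \frac{q(x\vert ym)}{p(x\vert y)},
\]
the second factor is controlled on the intersection of Part 1's good sets via the pointwise $\marg_I$ inequality, but the first factor, $p(x\vert ym_0m_1)/q(x\vert ym)$, requires a \emph{new} Markov step. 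The paper defines a fourth auxiliary set $G_4 = \{xym: q(x\vert ym) \geq 2^{-(\marg_I(p,f)+KI)/I}\, p(x\vert ym_0m_1)\}$ and shows $q(G_4^c) < 2^{-(\marg_I(p,f)+KI)/I}$ by the same pattern, using
\[
\E_q\!\left[\frac{p(x\vert ym_0m_1)}{q(x\vert ym)}\right] = \sum_{xym} q(ym)\, p(x\vert ym_0m_1) = \sum_{ym} q(ym) = 1,
\]
since $\sum_x p(x\vert ym_0m_1)=1$. Intersecting $G_4$ with the Part 1 sets controlling $q(y\vert xm)/p(y\vert x)$ and $(q/p)^I$, the pointwise $\marg_I$ inequality yields $\log\tfrac{p(m_1\vert xm_0)}{p(m_1\vert ym_0)} \leq 6(\marg_I(p,f)+KI)$, i.e., membership in $R_K$; a union bound over the complements divided by $q(S_K)\geq 5/8$ (guaranteed by $K\geq 3$) gives the claim.

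Your proposed opening move for Part 2 --- converting $q(R_K^c\cap S_K)$ to a $p$-mass via the uniform bound on $S_K$ and then bounding under $p$ --- would also not close: $\E_p[p(x\vert ym_0m_1)/p(x\vert y)]$ is at least $1$ and can be arbitrarily large, so Markov under $p$ gives nothing. Every Markov step here must be run under $q$, which is exactly why the extra set $G_4$ is indispensable.
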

\begin{proof}
    Define
    \begin{align*}
    G_1 &= \{xym: q(x\vert ym) \geq 2^{-(\marg_I(p,f) + KI)/I}\cdot p(x\vert y)\},\\
    G_2 &= \{xym: q(y\vert xm) \geq 2^{-(\marg_I(p,f) + KI)/I}\cdot p(y\vert x)\},\\
    G_3 &= \{xym: q(xym) \geq 2^{-3(\marg_I(p,f) + KI)/I}\cdot p(xym)\}, \text{ and }\\
    G_4 &= \{xym: q(x\vert ym) \geq 2^{-(\marg_I(p,f) + KI)/I}\cdot p(x\vert m_0m_1 y)\}.
    \end{align*}

If $xym\in G_1\cap G_2$, then
    \begin{align*}
    \marg_I(p,f) &\geq \log \Bigg(\frac{q(x\vert ym)}{p(x\vert y)}\cdot\frac{q(y\vert xm)}{p(y\vert x)}\cdot  \left(\frac{q(xym)}{p(xym)}\right)^I  \cdot     \bigg|\E_{q(xy\vert m)}\left[(-1)^{f}\right]\bigg|^{-12I/\delta} \Bigg) \\
    &\geq -\frac{2(\marg_I(p,f)+KI)}{I} +  I\cdot \log \frac{q(xym)}{p(xym)} \tag{because $xym \in G_1 \cap G_2$}\\
    &\geq  -2\marg_I(p,f)-2KI +  I\cdot (\lceil\log g_1(xm) \rceil + \log g_2(ym) -1) \tag{using $I\geq 1$}
    \end{align*}
    and rearranging this and using the fact that $KI \geq 1$ gives 
\begin{align*}
    \lceil\log g_1(xm) \rceil + \log g_2(ym) \leq   3(\marg_I(p,f)+KI)/I. 
\end{align*}
    
    Moreover, for $xym\in G_3$,
    \[\lceil\log g_1(xm) \rceil + \log g_2(ym) \geq   \log\frac{q(xym)}{p(xym)} \geq -3(\marg_I(p,f)+KI)/I,\]
    so we have $G_1\cap G_2\cap G_3 \subseteq S_K$. We shall prove that $q(S_K^c) \leq 3\cdot2^{-(\marg_I(p,f) + KI)/I}$ by proving that  $q(G_1^c), q(G_2^c), q(G_3^c)$ and $q(G_4^c)$ are all less than $2^{-(\marg_I(p,f) + KI)/I}$. We have
    \begin{align*}
        q(G_1^c) = \sum_{xym\notin G_1}q(xym) &< \sum_{xym\notin G_1}q(ym)\cdot p(x\vert y) \cdot 2^{-(\marg_I(p,f) + KI)/I} \\
        &\leq 2^{-(\marg_I(p,f) + KI)/I}\cdot\sum_{xym}q(ym)\cdot p(x\vert y) \\&\leq  2^{-(\marg_I(p,f) + KI)/I},
    \end{align*}
    and similar calculations show that $q(G_2^c),q(G_3^c), q(G_4^c)<  2^{-(\marg_I(p,f)+KI)/I}$. 

    It only remains to bound $q(R_K^c\vert S_K)$. We have
    \begin{align*}
    \frac{p(m_1\vert xm_0)}{p(m_1\vert ym_0)} = \frac{p(m_1\vert xym_0)}{p(m_1\vert ym_0)} 
    = \frac{p(x\vert ym_0m_1)}{p(x\vert ym_0)} 
    = \frac{p(x\vert ym_0m_1)}{q(x\vert ym)}\cdot \frac{q(x\vert ym)}{p(x\vert y)}, 
    \end{align*} 
    so, for every $xym \in G_2\cap G_3\cap G_4$,
    \begin{align*}
    \marg_I(p,f) &\geq \log \Bigg(\frac{q(x\vert ym)}{p(x\vert y)}\cdot\frac{q(y\vert xm)}{p(y\vert x)}\cdot  \left(\frac{q(xym)}{p(xym)}\right)^I  \cdot     \bigg|\E_{q(xy\vert m)}\left[(-1)^{f}\right]\bigg|^{-12I/\delta} \Bigg) \\
    &\geq \log \Bigg(\frac{p(m_1\vert xm_0)}{p(m_1\vert ym_0)} \cdot \frac{q(x\vert ym)}{p(x \vert y m_0m_1)}\cdot\frac{q(y\vert xm)}{p(y\vert x)}\cdot  \left(\frac{q(xym)}{p(xym)}\right)^I  \Bigg) \\
    &\geq  \log \frac{p(m_1\vert xm_0)}{p(m_1\vert ym_0)} - \frac{2(\marg_I(p,f) + KI)}{I} - 3(\marg_I(p,f) + KI)\\
    &\geq \log \frac{p(m_1\vert xm_0)}{p(m_1\vert ym_0)} - 5(\marg_I(p,f) + KI),
    \end{align*}
    since $I\geq 1$. Rearranging, we get $p(m_1\vert xm_0) \leq 2^{6(\marg_I(p,f) +KI)}\cdot p(m_1\vert ym_0)$, so $G_2\cap G_3\cap G_4 \subseteq R_K$. The union bound then gives:
    \begin{align*}
        q(R_K^c\vert S_K) 
        &\leq \frac{q(G_2^c) + q(G_3^c) + q(G_4^c)}{q(S_K)} 
        < \frac{3\cdot 2^{-(\marg_I(p,f) + KI)/I}}{1 - 3\cdot 2^{-(\marg_I(p,f) + KI)/I}} \leq 5\cdot2^{-3(\marg_I(p,f) + KI)/I},
    \end{align*}
    since $K\geq 3$.
\end{proof}

An argument analogous to the one in the previous claim allows us to obtain similar bounds if marginal information cost is replaced by external marginal information cost:

\begin{claim}\label{claim:mext-bounds}
    Let $q$ be a rectangular distribution achieving $\mext_I(p,f)$ and let $g_1,g_2$ be as defined in \Cref{eq:g1g2}. For every $K$, define 
    \begin{align}
        S_K &= \{xym: \abs{\lceil \log g_1(xm)\rceil + \log g_2(ym)} \leq 3(\mext_I(p,f) + KI)/I\} \text{ and } \label{eq:mext-def-S}\\
        R_K &= \{xym: p(m_1\vert xm_0) \leq 2^{5(\mext_I(p,f) + KI)}\cdot p(m_1\vert m_0)\}. \label{eq:mext-R}
    \end{align}
    Then, for all $K\geq 2$, it holds that $q(S_K^c), q(R_K^c\vert S_K) \leq 4\cdot2^{-(\mext_I(p,f) + KI)/I}$. 
\end{claim}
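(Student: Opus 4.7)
The plan is to mirror the proof of Claim \ref{claim:q(S)-lb}, taking advantage of the fact that the definition of $\mext_I$ contains a single external term $\log(q(xy|m)/p(xy))$ in place of the two conditional terms $\log(q(x|ym)/p(x|y))$ and $\log(q(y|xm)/p(y|x))$ that appear in $\marg_I$. Let $q$ be a rectangular distribution achieving $\mext_I(p,f)$. Since the advantage factor is at least $1$, for every $xym$ in the support of $q$ the defining inequality simplifies to
\[\log\frac{q(xy|m)}{p(xy)} + I\log\frac{q(xym)}{p(xym)} \leq \mext_I(p,f).\]
I will introduce three good sets:
\begin{align*}
G_1 &= \{xym : q(xy|m) \geq 2^{-(\mext_I(p,f)+KI)/I}\cdot p(xy)\},\\
G_3 &= \{xym : q(xym) \geq 2^{-3(\mext_I(p,f)+KI)/I}\cdot p(xym)\},\\
G_4 &= \{xym : q(xy|m) \geq 2^{-(\mext_I(p,f)+KI)/I}\cdot p(xy|m_0 m_1)\}.
\end{align*}
A Markov-style calculation identical to the one in Claim \ref{claim:q(S)-lb} shows that each of $q(G_1^c), q(G_3^c), q(G_4^c)$ is at most $2^{-(\mext_I(p,f)+KI)/I}$.

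Next I will show $G_1\cap G_3 \subseteq S_K$: on $G_1$ the displayed inequality rearranges to $I\log(q(xym)/p(xym)) \leq \mext_I(p,f) + (\mext_I(p,f)+KI)/I$; using $I\geq 1$ and $K\geq 2$ this upper-bounds $\lceil\log g_1(xm)\rceil + \log g_2(ym)$ by $3(\mext_I(p,f)+KI)/I$, and on $G_3$ the matching lower bound of $-3(\mext_I(p,f)+KI)/I$ is immediate from the definition of $G_3$. The key computation for $R_K$ is the identity
\[\frac{p(m_1|xm_0)}{p(m_1|m_0)} = \frac{p(xy|m_0 m_1)}{p(xy)} = \frac{p(xy|m_0 m_1)}{q(xy|m)}\cdot\frac{q(xy|m)}{p(xy)},\]
which uses that $m_0$ is public randomness independent of $xy$ and that $m_1$ depends only on $(x, m_0)$. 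On $G_4$ the first factor on the right is at most $2^{(\mext_I(p,f)+KI)/I}$; and on $G_3$ the displayed pointwise bound rearranges to $\log(q(xy|m)/p(xy)) \leq \mext_I(p,f) + 3(\mext_I(p,f)+KI) \leq 4(\mext_I(p,f)+KI)$. Using $I\geq 1$ I conclude that $p(m_1|xm_0) \leq 2^{5(\mext_I(p,f)+KI)}\cdot p(m_1|m_0)$ on $G_3\cap G_4$, so $G_3\cap G_4\subseteq R_K$.

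The claimed probability bounds then follow by the union bound: $q(S_K^c) \leq q(G_1^c)+q(G_3^c) \leq 2\cdot 2^{-(\mext_I(p,f)+KI)/I}$, and
\[q(R_K^c\mid S_K) \leq \frac{q(G_3^c)+q(G_4^c)}{q(S_K)} \leq \frac{2\cdot 2^{-(\mext_I(p,f)+KI)/I}}{1-2\cdot 2^{-(\mext_I(p,f)+KI)/I}} \leq 4\cdot 2^{-(\mext_I(p,f)+KI)/I},\]
where the last step uses $K\geq 2$ (so that $(\mext_I(p,f)+KI)/I \geq 2$) to bound the denominator below by $1/2$. There is no substantive obstacle here; the entire proof is a streamlined version of Claim \ref{claim:q(S)-lb}, and the only structural change is that the single external term replaces the two conditional ones, which also explains why the exponent in the definition of $R_K$ drops from $6$ to $5$.
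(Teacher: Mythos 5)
Your proposal is correct and follows essentially the same route as the paper's proof: the paper also introduces exactly these three good sets (it calls them $G_1$, $G_2$, $G_3$), performs the same Markov calculations to bound their complements by $2^{-(\mext_I(p,f)+KI)/I}$, uses the identity $p(m_1|xm_0)/p(m_1|m_0) = p(xy|m_0m_1)/p(xy)$ in the same way, and concludes with the same union-bound computation using $K\geq 2$. Your algebraic rearrangement when deducing the $R_K$ containment is slightly reordered from the paper's chain but is equivalent, and your choice to relabel the sets to parallel the numbering in the $\marg_I$ claim is purely cosmetic.
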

\begin{proof}
    Define
    \begin{align*}
    G_1 &= \{xym: q(xy\vert m) \geq 2^{-(\mext_I(p,f) + KI)/I}\cdot p(xy)\},\\
    G_2 &= \{xym: q(xym) \geq 2^{-3(\mext_I(p,f) + KI)/I}\cdot p(xym)\}, \\
    G_3 &= \{xym: q(xy\vert m) \geq 2^{-(\mext_I(p,f) + KI)/I}\cdot p(xy\vert m_0m_1)\}.
    \end{align*}

    For $xym\in G_1 \cap G_2$, we have
    \begin{align*}
    \mext_I(p,f) &\geq \log \Bigg(\frac{q(xy\vert m)}{p(xy)}\cdot  \left(\frac{q(xym)}{p(xym)}\right)^I  \cdot     \bigg|\E_{q(xy\vert m)}\left[(-1)^{f}\right]\bigg|^{-12I/\delta} \Bigg) \\
    &\geq -\frac{(\mext_I(p,f)+KI)}{I} +  I\cdot  \log \frac{q(xym)}{p(xym)}\\
    &\geq  -(\mext_I(p,f)+KI) +  I\cdot (\lceil\log g_1(xm) \rceil + \log g_2(ym) -1),
    \end{align*}
    since $K,I \geq 1$. Rearranging gives 
    \[\lceil\log g_1(xm) \rceil + \log g_2(ym) \leq  \frac{3(\mext_I(p,f)+KI)}{I}.\]
    Moreover, for $xym\in G_2$
    \[\lceil\log g_1(xm) \rceil + \log g_2(ym) \geq \log\frac{q(xym)}{p(xym)} \geq -\frac{3(\mext_I(p,f)+KI)}{I},\]
    proving that $G_1\cap G_2 \subseteq S_K$.

    We show that $q(G_1^c), q(G_2^c)$ and $q(G_3^c)$ are all less than $2^{-(\mext_I(p,f) + KI)/I}$, which  implies that $q(S_K^c) \leq 2\cdot2^{-(\mext_I(p,f) + KI)/I}$ as desired. To see the upper bound on $q(G_1^c)$, we may write
    \begin{align*}
        q(G_1^c) = \sum_{xym\notin G_1}q(xym) &< \sum_{xym\notin G_1}q(m)\cdot p(xy) \cdot 2^{-3(\mext_I(p,f) + KI)/I} 
        \leq 2^{-(\mext_I(p,f) + KI)/I}.
    \end{align*}
    A similar calculation shows that $q(G_2^c)$ and $q(G_3^c)<  2^{-(\mext_I(p,f)+KI)/I}$.

    Now, we prove that $q(R_K^c\vert S_K) \leq 5\cdot2^{-(\mext_I(p,f) + KI)/I}$. We have
    \begin{align*}
    \frac{p(m_1\vert xm_0)}{p(m_1\vert m_0)} = \frac{p(m_1\vert xym_0)}{p(m_1\vert m_0)} 
    = \frac{p(xy\vert m_0m_1)}{p(xy)} 
    = \frac{p(xy\vert m_0m_1)}{q(xy\vert m)}\cdot \frac{q(xy\vert m)}{p(xy)}, 
    \end{align*} 
    so for every $xym \in G_2\cap G_3$,
    \begin{align*}
    \mext_I(p,f) &\geq \log \Bigg(\frac{q(xy\vert y)}{p(xy)}\cdot  \left(\frac{q(xym)}{p(xym)}\right)^I  \cdot     \bigg|\E_{q(xy\vert m)}\left[(-1)^{f}\right]\bigg|^{-12I/\delta} \Bigg) \\
    &\geq \log \Bigg(\frac{p(m_1\vert xm_0)}{p(m_1\vert m_0)} \cdot \frac{q(xy \vert m)}{p(xy \vert m_0m_1)}\cdot  \left(\frac{q(xym)}{p(xym)}\right)^I  \Bigg) \\
    &\geq  \log \frac{p(m_1\vert xym_0)}{p(m_1\vert m_0)} - \frac{(\mext_I(p,f) + KI)}{I} - (\mext_I(p,f) + KI)\\
    &\geq \log \frac{p(m_1\vert xym_0)}{p(m_1\vert m_0)} - 4(\mext_I(p,f) + KI),
    \end{align*}
    since $I\geq 1$. Rearranging, we get $p(m_1\vert xym_0) \leq 2^{5(\mext_I(p,f) +KI)}\cdot p(m_1\vert m_0)$ for all $xym \in G_2\cap G_3$, and so $G_2\cap G_3 \subseteq R_K$. The union bound then gives:
    \begin{align*}
        q(R_K^c\vert S_K) 
        &\leq \frac{q(G_2^c) + q(G_3^c)}{q(S_K)} 
        < \frac{2\cdot 2^{-(\mext_I(p,f) + KI)/I}}{1 - 2\cdot 2^{-(\mext_I(p,f) + KI)/I}} \leq 4\cdot2^{-(\mext_I(p,f) + KI)/I},
    \end{align*}
    since $K\geq 2$.
\end{proof}

For the bounded-round simulation protocol, we need the following claim.  

\begin{claim}\label{claim:R-dense-in-S}
    Let $K \geq 3$, and $S_K$ be the set defined in \Cref{eq:def-S}.
    Let $p(xym)$ be an $r$-round protocol and define 
    \begin{align*}
        T_K = \Big\{xym: \forall i,  \frac{p(m_i\vert xym_{< i})}{p(m_i\vert ym_{< i})}, \frac{p(m_i\vert xym_{< i})}{p(m_i\vert xm_{< i})} \leq 2^{14(\marg_I(p,f) + KI)}\cdot (r+1)^5 \Big\}.
    \end{align*}
    Then  $q(T_K^c\vert S_K)\leq 22\cdot 2^{-(\marg_I(p,f) + KI)/I}$.
\end{claim}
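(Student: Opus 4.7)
The plan is to mimic the isolation argument used in the proof of \Cref{claim:q(S)-lb} for the single ratio $p(m_1\mid xm_0)/p(m_1\mid ym_0)$, but to apply it individually to every one of the $r$ non-trivial rounds and then combine the resulting bounds by a union bound; the factors of $r+1$ that this union bound forces into the complement probabilities are what force the $(r+1)^5$ slack in the statement. I will describe the argument for odd $i$; the even case is handled by swapping the roles of $x$ and $y$.

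Fix an odd round $i$. For such $i$ the ratio $p(m_i\mid xym_{<i})/p(m_i\mid xm_{<i})$ in the definition of $T_K$ is identically $1$, so the only non-trivial inequality is $p(m_i\mid xm_{<i})/p(m_i\mid ym_{<i})=p(x\mid ym_{\leq i})/p(x\mid ym_{<i})$. I telescope through $q(x\mid ym)$ and $p(x\mid y)$:
\[
\log\frac{p(x\mid ym_{\leq i})}{p(x\mid ym_{<i})} = -\log\frac{q(x\mid ym)}{p(x\mid ym_{\leq i})} + \log\frac{q(x\mid ym)}{p(x\mid y)} - \log\frac{p(x\mid ym_{<i})}{p(x\mid y)},
\]
and upper bound each term on a carefully chosen good set. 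The middle term is handled exactly as in \Cref{claim:q(S)-lb}: on $G_2\cap G_3$ (reused verbatim from that proof), the marginal-information inequality together with $|\E_{q(xy\mid m)}[(-1)^f]|^{-12I/\delta}\geq 1$ yields $\log q(x\mid ym)/p(x\mid y)\leq \marg_I(p,f)+(\marg_I(p,f)+KI)/I+3(\marg_I(p,f)+KI)$. The first term is bounded by $(\marg_I(p,f)+KI)/I+\log(r+1)$ on
\[
G_{4,i}=\Big\{xym:\, q(x\mid ym)\geq \tfrac{2^{-(\marg_I(p,f)+KI)/I}}{r+1}\cdot p(x\mid ym_{\leq i})\Big\},
\]
whose complement has $q$-mass at most $2^{-(\marg_I(p,f)+KI)/I}/(r+1)$ by the same Markov calculation used for $G_4$ in the proof of \Cref{claim:q(S)-lb}.

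The main new ingredient, absent in \Cref{claim:q(S)-lb} because $p(x\mid y)=p(x\mid ym_0)$ there, is the prior-history factor $p(x\mid y)/p(x\mid ym_{<i})$. I control it with the event
\[
H_i=\{xym:\, p(x\mid y)/p(x\mid ym_{<i})\leq (r+1)\cdot 2^{4(\marg_I(p,f)+KI)/I}\}.
\]
Since $\E_{p(xym)}[p(x\mid y)/p(x\mid ym_{<i})]=\sum_{ym_{<i}}p(ym_{<i})\sum_x p(x\mid y)=1$, Markov's inequality gives $p(H_i^c)\leq (r+1)^{-1}2^{-4(\marg_I(p,f)+KI)/I}$; converting to $q$ via the $S_K$ change-of-measure bound $q(xym)/p(xym)\leq 2^{3(\marg_I(p,f)+KI)/I}$ from \Cref{eq:S-ub-lb} yields $q(H_i^c\cap S_K)\leq 2^{-(\marg_I(p,f)+KI)/I}/(r+1)$. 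Substituting the three bounds into the telescoped identity, on $S_K\cap G_2\cap G_3\cap G_{4,i}\cap H_i$ one obtains
\[
\log\frac{p(m_i\mid xm_{<i})}{p(m_i\mid ym_{<i})}\leq \marg_I(p,f)+6(\marg_I(p,f)+KI)/I+3(\marg_I(p,f)+KI)+2\log(r+1),
\]
which is comfortably at most $14(\marg_I(p,f)+KI)+5\log(r+1)$ once $I\geq 1$.

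Finally, a union bound over the $r$ rounds and both parities contributes at most $2r\cdot 2^{-(\marg_I(p,f)+KI)/I}/(r+1)$; adding $q(G_2^c)+q(G_3^c)\leq 2\cdot 2^{-(\marg_I(p,f)+KI)/I}$ gives $q(T_K^c\cap S_K)\leq 4\cdot 2^{-(\marg_I(p,f)+KI)/I}$. Since $K\geq 3$ forces $q(S_K)\geq 1-5\cdot 2^{-(\marg_I(p,f)+KI)/I}\geq 3/8$ by \Cref{claim:q(S)-lb}, dividing produces the claimed $22\cdot 2^{-(\marg_I(p,f)+KI)/I}$ bound with room to spare. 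The main obstacle relative to \Cref{claim:q(S)-lb} is precisely this prior-history factor: for $i=1$ it vanishes because $m_0$ is public randomness independent of $(x,y)$, but for larger $i$ the Markov-plus-change-of-measure step that defines $H_i$ is unavoidable, and it is the source of the $(r+1)$ union-bound overhead that the statement absorbs into $(r+1)^5$.
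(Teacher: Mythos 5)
Your proof is correct, and it genuinely differs from the paper's. Where the paper decomposes the round-$i$ ratio multiplicatively as
\[
\frac{p(m_i\vert xym_{<i})}{p(m_i\vert ym_{<i})}=\frac{p(m_{\leq i}\vert xy)}{p(m_{\leq i}\vert y)}\cdot\frac{p(m_{<i}\vert y)}{p(m_{<i}\vert xy)},
\]
and bounds the first factor from above and the second from above (i.e. the first cumulative ratio from both sides), you instead telescope $\log p(x\vert ym_{\leq i})/p(x\vert ym_{<i})$ through $q(x\vert ym)$ and $p(x\vert y)$, yielding three additive terms. The paper then builds seven sets $G_1,G_1',G_2,G_2',G_3,G_3',G_4$, each with a built-in \emph{for all $i$} quantifier and an $(r+1)^{-1}$ cutoff, and controls each complement with a stopping-time style summation; you instead define per-round sets $G_{4,i}, H_i$ and union-bound over $i$, letting the $(r+1)^{-1}$ factors absorb the $r$-fold sum. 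Both routes succeed; your decomposition is more economical (fewer global sets, one parity-trivial ratio per round is dropped outright, and $H_i$ comes from a single clean identity $\E_p[p(x\vert y)/p(x\vert ym_{<i})]=1$ rather than the paper's paired upper/lower bounds on cumulative products). The paper's uniform "for all $i$" sets are arguably more reusable across the compression arguments, which is presumably why it chose that packaging.

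One minor bookkeeping slip: when you say the even case is handled by swapping $x\leftrightarrow y$, the middle term becomes $\log q(y\vert xm)/p(y\vert x)$, and bounding it from the marginal-information inequality needs a lower bound on $\log q(x\vert ym)/p(x\vert y)$ — i.e., the set $G_1$ from \Cref{claim:q(S)-lb}, not just $G_2$ and $G_3$. So the one-time cost is $q(G_1^c)+q(G_2^c)+q(G_3^c)\leq 3\cdot 2^{-(\marg_I(p,f)+KI)/I}$ rather than $2\cdot$, raising your unconditional bound from $4\cdot 2^{-(\marg_I(p,f)+KI)/I}$ to $5\cdot 2^{-(\marg_I(p,f)+KI)/I}$. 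Dividing by $q(S_K)\geq 3/8$ still gives roughly $14\cdot 2^{-(\marg_I(p,f)+KI)/I}$, comfortably within the claimed $22$.
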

\begin{proof}
    Define the sets
    \begin{align*}
        G_1 &= \{xm: \forall i, q(xm_{\leq i}) \geq 2^{-(\marg_I(p,f) + KI)/I}\cdot (r+1)^{-1}\cdot p(xm_{\leq i})\},\\
        G_1'&= \{ym: \forall i: q(ym_{\leq i}) \geq 2^{-(\marg_I(p,f) + KI)/I}\cdot (r+1)^{-1}\cdot p(ym_{< i})\},\\
        G_2 &= \{xym: \forall i, q(x\vert ym_{\leq i}) \geq 2^{-(\marg_I(p,f) + KI)/I}\cdot (r+1)^{-1}\cdot p(x\vert y)\},\\
        G_2'&= \{xym: \forall i, q(y\vert xm_{\leq i}) \geq 2^{-(\marg_I(p,f) + KI)/I}\cdot (r+1)^{-1}\cdot p(y\vert x)\},\\
        G_3 &= \{xym: \forall i, q(x\vert ym) \geq 2^{-(\marg_I(p,f) + KI)/I}\cdot (r+1)^{-1}\cdot p(x\vert ym_{\leq i})\},\\
        G_3' &= \{xym: \forall i, q(y\vert xm) \geq 2^{-(\marg_I(p,f) + KI)/I}\cdot (r+1)^{-1}\cdot p(y\vert xm_{\leq i})\},\\
        G_4 &= \{xym: \forall i, q(m_{\geq i}\vert xym_{< i}) \geq 2^{-(\marg_I(p,f) + KI)/I}\cdot (r+1)^{-1}\cdot p(m_{\geq i}\vert xym_{< i})\}.
    \end{align*}
    We claim that 
    \begin{align}\label{eq:R-large}
        \bigcap_{j=1}^3(G_j\cap G_j') \cap G_4\cap S_K \subseteq T_K.
    \end{align}
    For $xym \in  G_2' \cap G_2 \cap S_K$,
 \begin{align*}
        \marg_I(p,f) &\geq \log \Bigg(\frac{q(x\vert ym)}{p(x\vert y)}\cdot\frac{q(y\vert xm)}{p(y\vert x)}\cdot  \left(\frac{q(xym)}{p(xym)}\right)^I  \cdot     \bigg|\E_{q(xy\vert m)}\left[(-1)^{f}\right]\bigg|^{-12I/\delta} \Bigg) \\
        &\geq  \log \frac{q(x\vert ym)}{p(x\vert y)} - \frac{(\marg_I(p,f) + KI)}{I} - \log (r+1) - 3(\marg_I(p,f) + KI) - I \\
        &\geq \log\frac{q(x\vert ym)}{p(x\vert y)} - 4\marg_{I}(p,f) - 5KI - \log (r+1),
    \end{align*}
    becase $I\geq 1$, and by the definition of $G_2'$ and \Cref{eq:S-ub-lb}. Rearranging implies the first inequality below, and the second has a similar proof:
    \begin{align}\label{eq:info-ubs}
        \frac{q(x\vert ym)}{p(x\vert y)}, \frac{q(y\vert xm)}{p(y\vert x)} \leq 2^{5(\marg_I(p,f) + KI)/I}\cdot (r+1).
    \end{align}
   

    By \Cref{eq:info-ubs}, for  $xym\in \bigcap_{j=1}^3(G_j\cap G_j') \cap G_4\cap S_K$ and all $i$,
    \begin{align*}
        \frac{p(m_{\leq i}\vert xy)}{p(m_{\leq i}\vert y)} =  \frac{p(x\vert ym_{\leq i})}{p(x\vert y)} =  \frac{p(x\vert ym_{\leq i})}{q(x\vert ym)}\cdot \frac{q(x\vert ym)}{p(x\vert y)} \leq 2^{(\marg_I(p,f) + KI)/I}\cdot 2^{5(\marg_I(p,f) + KI)}\cdot (r+1)^2.
    \end{align*}
    Moreover, 
    \begin{align*}
        \frac{p(m_{\leq i}\vert xy)}{p(m_{\leq i}\vert y)} &=  \frac{p(x\vert ym_{\leq i})}{p(x\vert y)} \\
        &= \frac{p(x\vert ym_{\leq i})}{q(x\vert ym_{\leq i})}\cdot \frac{q(x\vert ym_{\leq i})}{p(x\vert y)} \\
        &= \frac{p(xym_{\leq i})}{q(xym_{\leq i})}\cdot \frac{q(ym_{\leq i})}{p(ym_{\leq i})}\cdot \frac{q(x\vert ym_{\leq i})}{p(x\vert y)} \\
        &= \frac{p(xym)}{q(xym)}\cdot \frac{q(m_{> i}\vert xym_{\leq i})}{p(m_{> i}\vert xym_{\leq i})}\cdot \frac{q(ym_{\leq i})}{p(ym_{\leq i})}\cdot \frac{q(x\vert ym_{\leq i})}{p(x\vert y)} \geq  \frac{2^{-6(\marg_I(p,f) + KI)/I}}{(r+1)^3},
    \end{align*}
    where we used \Cref{eq:S-ub-lb} as well as the definitions of $G_4,G_1'$ and $G_2$ in the last step. Thus,
    \begin{align*} 
    \frac{p(m_i\vert xym_{< i})}{p(m_i\vert ym_{< i})} = \frac{p(m_{\leq i}\vert xy)}{p(m_{\leq i}\vert y)}\cdot \frac{p(m_{< i}\vert y)}{p(m_{< i}\vert xy)} &\leq 2^{9(\marg_I(p,f) + KI)/I}\cdot 2^{5(\marg_I(p,f) + KI)}\cdot (r+1)^5 \\
    &\leq 2^{14(\marg_I(p,f) + KI)}\cdot (r+1)^5,
    \end{align*}
    since $I\geq 1$. A similar calculation shows that $\frac{p(m_i\vert xym_{< i})}{p(m_i\vert xm_{< i})} < 2^{14(\marg_I(p,f) + KI)}\cdot(r+1)^5$. We conclude that \Cref{eq:R-large} holds. 

Next,  we show that $q(G_4^c) < 2^{-(\marg_I(p,f) + KI)/I}$. Define
    \begin{align*}
    t(xym) & = \begin{cases}
        \min\{i: q(m_{\geq i}\vert xym_{< i})< \frac{2^{-(\marg_I(p,f) + KI)/I}\cdot  p(m_{\geq i}\vert xym_{< i})}{r+1}\} & \text{if such $i$ exists,}\\
        \bot & \text{otherwise.}
    \end{cases}
    \end{align*}
    We have,
    \begin{align*}
        q(G_4^c)=q(t \neq \bot) 
        &= \sum_{i=0}^r\sum_{\substack{xym,\\ t(xym) = i}}q(xym) \\
        &< \frac{2^{-(\marg_I(p,f) + KI)/I}}{r+1}\cdot \sum_{i=0}^r\sum_{\substack{xym\\ t(xym) = i}}q(xym_{< i})\cdot p(m_{\geq i}\vert xym_{<i}) \\
        &\leq 2^{-(\marg_I(p,f) + KI)/I}.
    \end{align*}
    A similar argument shows that $q(G_j^c),q(G_j'^c)  < 2^{-(\marg_I(p,f) + KI)/I}$, for all $j\in \{1,2,3\}$. 
    Thus, we can bound
    \begin{align*}
        q(T_K^c\vert S_K) &\leq \sum_{j=1}^3\frac{q(G_j^c) + q(G_j'^c)}{q(S_K)} + \frac{q(G_4^c) + q(S_K^c)}{q(S_K)} \\
        &\leq 11\cdot \frac{2^{-(\marg_I(p,f) + KI)/I}}{1 - 2^{-(\marg_I(p,f) + KI)/I+2}} \leq 22\cdot 2^{-(\marg_I(p,f) + KI)/I},
    \end{align*}
    where we used \Cref{claim:q(S)-lb} and the fact that $K\geq 3$.
\end{proof}

\begin{claim}\label{claim:R-bound-for-Braverman}
    For any $K\geq 1$, let $S_K$ be the set defined in \Cref{eq:def-S} and define 
    \begin{align}\label{eq:def-R-b}
    T_K &= \{xym: p(m\vert xy) \leq 2^{6(\marg_I(p,f)+KI)}\cdot \min\{p(m\vert x),p(m\vert y)\}\}.
    \end{align}
    Then, for all $K\geq 3$, $q(T_K^c\vert S_K) \leq 6\cdot 2^{-(\marg_I(p,f)+KI)/I}$.
\end{claim}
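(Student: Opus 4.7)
My plan is to follow the blueprint of \Cref{claim:q(S)-lb}: introduce a constant number of good sets $G_j$, each of the form $\{xym: q(\cdot)\geq 2^{-(\marg_I(p,f)+KI)/I}\cdot p(\cdot)\}$ for an appropriate conditional probability, so that a one-line Markov argument gives $q(G_j^c)\leq 2^{-(\marg_I(p,f)+KI)/I}$, exactly as in the proofs of \Cref{claim:q(S)-lb,claim:mext-bounds,claim:R-dense-in-S}. I will then show that on $\bigcap_j G_j\cap S_K$ the defining inequality of $T_K$ holds, union-bound the complements, and divide by $q(S_K)\geq 1-5\cdot 2^{-(\marg_I(p,f)+KI)/I}$ from \Cref{claim:q(S)-lb}.

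The key identities are
\[\frac{p(m\vert xy)}{p(m\vert y)} = \frac{p(x\vert ym)}{p(x\vert y)}, \qquad \frac{p(m\vert xy)}{p(m\vert x)} = \frac{p(y\vert xm)}{p(y\vert x)},\]
so it suffices to upper bound each of these two ratios by $2^{6(\marg_I(p,f)+KI)}$ pointwise on our good set. For the first, I decompose
\[\log\frac{p(x\vert ym)}{p(x\vert y)} = \log\frac{p(x\vert ym)}{q(x\vert ym)} + \log\frac{q(x\vert ym)}{p(x\vert y)}.\]
The first term is controlled by a good set $G_5=\{xym: q(x\vert ym)\geq 2^{-(\marg_I(p,f)+KI)/I}\cdot p(x\vert ym)\}$, whose Markov bound is immediate since $\sum_x p(x\vert ym)=1$. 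For the second term, rearranging the definition of $\marg_I(p,f)$ and using $|\E_q[(-1)^f]|^{12I/\delta}\leq 1$ gives
\[\log\frac{q(x\vert ym)}{p(x\vert y)}\leq \marg_I(p,f) - \log\frac{q(y\vert xm)}{p(y\vert x)} - I\log\frac{q(xym)}{p(xym)},\]
and the two negative log-ratios are lower-bounded by good sets $G_1=\{q(y\vert xm)\geq 2^{-(\marg_I+KI)/I}\cdot p(y\vert x)\}$ and $G_3=\{q(xym)\geq 2^{-3(\marg_I+KI)/I}\cdot p(xym)\}$; here I use $G_3$ rather than the weaker bound coming from $S_K$ alone, so that the contribution of the $I\log$ term is $-3(\marg_I(p,f)+KI)$ with no extra $-I$. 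Summing, the bound on $\log(p(x\vert ym)/p(x\vert y))$ is $\marg_I(p,f) + 2(\marg_I(p,f)+KI)/I + 3(\marg_I(p,f)+KI)$, which for $I\geq 1$ and $K\geq 3$ is at most $6(\marg_I(p,f)+KI)$. The mirror argument, swapping the roles of $x$ and $y$, introduces only two new good sets $G_2,G_6$ (the set $G_3$ is shared across the two directions).

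The main technical care is bookkeeping the constants so that the union bound and the $q(S_K)$ lower bound combine to the stated $6\cdot 2^{-(\marg_I(p,f)+KI)/I}$: with the five good sets $G_1,G_2,G_3,G_5,G_6$ the union bound gives total mass $5\cdot 2^{-(\marg_I(p,f)+KI)/I}$, and since $K\geq 3$ gives $2^{-(\marg_I(p,f)+KI)/I}\leq 1/8$, the division by $q(S_K)\geq 3/8$ yields $q(T_K^c\vert S_K)\leq \tfrac{5}{3/8}\cdot 2^{-(\marg_I(p,f)+KI)/I}\leq 6\cdot 2^{-(\marg_I(p,f)+KI)/I}$ only after noting that the above counting is loose because each good set contributes strictly less than $2^{-(\marg_I(p,f)+KI)/I}$ and several of the inequalities absorb into one another (for example, $G_3$ already forces $xym$ into the part of $S_K$ where the sharper $q/p$ bound applies). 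I expect the bookkeeping to be the most delicate step; the algebra itself is routine once the right good sets are in place.
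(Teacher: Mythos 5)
Your overall strategy---define a handful of good events by one-line Markov arguments, show that on their intersection with $S_K$ the $T_K$ inequality holds, then union-bound and divide by $q(S_K)$---is exactly the paper's strategy here (and mirrors \Cref{claim:q(S)-lb,claim:mext-bounds,claim:R-dense-in-S}). The core computation, decomposing
\[\log\frac{p(x\vert ym)}{p(x\vert y)} = \log\frac{p(x\vert ym)}{q(x\vert ym)} + \log\frac{q(x\vert ym)}{p(x\vert y)}\]
and bounding the second term from the $\marg_I$ definition via the cross-ratio $q(y\vert xm)/p(y\vert x)$ and a bound on $q(xym)/p(xym)$, is sound and is what the paper does; your algebra and the $6(\marg_I(p,f)+KI)$ budget check out. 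Two remarks.

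First, the paper's proof uses three good sets, you use five. You can drop $G_3$ for free: on $S_K$, \Cref{eq:S-ub-lb} already gives $\log(q(xym)/p(xym))\geq -3(\marg_I(p,f)+KI)/I - 1$, and the extra $-I$ picked up after multiplying by $I$ is absorbed into the budget because $I\leq \marg_I(p,f)+KI$ when $K\geq 1$, so you still land inside $6(\marg_I(p,f)+KI)$. (Making $G_5,G_6$ explicit as you do, rather than packing both of $p(x\vert ym)/q(x\vert ym)$ and $p(y\vert xm)/q(y\vert xm)$ into a single set in terms of $q(m\vert xy)/p(m\vert xy)$, is arguably cleaner than the paper's presentation, since those three ratios are not literally the same object.)

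Second, and more importantly, your final bookkeeping does not deliver the stated constant: $\tfrac{5}{3/8}=\tfrac{40}{3}\approx 13.3$, not $\leq 6$, and the appeal to ``the counting is loose'' and ``inequalities absorb into one another'' is not an argument---it is a gap. Even with the savings above (four good sets) and the sharper bound $q(S_K)\geq 1-3\cdot 2^{-(\marg_I(p,f)+KI)/I}\geq 5/8$ that the proof of \Cref{claim:q(S)-lb} actually establishes, you get $4\cdot(8/5)=6.4$, still above $6$; the paper's three-set count gives $3\cdot(8/5)=4.8\leq 6$. This is a minor defect in the end---the claim is only used in \Cref{sec:braverman-sim} inside $O(\cdot)$ notation, so any absolute constant works---but as written you prove a weaker statement than the one claimed, and you should either match the paper's three-set count or state the larger constant you actually obtain rather than asserting $6$.
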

\begin{proof}
    Define the sets
    \begin{align*}
        G_1 &= \{xym: q(m\vert xy) < 2^{-(\marg_I(p,f)+KI)/I}p(m\vert xy)\} \\
        G_2 &= \{xym: q(x\vert ym) < 2^{-(\marg_I(p,f)+KI)/I}p(x\vert y)\} \\
        G_3 &= \{xym: q(y\vert xm) < 2^{-(\marg_I(p,f)+KI)/I}p(y\vert x)\}. 
    \end{align*}
    We claim that $q(G_1^c),q(G_2^c)$ and $q(G_3^c)$ are all smaller than $2^{-(\marg_I(p,f)+KI)/I}$. Indeed, to bound $q(G_1^c)$, we see that
    \begin{align*}
        q(G_1^c) = \sum_{xym\in G_1}q(xym) <  2^{-(\marg_I(p,f)+KI)/I}\cdot \sum_{xym\in G_1}q(xy)\cdot p(m\vert xy) \leq 2^{-(\marg_I(p,f)+KI)/I}.
    \end{align*}
    The proof for the bounds on $q(G_2^c)$ and $q(G_3^c)$ are similar. For any $xym\in G_1\cap G_2\cap S_K$, we have
    \begin{align*}
        \marg_I(p,f) &\geq \log \Bigg(\frac{q(x\vert ym)}{p(x\vert y)}\cdot\frac{q(y\vert xm)}{p(y\vert x)}\cdot  \left(\frac{q(xym)}{p(xym)}\right)^I  \cdot     \bigg|\E_{q(xy\vert m)}\left[(-1)^{f}\right]\bigg|^{-12I/\delta} \Bigg) \\
        &\geq  \log \frac{q(x\vert ym)}{p(x\vert y)} - \frac{(\marg_I(p,f) + KI)}{I} - 3(\marg_I(p,f) + KI) - I \tag{by \Cref{eq:S-ub-lb} and definition of $G_2$}\\
        &\geq \log \frac{q(x\vert ym)}{p(x\vert ym)} + \log \frac{p(x\vert ym)}{p(x\vert y)}  - 4(\marg_I(p,f) + KI) - I \tag{since $I\geq 1$}\\
        &\geq -\frac{(\marg_I(p,f) + KI)}{I} + \log \frac{p(m\vert xy)}{p(m\vert y)}  - 4(\marg_I(p,f) + KI) - I,
    \end{align*} 
    where in the last step we used the fact $p(m\vert xy)/p(m\vert y) = p(x\vert ym)/p(x\vert y)$. Rearranging, we get that for every $xym \in G_1\cap G_2\cap S_K$
    \begin{align*}
        \log \frac{p(m\vert xy)}{p(m\vert y)} \leq 6(\marg_I(p,f) + KI).
    \end{align*}
    A similar calculation shows that for every $xym\in G_1\cap G_3\cap S_K$ it holds that 
    \begin{align*}
        \log \frac{p(m\vert xy)}{p(m\vert x)} \leq 6(\marg_I(p,f) + KI).
    \end{align*}
    Therefore,
    \begin{align*}
        q(T_K^c\vert S_K) \leq \frac{q(G_1^c) + q(G_2^c) + q(G_3^c)}{q(S_K)} \leq 6\cdot 2^{-(\marg_I(p,f) + KI)/I}.
    \end{align*}
\end{proof}

Additionally, the bound on the marginal information cost implies the following lemma which will be useful in our simulation.
\begin{lemma}
    \begin{align}
        \E_{q(xym)}\bigg[\sum_{i \geq 2}^C\|p(m_i\vert xm_{< i}) - p(m_i\vert ym_{< i})\|_{1}\bigg] &\leq 8\sqrt{C\cdot \marg_I(p,f)} \label{eqn:pinskerbound-1}\\
        \E_{q(xym)}  \bigg[\bigg|\E_{q(xy\vert m)}\left[(-1)^{f}\right]\bigg|\bigg] &\geq 2^{-\delta\marg_I(p,f)/12I}\label{eqn:expectadbound}.
    \end{align}
\end{lemma}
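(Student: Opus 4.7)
The plan is to establish the two bounds separately, with the second being a direct consequence of the pointwise definition of $\marg_I$ and the first requiring a Pinsker-based reduction followed by a chain-rule/telescoping argument.

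For the second inequality, I start from the pointwise sup bound defining $\marg_I(p,f)$: for every $xym$ in the support of $q$,
\[\log\frac{q(x\vert ym)}{p(x\vert y)} + \log\frac{q(y\vert xm)}{p(y\vert x)} + I\log\frac{q(xym)}{p(xym)} - \frac{12I}{\delta}\log\bigl|\E_{q(xy\vert m)}[(-1)^f]\bigr| \leq \marg_I(p,f).\]
Taking expectation under $q$, each of the first three terms is non-negative: the third equals $\mathsf{D}(q\|p)\geq 0$, while the first decomposes as
\[\E_q\log\frac{q(x\vert ym)}{p(x\vert y)} = \mathsf{I}_q(X:M\vert Y) + \E_{q(y)}\mathsf{D}(q(x\vert y)\|p(x\vert y)) \geq 0,\]
and the second term is symmetric. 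Rearranging gives $\E_q\log|\E_{q(xy\vert m)}[(-1)^f]| \geq -\delta\marg_I(p,f)/(12I)$, and concavity of $\log$ (Jensen's inequality) upgrades this to the claimed $\E_q|\E_{q(xy\vert m)}[(-1)^f]| \geq 2^{-\delta \marg_I(p,f)/(12I)}$.

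For the first inequality, I apply Pinsker's inequality as stated in \Cref{eq:pinsker-inequality} coordinate-wise, together with Jensen's inequality, to obtain
\[\E_q\|p(m_i\vert xm_{<i}) - p(m_i\vert ym_{<i})\|_1 \leq 2\sqrt{\E_q \mathsf{D}_i},\]
where for Alice's round $i$ I take $\mathsf{D}_i = \mathsf{D}(p(m_i\vert xm_{<i})\|p(m_i\vert ym_{<i}))$ (symmetrically for Bob). Summing over $i \in \{2,\dots,C\}$ and applying Cauchy-Schwarz gives
\[\E_q\sum_{i=2}^C\|p(m_i\vert xm_{<i}) - p(m_i\vert ym_{<i})\|_1 \leq 2\sqrt{C\sum_{i=2}^C\E_q\mathsf{D}_i},\]
so it suffices to prove $\sum_{i\geq 2}\E_q\mathsf{D}_i \leq 16\,\marg_I(p,f)$.

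The main obstacle is this last estimate. The key identity is the Bayesian rewriting $\log p(m_i\vert xm_{<i})/p(m_i\vert ym_{<i}) = \log p(x\vert ym_{\leq i})/p(x\vert ym_{<i})$ for Alice's rounds (and symmetrically for Bob), so that the per-party sums telescope:
\[\sum_{i\text{ alice}}\log\frac{p(m_i\vert xm_{<i})}{p(m_i\vert ym_{<i})} = \log\frac{p(x\vert ym)}{p(x\vert y)}.\]
Combining this telescope with the non-negativity of $\E_{q(ym)}\mathsf{D}(q(x\vert ym)\|p(x\vert ym))$ bounds the $q$-averaged Alice contribution by $\E_q\log q(x\vert ym)/p(x\vert y) \leq \marg_I(p,f)$, and symmetrically for Bob. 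The delicate point, which I expect to be the hard part, is that $\E_q \mathsf{D}_i$ integrates the log-ratio against the hybrid measure $q(xym_{<i})p(m_i\vert xm_{<i})$ rather than against $q(xym_{\leq i})$. I plan to close this gap by writing the difference as a signed correction that is absorbed into the bounded terms $\E_q\log(q(x\vert ym)/p(x\vert y))$, $\E_q\log(q(y\vert xm)/p(y\vert x))$ and $I\cdot \mathsf{D}(q\|p)$ coming from the same pointwise bound used in the second inequality (each individually $\leq \marg_I(p,f)$). Collecting these estimates gives $\sum_{i\geq 2}\E_q \mathsf{D}_i \leq O(\marg_I(p,f))$, and substituting back yields the claimed $8\sqrt{C\cdot\marg_I(p,f)}$ bound.
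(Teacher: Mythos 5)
Your argument for \Cref{eqn:expectadbound} is the paper's own proof: take the pointwise defining bound on $\marg_I(p,f)$, pass to the $q$-expectation, observe that each of the first three terms is a non-negative divergence (and the advantage term is non-negative because the advantage is at most $1$), so the advantage term alone is bounded by $\marg_I(p,f)$, then apply Jensen. That part is fine.

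For \Cref{eqn:pinskerbound-1} there is a genuine gap, and you have in fact put your finger on it yourself without closing it. You want to apply Pinsker directly to $\mathsf{D}_i = \mathsf{D}\bigl(p(m_i\vert xm_{<i})\,\|\,p(m_i\vert ym_{<i})\bigr)$ and then bound $\sum_i \E_q \mathsf{D}_i$ using the telescoping identity for $\sum_i \log\frac{p(m_i\vert xm_{<i})}{p(m_i\vert ym_{<i})}$. But the quantity $\E_q \mathsf{D}_i$ integrates $\log\frac{p(m_i\vert xm_{<i})}{p(m_i\vert ym_{<i})}$ against the hybrid $q(xym_{<i})\cdot p(m_i\vert xm_{<i})$, whereas the telescope, after averaging over $q$, integrates every $m_i$ against the $q$-conditionals. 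You describe the discrepancy as ``a signed correction that is absorbed into the bounded terms,'' but you give no argument for why the correction $\sum_{m_i}\bigl(p(m_i\vert xm_{<i})-q(m_i\vert xym_{<i})\bigr)\log\frac{p(m_i\vert xm_{<i})}{p(m_i\vert ym_{<i})}$ is controlled by $\marg_I$; a priori $q(m_i\vert xym_{<i})$ and $p(m_i\vert xm_{<i})$ can be quite different and this term has no sign. As written the proof does not go through.

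The paper sidesteps the issue at the very first step by the triangle inequality: it inserts $q(m_i\vert xym_{<i})$ as an intermediary, writing
\[
\|p(m_i\vert xm_{<i})-p(m_i\vert ym_{<i})\|_1 \leq \|p(m_i\vert xm_{<i})-q(m_i\vert xym_{<i})\|_1 + \|q(m_i\vert xym_{<i})-p(m_i\vert ym_{<i})\|_1,
\]
and only then applies Pinsker, so the divergences that appear are $\mathsf{D}\bigl(q(m_i\vert xym_{<i})\,\|\,p(m_i\vert xm_{<i})\bigr)$ and $\mathsf{D}\bigl(q(m_i\vert xym_{<i})\,\|\,p(m_i\vert ym_{<i})\bigr)$. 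These integrate $m_i$ against the $q$-conditional, which after averaging over $q(xym_{<i})$ and summing over $i$ closes cleanly via the chain rule to $\E_q\log\frac{q(m\vert xy)}{p(m\vert x)}$ and $\E_q\log\frac{q(m\vert xy)}{p(m\vert y)}$, each of which the paper bounds by $\marg_I(p,f)\cdot(1+1/I)$ by a Bayes-rule decomposition into the terms appearing in the pointwise $\marg_I$ bound. If you want to rescue your route you must actually control the correction you flagged; the cleaner move is to insert the $q$-conditional up front as the paper does.
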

\begin{proof}
    By our bound on the marginal information cost, we get 
\begin{align}
    \marg_I(p,f) &= \max_{xym \in \mathsf{supp}(q)} \log \Bigg(\frac{q(x\vert ym)}{p(x\vert y)}\cdot\frac{q(y\vert xm)}{p(y\vert x)}\cdot  \left(\frac{q(xym)}{p(xym)}\right)^I  \cdot     \bigg|\E_{q(xy\vert m)}\left[(-1)^{f}\right]\bigg|^{-12I/\delta} \Bigg) \notag \\
    &\geq   \E_{q(xym)}\bigg[\log \frac{q(x\vert ym)}{p(x\vert y)}\bigg]+\E_{q(xym)}\bigg[\log \frac{q(y\vert xm)}{p(y\vert x)}\bigg] \notag \\ & \qquad + I \cdot  \E_{q(xym)}\bigg[\log \frac{q(xym)}{p(xym)}\bigg] +  \E_{q(xym)}  \bigg[\log \bigg|\E_{q(xy\vert m)}\left[(-1)^{f}\right]\bigg|^{-12I/\delta}\bigg]\label{eqn:expect}
\end{align}
By \Cref{eq:div-non-neg} and the fact that the advantage is always at most $1$, each of the expectations appearing above is non-negative, and so each term is bounded by $\marg_I(p,f)$. This implies
\[\log \E_{q(xym)}  \bigg[\bigg|\E_{q(xy\vert m)}\left[(-1)^{f}\right]\bigg|\bigg] \geq \E_{q(xym)}  \bigg[\log \bigg|\E_{q(xy\vert m)}\left[(-1)^{f}\right]\bigg|\bigg] \geq -\frac{\delta\marg_I(p,f)}{12I},\]
thus giving \Cref{eqn:expectadbound}. For \Cref{eqn:pinskerbound-1}, we have
\begin{align}
    &\E_{q(xym)}\bigg[\sum_{i \geq 2}^C\|p(m_i\vert xm_{< i}) - p(m_i\vert ym_{< i})\|_{1}\bigg]\notag \\
    &\leq \E_{q(xym)}\bigg[\sum_{i}\|p(m_i\vert xm_{< i}) - q(m_i\vert xym_{< i})\|_{1} + \|q(m_i\vert xym_{< i}) - p(m_i\vert ym_{< i})\|_{1}\bigg]\notag \\
    &\leq 2\E_{q(xym)}\bigg[\sum_{i}\sqrt{\E_{q(m_i\vert xym_{< i})}\bigg[\log \frac{q(m_i\vert xym_{< i})}{p(m_i\vert xm_{< i})}\bigg]} + \sqrt{\E_{q(m_i\vert xym_{< i})}\bigg[\log \frac{q(m_i\vert xym_{< i})}{p(m_i\vert ym_{< i})}\bigg]}\bigg] \tag{by \Cref{eq:pinsker-inequality}}\\
    &\leq 2\sqrt{C\cdot \E_{q(xym)}\bigg[\sum_{i}\log \frac{q(m_i\vert xym_{< i})}{p(m_i\vert xm_{< i})}\bigg]} + 2\sqrt{C\cdot \E_{q(xym)}\bigg[\sum_{i}\log \frac{q(m_i\vert xym_{< i})}{p(m_i\vert ym_{< i})}\bigg]} \tag{by concavity of $\sqrt{\cdot }$}\notag \\
    &= 2\sqrt{C\cdot\E_{q(xym)}\bigg[\log \frac{q(m\vert xy)}{p(m\vert x)}\bigg]} + 2\sqrt{C\cdot \E_{q(xym)}\bigg[\log \frac{q(m\vert xy)}{p(m\vert y)}\bigg]}\notag.
\end{align}
To complete the proof, we claim that 
\[\E_{q(xym)}\bigg[\log \frac{q(m\vert xy)}{p(m\vert x)}\bigg] , \E_{q(xym)}\bigg[\log \frac{q(m\vert xy)}{p(m\vert y)}\bigg] \leq \marg_I(p,f) \cdot (1 + 1/I).\]
We show this for the first term; the proof for the second term is identical. First, we have
\begin{align*}
   \frac{q(m\vert xy)}{p(m\vert x)} &= \frac{q(m\vert xy)}{p(m\vert xy)}\cdot \frac{p(m\vert xy)}{p(m\vert x)} \\
   &= \frac{q(m\vert xy)}{p(m\vert xy)}\cdot  \frac{p(x\vert ym)}{p(x\vert y)} \\
   &= \frac{q(m\vert xy)}{p(m\vert xy)}\cdot  \frac{p(x\vert ym)}{q(x\vert ym)}\cdot \frac{q(x\vert ym)}{p(x\vert y)} \\
   &= \frac{q(xym)}{p(xym)}\cdot \frac{p(xy)}{q(xy)} \cdot \frac{p(x\vert ym)}{q(x\vert ym)}\cdot \frac{q(x\vert ym)}{p(x\vert y)}.
\end{align*}
Therefore,
\begin{align*}
    &\E_{q(xym)}\bigg[\log \frac{q(m\vert xy)}{p(m\vert x)}\bigg] \\
    &= \E_{q(xym)}\bigg[\log \frac{q(xym)}{p(xym)}\bigg] + \E_{q(xym)}\bigg[\log \frac{p(xy)}{q(xy)}\bigg] +\E_{q(xym)}\bigg[\log \frac{p(x\vert ym)}{q(x\vert ym)}\bigg] + \E_{q(xym)}\bigg[\log \frac{q(x\vert ym)}{p(x\vert y)}\bigg]\\
    &\leq \E_{q(xym)}\bigg[\log \frac{q(xym)}{p(xym)}\bigg] + \log\E_{q(xym)}\bigg[ \frac{p(xy)}{q(xy)}\bigg] + \log \E_{q(xym)}\bigg[ \frac{p(x\vert ym)}{q(x\vert ym)}\bigg] + \E_{q(xym)}\bigg[\log \frac{q(x\vert ym)}{p(x\vert y)}\bigg] \\
    &\leq \E_{q(xym)}\bigg[\log \frac{q(xym)}{p(xym)}\bigg] + \E_{q(xym)}\bigg[\log \frac{q(x\vert ym)}{p(x\vert y)}\bigg] \leq \frac{\marg_I(p,f)}{I} + \marg_I(p,f), 
\end{align*}
where in the first inequality, we used the concavity of $\log(\cdot)$ and in the last one, we used \Cref{eqn:expect}.
\end{proof}

For the simulation of external marginal information, we need a claim analogous to the previous one. 
\begin{lemma}
Let $q$ be a distribution achieving $\mext_I(p,f)$. Then,
    \begin{align}
        \E_{q(xym)}\bigg[\log \frac{p(m\vert xy)}{p(m)}\bigg] &\leq \mext_I(p,f), \label{eq:mext-log-ratio-bound}\\
        \E_{q(xym)}\bigg[\bigg|\E_{q(xy\vert m)}\left[(-1)^{f}\right]\bigg|\bigg] &\geq 2^{-\delta\mext_I(p,f)/(12I)}\label{eq:mext-adbound}.
    \end{align}
\end{lemma}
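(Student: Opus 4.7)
The plan is to mirror the proof of the preceding lemma (for marginal information), starting from the pointwise bound supplied by the definition of $\mext_I(p,f)$. Since $q$ achieves the infimum defining $\mext_I(p,f)$, for every $xym$ in the support of $q$ we have
\[
\log \frac{q(xy\vert m)}{p(xy)} + I\cdot \log \frac{q(xym)}{p(xym)} - \frac{12 I}{\delta}\log\bigg|\E_{q(xy\vert m)}\!\left[(-1)^{f}\right]\bigg| \;\leq\; \mext_I(p,f).
\]
Taking the expectation of both sides under $q(xym)$ preserves the inequality. Crucially, each of the three terms on the left-hand side is non-negative: the first is $\E_{q(m)}[D(q(xy\vert m)\|p(xy))]$, the second is $I\cdot D(q(xym)\|p(xym))$, and the third is non-negative because $|\E_{q(xy\vert m)}[(-1)^f]|\leq 1$. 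Consequently each term is individually bounded above by $\mext_I(p,f)$.

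For \Cref{eq:mext-adbound}, the bound on the third term gives
\[
\E_{q(xym)}\!\left[\log\bigl|\E_{q(xy\vert m)}[(-1)^f]\bigr|\right] \;\geq\; -\frac{\delta\cdot \mext_I(p,f)}{12 I}.
\]
Applying Jensen's inequality to the concave function $\log(\cdot)$ then yields
\[
\log \E_{q(xym)}\!\left[\bigl|\E_{q(xy\vert m)}[(-1)^f]\bigr|\right] \;\geq\; \E_{q(xym)}\!\left[\log\bigl|\E_{q(xy\vert m)}[(-1)^f]\bigr|\right] \;\geq\; -\frac{\delta\cdot\mext_I(p,f)}{12 I},
\]
and exponentiating gives the claimed inequality.

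For \Cref{eq:mext-log-ratio-bound}, I will use the identity $p(m\vert xy)/p(m) = p(xy\vert m)/p(xy)$ and split
\[
\log \frac{p(m\vert xy)}{p(m)} \;=\; \log \frac{p(xy\vert m)}{q(xy\vert m)} + \log \frac{q(xy\vert m)}{p(xy)}.
\]
Taking expectations under $q(xym)$, the first term becomes $-\E_{q(m)}[D(q(xy\vert m)\|p(xy\vert m))]\leq 0$, while the second term is exactly $\E_{q(xym)}[\log q(xy\vert m)/p(xy)]$, which we already bounded above by $\mext_I(p,f)$. Summing these two bounds gives $\E_{q(xym)}[\log p(m\vert xy)/p(m)]\leq \mext_I(p,f)$. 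No step involves a genuine obstacle — this is the direct ``external'' analogue of the preceding lemma, obtained by dropping the two conditional-information terms, and the only care needed is in splitting the log ratio so that Jensen/KL-non-negativity can be applied cleanly.
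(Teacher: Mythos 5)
Your proof is correct and takes essentially the same approach as the paper: start from the pointwise bound implied by the definition of $\mext_I$, take expectations, use non-negativity of each divergence term (and that the advantage is at most $1$) to bound each term separately, then apply Jensen for the advantage bound and a two-term KL decomposition (identical to the paper's modulo rearrangement) for the log-ratio bound.
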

\begin{proof}
   By our bound on the marginal information cost, we get 
\begin{align}
    \mext_I(p,f) &= \max_{xym \in \mathsf{supp}(q)} \log \Bigg(\frac{q(xy\vert m)}{p(xy)}\cdot \left(\frac{q(xym)}{p(xym)}\right)^I  \cdot  \bigg|\E_{q(xy\vert m)}\left[(-1)^{f}\right]\bigg|^{-12I/\delta} \Bigg) \notag \\
    &\geq   \E_{q(xym)}\bigg[\log \frac{q(xy\vert m)}{p(xy)}\bigg]+ I \cdot  \E_{q(xym)}\bigg[\log \frac{q(xym)}{p(xym)}\bigg] 
    - \frac{12I}{\delta}\cdot\E_{q(xym)}  \bigg[\log \bigg|\E_{q(xy\vert m)}\left[(-1)^{f}\right]\bigg|\bigg]. \notag
\end{align} 
By \Cref{eq:div-non-neg} and the fact that the advantage is always at most $1$, each of the expectations appearing above is non-negative, and so each term is bounded by $\marg_I(p,f)$. This implies
\[\log \E_{q(xym)}  \bigg[\bigg|\E_{q(xy\vert m)}\left[(-1)^{f}\right]\bigg|\bigg] \geq \E_{q(xym)}  \bigg[\log \bigg|\E_{q(xy\vert m)}\left[(-1)^{f}\right]\bigg|\bigg] \geq -\frac{\delta\marg_I(p,f)}{12I},\]
thus giving \Cref{eq:mext-adbound}. Moreover, 
\[ \mext_I(p,f) \geq \E_{q(xym)}\bigg[\log \frac{q(xy\vert m)}{p(xy)}\bigg] = \E_{q(xym)}\bigg[\log \frac{q(xy\vert m)}{p(xy\vert m)}\bigg] + \E_{q(xym)}\bigg[\log \frac{p(xy\vert m)}{p(xy\vert m)}\bigg],\]
and this implies \Cref{eq:mext-log-ratio-bound} since the first term in the sum is non-negative. 
\end{proof}

\section{Compressing marginal information} \label{step3a}

Here we prove \Cref{simulationthoerem}. 
Let $p(xym)$ be a protocol distribution such that $p(xy) = \mu(xy)$, and $\marg_I(p,f) = \alpha\cdot I$. Let $q(xym)$ be a rectangular distribution that realizes $\marg_I(p,f)$. For a large constant $K$, let $\marg = \marg_I(p,f)+ KI$. Since $\marg(p,f) \geq 0$, we have $\marg \geq KI$. Let $g_1,g_2$ be as in \Cref{eq:g1g2}.

Let $\eps$ be a parameter such that $\eps \gg (2^{11\marg/I}\sqrt{C\cdot\marg})^{-1}$. We define a protocol $\Gamma$ whose communication complexity is bounded by 
\begin{align*}
O(\marg + \log 1/\eps + 2^{7\marg/I}\cdot \sqrt{C \marg} \cdot  \log (C/\epsilon)).
\end{align*}
Using the assumption that $\marg_I(p,f) \leq \alpha I$ we see that $\marg \leq \Delta_1\cdot I$ and $\log 1/\eps \leq \Delta_2\cdot \log (CI)$, where $\Delta_1, \Delta_2$ only depend on $\alpha$. This implies the bound on the communication in the theorem.

Here is a description of $\Gamma$:
\begin{enumerate}
    \item Jointly sample $p(m_0)$. Alice sets $m^A_0 = m_0$ and Bob sets $m^B_0 = m_0$. Jointly sample $\eta^A,\eta^B \in [0,1]$ uniformly. Jointly sample uniformly random $\rho  \in [0,1]^C$. Jointly sample a uniformly random function $h : \mathbb{Z} \to \{1,\ldots,\lceil 1/\epsilon\rceil \}$.
    \item Run the protocol $\psi$ from \Cref{lem:round-sim} with $u = p(m_1\vert m^A_0x), v = p(m_1\vert m^B_0y), L = 6\marg$, error parameter $\epsilon$, to obtain functions $a,b$ and transcript $s$. Alice sets $m^A_1 = a(\psi(us))$, Bob sets $m^B_1 = b(\psi(vs))$. If $m^B_1 = \bot$, the protocol terminates. Bob sends a bit to Alice to indicate whether or not this occurs. The communication complexity of this step is $L + O(\log (1/\epsilon))$.
    \item Alice and Bob compute $m^B$, $m^A$ by setting  
    \begin{align}
    m^A_{i} &= 
    \begin{cases}
        1 & \text{if $\rho_i \leq p(m_i=1\vert  x m^{A}_{< i})$,}\\
        0 & \text{otherwise.}
    \end{cases}\label{eq:alicerule}\\
    m^B_{i} &= 
    \begin{cases}
        1 & \text{if $\rho_i \leq p(m_i=1\vert ym^{B}_{< i})$,}\\
        0 & \text{otherwise.}
    \end{cases} \label{eq:bobrule},
    \end{align}
    for $i = 2,\dotsc,C$.
   
    \item Run $\tau$ from \Cref{lem:first-diff} to find the smallest  $j$ with $m^A_j \neq m^B_j$. If $j$ is even, Alice flips the value of $m^A_j$ to $1-m^A_j$ and recomputes $m^A_i$ for $i=j+1,\dotsc, C$ using \Cref{eq:alicerule}.
    If $j$ is odd, Bob flips the value of $m^B_j$ to $1-m^B_j$ and recomputes $m^B_i$ for $i=j+1,\dotsc, C$ using \Cref{eq:bobrule}. The players repeat this process at most $2^{7\marg/I}\sqrt{C \marg}$ times. If by this point $\tau$ reports that $m^A \neq  m^B$, the players abort. Otherwise, they continue. Let $\langle m^A\rangle$, $\langle m^B\rangle$ denote the final values of $m^A, m^B$ after this step. The communication complexity of this step is at most 
    $O(2^{7\marg/I} \cdot \sqrt{C\marg}\cdot \log(C/\epsilon))$.
    
    \item If $\eta^A \leq g_1(x\langle m^A\rangle )\cdot2^{-\lceil \log g_1(x\langle m^A\rangle )\rceil}$, Alice sends $h(\lceil \log g_1(x\langle m^A\rangle )\rceil)$ to Bob, and otherwise she sends $\bot$ to indicate that the protocol should be aborted. 
    \item If there is a unique  integer $z$ 
 such that
 \begin{align*}
    |z + \log g_2(y\langle m^B\rangle )| &\leq 3\marg/I ,\\
  h(z)&=h(\lceil \log g_1(x\langle m^A\rangle )\rceil),\\ \eta^B &\leq g_2(y\langle m^B\rangle)\cdot2^{z - 3\marg/I}, 
  \end{align*}
 Bob sends  $\mathsf{sign} \Big( \E_{q(xy\vert \langle m^B\rangle )}[(-1)^f]\Big) \in \{\pm 1\}$ to Alice. Otherwise, he sends $\bot$ to abort the protocol.
\end{enumerate}

Let $\Gamma$ denote the joint distribution of the inputs and transcript of the above protocol. In order to analyze the protocol, define $m$ by setting $m_0 = m_0^A=m_0^B$, $m_1 = m^A_1$, and setting 
 \begin{align*}
    m_{i} &= 
    \begin{cases}
        1 & \text{if $\rho_i \leq p(m_i=1\vert  x m_{< i})$,}\\
        0 & \text{otherwise}
    \end{cases},
\end{align*}
when $i>1$ is even, and setting 
\begin{align*}
    m_{i} &= 
    \begin{cases}
        1 & \text{if $\rho_i \leq p(m_i=1\vert ym_{< i})$,}\\
        0 & \text{otherwise}
    \end{cases},
\end{align*}
when $i>1$ is odd. This definition ensures that 
\begin{align*}
     \Gamma(xym) = p(xym).   
\end{align*}

For $i=2,3,\dotsc,C$ define
\begin{align*}
    E_i &= \begin{cases}
         1 & \text{if $\rho_i$ is in between the numbers $p(m_i=1\vert xm_{<i})$ and $p(m_i=1\vert ym_{<i})$,}\\
         0 & \text{otherwise.}
     \end{cases}
\end{align*}

Let $S$ and $R$ be the sets defined in \Cref{eq:def-S,eq:R} for our choice of $K$. In addition to $S$ and $R$, we need the following sets to analyze the simulating protocol: 
\begin{align*}
    Q &= \Big \{xym \eta^A \eta^B: \eta^A \leq g_1(xm) \cdot 2^{-\lceil \log g_1(xm)\rceil}, \eta^B \leq g_2(ym)\cdot 2^{\lceil \log g_1(xm)\rceil - 3\marg/I}  \Big\},\\
    \mathcal{E} &= \{ \langle  m^A \rangle\langle m^B\rangle m : \langle m^A\rangle= \langle m^B\rangle =m\},\\
    \mathcal{Z} &= \{xymh:\exists \text{ a unique integer $z$ with }|z + \log g_2(ym )|\leq 3\marg/I \text{ and }  h(z)= h(\lceil \log g_1(xm) \rceil)\}.
\end{align*}
Let $\mathcal{G}$ denote the event that the protocol reaches the final step without aborting, and define $\mathcal{A}(xym) \in \{\pm 1\}$ by 
$$ \mathcal{A}(xym) = \mathsf{sign}\Big(\E_{q(xy \vert  m)}[(-1)^{f(xy)}] \Big) \cdot (-1)^{f(xy)}.$$
Our protocol computes $f(xy)$ correctly when: $\mathcal{G}$ happens, $\mathcal{A}(xym)=1$ and $m = m^B$. Since $\mathcal{EZ}SQ \subseteq \mathcal{G}$, and $\mathcal{E}$ implies $m=m^B$, the advantage of our protocol is at least:
\begin{align} 
&\Gamma(\mathcal{EZ}SQ) \cdot  \E_{\Gamma(xym|\mathcal{EZ}SQ)}[\mathcal{A}(xym)] - \Gamma(\mathcal{G} (\mathcal{EZ}SQ)^c). \label{finaladvantage-1}
\end{align}

We shall prove:
\begin{align}
    \E_{\Gamma(xym\vert \mathcal{EZ} QS)}[\mathcal{A}(xym)]&\geq \Omega(2^{-\delta \marg /(12I)}) ,\label{eqn:expectadbound-1}\\
    \Gamma(\mathcal{EZ}SQ) & \geq \Omega(2^{-3 \marg/I}), \label{eqn:qszebound-1}\\
    \Gamma(\mathcal{G}(\mathcal{EZ}SQ)^c) & \leq O(2^{-4\marg /I}). \label{eqn:negative-1}
\end{align}
By \Cref{finaladvantage-1}, since $\delta \leq 1$, we can choose $K$ to be large enough to prove the theorem, since $(\alpha+K) \geq \marg/I \geq K$. 

We first upper bound $\Gamma(\mathcal{G} (\mathcal{EZ}SQ)^c)$.  By the union bound, we have:
\begin{align*}
 \Gamma(\mathcal{G} (\mathcal{EZ}SQ)^c)
  & \leq  
   \Gamma(\mathcal{G}\mathcal{E}^c) +  \Gamma( \mathcal{Z}^c\vert \mathcal{GE})+\Gamma(S^c  \mathcal{GEZ})+  \Gamma(Q^c \vert \mathcal{G EZ}S).
\end{align*}

The definition of the protocol ensures that $\Gamma(\mathcal{Z}^c\vert \mathcal{GE})=0$. Moreover, we claim that $\Gamma(Q^c \vert \mathcal{G EZ}S) = 0$, because if the event $\mathcal{EZ}S$ happens and the parties do not abort, then:
\begin{align*}
\eta^A &\leq g_1(x\langle m^A\rangle)\cdot 2^{\lceil \log g_1(x\langle m^A\rangle) \rceil} = g_1(xm)\cdot 2^{\lceil \log g_1(xm) \rceil},\\ 
\eta^B &\leq g_2(y\langle m^B\rangle)\cdot 2^{z - 3\marg/I } = g_2(ym)\cdot 2^{\lceil \log g_1(xm) \rceil  - 3\marg/I }.
\end{align*} 
The event $\mathcal{G}\mathcal{E}^c$ implies that $\psi$ or $\tau$ made an error, leaving Alice and Bob with strings that were not equal in some step. The probability that this happens is at most $$O(\epsilon \cdot (1+2^{7\marg/I} \sqrt{C\cdot  \marg})) \leq 2^{-4\marg/I},$$ by our choice of $\epsilon$. Finally, $\Gamma(S^c\mathcal{GEZ}) \leq \Gamma(S^c\mathcal{EZ}) \leq O(\eps\marg/I)$, since if $S^c\mathcal{GEZ}$ happens then there must have been a hash collision, which happens with probability at most $O(\eps\marg/I)$. This implies \Cref{eqn:negative-1}.

Now, we turn to proving \Cref{eqn:qszebound-1}. Let us first estimate $\Gamma(QS)$. We have,
\begin{align*}
    \Gamma(QS) &=\sum_{xym\in S}\Gamma(xym)\cdot \Gamma(Q\vert xym) =\sum_{xym\in S}p(xym)\cdot \Gamma(Q\vert xym). 
\end{align*}
For $xym\in S$,
\begin{align}\label{eq:Q|xym}
    \Gamma(Q\vert xym) = \frac{g_1(xm) \cdot 2^{-\lceil \log g_1(xm)\rceil}\cdot g_2(ym)\cdot 2^{\lceil \log g_1(xm)\rceil}}{2^{3\marg/I}} 
    &= \frac{q(xym)}{p(xym)} \cdot \frac{1}{2^{3\marg/I}},
\end{align}
where the first equality follows from the fact that $$g_2(ym) \cdot 2^{\lceil \log g_1(xm)\rceil} = 2^{\lceil \log g_1(xm)\rceil + \log g_2(ym)} \leq 2^{3\marg/I},$$ by the definition of $S$. Therefore,
\begin{align}\label{eq:qs}
    \Gamma(QS) =\sum_{xym\in S}p(xym)\cdot \frac{q(xym)}{p(xym)} \cdot \frac{1}{2^{3\marg/I}} &= \frac{q(S)}{2^{3\marg/I}} \notag\\
    &\geq \frac{(1-5\cdot 2^{-\marg/I})}{2^{3\marg/I}}= \Omega(2^{-3\marg/I}),
\end{align}
where in the last line, we used \Cref{claim:q(S)-lb}.

We claim that for all $xym\in S$,
\begin{align}\label{eq:Z|xymQS-1}
    \Gamma(\calZ\vert xymQS) = \Gamma(\mathcal{Z}\vert xym) \geq 1 - O(\eps\marg/I).
\end{align}
The equality follows by observing that $xym$ determine $S$ and given $xym$, $\calZ$ just depends on the choice of $h$, which is independent of $Q$. The event $\mathcal{Z}^c$ can happen only if there exists an integer $z$ distinct from $\lceil \log g_1(xm) \rceil$ such that $h(\lceil g_1(xm) \rceil)=h(z)$ and $\abs{z + \log g_2(ym)} \leq 3\marg/I$. The probability that this happens is at most $O(\epsilon\cdot \marg /I )$. Therefore, $\Gamma(\mathcal{Z}\vert xym) \geq 1- O(\epsilon \marg / I)\geq 1/2$, by our choice of $\epsilon$.
We conclude that
\begin{align}
 \Gamma(QS\mathcal{Z}) &= \Gamma(QS) \cdot \Gamma (\mathcal{Z}\vert QS) \geq \Omega(2^{-3 \marg/I}), \label{eqn:zqsbound-1}
\end{align} 

For all $xym \in S$,
\begin{align}
    \Gamma(xym \vert QS\mathcal{Z}) &= \frac{\Gamma(xym)\cdot \Gamma(QS \mathcal{Z}\vert xym )}{\Gamma(Q S \mathcal{Z})}\notag \\
    &= \frac{p(xym)}{\Gamma(Q S) }  \cdot \Gamma(Q\vert xym) \cdot \frac{\Gamma(\mathcal{Z}\vert xymQS)}{\Gamma(\mathcal{Z}|QS)} \notag \\
    &= \frac{p(xym)}{\Gamma(Q S) } \cdot \frac{q(xym)}{p(xym)\cdot 2^{3\marg/I}} \cdot \frac{\Gamma(\mathcal{Z}\vert xymQS)}{\Gamma(\mathcal{Z}|QS)} \tag{By \Cref{eq:Q|xym}}\notag \\
    &=  \frac{q(xym)}{q(S)} \cdot \frac{\Gamma(\mathcal{Z}\vert xymQS)}{\Gamma(\mathcal{Z}|QS)} \tag{By \Cref{eq:qs}} \\
    &=  q(xym\vert S) \cdot (1\pm O(\epsilon \marg /I)), \label{eq:Gamma|QSZ=q|S}
\end{align}
where the last line follows by \Cref{eq:Z|xymQS-1}.

Given \Cref{eqn:zqsbound-1}, to complete the proof of \Cref{eqn:qszebound-1}, it will be enough to prove that $\Gamma(\calE \vert QS\mathcal{Z}) \geq 1/2$. We shall prove that 
\begin{align}
    &\Gamma(\mathcal{E}^c\vert QS\mathcal{Z})\notag \\
    &\leq \Gamma(R^c\vert QSZ) + \Gamma\bigg(\mathcal{E}^c\bigg\vert QS\mathcal{Z} R, \sum_{i=2}^CE_i \leq 2^{7\marg/I}\cdot\sqrt{C\marg}\bigg) + \Gamma\bigg(\sum_{i=2}^CE_i >2^{7\marg/I}\cdot\sqrt{C\marg} \Big\vert QS\calZ\bigg)\notag \\
    & \leq O(2^{-\marg/I}).\label{eqn:ecomplementbound} 
\end{align}

By \Cref{eq:Gamma|QSZ=q|S} and \Cref{claim:q(S)-lb},
\begin{align}
    \Gamma(R^c\vert QS\calZ) \leq q(R^c\vert S)(1+ O(\eps \marg/I)) \leq 2^{-\marg/I + 3}. \label{eqn:rcomplementbound}
\end{align}

Given $QS\mathcal{Z}R$ and the event $\sum_{i=2}^CE_i \leq  2^{7\marg/I}\cdot\sqrt{C\marg}$, the event $\calE^c$ can happen only if $\tau$ or $\psi$ make an error that leaves Alice and Bob with inconsistent messages, or if $\psi$ aborts. We claim that the probability that $\psi$ makes an error or aborts is at most $2\eps$. This is because every $xym\in R$ satisfies $p(m_1\vert xm_0) \leq 2^{6\marg}\cdot p(m_1\vert m_0y)$, 
so we can apply \Cref{lem:round-sim}. Moreover, the probability that $\tau$ ever makes an error is at most $O(\eps 2^{7\marg/I}\sqrt{C\marg})$ by a union bound. So, we conclude that
\begin{align}
\Gamma\bigg(\mathcal{E}^c\bigg\vert QS\mathcal{Z} R, \sum_{i=2}^CE_i \leq 2^{7\marg/I}\cdot\sqrt{C\marg}\bigg) & \leq O(\eps 2^{7\marg/I}\sqrt{C\marg}). \label{ecomplementgivenbound}
\end{align}

We shall prove at the end of this section that
\begin{align}\label{eq:mistakes-bound}
    \Gamma\bigg(\sum_{i=2}^CE_i > 2^{7\marg/I}\cdot\sqrt{C\marg} \Big\vert QS\calZ\bigg) &< O(2^{-\marg/I}).
\end{align}

\Cref{eqn:rcomplementbound,ecomplementgivenbound,eq:mistakes-bound} together prove \Cref{eqn:ecomplementbound}, and so conclude the proof of \Cref{eqn:qszebound-1}. Next, we prove \Cref{eqn:expectadbound-1}. Since $|\mathcal{A}(xym)| \leq 1$, we have
\begin{align*}
     \E_{\Gamma(xym\vert QS\mathcal{Z})}[\mathcal{A}(xym)] \leq  \Gamma(\mathcal{E}\vert QS\mathcal{Z}) \cdot  \E_{\Gamma(xym\vert QS\mathcal{ZE})}[\mathcal{A}(xym)] + \Gamma(\mathcal{E}^c\vert QS\mathcal{Z}),
\end{align*}
and since $\Gamma(\mathcal{E}\vert QS\mathcal{Z}) \leq 1$, this gives
\begin{align*}
    \E_{\Gamma(xym\vert QS\mathcal{ZE})}[\mathcal{A}(xym)] & \geq  \E_{\Gamma(xym\vert QS\mathcal{Z})}[\mathcal{A}(xym)]- \Gamma(\mathcal{E}^c\vert QS\mathcal{Z})\\
    & \geq  \E_{q(xym\vert S)}[\mathcal{A}(xym)]- O(\epsilon \marg/I) - O(2^{-\marg/I}) \tag{using \Cref{eqn:ecomplementbound,eq:Gamma|QSZ=q|S}}\\
    & \geq  \E_{q(xym)}[\mathcal{A}(xym)]- O(2^{-\marg/I}) \tag{by \Cref{claim:q(S)-lb}}\\
    & = \E_{q(xym)}\bigg[\mathsf{sign}\bigg(\E_{q(x'y'\vert m)}\Big[(-1)^{f}\Big]\bigg) \cdot (-1)^{f(xy)}\bigg] - O(2^{-\marg/I})\\
    &= \E_{q(m)}\bigg[\bigg|\E_{q(xy\vert m)}\Big[(-1)^{f(xy)}\Big]\bigg|\bigg] - O(2^{-\marg/I}) \geq \Omega(2^{-\delta \marg/(12I)}),
\end{align*}
by \Cref{eqn:expectadbound}. This completes the proof of \Cref{eqn:expectadbound-1}.

It only remains to prove \Cref{eq:mistakes-bound}. Define the function
\begin{align*}
t(xym) = \begin{cases}\min\{j : q(xym_{\leq j}) < 2^{-3 \marg/I} \cdot p(xym_{\leq j})\} & \text{if such $j$ exists,}\\
\bot & \text{otherwise.}
\end{cases}
\end{align*}
Note that the function $t(xym)$ is determined by $xym_{\leq t(xym)}$.

We have
    \begin{align*}
    \Gamma\bigg(\sum_{i=2}^CE_i> 2^{7\marg/I}\cdot \sqrt{C\marg}\Big\vert QS\calZ \bigg) 
        &\leq \Gamma(t\neq \bot\vert QSZ)+ \Gamma\bigg(\sum_{i=2}^CE_i > 2^{7\marg/I}\cdot \sqrt{C\marg}\Big\vert QS\calZ, t=\bot\bigg) , 
    \end{align*}
so let us bound each of these terms. 

\begin{align*}
        \Gamma(t \neq \bot \vert QSZ) &\leq q(t \neq \bot \vert S)\cdot (1+ O(\eps\marg/I)) \tag{by \Cref{eq:Gamma|QSZ=q|S}}\\
        &\leq q(t \neq \bot )\cdot (1 + O(2^{-3\marg/I}))\cdot (1+ O(\eps\marg/I)), \tag{by \Cref{claim:q(S)-lb}}
\end{align*}
and
\begin{align*}
    q(t\neq \bot) = \sum_{j=0}^C q(t=j)
     &= \sum_{j=0}^C \sum_{\substack{xym_{\leq j}\\ t(xym)=j}} q(xym_{\leq j})\\
    &< 2^{-3 \marg/I} \cdot \sum_{j=0}^C \sum_{\substack{xym_{\leq j}\\ t(xym)=j}} p(xym_{\leq j})
    = 2^{-3 \marg/I} \cdot p(t\neq \bot) \leq 2^{-3 \marg/I},
\end{align*}
so we conclude that
\begin{align}
    \Gamma(t \neq \bot \vert QS\calZ) &\leq O(2^{-3 \marg/I}). \label{eqn:tbotbound}
\end{align}

    Next, we show that 
    \begin{align*}
        \Gamma\bigg(\sum_{i=2}^CE_i >  2^{7\marg/I}\cdot \sqrt{C\marg} \Big\vert QS\calZ, t=\bot\bigg) < O(2^{-\marg/I}),
    \end{align*}
    which would complete the proof of \Cref{eq:mistakes-bound}. This follows from Markov's inequality and the bound
    \begin{align}
\E_{\Gamma}\bigg[\sum_{i=2}^CE_i\Big\vert QS\calZ,t =\bot \bigg] \leq O(2^{6 \marg/I} \sqrt{C \marg}),\label{eqn:econdbound} \end{align}which we prove next.
    We have:
    \begin{align}\label{eq:mistake-bound|QSZR}
        \E_{\Gamma}\bigg[\sum_{i=2}^CE_i\Big\vert QS\calZ,t =\bot \bigg]  &= \sum_{i=2}^C\frac{\Gamma(E_i=1,QS\calZ,t=\bot)}{\Gamma(QS\calZ,t=\bot)} \\
    &\leq O(2^{3 \marg/I}) \cdot \sum_{i=2}^C\Gamma(E_i=1,t=\bot). \tag{by \Cref{eqn:zqsbound-1,eqn:tbotbound}}
    \end{align}
    Moreover,
    \begin{align}
    \Gamma(E_i=1,t=\bot) &= \sum_{xym_{< i}}\Gamma(xym_{< i})\cdot \Gamma(E_i=1\vert xym_{< i})\cdot \Gamma(t=\bot\vert E_i=1,xym_{< i}) \notag \\
    &= \sum_{xym_{< i}}p(xym_{< i})\cdot \|p(m_i\vert xm_{< i}) - p(m_i\vert ym_{< i})\|_1\cdot \Gamma(t=\bot\vert E_i=1,xym_{< i}) \notag \\
    &\leq O(2^{3\marg/I}) \cdot\sum_{xym_{< i}}q(xym_{< i})\cdot \|p(m_i\vert xm_{< i}) - p(m_i\vert ym_{< i})\|_1 
    \end{align}
    Therefore, 
    \begin{align*}
        \E_{\Gamma}\bigg[\sum_{i=2}^CE_i\Big\vert QS\calZ,t =\bot \bigg]  &\leq O(2^{3\marg/I}) \cdot\E_{q(xym)}\bigg[\sum_{i=2}^C \|p(m_i\vert xm_{< i}) - p(m_i\vert ym_{< i})\|_1\bigg] \\ 
        &\leq O(2^{3\marg/I} \cdot \sqrt{C \marg}),
    \end{align*}
    by \Cref{eqn:pinskerbound-1}, which completes the proof of \Cref{eqn:econdbound}.

\section{Smoothing protocols}\label{smoothing}

A smooth protocol is a protocol where each message bit is close to being uniformly distributed:
\begin{definition}
Given a protocol distribution $p(xym)$ with $C$ messages satisfying $m_2,\dotsc,m_C \in \{0,1\}$, we say that the distribution is $\beta$-smooth if for all $i>1$, $|p(m_i\vert xym_{<i}) - 1/2| \leq \beta$.    
\end{definition}

Here we prove the following theorem:
\begin{theorem} \label{thm:smoothing}
    For every Boolean function $f$, every protocol distribution $p(xym)$ with $C$ messages satisfying $m_2,\dotsc,m_C \in \{0,1\}$, and every $\beta >0$, assuming that  $\marg^{\mathsf{ext}}_I(p,f)$ is finite, there is a $\beta$-smooth protocol $p'(xym')$ with $C'\leq O(C \cdot \log(IC)/\beta^2)$ messages such that $\marg^{\mathsf{ext}}_I(p',f) \leq \marg^{\mathsf{ext}}_I(p,f) + 1$, and $m'_2,\dotsc, m'_{C'} \in \{0,1\}$.
\end{theorem}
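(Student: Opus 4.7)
The plan is to convert $p$ into $p'$ by replacing each single-bit message $m_i$ (for $i\geq 2$) with a block of $T = \Theta(\log(IC)/\beta^2)$ near-uniform bits; the earlier messages $m_0, m_1$ are kept unchanged. The per-bit smoothing subroutine is: the sender draws $m_i$ from the original distribution and then transmits $T$ independent bits, each equal to $1$ with probability $1/2 + \beta/2$ if $m_i = 1$ and $1/2 - \beta/2$ if $m_i = 0$. Averaging over the posterior of $m_i$ conditioned on any $xy$ and any prior transmitted history keeps the bias of each transmitted bit at most $\beta/2$, so $p'$ is $\beta$-smooth. The receiver decodes $m_i$ by a majority vote on the block; a Chernoff bound gives per-block failure probability at most $\exp(-\Omega(T\beta^2))$, so with $T = C_0\log(IC)/\beta^2$ for a large constant $C_0$ and a union bound across the $C$ messages, the total decoding error probability is at most $1/(IC)^K$ for any constant $K$ of our choice. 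The resulting $p'$ has $C' = O(CT) = O(C\log(IC)/\beta^2)$ messages.

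To analyze $\marg^{\mathsf{ext}}_I(p',f)$, let $q(xym) = \mu(xy)\,A(xm)\,B(ym)$ be a rectangular distribution that nearly achieves $\marg^{\mathsf{ext}}_I(p,f)$, and let $\hat m(m')$ be the deterministic majority-decoded transcript. I would define the candidate rectangular distribution for $p'$ by concentrating on the decoding-consistent diagonal:
\[
\tilde q(xym') \;=\; \mu(xy)\,A\bigl(x,\hat m(m')\bigr)\,B\bigl(y,\hat m(m')\bigr)\,\phi\bigl(m'\mid \hat m(m')\bigr),
\]
where $\phi(m'\mid m)$ is the product of per-bit smoothing kernels, and then set $q'' = \tilde q/Z$ with $Z = \sum_{xym'}\tilde q(xym')$. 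Since $\hat m(m')$ is a function of $m'$ alone, the factor $A(x,\hat m(m'))$ depends only on $(x,m')$ and $B(y,\hat m(m'))\phi(m'\mid\hat m(m'))$ depends only on $(y,m')$, so $q''$ is rectangular with respect to $\mu$. A direct calculation gives $Z = \sum_m q(m)\Pr[\text{majority decodes to } m \mid m] \geq 1 - 1/(IC)^K$.

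With this $q''$, on its support (where $m := \hat m(m')$ lies in the support of $q$) one verifies $q''(xy\mid m') = q(xy\mid m)$, so the advantage term $|\E_{q''(xy\mid m')}[(-1)^f]|$ equals $|\E_{q(xy\mid m)}[(-1)^f]|$ exactly. For the ratio term, the trivial bound $p'(xym') \geq p(xym)\phi(m'\mid m)$ (just the $m''=m$ term in the sum $p'(xym') = \sum_{m''} p(xym'')\phi(m'\mid m'')$) yields
\[
\frac{q''(xym')}{p'(xym')} \;\leq\; \frac{q(xym)\phi(m'\mid m)/Z}{p(xym)\phi(m'\mid m)} \;=\; \frac{q(xym)}{p(xym)}\cdot\frac{1}{Z}.
\]
Plugging these into the definition of $\marg^{\mathsf{ext}}_I$ bounds the supremum by $\marg^{\mathsf{ext}}_I(p,f) + I\log(1/Z) \leq \marg^{\mathsf{ext}}_I(p,f) + 1$, provided $K$ is chosen large enough that $I\log(1/(1 - 1/(IC)^K)) \leq 1$.

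The main obstacle is preserving rectangularity under smoothing. The naive definition $q'(xym') = \sum_m q(xym)\phi(m'\mid m)$ is \emph{not} rectangular in general, because $A(xm)$ and $B(ym)$ both depend on the hidden $m$ and the sum over $m$ couples the $x$- and $y$-components. Restricting to the diagonal $m = \hat m(m')$, which is a deterministic function of $m'$, decouples the two sides and is the key step of the construction. The price is losing the small off-diagonal mass, but the decoding-error bound of $1/(IC)^K$ ensures that $1/Z \leq 1 + O(1/(IC)^K)$, so the distortion is easily absorbed by the $+1$ slack even after being raised to the $I$-th power in the ratio $(q''/p')^I$.
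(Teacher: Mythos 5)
Your approach is essentially the same as the paper's: replace each single-bit message with a block of $\Theta(\log(IC)/\beta^2)$ near-uniform bits, decode by majority, and condition the rectangular witness on the decoded transcript. The only substantive variation is in normalizing the rectangular distribution: the paper defines $q'(xym') = q(xyD(m'))\prod_i t_{D(m')_i}(m'_i)$, where $t_a$ is the per-block smoothing kernel already conditioned on correct decoding (so $q'$ sums to $1$ automatically), whereas you use the raw kernel $\phi$ and divide globally by $Z$. Both yield a rectangular distribution and an identical bound on the ratio and advantage terms, so the variation is cosmetic.

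One expository inaccuracy worth flagging: your formula $p'(xym')=\sum_{m''}p(xym'')\phi(m'\mid m'')$ is not the correct description of $p'$. For $p'$ to be a valid protocol distribution both players must condition their subsequent messages on the commonly decoded history $\hat m(m')_{<i}$ rather than on a privately held $m''_{<i}$; the paper's $p'(m_im'_i\mid xym_{<i}m'_{<i}) = p(m_i\mid xy,\,m_{<i}=D(m')_{<i})\,s_{m_i}(m'_i)$ reflects this. The two are genuinely different distributions. However, the inequality you actually use, $p'(xym')\geq p(xym)\,\phi(m'\mid m)$ with $m=\hat m(m')$, remains valid for the correct $p'$ (take the $a=D(m')_i$ term in each factor $\sum_a p(m_i=a\mid xy,D(m')_{<i})\,s_a(m'_i)$), so the conclusion survives. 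It would be cleaner to derive the bound directly from the correct product form rather than from the mixture expression.
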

\begin{proof}
Let $q(xym)$ be a rectangular distribution realizing $\marg^{\mathsf{ext}}_I(p,f)$. Let $L>1$ be a large odd number to be determined. Define the pair of distributions $q'(xym'),p'(xym')$ as follows. Let $m'_0,m'_1$ have the same support as $m_0,m_1$, and let  $m'_2,\dotsc, m'_C \in \{0,1\}^{L}$. In $p',q'$, the $i$'th message will correspond to $m'_i$.

For $a \in \{0,1\}$, define the following distributions supported on $\{0,1\}^L$:
\begin{align*}
    s_a(r) & = \prod_{j=1}^L \frac{1}{2}+ (-1)^{a+r_j} \cdot \beta\\
    t_a(r) & = s_a\Big (r \big \vert (-1)^a \cdot \sum_{j=1}^L(-1)^{r_{j}}\geq 0 \Big)\\
    t'_a(r) & = s_a\Big (r \big \vert (-1)^a \cdot \sum_{j=1}^L(-1)^{r_{j}}< 0 \Big).
\end{align*}
In words, $s_a(r)$ is the distribution of $L$ independent bits that are biased towards being equal to $a$, $t_a(r)$ is this distribution conditioned on the event that the majority of the bits is equal to $a$ and $t'_a(r)$ is the distribution conditioned on the event that the majority is not $a$.

Now we define a protocol distribution $p'(xym')$ and a rectangular distribution $q'(xym')$. Given $m'$, let $D(m'_i)$ denote the unique string satisfying $D(m'_0,m'_1) = (m'_0,m'_1)$, and $$(-1)^{D(m'_i)} \cdot \sum_{j=1}^L (-1)^{m'_{i,j}}\geq 0.$$
In other words, $D$ decodes each block of $L$ bits by taking the majority. Below we abuse notation and write $D(m) = D(m_0),D(m_1),\dotsc,D(m_C)$.

Define
\begin{align*}
    q'(xym') &= q(xyD(m')) \cdot \prod_{i=2}^C t_{D(m')_i}(m'_{i}).
\end{align*}
The definition ensures that $q'(xym')$ is rectangular, and that conditioned on $D(m')$, $xy$ is independent of $m'$. Define the distribution $p'(xymm')$ as follows:
\begin{align*}
 p'(xym'_0m'_1) &= p(xym'_0m'_1),
 \end{align*}
 and for $i>1$,
 \begin{align*}
 p'(m_im'_i\vert xym_{<i}m'_{<i} ) & = p(m_i \vert xy,m_{<i}=D(m')_{<i})\cdot s_{m_i}(m'_i) 
 \end{align*}
 In words, in the protocol $p'(xym')$, the parties privately sample each message bit $m_i$ according to the protocol distribution $p$. However, instead of sending this sampled bit, they send $m'_i$ sampled according to $s_{m_i}(m'_i)$. After this transmission, they continue the protocol using $D(m')_{<i}$. Strictly speaking, in order to ensure that the new protocol is a protocol distribution, we require that all the odd bits are transmitted by Alice and all even bits are sent by Bob. This can be easily achieved by inserting random bits into the transcript, but we leave out the details here. 
 
 There is some small chance that for $i>1$, $D(m')_i \neq m_i$, but by the Chernoff bound,
 $$p'(D(m')_i \neq m_i)\leq \exp(-\Omega(\beta^2 L)).$$

We have that for all $xym'$ in the support of $q'$,
\begin{align*}
    \frac{q'(xym')}{p'(xym')} =  \frac{q(xyD(m'))}{p'(xyD(m'))} \cdot \prod_{i=2}^C \frac{ t_{D(m')_i}(m'_i)}{ p'(m'_i \vert xy D(m')_{\leq i})}.
    \end{align*}
For any $xyD(m')$ such that $q(xyD(m')) > 0$, we can bound
   \begin{align*}
       \frac{q(xyD(m'))}{p'(xyD(m'))} &= \frac{q(xyD(m'))}{p(xyD(m'))} \cdot \prod_{i=2}^C \frac{p(D(m')_i\vert xyD(m')_{<i} )}{p'(D(m')_i\vert xyD(m')_{<i} )}\\
       & \leq \frac{q(xyD(m'))}{p(xyD(m'))} \cdot \prod_{i=2}^C \frac{1}{1-\exp(-\Omega(\beta^2 L))},
   \end{align*}
   where we assumed that $p(xyD(m')) > 0$; if $p(xyD(m')) = 0$ then $q(xyD(m')) = 0$ for otherwise the marginal information cost would be unbounded. Next,
   \begin{align*}
    \prod_{i=2}^C \frac{ t_{D(m')_i}(m'_i)}{ p'(m'_i \vert xy D(m')_{\leq i})} &=   \prod_{i=2}^C \frac{ t_{D(m')_i}(m'_i)}{ \E_{p'(m_i \vert xy D(m')_{\leq  i})}[p'(m'_i \vert xy D(m')_{\leq  i}, m_i)] } \\
    & =  \prod_{i=2}^C \frac{ t_{D(m')_i}(m'_i)}{ \E_{p'(m_i \vert xy D(m')_{\leq  i})}[s_{m_i}(m'_i\vert D(m'_i))]} \\
    & =    \prod_{i=2}^C \frac{ t_{D(m')_i}(m'_i)}{ p'(m_i=D(m')_i) \cdot  t_{D(m')_i}(m'_i)+p'(m_i\neq D(m')_i)\cdot t'_{D(m')_i}(m'_i)} \\
    & =   \prod_{i=2}^C \frac{1}{ p'(m_i=D(m')_i) +p'(m_i\neq D(m')_i)\cdot \frac{t'_{D(m')_i}(m'_i)}{t_{D(m')_i}(m'_i)}} \\
    & \leq   \prod_{i=2}^C \frac{1}{ p'(m_i=D(m')_i) } \\
    & \leq    \prod_{i=2}^C \frac{1}{1- \exp(-\Omega(\beta^2 L)) }. 
\end{align*} 
So, we obtain the bound:
\begin{align*}
    \frac{q'(xym')}{p'(xym')} =  \frac{q(xyD(m'))}{p(xyD(m'))} \cdot (1+2C\exp(-\Omega(\beta^2 L)) ).
    \end{align*}

Moreover, for all $xym'$ in the support of $q'$, we have
\begin{align*}
q'(x y\vert m') & = \frac{q'(xym')}{q'(m')}\\
&= \frac{q(xyD(m')) \cdot \prod_{i=1}^C t_{D(m')_i}(m'_i) }{q(D(m')) \cdot \prod_{i=1}^C t_{D(m')_i}(m'_i) }\\
&=q(xy\vert  D(m')).
\end{align*}

Finally, since $q'(xy\vert m') = q(xy \vert D(m'))$, we have
\begin{align*}
    \Big |\E_{q'(xy\vert m')}[(-1)^f] \Big | &= |\E_{q'(xy\vert D(m'))}[(-1)^f] \Big |.
\end{align*}

Thus, we get that $$\marg^{\mathsf{ext}}_I(p',f) \leq \marg^{\mathsf{ext}}_I(p,f) + IC \cdot \exp(-\Omega(\beta^2 L)).$$
Setting $L = O(\log(IC)/\beta^2)$ proves the theorem.
\end{proof}

Smooth protocols have the feature that the log-ratios of the information terms are tightly concentrated. To explain this phenomenon, we need to introduce a few definitions. For every $xym$ in the support of $p$, and $j\geq 2$, define the $j$-th divergence costs:
\begin{align*}
d^A_j(xm) &= \sum_{\substack{2 \leq i \leq j\\ i \text{ odd}}} \E_{p(m_i\vert xm_{<i})}\Big [ \log \frac{p(m_i\vert xm_{<i})}{p(m_i\vert m_{<i})}\Big],\\
d^B_j(ym) &= \sum_{\substack{2 \leq i \leq j\\ i \text{ even}}} \E_{p(m_i\vert ym_{<i})}\Big [ \log \frac{p(m_i\vert ym_{<i})}{p(m_i\vert m_{<i})}\Big],\\
d_j(xym) &= d^A_j(xm) + d^B_j(ym).
\end{align*}

By the non-negativity of divergence, the divergence costs are monotone i.e. $d_j(xym) \leq d_{j+1}(xym)$. Since the protocol is $\beta$-smooth, we have
\begin{align}
d_{j+1}^A(xm)-d_{j}^A(xm) &\leq \log\frac{1/2+ \beta}{1/2 - \beta} \leq 5\beta,\notag\\
d_{j+1}^B(ym)-d_{j}^B(ym) &\leq \log\frac{1/2+ \beta}{1/2 - \beta} \leq 5\beta. \label{eqn:smoothness}
\end{align}

We say a function $r(xym)$ taking values in $\{1,\ldots,C\}$ is a \emph{frontier} if every $m$ contains exactly one prefix of the type $m'_{\leq r(xym')}$, and that is the prefix $m_{\leq r(xym)}$. Alternatively, for every $m,m'$ such that $r(xym) \neq r(xym')$, it holds that both $r(xym)$ and $r(xym')$ are larger than the length of the longest common prefix of $m$ and $m'$.
Given a frontier $r(xym)$, define
\begin{align}
    F_{r,\alpha} & = \Big\{xym: \Big |\sum_{i\geq 2}^{r(xym)}\log \frac{p(m_{i}\vert xym_{< i})}{p(m_{i}\vert m_{< i})} - d_{r(xym)}(xym)\Big | \geq  \alpha \Big\},\notag \\
    F^A_{r,\alpha} & = \Big\{xym: \Big |\sum_{i \geq 2 \text{ odd}}^{r(xym)}\log \frac{p(m_{i}\vert xym_{< i})}{p(m_{i}\vert m_{\leq i})} - d^A_{r(xym)}(xm)\Big | \geq  \alpha \Big\},\notag \\
    F^B_{r,\alpha} & = \Big\{xym: \Big |\sum_{i \geq 2\text{ even}}^{r(xym)}\log \frac{p(m_{i}\vert xym_{< i})}{p(m_{i}\vert m_{\leq i})} - d^B_{r(xym)}(ym)\Big | \geq  \alpha \Big\}.\label{badfrontiers}
\end{align}

\begin{lemma}\label{divconcentrate}
    Let $r(xym)$ be a frontier such that for every $xym$, it holds that $d_{r(xym)}(xym) \leq \tau$.
    Then   $p(F_{r,\alpha}),p(F^A_{r,\alpha})$ and $p(F^B_{r,\alpha})$ are all at most  $2\exp(-\Omega(\alpha^2/\tau))$.
\end{lemma}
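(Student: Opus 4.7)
My plan is to express the deviations in each of $F^A_{r,\alpha}$ and $F^B_{r,\alpha}$ as stopped sums of bounded martingale differences, bound their predictable quadratic variation in terms of the accumulated divergence $d_{r(xym)}$, and then invoke Freedman's inequality. Since $F_{r,\alpha} \subseteq F^A_{r,\alpha/2} \cup F^B_{r,\alpha/2}$ by the triangle inequality, a union bound reduces the task to controlling the parity-restricted sets $F^A_{r,\alpha}$ and $F^B_{r,\alpha}$; I describe the argument for $F^A_{r,\alpha}$, as the other is symmetric.

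For each odd $i \geq 3$, set $Y_i = \log \frac{p(m_i \mid x m_{<i})}{p(m_i \mid m_{<i})}$ and $M_i = Y_i - \E_{p(m_i \mid x m_{<i})}[Y_i]$, and set $M_i = 0$ for even $i$. Since $m_i$ (for odd $i$) depends only on $x m_{<i}$, we have $p(m_i \mid x y m_{<i}) = p(m_i \mid x m_{<i})$, so $(M_i)$ is a martingale difference sequence with respect to the filtration $\calF_i = \sigma(x,y,m_{\leq i})$ under $p$, and the conditional expectation equals the divergence $D(p(m_i \mid x m_{<i}) \,\|\, p(m_i \mid m_{<i}))$. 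The frontier property of $r$ is precisely the statement that $r(xym)$ is a stopping time with respect to $(\calF_i)$, and the event $F^A_{r,\alpha}$ coincides with $\{|\sum_{i \leq r(xym)} M_i| \geq \alpha\}$.

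Two quantitative ingredients drive the argument. First, $\beta$-smoothness gives $p(m_i \mid xm_{<i}), p(m_i \mid m_{<i}) \in [1/2 - \beta, 1/2 + \beta]$, so $|Y_i| \leq \log\frac{1/2 + \beta}{1/2 - \beta} \leq 5\beta$ and hence $|M_i| \leq 10\beta$. Second, I would prove the variance-versus-divergence bound
\[
\Var(M_i \mid \calF_{i-1}) \leq \E_{p(m_i \mid x m_{<i})}[Y_i^{2}] \leq O(1) \cdot D\big(p(m_i \mid x m_{<i}) \,\big\|\, p(m_i \mid m_{<i})\big),
\]
writing $p = p(m_i=1 \mid x m_{<i})$, $q = p(m_i=1 \mid m_{<i})$, and using $|\log(p/q)|, |\log((1-p)/(1-q))| \leq O(|p-q|)$ (valid since both probabilities are bounded away from $0$), which yields $\E[Y_i^{2}] = O((p-q)^2)$, together with Pinsker's inequality $D(p \| q) \geq 2(p-q)^2$. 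Summing along any trajectory up to the stopping time bounds the predictable quadratic variation by $O(d^A_{r(xym)}(xm)) \leq O(\tau)$.

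With these two estimates, Freedman's inequality applied to the stopped martingale gives
\[
p(F^A_{r,\alpha}) \leq 2 \exp\!\left(-\frac{\alpha^2}{O(\tau) + O(\beta\alpha)}\right) \leq 2 \exp(-\Omega(\alpha^2/\tau)),
\]
in the sub-Gaussian range $\alpha \leq \tau/\beta$, which is the only range in which the claimed bound is non-trivial. The symmetric argument for even $i$ (where $m_i$ depends on $y m_{<i}$, producing a martingale difference sequence involving $y$ instead of $x$) handles $p(F^B_{r,\alpha})$. The main technical obstacle is the variance-to-divergence inequality; once that is in place the remaining calculation is a routine Freedman/optional-stopping application, and the smoothness of $p$ is what keeps the individual martingale increments bounded by $O(\beta)$.
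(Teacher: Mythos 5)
Your approach is essentially the paper's: you express the deviation as a stopped martingale, bound the increments using $\beta$-smoothness, relate each step to the divergence increment $d_i - d_{i-1}$, and apply a concentration inequality. However, there is a genuine gap in the last step.

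Freedman's inequality gives $p(F^A_{r,\alpha}) \leq 2\exp\bigl(-\frac{\alpha^2}{2v + (2/3)a\alpha}\bigr)$ with $v \leq O(\tau)$ and $a \leq O(\beta)$. For $\alpha > \tau/\beta$ the Bernstein correction term dominates and the bound degrades to $\exp(-\Omega(\alpha/\beta))$, which is strictly \emph{weaker} than the claimed $\exp(-\Omega(\alpha^2/\tau))$: the inequality $\alpha/\beta \geq c\,\alpha^2/\tau$ is equivalent to $\alpha \leq \tau/(c\beta)$, exactly the regime you've left. Your justification for ignoring large $\alpha$ --- that $\alpha \leq \tau/\beta$ is ``the only range in which the claimed bound is non-trivial'' --- does not hold. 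There is no deterministic bound forcing the deviation below $\tau/\beta$: the only a priori bound is $O(\beta C)$ where $C$ is the number of rounds, and $C$ is unrelated to $\tau/\beta^2$. The lemma is stated for all $\alpha$, so the Freedman route leaves the large-$\alpha$ range unproven.

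The fix is to use the refinement you already have the ingredients for. You observe $|Y_i| \leq O(|p-q|)$ and Pinsker gives $D(p\|q) \geq 2(p-q)^2$; combined these give the per-step bound $|Y_i| \leq O(\sqrt{d_i - d_{i-1}})$, and hence $(\sup M_i - \inf M_i)^2 \leq O(d_i - d_{i-1})$, so the \emph{sum of squared increment ranges} along any trajectory up to the frontier is $O(\tau)$. Applying the Hoeffding--Azuma MGF bound $\E[\exp(\lambda M_i)\mid \calF_{i-1}] \leq \exp(\lambda^2(\sup M_i - \inf M_i)^2/8)$, telescoping, and optimizing $\lambda$ gives $\exp(-\Omega(\alpha^2/\tau))$ uniformly in $\alpha$ --- the quadratic-variation proxy is replaced by the sum of squared ranges, which is deterministically controlled by $\tau$, so no Bernstein tail appears. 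This is exactly the route the paper takes. Your variance-to-divergence step is sound; it is the choice of concentration inequality that introduces the gap.

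Two minor points: the paper handles $F_{r,\alpha}$ directly with a single martingale over both parities rather than via the union $F^A_{r,\alpha/2} \cup F^B_{r,\alpha/2}$, but your decomposition is fine modulo the constant absorbed into $\Omega(\cdot)$. Also, as you note, $|\log(p/q)| \lesssim |p-q|$ needs $p,q$ bounded away from $0$ and $1$, which $\beta$-smoothness supplies; this is implicit in the paper too.
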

\begin{proof}
We prove the inequality for $p(F_{r,\alpha})$; the proofs for the other two terms are similar.
Define the random variable $z_0,z_1\ldots$ where $z_0 = z_1 = 0$ and for every $i\geq 2$,
\begin{align*}
z_i = \begin{cases}\log \frac{p(m_i\vert xym_{<i})}{p(m_i \vert m_{<i})} & \text{if $i\leq r(xym)$}\\
0 & \text{otherwise.}
\end{cases}
\end{align*}
and let $t_i = z_i - \E_{p(m_i\vert xy m_{<i})}[z_i]$. Then by definition $\E[t_i\vert t_{<i}]=0$. Moreover, we have
\begin{align*}
    \sup(z_i\vert xy m_{<i}) &\leq \max_{m_i}\Big \{ \log \frac{p(m_i\vert xym_{<i})}{p(m_i\vert m_{<i})} \Big\}\\
    &\leq \log \frac{1/2-\beta + \sqrt{d_i(xym) - d_{i-1}(xym)}}{1/2-\beta}\\
    & \leq O(\sqrt{d_i(xym)-d_{i-1}(xym)}).
\end{align*}
Similarly, 
\begin{align*}
    \inf(z_i\vert xy m_{<i}) &\geq \log \frac{1/2-\beta }{1/2-\beta- \sqrt{d_i(xym) - d_{i-1}(xym)}}\\
    & \geq -O(\sqrt{d_i(xym)-d_{i-1}(xym)}).
\end{align*}

So, if we define $L$ as below, we have
\begin{align*}
    L & = \sup_{xym} \sum_{i=2}^C (\sup(t_i\vert xym_{<i}) -\inf(t_i \vert xym_{<i}))^2 \\
    & =\sup_{xym} \sum_{i=2}^C (\sup(z_i\vert xym_{<i}) -\inf(z_i \vert xym_{<i}))^2 \\
    & \leq O(\tau).
\end{align*}

It is well known that if $\E[t_i]=0$, then $\E[\exp(t_i)] \leq \exp((\sup(t_i)-\inf(t_i))^2/8)$ (see Lemma 2.6 in \cite{JerrumHMRR98}). We can use this inequality to bound:
\begin{align*}
    \E_{p(m\vert xy)}\Big[\exp(\frac{4\alpha}{L} \cdot \sum_{i=2}^C t_i)\Big] & \leq   \E_{p(m_{\leq 2}\vert xy) }\Big[\exp(\frac{4\alpha}{L} t_2)\cdot \E_{p(m\vert xym_{<3})}\Big [\exp(\frac{4\alpha}{L} \cdot \sum_{i=3}^C t_i)\Big] \Big]\\
    & \leq \dotsc\\
    &\leq \exp\Big(\frac{(4\alpha/L)^2 \sup_{xym} \sum_{i=2}^C (\sup(t_i\vert xym_{<i})-\inf(t_i\vert xym_{<i}))^2}{8}\Big)\\
    &\leq \exp\Big(2 \alpha^2/L \Big).
\end{align*}

So by Markov's inequality, we get:
\begin{align*}
p(\sum_{i=2}^C t_i> \alpha) \leq \E\Big[\exp(\frac{4\alpha}{L} \cdot \sum_{i=2}^C t_i)\Big]\cdot  \exp(-4\alpha^2/L) \leq \exp(-\Omega( \alpha^2/\tau)).  
\end{align*}
Applying the same argument with $t_i=-t_i$ proves the other inequality. Defining $z_i,t_i$ appropriately proves the other inequalities using the same proof. 
\end{proof}

\section{Compressing  external marginal information }\label{step3b}

Here we prove \Cref{simulationtheorem2}. 
Set $\mext = \mext_{I}(p,f)+ KI$, for a large constant $K$ to be chosen later. 
By \Cref{thm:smoothing}, it is no loss of generality to assume that $p$ is $\beta $-smooth, with $\beta = 1/(K\log (C2^{5\mext/I}))$. Let $g_1,g_2$ be as in \Cref{eq:g1g2}.

Define:
\begin{align*}
    r^A_i(xm) &= \begin{cases}
\min\{j : d_j^A(xm) > 20  \beta+d^A_{i-1}(xm)\}& \text{if such $j$ exists,}\\
C & \text{otherwise.}
\end{cases}
\\
    r^B_i(ym)  &= \begin{cases}
\min\{j : d_j^B(ym) > 20  \beta+d^B_{i-1}(ym)\}&\text{if such $j$ exists,}\\
C & \text{otherwise.}
\end{cases}\\
\end{align*}
Note that $r_i^A(xm)$ is always odd, and $r_i^B(ym)$ is always even.

%
%

Because $p$ is a protocol, we have 
$$\E_{p(xy\vert m)}[d_j^A(xm)+ d_j^B(ym)] = d_j(m).$$

Let $\eps$ be a parameter such that $\eps \ll 2^{-5\mext/I}$. Now, we describe a protocol $\Gamma$ for computing $f(xy)$. Throughout this protocol, the parties will maintain a partial transcript $m_{<i}$. These partial transcripts may be inconsistent with each other, but we describe the protocol assuming that they are consistent with each other. In the analysis we shall show that the probability that the parties end up with inconsistent transcripts is negligible.
\begin{enumerate}
    \item The parties sample $m_0$ using the distribution $p(m_0)$. The parties also sample a uniformly random function $h:\mathbb{Z} \rightarrow \{1,2,\dotsc, \lceil 1/\epsilon \rceil\}$.
    \item Run the protocol $\psi$ from \Cref{lem:round-sim} with $u = p(m_1\vert m_0x), v = p(m_1\vert m_0), L = 5\mext$, error parameter $\epsilon$, to obtain functions $a,b$ and transcript $s$. Alice sets $m^A_1 = a(\psi(us))$, Bob sets $m^B_1 = b(\psi(vs))$. If $m^B_1 = \bot$, the protocol terminates. Bob sends a bit to Alice to indicate whether or not this occurs. The communication complexity of this step is $L + O(\log (1/\epsilon))$.
    \item Let $m_{\leq \ell}$ denote the part of the transcript sampled so far. Alice and Bob repeat the following steps until $m$ corresponds to an entire transcript. 
    \begin{enumerate}
        \item Both parties use shared randomness to sample a full transcript $\tilde m$ according to $p( m\vert m_{\leq \ell})$. They exchange the values of  $r^A_{\ell+1}(x \tilde m)$ and $r^B_{\ell+1}(y \tilde m)$ to determine 
        \[k = \min\{r^A_{\ell+1}(x \tilde m),r^B_{\ell+1}(y \tilde m)\}.\] 
        \item Alice privately samples a number $\zeta^A\in [0,1]$  and sends $1$ to Bob if 
        \[\zeta^A \leq \frac{1}{2}\cdot \prod_{\substack{i=\ell+2\\ i \text{ odd}}}^k\frac{p(\tilde m_i\vert x\tilde m_{< i})}{p(\tilde m_i\vert \tilde m_{< i})},\]
        and otherwise sends $0$.
        \item Bob privately samples a number $\zeta^B\in [0,1]$  and sends $1$ to Alice if 
        \[\zeta^B \leq \frac{1}{2}\cdot \prod_{\substack{i=\ell+2\\ i \text{ even}}}^k\frac{p(\tilde m_i\vert y\tilde m_{< i})}{p(\tilde m_i\vert \tilde m_{< i})},\]
        and otherwise sends $0$.
        \item If both players receive $1$ then, set $m_{\leq k} \leftarrow \tilde m_{\leq k}$.
    \end{enumerate}
    \item  If $\eta^A \leq g_1(x m )\cdot2^{-\lceil \log g_1(x m )\rceil}$, Alice sends $h(\lceil \log g_1(x m )\rceil)$ to Bob, and otherwise she sends $\bot$ to indicate that the protocol should be aborted. 
    \item If there is a unique integer $z$ 
    such that
    \begin{align*}
        |z + \log g_2(y m)| &\leq 3\mext/I ,\\
        h(z)&=h(\lceil \log g_1(x m )\rceil),\\ \eta^B &\leq g_2(y m)\cdot2^{z - 3\mext/I}, 
    \end{align*}
    Bob sends  $\mathsf{sign} \Big( \E_{q(xy\vert  m )}[(-1)^f]\Big) \in \{\pm 1\}$ to Alice. Otherwise, he sends $\bot$ to abort the protocol.
\end{enumerate}

To ensure the communication of the protocol is small, in our final protocol the parties abort and output a random bit if the communication in step 3 exceeds $(\mext\cdot 2^{15\mext/I}\cdot \log C)/\beta$. Then, the total communication is at most 
\[5\mext + \frac{\mext\cdot 2^{15\mext/I}\cdot \log C}{\beta} + O(\log 1/\eps) = O(\mext\cdot 2^{15\mext/I}\cdot \log^2 (C\cdot 2^{5\mext/I})) \leq \Delta\cdot I\log^2 C,\]
for some $\Delta$ that depends only on $\alpha$ since $\mext \leq (\alpha+K)I$.

Throughout the analysis below, we assume that in step 2, Alice always samples a message according to $u$, and Bob either accepts this sample or aborts, but never samples an inconsistent message. We can afford to make this assumption, because the probability of Bob sampling an inconsistent message without aborting is bounded by $\epsilon$, which will be much smaller than our final advantage. Moreover, if Alice and Bob sample consistently in step 2 then the transcript they end up with after step 3 must be the same.


Let $S$ and $R$ be the sets defined in \Cref{eq:mext-def-S,eq:mext-R} for our choice of $K$. In addition to $S$ and $R$, we need the following sets to analyze the simulating protocol: 
\begin{align*}
    Q &= \Big \{xym \eta^A \eta^B: \eta^A \leq g_1(xm) \cdot 2^{-\lceil \log g_1(xm)\rceil}, \eta^B \leq g_2(ym)\cdot 2^{\lceil \log g_1(xm)\rceil - 3\mext/I}  \Big\},\\
    \mathcal{Z} &= \Big\{xymh:\exists \text{ unique integer $z$ with }|z + \log g_2(ym )|\leq \frac{3\mext}{I} \text{ and }  h(z)= h(\lceil \log g_1(xm) \rceil)\Big\}.
\end{align*}

Let $\mathcal{G}$ denote the event that the protocol reaches the final step without aborting and having communicated at most $(\mext\cdot 2^{15\mext/I}\cdot \log C)/\beta$ bits in step 3. Define $\mathcal{A}(xym) \in \{\pm 1\}$ by 
$$ \mathcal{A}(xym) = \mathsf{sign}\Big(\E_{q(xy \vert  m)}[(-1)^{f(xy)}] \Big) \cdot (-1)^{f(xy)}.$$
Our protocol computes $f(xy)$ correctly when $\mathcal{G}$ happens and $\mathcal{A}(xym)=1$. 
The advantage of the protocol is at least
\begin{align} 
&\Gamma(\mathcal{Z}QS \mathcal{G}) \cdot  \E_{\Gamma}[\mathcal{A}(xym)\vert\mathcal{Z}QS\mathcal{G}] - \Gamma(\mathcal{G} (\mathcal{Z}QS)^c)\label{eq:mext-finaladvantage}
\end{align}

We shall prove:
\begin{align}
    \E_{\Gamma}[\mathcal{A}(xym)\vert\mathcal{Z}QS\mathcal{G}]&\geq \Omega(2^{-\delta \mext /(12I)}) ,\label{eq:mext-expectadbound}\\
    \Gamma(\calZ QS\mathcal{G}) & \geq \Omega(2^{-3 \mext/I}), \label{eq:mext-qszgbound}\\
    \Gamma(\mathcal{G}(\mathcal{Z}QS)^c) & \leq O(2^{-4\mext /I}). \label{eq:mext-negative}
\end{align}
By \Cref{eq:mext-finaladvantage}, since $\delta \leq 1$, we can choose $K$ to be large enough to  prove the theorem, since  $\alpha + K \geq \mext /I \geq K$. 

We first prove \Cref{eq:mext-negative}.
By the union bound, we have:
\begin{align*}
 \Gamma(\mathcal{G} (\mathcal{Z}QS)^c)
  & \leq  \Gamma( \mathcal{Z}^c \mathcal{G})+\Gamma(S^c  \mathcal{GZ})+  \Gamma(Q^c \mathcal{GZ}S).
\end{align*}

The definition of the protocol ensures that $\Gamma(\mathcal{Z}^c \mathcal{G})=0$. Moreover, $\Gamma(Q^c \mathcal{GZ}S) = 0$, because if the event $\mathcal{Z}S$ happens and the parties do not abort, then:
\begin{align*}
\eta^A &\leq g_1(xm)\cdot 2^{\lceil \log g_1(xm) \rceil} \text{ and}\\ 
\eta^B &\leq g_2(ym)\cdot 2^{z - 3\mext/I } = g_2(ym)\cdot 2^{\lceil \log g_1(xm) \rceil  - 3\mext/I }.
\end{align*} 
Additionally, $\Gamma(S^c\mathcal{GZ}) \leq \Gamma(S^c\mathcal{Z}) \leq O(\eps\marg/I)$, since if $S^c\mathcal{Z}$ happens then there must have been a hash collision, which happens with probability at most $O(\eps\marg/I)$. 

In order to prove \Cref{eq:mext-expectadbound,eq:mext-qszgbound}, we need to first establish that $\Gamma(xym)$ is typically quite close to $p(xym)$. Indeed, consider a particular execution of step 3 in the protocol. At this point, some prefix $m_{\leq \ell}$ has been fixed. For $m$ consistent with this prefix $m_{\leq \ell}$, define the frontier
\begin{align*}
    r(xym) & =    \min\{r_{\ell+1}^A(xm),r_{\ell+1}^B(ym)\}. 
\end{align*}
When the parties finally accept a sample, it will be a string $m_{\leq r(xym)}$ on the frontier. By the definition of $r_{\ell+1}^A,r_{\ell+1}^B$, and by \Cref{eqn:smoothness}, we have that for all $m$,  $d_{r(xym)}(m) - d_{\ell}(m) \leq 45 \beta$. Setting $\tau=45 \beta$ and $\alpha=1/4$, we apply \Cref{divconcentrate} to conclude that if 
\begin{align*}
F^A&=\Big \{xym: \prod_{j=\ell+1 \text{ odd}}^{r(xym)}p(m_{j}\vert xy m_{\leq \ell}) \geq 2 \cdot  \prod_{j=\ell+1 \text{ odd}}^{r(xym)}p(m_{\leq r(xym)} \vert m_{\leq \ell})\Big \},\\
F^B&=\Big \{xym: \prod_{j=\ell+1 \text{ even}}^{r(xym)}p(m_{j}\vert xy m_{\leq \ell}) \geq 2 \cdot  \prod_{j=\ell+1 \text{ even}}^{r(xym)}p(m_{\leq r(xym)} \vert m_{\leq \ell})\Big \}.
\end{align*}
then 
\begin{align}
p(F^A \cup F^B \vert xy m_{\leq \ell}) \leq 4 \exp(-\Omega(1/\beta))\leq C^{-1}\cdot 2^{-5\mext/I}.\label{fafb}
\end{align}
Now, we perform a standard analysis of rejection sampling. Let $W$ denote the event that the first sample of $m_{r(xym)}$ is accepted in the protocol. Given $xym_{\leq \ell}$, the probability that $W$ occurs is
\begin{align}
    \Gamma(W\vert xym_{\leq \ell}) &\geq  \sum_{m'_{r(xym')}: xym'_{r(xym)} \notin F^A \cup F^B } p(m'_{\leq r(xym')}\vert xy m_{\leq \ell})/4 \notag\\
    &\geq 1/4-p(F^A \cup F^B \vert xy m_{\leq \ell})/4 \geq 1/4 - C^{-1}\cdot 2^{-5\mext/I} \geq 1/8, \label{eq:Wbound}
\end{align}
where here we abused notation to write $xym'_{r(xym)} \notin F^A \cup F^B$ to mean that the prefix is not consistent with any $m$ in $F^A \cup F^B$.

It is clear that the sampled point is independent of the event $\neg W$, so it is also independent of $W$. So, the probability that a particular prefix $m_{r(xym)}$ is sampled is the same as the probability that it is sampled conditioned on $W$. When $m_{r(xym)}$ is not consistent with $F^A \cup F^B$, the probability of such a point is 
\begin{align}
\frac{p(m_{\leq r(xym)}\vert xy m_{\leq \ell})/4}{1/4 - p(F^A \cup F^B \vert xym_{\leq \ell})/4} = p(m_{\leq r(xym)}\vert xy m_{\leq \ell})\cdot (1\pm O(C^{-1}\cdot 2^{-5\mext/I})). \label{acceptplusminus}
\end{align}
Let $B$ denote the event that the final sample $xym$ is such that at some point a prefix was sampled in $F^A \cup F^B$ during step 3. Whenever step 3 accepts a sample, the length of the transcript increases by at least $1$, so the number of times step 3 accepts a sample is at most $C$. Thus, by the union bound and \Cref{fafb}, 
\begin{align}
p(B) \leq O(2^{-5\mext/I}).\label{pb}
\end{align}
Moreover, by \Cref{acceptplusminus}, for $xym \notin B$, 
\begin{align}
    \Gamma(xym) = p(xym) \cdot (1\pm O(2^{-5\mext/I})).\label{multpm}
\end{align}    
     \Cref{pb,multpm} imply  
\begin{align}\label{eq:Gamma-B-upperbound}
\Gamma(B) = 1-\Gamma(B^c) \leq 1-p(B^c)\cdot (1 - O(2^{-5\mext/I})) \leq O(2^{-5\mext/I}).
\end{align}
Additionally, we have  
\begin{align}\label{qS-B}
    q(SB^c) = q(S) - q(BS) \geq q(S) - 2^{3\mext/I}\cdot p(B) \geq 1 - \Omega(2^{-\marg/I}),
\end{align}
by the definition of $S$, \Cref{claim:q(S)-lb,pb}.

Now we can begin to understand $\Gamma(\calZ QS \mathcal{G})$. For $xym\in S$,
\begin{align}\label{qgiven}
    \Gamma(Q\vert xym) = \frac{g_1(xm) \cdot 2^{-\lceil \log g_1(xm)\rceil}\cdot g_2(ym)\cdot 2^{\lceil \log g_1(xm)\rceil}}{2^{3\mext/I}} 
    &= \frac{q(xym)}{p(xym)} \cdot \frac{1}{2^{3\mext/I}}, 
\end{align}
where the first equality follows from the fact that $$g_2(ym) \cdot 2^{\lceil \log g_1(xm)\rceil} = 2^{\lceil \log g_1(xm)\rceil + \log g_2(ym)} \leq 2^{3\marg/I},$$ by the definition of $S$.

We can bound 
\begin{align}
    \Gamma(QSB^c) &= \sum_{xym\in S\cap  B^c}\Gamma(xym)\cdot \Gamma(Q\vert xym) \notag \\ &= \sum_{xym\in S\cap  B^c}p(xym)\cdot \Gamma(Q\vert xym)\cdot (1 \pm O(2^{-5\mext/I})). \tag{by \Cref{multpm}}\\
    &= 2^{-3\mext/I}\cdot  q(S  B^c) \cdot (1 \pm O(2^{-5\mext/I})) \label{qsb}\\
    &=  \Omega(2^{-3\mext/I}), \label{qsb1}
\end{align}
by \Cref{pb,qS-B}. We claim that for all $xym\in SB^c$,
\begin{align}
    \Gamma(\calZ\vert xymQSB^c) = \Gamma(\mathcal{Z}\vert xym) \geq 1 - O(\eps\mext/I). \label{zgiven}
\end{align}
The equality follows by observing that $xym$ determine $SB^c$, and given $xym$, $\calZ$ just depends on the choice of $h$, which is independent of $Q$. The inequality follows from the fact that for each $xym$ in $S$,  
the event $\mathcal{Z}^c$ can happen only if there exists an integer $z$ distinct from $\lceil \log g_1(xm) \rceil$ such that $h(\lceil \log g_1(xm) \rceil)=h(z)$ and $\abs{z + \log g_2(ym)} \leq 3\mext/I$. The probability that this happens is at most $O(\epsilon\cdot \mext /I )$. In particular, this implies 
\begin{align}
    \Gamma(\calZ\vert QS) \geq 1 - O(\eps\mext/I).\label{zqsbound}
\end{align}

For $xym \in S \cap B^c $, we have
\begin{align}
    \Gamma(xym \vert \mathcal{Z}QS B^c ) &= \frac{\Gamma(xym)\cdot \Gamma( \mathcal{Z}QS B^c \vert xym )}{\Gamma(\mathcal{Z}QS B^c )}\notag \\
    &= \frac{p(xym)\cdot   \Gamma(Q\vert xym) \cdot \Gamma( \mathcal{Z}\vert xym )}{\Gamma(\mathcal{Z}QS B^c)} \cdot (1\pm O(2^{-5\mext/I}))\tag{by \Cref{multpm}, and since $xym$ determine $S,B^c$} \\
    &= \frac{p(xym)}{\Gamma(QSB^c)} \cdot \frac{q(xym)}{p(xym)\cdot 2^{3\mext/I}} \cdot \frac{\Gamma(\mathcal{Z}\vert xym)}{\Gamma(\mathcal{Z}\vert QSB^c)}\cdot (1\pm O(2^{-5\mext/I})) \tag{By \Cref{qgiven}}\\
    &=  \frac{q(xym)}{q(SB^c)} \cdot (1\pm O(\epsilon \mext /I + 2^{-5\mext/I})). \tag{By \Cref{zgiven,qsb}} \\
    &= q(xym\vert SB^c) \cdot (1\pm O(\epsilon \mext /I + 2^{-5\mext/I})). \label{finalmultsample}
\end{align}

To argue that the protocol does not have too much communication, we show that typically the divergence costs of the accepted transcripts are small. Define the sets
\begin{align*}
    H &= \{xym: \log \frac{p(m\vert xy)}{p(m)} \leq \mext\cdot 2^{10\mext/I} \},\\
    F &= \{xym:  d_C(xym) > 2\mext\cdot 2^{10\mext/I} \}
\end{align*}
and the frontier 
\begin{align*}
    r(xym) & = \begin{cases}
        \min\{i: d_i(xym)> 2\mext\cdot 2^{10\mext/I}\} & \text{if such $i$ exists,}\\
        C & \text{otherwise.}
    \end{cases}
\end{align*}

We have $F \cap H \subseteq F_{r,\mext\cdot 2^{10\mext/I}}$, where $F_{r,\mext\cdot 2^{10\mext/I}}$ is the set from \Cref{badfrontiers}. 
By \Cref{eqn:smoothness}, and the the choice of $r$, $d_{r(xym)}(xym) \leq 2\mext\cdot 2^{10\mext/I} +5 \beta$, so we can apply \Cref{divconcentrate} to conclude that
\begin{align}
    p(FH) \leq p(F_{r,\mext\cdot 2^{10\mext/I}}) \leq 2 \exp(-\Omega(\mext\cdot 2^{10\mext/I})).\label{fhbound}
\end{align}

We have
\begin{align*}
    q(F\vert SB^c) & \leq \frac{q(FS)}{q(SB^c)}\notag \\
    &\leq \frac{q(FHS) + q(H^c)}{q(SB^c)} \\
    &\leq O(q(FHS) + q(H^c))  \tag{by \Cref{qS-B}}\\
&\leq O(q(FHS) + 2^{-10\mext/I})  \tag{by Markov's inequality and \Cref{eq:mext-log-ratio-bound}} \\
&\leq O(p(FHS) \cdot 2^{3 \mext/I} + 2^{-10\mext/I})\tag{using the definition of $S$} \\
&\leq O(\exp(-\Omega(\mext\cdot 2^{10\mext/I})) \cdot 2^{3 \mext/I} + 2^{-10\mext/I}),\notag  
\end{align*}
by \Cref{fhbound}. Putting this bound back into \Cref{finalmultsample}, we get
\begin{align}
    \Gamma(F \vert \mathcal{Z}QS B^c ) 
    &\leq  O(2^{-10\mext/I}) \label{gammafbound}
\end{align}

We note that every time step 3 accepts a sample, the divergence cost of the transcript increases by $20\beta$, and in expectation, the number of rounds of rejection sampling involved to accept a sample is at most $8$ by \Cref{eq:Wbound} and a standard calculation. Moreover, in each round, the players communicate at most $2 + 2\log C$ bits to exchange two indices in $\{1,\ldots,C\}$. Hence, given $xy$ and a transcript $m$ the expected communication to sample $m$ is at most $16\cdot (1 +\log C)\cdot d_C(xym)/(20\beta)$. 
%
%
Recall that $\calG$ occurs when the protocol reaches the final step having communicated at most $(\mext\cdot 2^{15\mext/I}\cdot \log C)/\beta$. Thus, Markov's inequality implies that 
\begin{align}
    \Gamma(\calG \vert \calZ QS B^c F^c) = 1- O(2^{-5\mext/I}). \label{ggivenbound}
\end{align}
So, we can conclude that 
\begin{align*}
    \Gamma(\calZ QS\calG) & \geq \Gamma(\calZ QSB^cF^c \cal G)\\
    & \geq \Gamma(QSB^c)\cdot  \Gamma(\calZ \vert QSB^c) \cdot \Gamma(F^c \vert \calZ QSB^c) \cdot \Gamma(\calG \vert \calZ QSB^cF^c)\\
    & \geq \Omega(2^{-3\mext/I}), \tag{by \Cref{qsb1,zgiven,ggivenbound,gammafbound}}
\end{align*}
proving \Cref{eq:mext-qszgbound}. Observe that by \Cref{eq:mext-qszgbound,eq:Gamma-B-upperbound,gammafbound},
\begin{align}
    \Gamma(B\vert \calZ QS \calG) & \leq \frac{\Gamma(B)}{\Gamma(\calZ QS \calG)}  \leq O(2^{-2\mext/I}),\label{fcond}\\
    \Gamma(F\vert \calZ QS \calG) & \leq \frac{\Gamma(F \calZ QS B^c)+ \Gamma(B)}{\Gamma(\calZ QS \calG)}  \leq O(2^{-2\mext/I}).\label{bcond}
\end{align}
Moreover, we have
\begin{align}
    \E_{\Gamma}[\mathcal{A}(xym)\vert \calZ QSB^c F^c] &\leq \Gamma(\calG \vert \calZ QSB^cF^c)\cdot \E_{\Gamma}[\mathcal{A}(xym)\vert \calZ QS \calG B^c F^c] + \Gamma(\calG^c\vert \calZ QS B^c F^c) \notag\\
    & \leq \E_{\Gamma}[\mathcal{A}(xym)\vert \calZ QS \calG B^c F^c] + O(2^{-5\mext/I}), \label{eq:adv-drop-G}\\
     \E_{\Gamma}[\mathcal{A}(xym)\vert \calZ QSB^c]   &\leq \Gamma(F^c\vert \calZ QSB^c )\cdot \E_{\Gamma}[\mathcal{A}(xym)\vert \calZ QSB^c F^c] + \Gamma(F\vert \calZ QSB^c)\notag\\
     & \leq \E_{\Gamma}[\mathcal{A}(xym)\vert \calZ QSB^c F^c] + O(2^{-10\mext/I}),\label{eq:adv-drop-F}
\end{align}
by \Cref{gammafbound}.

We are now ready to prove \Cref{eq:mext-expectadbound}. We have
\begin{align*}
    &\E_{\Gamma}[\mathcal{A}(xym)\vert \calZ QS\calG]\\ & \geq \Gamma(B^cF^c\vert \calZ QS\calG) \cdot \E_{\Gamma}[\mathcal{A}(xym)\vert \calZ QS\calG B^c F^c] - \Gamma(B\vert \calZ QS\calG  )- \Gamma(F\vert \calZ QS\calG  )\\
    & \geq (1-O(2^{-2\mext/I}))\cdot  \E_{\Gamma}[\mathcal{A}(xym)\vert \calZ QS\calG B^c F^c] - \Omega(2^{-2\mext/I}) \tag{by \Cref{bcond,fcond}}\\
    & \geq (1/2)\cdot \E_{\Gamma}[\mathcal{A}(xym)\vert \calZ QS B^c F^c] - \Omega(2^{-5\mext/I}) - \Omega(2^{-2\mext/I}) \tag{by \Cref{eq:adv-drop-G}} \\
    & \geq (1/2)\cdot \E_{\Gamma}[\mathcal{A}(xym)\vert \calZ QSB^c] - \Omega(2^{-10\mext/I}) - \Omega(2^{-2\mext/I}) \tag{by \Cref{eq:adv-drop-F}} \\
    & \geq (1/4)\cdot \E_{q}[\mathcal{A}(xym)\vert SB^c]  - \Omega(2^{-2\mext/I})\tag{by \Cref{finalmultsample}}\\
     & \geq  (1/4)\cdot  \E_{q(xym)}[\mathcal{A}(xym)]- \Omega(2^{-\mext/I}) \tag{by \Cref{qS-B}}\\
    & = (1/4)\cdot \E_{q(xym)}\bigg[\mathsf{sign}\bigg(\E_{q(x'y'\vert m)}\Big[(-1)^{f}\Big]\bigg) \cdot (-1)^{f(xy)}\bigg] - \Omega(2^{-\mext/I})\\
    &= (1/4)\cdot \E_{q(m)}\bigg[\bigg|\E_{q(xy\vert m)}\Big[(-1)^{f(xy)}\Big]\bigg|\bigg] - \Omega(2^{-\mext/I}) \\
    & \geq \Omega(2^{-\delta \mext/(12I)}). \tag{by \Cref{eq:mext-adbound}}
    \end{align*}
    This concludes the proof of the theorem.

\section{Compressing  bounded-round protocols} \label{step3c}
We prove the \Cref{thm:bounded-round-sim}  in this section. Let $p(xym)$ be a protocol distribution such that $p(xy) = \mu(xy)$. We have $m=(m_0,\dotsc, m_r)$, which is the transcript consisting of $r$ messages along with the shared randomness. By assumption, $\marg_I(p,f) = \alpha I$, and $m_r \in \{0,1\}$. We assume without loss of generality that $r$ is even. Let $q(xym)$ be a rectangular distribution that realizes $\marg_I(p,f)$. For a large constant $K$, let $\marg = \marg_I(p,f)+ KI$. Since $\marg(p,f) \geq 0$, we have $\marg \geq KI$. Let $g_1,g_2$ be as in \Cref{eq:g1g2}. Let $\eps$ be a parameter such that $\eps = (2^{4\marg/I}\cdot(r+1))^{-1}$. 

We define a protocol $\Gamma$ whose communication complexity is bounded by $$O(r\cdot (\marg + \log(r/\eps))).$$
Since $\marg_I(p,f) = \alpha I$ we get that $\marg \leq (\alpha+K)\cdot I$, and it follows that the communication is bounded by $\Delta r(I+ \log r)$ for some $\Delta$ that only depends on $\alpha$.
Now, we describe $\Gamma$.
\begin{enumerate}
    \item Jointly sample $p(m_0)$. Alice sets $m^A_0 = m_0$ and Bob sets $m^B_0 = m_0$. Jointly sample $\eta^A_0,\eta^A_1,\eta^B \in [0,1]$ uniformly and independently. Jointly sample a uniformly random function $h : \mathbb{Z} \to \{1,\ldots,\lceil 1/\epsilon\rceil \}$.
    \item For each $i\in \{1,\ldots,r-1\}:$
    \begin{enumerate}
        \item If $i$ is odd, run the protocol $\psi$ from \Cref{lem:round-sim} with $u = p(m_i\vert m^A_{< i}x)$, $v = p(m_i\vert m^B_{< i}y)$, $L = 14\marg + 5\log (r+1)$ and error parameter $\epsilon$, to obtain functions $a_i,b_i$ and transcript $s$. Alice sets $m^A_i = a_i(us)$, Bob sets $m^B_i = b_i(vs)$. If $m^B_i = \bot$, Bob signals to abort in the next round and sends a random bit to Alice, which they both output.
        \item If $i$ is even, run the protocol $\psi$ from \Cref{lem:round-sim} with $u = p(m_i\vert m^B_{< i}y)$, $v = p(m_i\vert m^A_{< i}x)$, $L = 14\marg + 5\log (r+1)$ and error parameter $\epsilon$, to obtain functions $a_i,b_i$ and transcript $s$. Bob sets $m^B_i = a_i(us)$, Alice sets $m^A_i = b_i(vs)$. If $m^A_i = \bot$, Alice signals to abort in the next round and sends a random bit to Bob, which they both output.
    \end{enumerate}
    Let $\langle m^A\rangle, \langle m^B\rangle$ denote the values of $m^A$ and $m^B$ after the first $r-1$ rounds. 
    \item For each $b\in \{0,1\}$, Alice sends a message to Bob. If $\eta^A_b \leq \log g_1(x\langle m^A\rangle b)\cdot 2^{-\lceil \log g_1(x\langle m^A\rangle b)\rceil}$, Alice sends $h(\lceil \log g_1(x\langle m^A\rangle b)\rceil)$ to Bob, otherwise she sends $0$.
    \item Bob samples a bit $b$ according to $p(m_r\vert \langle m^B\rangle y)$. If there is a unique  integer $z$ such that
    \begin{align*}
        |z + \log g_2(y\langle m^B\rangle b)| &\leq 3\marg/I ,\\
        h(z)&=h(\lceil \log g_1(x\langle m^A\rangle b)\rceil),\\ \eta^B &\leq g_2(y\langle m^B\rangle b)\cdot2^{z - 3\marg/I}, 
    \end{align*}
    Bob sends  $\mathsf{sign} \Big( \E_{q(xy\vert \langle m^B\rangle b)}[(-1)^f]\Big) \in \{\pm 1\}$ to Alice. Otherwise, he sends $\bot$ to abort the protocol.
\end{enumerate}

We note that the above protocol involves at most $r$ rounds of communication, and in each of the first $r-1$ rounds, the communication from step 2 is at most 
$$14\marg + 5\log (r+1) + O(\log 1/\eps) \leq O(\marg + \log (r/\eps)).$$ 
In step 3, Alice additionally sends $O(\log 1/\eps)$ bits for the hashes. Hence, the total communication is at most $O(r\cdot (\marg + \log (r/\eps)))$.

We may assume that at the beginning of $\Gamma$, the players sample $r$ independent random tapes, where the $i$-th random tape is used for the $i$-th execution of the protocol $\psi$ from \Cref{lem:round-sim} in step 2 of $\Gamma$. 
Given this assumption, define $m$ as follows: $m_0 = m^A_0 = m^B_0$, and for all  $i\geq 1$, $m_i = a_i(p(m_i\vert m_{< i}xy)s)$, where $s$ is a transcript of the protocol $\psi$ from \Cref{lem:round-sim} that is determined given $x,y,m_{< i}$ and the $i$-th random tape, and $a_i$ is the function promised by the lemma. From item 1 of \Cref{lem:round-sim}, it is clear that $\Gamma(xym) = p(xym)$.

Let $S$ be the set defined in \Cref{eq:def-S} for our choice of $K$. In addition to $S$, we need the following sets to analyze the simulating protocol.
\begin{align*}
    Q &= \Big \{xym \eta^A_{m_r} \eta^B: \eta^A_{m_r} \leq g_1(xm) \cdot 2^{-\lceil \log g_1(xm)\rceil}, \eta^B \leq g_2(ym)\cdot 2^{\lceil \log g_1(xm)\rceil - 3\marg/I}  \Big\},\\
    \mathcal{E} &= \{ \langle  m^A \rangle\langle m^B\rangle m_{< r} : \langle m^A\rangle= \langle m^B\rangle = m_{< r}\},\\
    \mathcal{Z} &= \{xymh:\exists \text{ a unique integer $z$ with }|z + \log g_2(ym )|\leq 3\marg/I \text{ and }  h(z)= h(\lceil \log g_1(xm) \rceil)\}.
\end{align*}

Let $\mathcal{G}$ denote the event that the protocol reaches the final step without aborting, and define $\mathcal{A}(xym) \in \{\pm 1\}$ by 
$$ \mathcal{A}(xym) = \mathsf{sign}\Big(\E_{q(xy \vert  m)}[(-1)^{f(xy)}] \Big) \cdot (-1)^{f(xy)}.$$
Our protocol computes $f(xy)$ correctly when: $\mathcal{G}$ happens, $\mathcal{A}(xym)=1$ and $m_{< r} = \langle m^B \rangle$. Since $\mathcal{EZ}SQ \subseteq \mathcal{G}$, and $\mathcal{E}$ implies $m_{< r} = \langle m^B \rangle$, the advantage of our protocol is at least:
\begin{align} 
&\Gamma(\mathcal{EZ}SQ) \cdot  \E_{\Gamma(xym|\mathcal{EZ}SQ)}[\mathcal{A}(xym)] - \Gamma(\mathcal{G} (\mathcal{EZ}SQ)^c). \label{eq:bounded-round-adv}
\end{align}

We shall prove each of the following bounds:
\begin{align}
    \E_{\Gamma(xym\vert \mathcal{EZ}QS)}[\mathcal{A}(xym)]&\geq \Omega(2^{-\delta \marg /(12I)}) ,\label{eq:bounded-round-expectadbound}\\
    \Gamma(\mathcal{EZ}QS) & \geq \Omega(2^{-3 \marg/I}), \label{eq:bounded-round-qszebound}\\
    \Gamma(\mathcal{G}(\mathcal{EZ}SQ)^c) & \leq O(2^{-4\marg /I}). \label{eq:bounded-round-negative}
\end{align}
Because $\delta \leq 1$ and  $(\alpha+K) \geq \marg/I \geq K$, we can choose $K$ to be large enough to prove the theorem. 

We first upper bound $\Gamma(\mathcal{G} (\mathcal{EZ}SQ)^c)$.  By the union bound, we have:
\begin{align*}
 \Gamma(\mathcal{G} (\mathcal{EZ}SQ)^c)
  & \leq  
   \Gamma(\mathcal{G}\mathcal{E}^c) +  \Gamma( \mathcal{Z}^c\vert \mathcal{GE})+\Gamma(S^c  \mathcal{GEZ})+  \Gamma(Q^c \vert \mathcal{G EZ}S).
\end{align*}

The definition of the protocol ensures that $\Gamma(\mathcal{Z}^c\vert \mathcal{GE})=0$. Moreover, we claim that $\Gamma(Q^c \vert \mathcal{G EZ}S) = 0$, because if the event $\mathcal{EZ}S$ happens and the parties do not abort, then
\begin{align*}
\eta^A_{m_r} &\leq g_1(x\langle m^A\rangle m_r)\cdot 2^{\lceil \log g_1(x\langle m^A\rangle m_r) \rceil} = g_1(xm)\cdot 2^{\lceil \log g_1(xm) \rceil},\\ 
\eta^B &\leq g_2(y\langle m^B\rangle m_r)\cdot 2^{z - 3\marg/I } = g_2(ym)\cdot 2^{\lceil \log g_1(xm) \rceil  - 3\marg/I }.
\end{align*} 
The event $\mathcal{G}\mathcal{E}^c$ implies that $\psi$ made an error in one of the $r$ rounds, leaving Alice and Bob with strings that were not equal. The probability that this happens is at most $\epsilon \cdot r \leq 2^{-4\marg/I}$, by our choice of $\epsilon$. Finally, $\Gamma(S^c\mathcal{GEZ}) \leq \Gamma(S^c\mathcal{EZ}) \leq O(\eps\marg/I)$, since if $S^c\mathcal{EZ}$ happens then there must have been a hash collision, which happens with probability at most $O(\eps\marg/I)$. This implies \Cref{eq:bounded-round-negative}.

Now, we turn to proving \Cref{eq:bounded-round-qszebound}. Let us first estimate $\Gamma(QS)$. We have,
\begin{align*}
    \Gamma(QS) &=\sum_{xym\in S}\Gamma(xym)\cdot \Gamma(Q\vert xym) =\sum_{xym\in S}p(xym)\cdot \Gamma(Q\vert xym). 
\end{align*}
For $xym\in S$,
\begin{align}\label{eq:bounded-round-Q|xym}
    \Gamma(Q\vert xym) = \frac{g_1(xm) \cdot 2^{-\lceil \log g_1(xm)\rceil}\cdot g_2(ym)\cdot 2^{\lceil \log g_1(xm)\rceil}}{2^{3\marg/I}} 
    &= \frac{q(xym)}{p(xym)} \cdot \frac{1}{2^{3\marg/I}},
\end{align}
where the first equality follows from the fact that $$g_2(ym) \cdot 2^{\lceil \log g_1(xm)\rceil} = 2^{\lceil \log g_1(xm)\rceil + \log g_2(ym)} \leq 2^{3\marg/I},$$ by the definition of $S$. Therefore,
\begin{align}\label{eq:bounded-round-qs}
    \Gamma(QS) =\sum_{xym\in S}p(xym)\cdot \frac{q(xym)}{p(xym)} \cdot \frac{1}{2^{3\marg/I}} &= \frac{q(S)}{2^{3\marg/I}} \notag\\
    &\geq \frac{(1-5\cdot 2^{-\marg/I})}{2^{3\marg/I}}= \Omega(2^{-3\marg/I}),
\end{align}
where in the last line, we used \Cref{claim:q(S)-lb}.

We claim that for all $xym\in S$,
\begin{align}\label{eq:bounded-round-Z|xymQS}
    \Gamma(\calZ\vert xymQS) = \Gamma(\mathcal{Z}\vert xym) \geq 1 - O(\eps\marg/I).
\end{align}
The  equality follows by noting that $xym$ determine $S$ and given $xym$, $\calZ$ just depends on the choice of $h$, which is independent of $Q$. The event $\mathcal{Z}^c$ can happen only if there exists an integer $z$ distinct from $\lceil \log g_1(xm) \rceil$ such that $h(\lceil g_1(xm) \rceil)=h(z)$ and $\abs{z + \log g_2(ym)} \leq 3\marg/I$. The probability that this happens is at most $O(\epsilon\cdot \marg /I )$. Therefore, $\Gamma(\mathcal{Z}\vert xym) \geq 1- O(\epsilon \marg / I)\geq 1/2$, by our choice of $\epsilon$.
We conclude that
\begin{align}
 \Gamma(QS\mathcal{Z}) &= \Gamma(QS) \cdot \Gamma (\mathcal{Z}\vert QS) \geq \Omega(2^{-3 \marg/I}), \label{eq:bounded-round-zqsbound}
\end{align} 

For all $xym \in S$,
\begin{align}
    \Gamma(xym \vert QS\mathcal{Z}) &= \frac{\Gamma(xym)\cdot \Gamma(QS \mathcal{Z}\vert xym )}{\Gamma(Q S \mathcal{Z})}\notag \\
    &= \frac{p(xym)}{\Gamma(Q S) }  \cdot \Gamma(Q\vert xym) \cdot \frac{\Gamma(\mathcal{Z}\vert xymQS)}{\Gamma(\mathcal{Z}|QS)} \notag \\
    &= \frac{p(xym)}{\Gamma(Q S) } \cdot \frac{q(xym)}{p(xym)\cdot 2^{3\marg/I}} \cdot \frac{\Gamma(\mathcal{Z}\vert xymQS)}{\Gamma(\mathcal{Z}|QS)} \tag{By \Cref{eq:bounded-round-Q|xym}}\notag \\
    &=  \frac{q(xym)}{q(S)} \cdot \frac{\Gamma(\mathcal{Z}\vert xymQS)}{\Gamma(\mathcal{Z}|QS)} \tag{By \Cref{eq:bounded-round-qs}} \\
    &=  q(xym\vert S) \cdot (1\pm O(\epsilon \marg /I)) \label{eq:bounded-round-Gamma|QSZ=q|S},
\end{align}
where the last line follows from \Cref{eq:bounded-round-Z|xymQS}.

Given \Cref{eq:bounded-round-zqsbound}, to complete the proof of \Cref{eq:bounded-round-qszebound}, it will be enough to prove that $\Gamma(\calE \vert QS\mathcal{Z}) \geq 1/2$. Let $T$ be the set $T_K$ defined in \Cref{claim:R-dense-in-S} for our choice of $K$. We have 
\begin{align}
    \Gamma(\mathcal{E}^c\vert QS\mathcal{Z}) &\leq \Gamma(T^c\vert QSZ) + \Gamma(\mathcal{E}^c\vert QS\mathcal{Z} T) \notag \\
    &\leq q(T^c\vert S)\cdot (1+ O(\eps \marg/I)) + \Gamma(\mathcal{E}^c\vert QS\mathcal{Z} T) \tag{By \Cref{eq:bounded-round-Gamma|QSZ=q|S}}\\
    &\leq O(2^{-\marg/I}) + \Gamma(\mathcal{E}^c\vert QS\mathcal{Z} T) \tag{By \Cref{claim:R-dense-in-S}}\\
    &\leq O(2^{-\marg/I}) + 2\eps\cdot r  \leq O(2^{-\marg/I})\leq 1/2\label{eq:bounded-round-ecomplement-bound}
\end{align}
where in the last line, we used the fact that given $QS\calZ T$, item 2 and 3 of \Cref{lem:round-sim} guarantee that $\calE^c$ can only happen with probability at most $2\eps$ in each of the $r$ rounds. \Cref{eq:bounded-round-ecomplement-bound,eq:bounded-round-zqsbound} together prove \Cref{eq:bounded-round-qszebound}. 

Next, we prove \Cref{eq:bounded-round-expectadbound}. Since $|\mathcal{A}(xym)| \leq 1$, we have
\begin{align*}
     \E_{\Gamma(xym\vert QS\mathcal{Z})}[\mathcal{A}(xym)] \leq  \Gamma(\mathcal{E}\vert QS\mathcal{Z}) \cdot  \E_{\Gamma(xym\vert QS\mathcal{ZE})}[\mathcal{A}(xym)] + \Gamma(\mathcal{E}^c\vert QS\mathcal{Z}),
\end{align*}
and since $\Gamma(\mathcal{E}\vert QS\mathcal{Z}) \leq 1$, this gives
\begin{align*}
    \E_{\Gamma(xym\vert QS\mathcal{ZE})}[\mathcal{A}(xym)] & \geq  \E_{\Gamma(xym\vert QS\mathcal{Z})}[\mathcal{A}(xym)]- \Gamma(\mathcal{E}^c\vert QS\mathcal{Z})\\
    & \geq  \E_{q(xym\vert S)}[\mathcal{A}(xym)]- \Omega(\epsilon \marg/I + 2^{-\marg/I}) \tag{using \Cref{eq:bounded-round-ecomplement-bound,eq:bounded-round-Gamma|QSZ=q|S}}\\
    & \geq  \E_{q(xym)}[\mathcal{A}(xym)]- \Omega(2^{-\marg/I}) \tag{by \Cref{claim:q(S)-lb}}\\
    & = \E_{q(xym)}\bigg[\mathsf{sign}\bigg(\E_{q(x'y'\vert m)}\Big[(-1)^{f}\Big]\bigg) \cdot (-1)^{f(xy)}\bigg] - \Omega(2^{-\marg/I})\\
    &= \E_{q(m)}\bigg[\bigg|\E_{q(xy\vert m)}\Big[(-1)^{f(xy)}\Big]\bigg|\bigg] - \Omega(2^{-\marg/I}) \geq \Omega(2^{-\delta \marg/(12I)}),
\end{align*}
by \Cref{eqn:expectadbound}. This completes the proof of \Cref{eq:bounded-round-expectadbound}.

\section{Compression independent of communication}\label{sec:braverman-sim}
In this section, we prove \Cref{thm:braverman-sim}. Let $K$ be a sufficiently large constant to be determined later.
Let $p(xym)$ be a protocol distribution such that $p(xy) = \mu(xy)$ and $\marg_I(p,f) \leq \alpha I$. 
Let $q(xym)$ be a rectangular distribution that realizes $\marg_I(p,f)$. 

Define $\marg = \marg_I(p,f)+ KI$. Since $\marg(p,f) \geq 0$, we have $\marg \geq KI$. Let $g_1,g_2$ be as in \Cref{eq:g1g2}. Let $\eps$ be a parameter such that $\eps = 2^{-6\marg/I - 8\marg}$. We define a protocol $\Gamma$ whose communication complexity is bounded by $$2\log 1/\eps \leq O(6\marg/I + 8\marg) = \Delta I,$$
for some $\Delta$ that depends only on $\alpha$.

We describe the protocol $\Gamma$.
\begin{enumerate}
    \item Jointly sample $\eta^A,\eta^B \in [0,1]$ uniformly. Jointly sample two uniformly random functions $h,t : \mathbb{Z} \to \{1,\ldots,\lceil 1/\epsilon\rceil \}$.
    \item Jointly sample an infinite sequence of triples $(m^1,\rho_A^{1},\rho_B^{1}),(m^2,\rho_A^{2},\rho_B^{2}),\ldots,$ where $m^{i}$ is sampled uniformly at random from the set of all transcripts and $\rho_A^{i},\rho_B^{i}$ are sampled uniformly at random in $[0,1]$.
    \item Alice finds the first index $i_A$ such that 
    \begin{align*}
        \rho_A^{i_A} &\leq \prod_{j \text{ odd}}p(m^{i_A}_j\vert xm^{i_A}_{< j}) , \\
        \rho_B^{i_A} &\leq 2^{6\marg}\cdot \prod_{j \text{ even}}p(m^{i_A}_j\vert xm^{i_A}_{< j}).
    \end{align*}
    Alice checks if $\eta^A \leq g_1(xm^{i_A})\cdot 2^{\lceil \log g_1(xm^{i_A})\rceil}$, in which case she sends $t(i_A)$ and $h(\lceil \log g_1(x m^{i_A})\rceil)$ to Bob. Otherwise, she sends $\bot$ signaling to abort. 
    \item Bob finds the first index $i_B$ such that 
    \begin{align*}
        \rho_A^{i_B} &\leq 2^{6\marg}\cdot \prod_{j \text{ odd}}p(m^{i_B}_j\vert ym^{i_B}_{< j}), \\
        \rho_B^{i_B} &\leq \prod_{j \text{ even}}p(m^{i_B}_j\vert ym^{i_B}_{< j}).
    \end{align*}
    If $t(i_B) = t(i_A)$, he checks if there is a unique integer $z$ such that
    \begin{align*}
        |z + \log g_2(ym^{i_B})| &\leq 3\marg/I ,\\
        h(z)&=h(\lceil \log g_1(xm^{i_A})\rceil),\\ 
        \eta^B &\leq g_2(ym^{i_B})\cdot2^{z - 3\marg/I}, 
    \end{align*}
    If all these conditions are satisfied, he sends $\mathsf{sign} \Big( \E_{q(xy\vert  m^{i_B})}[(-1)^f]\Big) \in \{\pm 1\}$ to Alice. Otherwise, he sends $\bot$ to abort the protocol.
    \end{enumerate}
    The protocol has the feature that Alice sends at most $2\log 1/\eps$ bits to Bob.  Let $i_*$ be the smallest index such that 
    \[ \prod_{j \text{ odd}}p(m^{i_*}_j\vert xm^{i_*}_{< j}) \geq \rho^{i_*}_A \text{ and } \prod_{j \text{ even}}p(m^{i_*}_j\vert xm^{i_*}_{< j}) \geq \rho^{i_*}_B.\]
    Let $m = m^{i_*}$.
    We note that $\Gamma(xym) = p(xym)$. 

    Let $S$ and $T$ be the sets defined in \Cref{eq:def-S} and \Cref{eq:def-R-b} respectively for our choice of $K$. In addition to $S$, we need the following sets to analyze the simulating protocol: 
    \begin{align*}
    Q &= \Big \{xym \eta^A \eta^B: \eta^A \leq g_1(xm) \cdot 2^{-\lceil \log g_1(xm)\rceil}, \eta^B \leq g_2(ym)\cdot 2^{\lceil \log g_1(xm)\rceil - 3\marg/I}  \Big\},\\
    \mathcal{E} &= \Big  \{ i_A i_B i_* : i_A= i_B = i_*  \Big \},\\
    \mathcal{Z} &= \Big\{xymh:\exists \text{ a unique integer $z$ with }|z + \log g_2(ym )|\leq \frac{3\marg}{I} \text{ and }  h(z)= h(\lceil \log g_1(xm) \rceil)\Big\}.
    \end{align*}

    Let $\mathcal{G}$ denote the event that the protocol reaches the final step without aborting, and define $\mathcal{A}(xym) \in \{\pm 1\}$ by 
    $$ \mathcal{A}(xym) = \mathsf{sign}\Big(\E_{q(xy \vert  m)}[(-1)^{f(xy)}] \Big) \cdot (-1)^{f(xy)}.$$
    Our protocol computes $f(xy)$ correctly when: $\mathcal{G}$ happens, $\mathcal{A}(xym)=1$ and $m = m^{i_B}$. Since $\mathcal{EZ}SQ \subseteq \mathcal{G}$, and $\mathcal{E}$ implies $m=m^{i_B}$, the advantage of our protocol is at least:
    \begin{align} 
    &\Gamma(\mathcal{EZ}SQ) \cdot  \E_{\Gamma(xym|\mathcal{EZ}SQ)}[\mathcal{A}(xym)] - \Gamma(\mathcal{G} (\mathcal{EZ}SQ)^c). \label{finaladvantage-b}
    \end{align}

    We shall prove:
    \begin{align}
    \E_{\Gamma(xym\vert \mathcal{EZ}SQ)}[\mathcal{A}(xym)]&\geq \Omega(2^{-\delta \marg /(12I)}) ,\label{eqn:expectadbound-b}\\
    \Gamma(\mathcal{EZ}SQ) & \geq \Omega(2^{-6\marg/I - 6\marg}), \label{eqn:qszebound-b}\\
    \Gamma(\mathcal{G}(\mathcal{EZ}SQ)^c) & \leq O(2^{-6\marg/I - 7\marg}). \label{eqn:negative-b}
    \end{align}
    By \Cref{finaladvantage-b}, since $\delta \leq 1$, we can choose $K$ to be large enough to  prove the theorem, since  $\alpha+K \geq \marg/I \geq K$. 

    We first upper bound $\Gamma(\mathcal{G} (\mathcal{EZ}SQ)^c)$.  By the union bound, we have:
    \begin{align*}
    \Gamma(\mathcal{G} (\mathcal{EZ}SQ)^c)
    & \leq  
    \Gamma(\mathcal{G}\mathcal{E}^c) +  \Gamma(\mathcal{Z}^c\vert \mathcal{GE})+\Gamma(S^c  \mathcal{GEZ})+  \Gamma(Q^c \vert \mathcal{G EZ}S).
    \end{align*}

    The definition of the protocol ensures that $\Gamma(\mathcal{Z}^c\vert \mathcal{GE})=0$. Moreover, we claim that $\Gamma(Q^c \vert \mathcal{G EZ}S) = 0$, because if the event $\mathcal{EZ}S$ happens and the parties do not abort, then:
    \begin{align*}
    \eta^A &\leq g_1(xm^{i_A})\cdot 2^{\lceil \log g_1(xm^{i_A}) \rceil} = g_1(xm)\cdot 2^{\lceil \log g_1(xm) \rceil},\\ 
    \eta^B &\leq g_2(ym^{i_B})\cdot 2^{z - 3\marg/I } = g_2(ym)\cdot 2^{\lceil \log g_1(xm) \rceil  - 3\marg/I }.
    \end{align*} 
    The event $\mathcal{G}\mathcal{E}^c$ implies that there was a hash error for the triples accepted by Alice and Bob. The probability of this happening is at most $\eps$. Finally, $\Gamma(S^c\mathcal{GEZ}) \leq \Gamma(S^c\mathcal{EZ}) \leq O(\eps\marg/I)$, since if $S^c\mathcal{EZ}$ happens then there must have been a hash collision, which happens with also occurs with probability at most $2\eps$. By our choice of $\eps$, the total error is bounded by $2^{-6\marg/I - 8\marg}(2 + \marg/I) \leq 2^{-6\marg/I - 7\marg}$, for $K$ sufficiently large. This implies \Cref{eqn:negative-b}.

    Let us estimate $\Gamma(QS)$. We have,
    \begin{align*}
    \Gamma(QS) &=\sum_{xym\in S}\Gamma(xym)\cdot \Gamma(Q\vert xym) =\sum_{xym\in S}p(xym)\cdot \Gamma(Q\vert xym). 
    \end{align*}
    For $xym\in S$,
    \begin{align}\label{eq:final-reject}
    \Gamma(Q\vert xym) = \frac{g_1(xm) \cdot 2^{-\lceil \log g_1(xm)\rceil}\cdot g_2(ym)\cdot 2^{\lceil \log g_1(xm)\rceil}}{2^{3\marg/I}} 
    &= \frac{q(xym)}{p(xym)} \cdot \frac{1}{2^{3\marg/I}},
    \end{align}
    where the first equality follows from the fact that $$g_2(ym) \cdot 2^{\lceil \log g_1(xm)\rceil} = 2^{\lceil \log g_1(xm)\rceil + \log g_2(ym)} \leq 2^{3\marg/I},$$ by the definition of $S$. Therefore,
    \begin{align}\label{eq:qs-b}
    \Gamma(QS) =\sum_{xym\in S}p(xym)\cdot \frac{q(xym)}{p(xym)} \cdot \frac{1}{2^{3\marg/I}} &= \frac{q(S)}{2^{3\marg/I}} \notag\\
    &\geq \frac{(1-5\cdot 2^{-\marg/I})}{2^{3\marg/I}}= \Omega(2^{-3\marg/I}),
    \end{align}
    where in the last line, we used \Cref{claim:q(S)-lb}.

    We claim that for all $xym\in S$,
    \begin{align}\label{eq:Z|xymQS-braverman}
        \Gamma(\calZ\vert xymQS) = \Gamma(\mathcal{Z}\vert xym) \geq 1 - O(\eps\marg/I).
    \end{align}
    The equality follows by observing that $xym$ determine $S$ and given $xym$, $\calZ$ just depends on the choice of $h$, which is independent of $Q$. The event $\mathcal{Z}^c$ can happen only if there exists an integer $z$ distinct from $\lceil \log g_1(xm) \rceil$ such that $h(\lceil g_1(xm) \rceil)=h(z)$ and $\abs{z + \log g_2(ym)} \leq 3\marg/I$. The probability that this happens is at most $O(\epsilon\cdot \marg /I )$. Therefore, $\Gamma(\mathcal{Z}\vert xym) \geq 1- O(\epsilon \marg / I)\geq 1/2$, by our choice of $\epsilon$.
    We conclude that
    \begin{align}
        \Gamma(QS\mathcal{Z}) &= \Gamma(QS) \cdot \Gamma (\mathcal{Z}\vert QS) \geq \Omega(2^{-3 \marg/I}), \label{eqn:zqsbound-b}
    \end{align}
    
    Let $W$ be the event that $\min\{i_A,i_B,i_*\}= 1$ and let $T$ be the set defined in \Cref{eq:def-R-b} for our choice of $K$.
    We claim that $TW^c$ implies $i_*>1$, since if $xym\in T$ then $p(m\vert xy) \leq 2^{6\marg}\cdot \min \{p(m\vert x),p(m\vert y)\}$,
    which implies 
    \begin{align*}
    \prod_{j \text{ even}} p(m_j\vert ym_{< j}) &\leq 2^{6\marg}\cdot \prod_{j \text{ even}} p(m_j \vert xm_{< j}),\\
    \prod_{j \text{ odd}} p(m_j\vert xm_{< j}) &\leq 2^{6\marg}\cdot \prod_{j \text{ even}} p(m_j \vert ym_{< j}),
    \end{align*}
    and hence if $i_*=1$ then in fact $i_A = i_B = 1$.
    
    Now, we compute $\Gamma(\calE\vert QS\calZ)$.
    \begin{align}
        \Gamma(\calE\vert QS\calZ ) &= \Gamma(\calE \vert QS\calZ W) \notag \\
        &\geq \frac{\Gamma(i_A = i_B = i_* =1 \vert QS\calZ )}{\Gamma(i_A = 1\vert QS\calZ ) + \Gamma(i_B = 1\vert QS\calZ) + \Gamma(i_* = 1\vert QS\calZ)} \notag\\
        &\geq \frac{\Gamma(i_A = i_B = i_* = 1, QS\calZ )}{\Gamma(i_A = 1) + \Gamma(i_B = 1) + \Gamma(i_* = 1)}.\label{eq:firstindex}
    \end{align}

    Now, we estimate the numerator and denominator in the last expression. Let $\calM$ be the set of all transcripts in the support of $p$. We have,
    \begin{align*}
        &\Gamma(i_A = i_B= i_* =1, QS\calZ ) \\
        &\geq \sum_{xym\in S\cap T}\Gamma(i_A = i_B=i_* =1,xym, Q\calZ)\\
        &= \sum_{xym\in S\cap T}\Gamma(i_A = i_B=i_* =1,xym)\cdot \Gamma(Q\calZ\vert xym)\tag{given $xym$, $Q\calZ$ is independent of $i_A,i_B,i_*$}\\
        &= \sum_{xym\in S\cap T}p(xym)\cdot \frac{1}{|\calM|} \cdot \Gamma(Q\calZ\vert xym) \tag{by the definition $\Gamma$ and $T$}\\
        &= \sum_{xym\in S\cap T}p(xym)\cdot \frac{1}{|\calM|}\cdot \frac{q(xym)}{p(xym)2^{3\marg/I}}\cdot \Gamma(\calZ\vert xym) \tag{by \Cref{eq:final-reject}}\\
        &\geq q(ST)\cdot \frac{1}{|\calM|\cdot 2^{3\marg/I}}\cdot (1 - \Omega(\eps\marg/I)). \tag{by \Cref{eq:Z|xymQS-braverman}}
    \end{align*}
    
    Next,
    \begin{align*}
        \Gamma(i_A = 1) = \sum_{xm'}\Gamma(i_A =1,xm') 
        &\leq \sum_{xm'}p(x)\cdot\frac{1}{|\calM|}\cdot \prod_{j \text{ odd}}p(m'_j\vert xm'_{< j})\cdot 2^{6\marg}\cdot \prod_{j \text{ even}}p(m'_j\vert xm'_{< j})\\
        &\leq \sum_{xm'}p(xm')\cdot \frac{2^{6\marg}}{|\calM|}
        \leq \frac{2^{6\marg}}{|\calM|}.
    \end{align*}
    An identical calculation shows that $\Gamma(i_B = 1) \leq 2^{6\marg}/|\calM|$. Furthermore, 
    \begin{align*}
        \Gamma(i_* = 1) = \sum_{xym}\Gamma(i_* =1,xym)
        &= \sum_{xym}p(x)\cdot\frac{1}{|\calM|}\cdot \prod_{j \text{ odd}}p(m_j\vert xm_{< j})\cdot \prod_{j \text{ even}}p(m_j\vert ym_{< j})\\
        &\leq \sum_{xym}p(xym)\cdot \frac{1}{|\calM|}
        \leq \frac{1}{|\calM|}.
    \end{align*} 
    Plugging this into \Cref{eq:firstindex} we get
    \begin{align*}
    \Gamma(\calE\vert QS\calZ) &\geq \frac{q(ST)\cdot (1 - \Omega(\eps\marg/I))}{2^{6\marg + (3\marg/I) +2}} = \Omega(2^{-6\marg - (3\marg/I)}),
    \end{align*}
    by \Cref{claim:q(S)-lb} and \Cref{claim:R-bound-for-Braverman}.
    Using \Cref{eqn:zqsbound-b} we get that $\Gamma(QS\calZ\calE) = \Omega(2^{-6\marg - (6\marg/I)})$ as claimed in \Cref{eqn:qszebound-b}.

    For all $xym \in S$,
    \begin{align}
    \Gamma(xym \vert QS\mathcal{Z}) &= \frac{\Gamma(xym)\cdot \Gamma(QS \mathcal{Z}\vert xym )}{\Gamma(Q S \mathcal{Z})}\notag \\
    &= \frac{p(xym)}{\Gamma(Q S) }  \cdot \Gamma(Q\vert xym) \cdot \frac{\Gamma(\mathcal{Z}\vert xymQS)}{\Gamma(\mathcal{Z}|QS)} \notag \\
    &= \frac{p(xym)}{\Gamma(Q S) } \cdot \frac{q(xym)}{p(xym)\cdot 2^{3\marg/I}} \cdot \frac{\Gamma(\mathcal{Z}\vert xymQS)}{\Gamma(\mathcal{Z}|QS)} \tag{By \Cref{eq:final-reject}}\notag \\
    &=  \frac{q(xym)}{q(S)} \cdot \frac{\Gamma(\mathcal{Z}\vert xymQS)}{\Gamma(\mathcal{Z}|QS)}  \tag{By \Cref{eq:qs-b}} \\
    &=  q(xym\vert S) \cdot (1\pm O(\epsilon \marg /I)) ,\label{eq:Gamma|QSZ=q|S-b}
    \end{align}
    where the last line follows by \Cref{eq:Z|xymQS-braverman}.
    
    Next, we note that
    \begin{align}\label{eq:E-happens-if-T-happens|QSZi*=1}
        \Gamma(\calE\vert QS\calZ,i_*=1) \geq \Gamma(\calE, T\vert QS\calZ, i_*=1) = \Gamma(T\vert QS\calZ), 
    \end{align}
    where we used the fact that the event $T,i_*=1$ implies $\calE$ and that $xym$ is distributed independently of $i_*$. For any $xym\in S\cap T$
    \begin{align*}
        \Gamma(xym\vert QS\calZ\calE) &= \Gamma(xym\vert QS\calZ\calE W) \tag{$xym$ is independent of $W$ even conditioned on $QS\calZ\calE$}\\ 
        &= \Gamma(xym\vert QS\calZ\calE , i_*=1) \tag{the event $\calE W$ is the same as the event $\calE,i_*=1$}\\
        &= \frac{\Gamma(xym\calE\vert QS\calZ,i_*=1)}{\Gamma(\calE\vert QS\calZ i^*=1)}\\
        &= \Gamma(xym\vert QS\calZ)\cdot \frac{\Gamma(\calE\vert xym,i_*=1)}{\Gamma(\calE\vert QS\calZ i^*=1)} \\
        &= \frac{\Gamma(xym \vert QS\calZ)}{\Gamma(\calE\vert QS\calZ i^*=1)} \tag{because $xym\in S \cap T$}\\
        &= \Gamma(xym \vert QS\calZ)\cdot (1 \pm O(\Gamma(T^c\vert QS\calZ)))
    \end{align*}
    where the last inequality used the fact that $1 \geq \Gamma(\calE\vert QS\calZ i^*=1) \geq 1 - \Gamma(T^c\vert QS\calZ)$ by \Cref{eq:E-happens-if-T-happens|QSZi*=1}. Together with \Cref{eq:Gamma|QSZ=q|S-b} we get that for any $xym\in S\cap T$
    \begin{align}
        \Gamma(xym\vert QS\calZ\calE) &= q(xym\vert S)\cdot (1 \pm O(\Gamma(T^c\vert QS\calZ) + \eps\marg/I)) \notag\\
        &= q(xym\vert S)\cdot (1 \pm O(q(T^c\vert S) + \eps\marg/I))\notag\\
        &= q(xym\vert S)\cdot (1 \pm O(2^{-\marg/I} + \eps\marg/I)), \label{eq:Gamma|SQZET=q|ST}
    \end{align}
    where the last line follows by \Cref{claim:R-bound-for-Braverman}.
    Now, we complete the proof of \Cref{eqn:expectadbound-b}. We have
    \begin{align*}
     &\E_{\Gamma(xym\vert QS\mathcal{ZE})}[\mathcal{A}(xym)] \\
     &\geq   \sum_{xym\in S\cap T}\Gamma(xym\vert QS\mathcal{ZE})\cdot \mathcal{A}(xym) - \Gamma(T^c\vert QS\mathcal{ZE}) \\
     &\geq  \sum_{xym\in S\cap T}q(xym\vert S) \cdot \mathcal{A}(xym) - \Omega(2^{-\marg/I} + \eps\marg/I) - 1 + \Gamma(T\vert QS\mathcal{ZE}) \tag{by \Cref{eq:Gamma|SQZET=q|ST}}\\ 
     &\geq \E_{q(xym\vert S)}[\mathcal{A}(xym)] - q(T^c\vert S) - \Omega(2^{-\marg/I} + \eps\marg/I) - 1 + q(T\vert S)\cdot (1 - O(2^{-\marg/I} + \eps\marg/I)) \tag{by \Cref{eq:Gamma|SQZET=q|ST}}\\
     &\geq \E_{q(xym)}[\mathcal{A}(xym)] - q(S^c) - 2q(T^c\vert S) - \Omega(2^{-\marg/I} + \eps\marg/I) \\
     &= \E_{q(xym)}[\mathcal{A}(xym)] - \Omega(2^{-\marg/I} + \eps\marg/I) \tag{by \Cref{claim:q(S)-lb} and \Cref{claim:R-bound-for-Braverman}}\\
     &= \E_{q(xym)}\bigg[\mathsf{sign}\bigg(\E_{q(x'y'\vert m)}\Big[(-1)^{f}\Big]\bigg) \cdot (-1)^{f(xy)}\bigg] - \Omega(2^{-\marg/I})\\
     &= \E_{q(m)}\bigg[\bigg|\E_{q(xy\vert m)}\Big[(-1)^{f(xy)}\Big]\bigg|\bigg] - \Omega(2^{-\marg/I}) \geq \Omega(2^{-\delta \marg/(12I)}),
    \end{align*}
    by \Cref{eqn:expectadbound}. 

\bibliographystyle{alpha}

\bibliography{ref}

\end{document}